\newtheorem{thm}{Theorem}[section]\crefname{thm}{Theorem}{Theorems}
\newtheorem*{thm*}{Theorem}
\newtheorem{lem}[thm]{Lemma}\crefname{lem}{Lemma}{Lemmas}
\newtheorem{prop}[thm]{Proposition}\crefname{prop}{Proposition}{Propositions}
\newtheorem{cor}[thm]{Corollary}\crefname{cor}{Corollary}{Corollaries}
\crefname{figure}{Figure}{Figures}
\theoremstyle{definition}
\newtheorem{dfn}[thm]{Definition}\crefname{def}{Definition}{Definitions}
\theoremstyle{remark}
\newtheorem{rmk}[thm]{Remark}\crefname{rmk}{Remark}{Remarks}
\numberwithin{equation}{section}
\DeclareMathOperator{\tr}{tr}
\DeclareMathOperator{\rank}{rank}
\DeclareMathOperator{\Var}{Var}
\DeclareMathOperator{\poly}{poly}
\DeclareMathOperator{\spec}{spec}
\DeclareMathOperator{\supp}{supp}
\DeclareMathOperator{\MP}{MP}
\DeclareMathOperator{\JT}{JT}
\DeclareMathOperator{\EOW}{EOW}
\DeclareMathOperator{\rad}{Rad}
\DeclarePairedDelimiter{\abs}{\lvert}{\rvert}
\DeclarePairedDelimiter{\norm}{\lVert}{\rVert}
\DeclarePairedDelimiter\floor{\lfloor}{\rfloor}
\newcommand{\ot}{\otimes}
\newcommand{\id}{\mathrm{id}}
\newcommand{\CC}{\mathbbm{C}}
\newcommand{\RR}{\mathbbm{R}}
\newcommand{\NN}{\mathbbm{N}}
\newcommand{\EE}{\mathbbm{E}}
\newcommand{\HH}{\mathcal{H}}
\newcommand{\KK}{K}
\newcommand{\A}{\mathcal{A}}
\newcommand{\eps}{\varepsilon}
\newcommand{\bigO}{\mathcal O}
\newcommand{\ind}{\mathbbm{1}}
\newcommand{\minstar}{\mathrm{min}_*}
\newcommand*{\tran}{{\mkern-1.5mu\mathsf{T}}}
\newcommand{\proj}[1]{\mathinner{\lvert#1\rangle\langle#1\rvert}}
\newcommand{\PSD}{\mathcal P}
\newcommand{\Peq}{\mathcal P_{\scriptscriptstyle{=}}}
\newcommand{\Pleq}{\mathcal P_{\scriptscriptstyle{\leq}}}
\renewcommand{\d}{\ensuremath{\mathrm{d}}}
\begin{document}

\title{Random Tensor Networks with Nontrivial Links}
\date{}
\author[1]{Newton Cheng}
\author[2]{C\'ecilia Lancien}
\author[1,3]{Geoff Penington}
\author[4]{Michael Walter}
\author[5,*]{Freek Witteveen}
\affil[1]{Center for Theoretical Physics and Department of Physics, University of California, Berkeley, USA}
\affil[2]{Institut Fourier \& CNRS, Universit\'e Grenoble Alpes, Gi\`eres, France}
\affil[3]{Institute for Advanced Study, Princeton, USA}
\affil[4]{Faculty of Computer Science, Ruhr University Bochum, Germany}
\affil[5]{Department of Mathematical Sciences and QMATH, University of Copenhagen, Denmark}

\maketitle
\begin{abstract}
  Random tensor networks are a powerful toy model for understanding the entanglement structure of holographic quantum gravity.
  However, unlike holographic quantum gravity, their entanglement spectra are flat.
  It has therefore been argued that a better model consists of random tensor networks with link states that are not maximally entangled, i.e., have nontrivial spectra.
  In this work, we initiate a systematic study of the entanglement properties of these networks.
  We employ tools from free probability, random matrix theory, and one-shot quantum information theory to study random tensor networks with bounded and unbounded variation in link spectra, and in cases where a subsystem has one or multiple minimal cuts.
  If the link states have bounded spectral variation, the limiting entanglement spectrum of a subsystem with two minimal cuts can be expressed as a free product of the entanglement spectra of each cut, along with a Marchenko-Pastur distribution.
  For a class of states with unbounded spectral variation, analogous to semiclassical states in quantum gravity, we relate the limiting entanglement spectrum of a subsystem with two minimal cuts to the distribution of the minimal entanglement across the two cuts.
  In doing so, we draw connections to previous work on split transfer protocols, entanglement negativity in random tensor networks, and Euclidean path integrals in quantum gravity.
\end{abstract}
\tableofcontents

\section{Introduction}
More than twenty years after its discovery, the AdS/CFT correspondence \cite{maldacena1999large} remains the only known example of a theory of quantum gravity.%
\footnote{Here, we are requiring any putative theory of quantum gravity to (a) be defined nonperturbatively and (b) have strong evidence for the existence of a semiclassical limit consisting of Einstein gravity coupled to quantum field theory.}
A crucial feature of this correspondence is that the emergence of a (classical) spacetime is closely related to the entanglement structure of the boundary theory.
Tensor networks appear to provide useful toy models for this aspect of AdS/CFT, mirroring many of its expected properties in a setting that can be made completely mathematically rigorous \cite{swingle2012entanglement,swingle2012constructing, pastawski2015holographic,hayden2016holographic}.
A particularly powerful model is given by \emph{random tensor networks}, which have the advantage of being highly analytically tractable, while exhibiting remarkably precise agreement with gravitational calculations (even including certain exponentially small corrections) \cite{hayden2016holographic, yang2016bidirectional, qi2017holographic, qi2018spacetime, penington2019replica, dong2021holographic,kudler2021negativity, qi2021holevo}.
Random tensors and tensor networks also arise in a number of other fields of physics, including quantum information, where they have been used to explore generic entanglement properties of quantum states \cite{hayden2006aspects, collins2010random, aubrun2012partial, aubrun2012phase, aubrun2012realigning, collins2012matrix, collins2013area, christandl2014eigenvalue, collins2016random, aubrun2017alice, hastings2017asymptotics, nezami2020multipartite, walter2021hypergraph, morgan2021classical, lancien2021correlation, akers2021reflected} and condensed matter physics, e.g. in the study of random circuits and measurements \cite{you2018machine, vasseur2019entanglement, lopez2020mean, nahum2021measurement, medina2021entanglement, yang2021entanglement, levy2021entanglement, li2021statistical}.

The most basic version of a random tensor network is characterized by a choice of bond dimension~$D$ and a graph~$G = (V,E)$, where the vertices $V = V_b \sqcup V_\partial$ of $G$ are partitioned into ``bulk'' vertices $V_b$ and ``boundary'' vertices $V_\partial$. To each edge $e \in E$, we associate a maximally entangled state
\begin{align}\label{eq:max ent}
  \frac{1}{\sqrt{D}} \sum_{i=1}^D \ket{ii}
\end{align}
on two $D$-dimensional Hilbert spaces, one of which is associated to each endpoint of $e$; each vertex $v \in V$ is therefore associated with a Hilbert space $\mathcal{H}_v$ of dimension~$D^{\deg(v)}$.
Finally, we project each \emph{bulk} vertex~$v_b \in V_b$ onto a Haar random state $\ket{\psi_{v_b}} \in \mathcal{H}_{v_b}$.
The resulting ``random tensor network state'' lives in the Hilbert space~$\mathcal{H}_\partial = \otimes_{v_\partial} \mathcal{H}_{v_\partial}$ associated to the boundary vertices $v_\partial \in V_\partial$, as shown in \cref{fig:peps}.
Such states, obtained by projecting maximally entangled edge states onto (not necessarily random) bulk vertex states, are also known as projected entangled pair states (PEPS) in the condensed matter literature \cite{verstraete2004valence,verstraete2008matrix,cirac2021matrix}.

\begin{figure}
  \centering
  \begin{subfigure}{.5\textwidth}
    \centering
    \includegraphics[width=.7\linewidth]{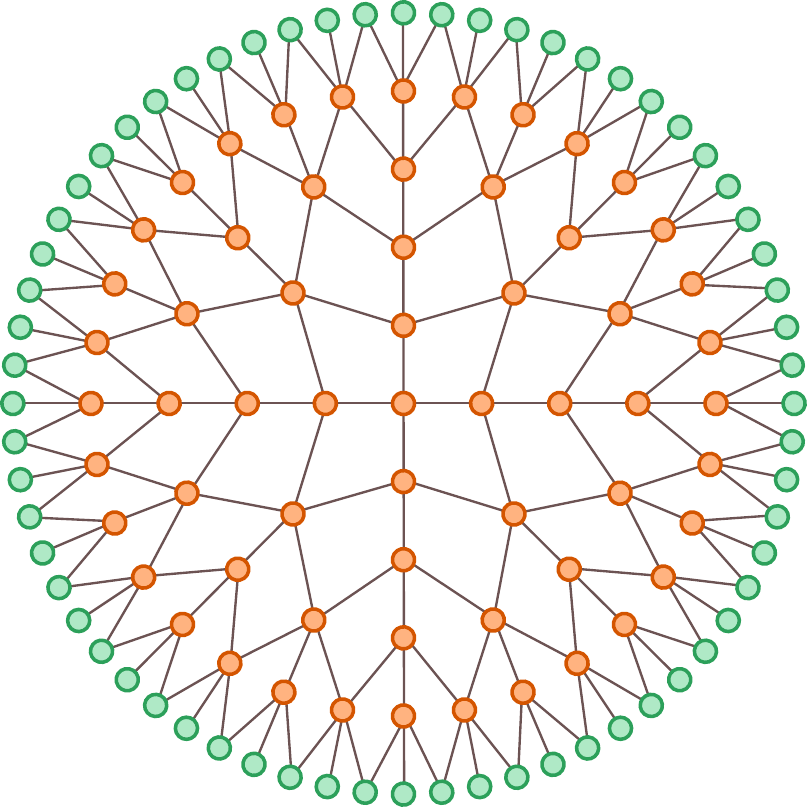}
    \caption{The skeleton of a random tensor network. We take the green-colored vertices to be the boundary vertices~$V_\partial$ and the remaining, orange-colored vertices to be the bulk vertices~$V_b$. We use this network for illustration throughout the remainder of this paper.}
    \label{fig:sub1}
  \end{subfigure}%
  \hspace*{0.5cm}
  \begin{subfigure}{.5\textwidth}
    \centering
    \begin{overpic}[width=.7\linewidth,grid=false]{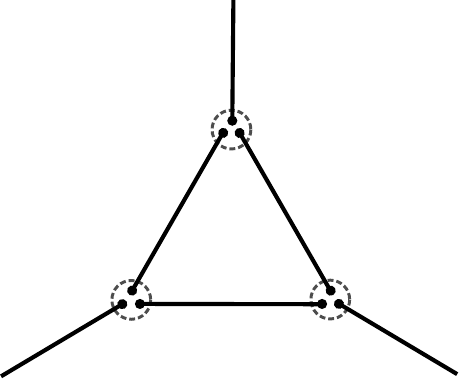}
      \put(17,25){$\bra{\psi_{v_b}}$}
      \put(66,40){\footnotesize{$\frac{1}{\sqrt{D}} \sum_{i=1}^D \ket{ii}$}}
    \end{overpic}
    \caption{A close-up picture of a random tensor network.
      We first associate each vertex $v\in V$ with a Hilbert space $\mathcal{H}_v$ of dimension $D^{\deg(v)}$ (here, $\deg(v)=3$).
      Then maximally entangled states are distributed according to the graph's edges, after which the state at each bulk vertex~$v_b\in V_b$ is projected onto a Haar random state~$\ket{\psi_{v_b}} \in \mathcal{H}_v$.}
    \label{fig:sub2}
  \end{subfigure}
  \caption{The basic structure of a random tensor network.}
  \label{fig:peps}
\end{figure}

To characterize the typical entanglement structure of random tensor network states, we can compute the von Neumann entropy $H(\rho_A)$ of the reduced density matrix $\rho_A$ on a subset $A \subset V_\partial$ of the boundary vertices.
In the limit where the bond dimension $D$ is very large, this entropy can be shown to converge with high probability to $\log(D) \abs{\gamma_A}$, where $\gamma_A$ is the set of edges crossing the minimal cut (for the moment, assumed to be the unique such cut) in the graph separating~$A$ from its boundary complement $V_\partial \setminus A$ (see \cref{subfig:one-cut}).
This formula is closely analogous to the Ryu-Takayanagi (RT) formula and its generalizations in AdS/CFT, in which entropies are given by the area of minimal surfaces homologous to a subregion of the conformal boundary \cite{ryu2006holographic,ryu2006aspects,hubeny2007covariant,lewkowycz2013generalized,faulkner2013quantum,engelhardt2015quantum}. Indeed, this connection is one of the primary reasons for studying tensor networks as a toy model of quantum gravity. However, if one goes beyond the von Neumann entropy and studies finer details of the entanglement spectrum of random tensor network states, significant divergences from holography begin to appear, as we will see shortly.

When studying entanglement in either random tensor networks or quantum gravity, or more generally in quantum field theory, it is often convenient to study $k$-th R\'enyi entropies $H_k(\rho_A) = (1-k)^{-1}\log \tr [\rho_A^k]$.
For integer $k > 1$, these are more amenable to direct computation than the von Neumann entropy, and one can extract the von Neumann entropy by analytic continuation to $k = 1$.
The computation of R\'enyi entropies in random tensor network models is very similar to holographic computations.
In both cases, the idea is to use the \emph{replica trick} -- essentially, this is the observation that $\tr[\rho_A^k] = \tr[\tau \rho_A^{\ot k}]$ where $\tau$ is an operator which permutes the $k$ copies of $A$ cyclically.
In the holographic computation, this can be written as a path integral, on $k$ copies of the theory, glued together in an appropriate way.
By the holographic dictionary, this path integral can then be computed by the action of a bulk geometry with certain boundary conditions \cite{lewkowycz2013generalized}.
For random tensor networks, one finds that $\tr [\rho_A^k]$ concentrates around its expectation, and can be computed as the partition function of a classical spin model on the bulk vertices, with boundary conditions dictated by the choice of boundary subsystem \cite{hayden2016holographic}.
This computation will be explained in detail in \cref{sec:replica trick}.
From these computations, one finds that holographic CFT states and random tensor network states behave quite differently when $k \neq 1$.
For random tensor network states, the R\'enyi entropies are approximately independent of $k$ in the large $D$ limit, meaning their entanglement spectrum is close to ``flat;'' the boundary state $\rho_A$ is approximately maximally-mixed within a certain subspace.
On the other hand, CFT states that are dual to semiclassical spacetime geometries have R\'enyi entropies that vary non-trivially with $k$, meaning their entanglement spectrum contains a wide range of eigenvalues that contribute significantly to the state.
Recently, it has been argued that the class of ``fixed-area states'' in AdS/CFT \textit{do} have flat spectra, and more generally have an entanglement structure that closely matches random tensor network states \cite{akers2019holographic,dong2019flat,bao2019beyond,marolf2020probing,dong2021holographic}.
Fixed-area states have a well-defined semiclassical geometry associated to a fixed spatial slice; however, thanks to the uncertainty principle, they cannot describe a single semiclassical \emph{spacetime} geometry \cite{bao2019beyond}.
We discuss the connection between these states and random tensor networks in more detail in \cref{sec:gravity context}.

In the random tensor network model, the flatness of the spectrum can be traced to the maximally-entangled states used as ``link states'' (see \cref{eq:max ent,fig:sub2}) on the edges of the graph, which themselves have flat entanglement spectra.
To take results about random tensor networks beyond the fixed-area state regime, it is natural -- see, e.g., discussion in \cite{hayden2016holographic, bao2019beyond} -- to replace the maximally entangled link states by general states
\begin{align}\label{eq:link state intro}
  \ket{\phi_e} = \sum_{i = 1}^D \sqrt{\lambda_{e,i}} \ket{ii}.
\end{align}
The variation in the entanglement spectrum $\lambda_{e,i}$ represents the fluctuations in area in semiclassical gravitational states.
We introduce this model in \cref{sec:rtn}.
The goal of this paper will be to understand the entanglement spectra of random tensor networks with such \emph{nontrivial link states} and spectra in a number of different regimes.
In fact, a number of our results apply in an even more general setting, where the product of link states~$\otimes_e \ket{\phi_e}$ is replaced by a completely general ``background state'' density matrix $\phi_V$.
From a quantum gravity perspective, random tensor networks with general background states are needed to model bulk quantum fields in AdS/CFT, which yield significant physical consequences when the bulk entropies are large \cite{akers2020leading,akers2021quantum}.
Beyond holography, they also play a central role in the quantum information processing tasks of multiparty state merging \emph{split transfer} \cite{dutil2010one}, a connection we elaborate on in \cref{sec:split_recovery}.

If one considers a random tensor network with non-trivial link states, if there is a \textit{single} minimal cut $\gamma_A$ for a subsystem~$A$, then the resulting density matrix $\rho_A$ will have an entanglement spectrum that converges to that of $\abs{\gamma_A}$ copies of the link state along the minimal cut as~$D \to \infty$; indeed, this was implicit in \cite{hayden2016holographic}.
A more complex question, and the main focus of this work, is the case where there are \emph{two} minimal cuts, as in \cref{subfig:two-cut}.
This situation is motivated by questions in holography: it can be used to study the phase transition at the point where there are two competing minimal surfaces \cite{marolf2020probing, akers2020leading}, which has been relevant to recent advances on the black hole information paradox \cite{penington2020entanglement,almheiri2019entropy,penington2019replica,almheiri2020replica,marolf2020transcending}.

\begin{figure}
  \centering
  \begin{subfigure}[t]{.48\textwidth}
    \centering
    \begin{overpic}[width=0.9\textwidth,grid=false]{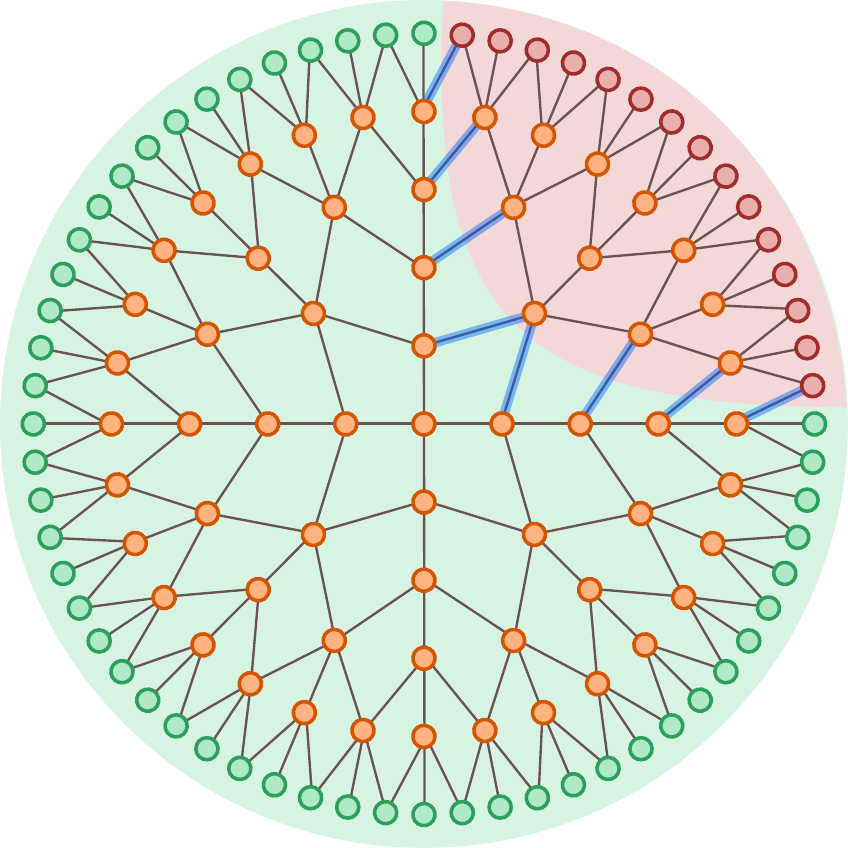}
      \put(53,64){\color{NavyBlue}{\footnotesize{$\gamma_A$}}}
      \put(88,88){\color{BrickRed}{$A$}}
      \put(70,64){\color{Bittersweet}{\footnotesize{$\Gamma_A$}}}
    \end{overpic}
    \caption{The boundary domain $A$ has a unique minimal cut~$\Gamma_A$. The set of edges crossing this minimal cut is denoted by~$\gamma_A$.}
    \label{subfig:one-cut}
  \end{subfigure}%
  \hspace*{0.5cm}
  \begin{subfigure}[t]{.48\textwidth}
    \centering
    \begin{overpic}[width=0.9\textwidth,grid=false]{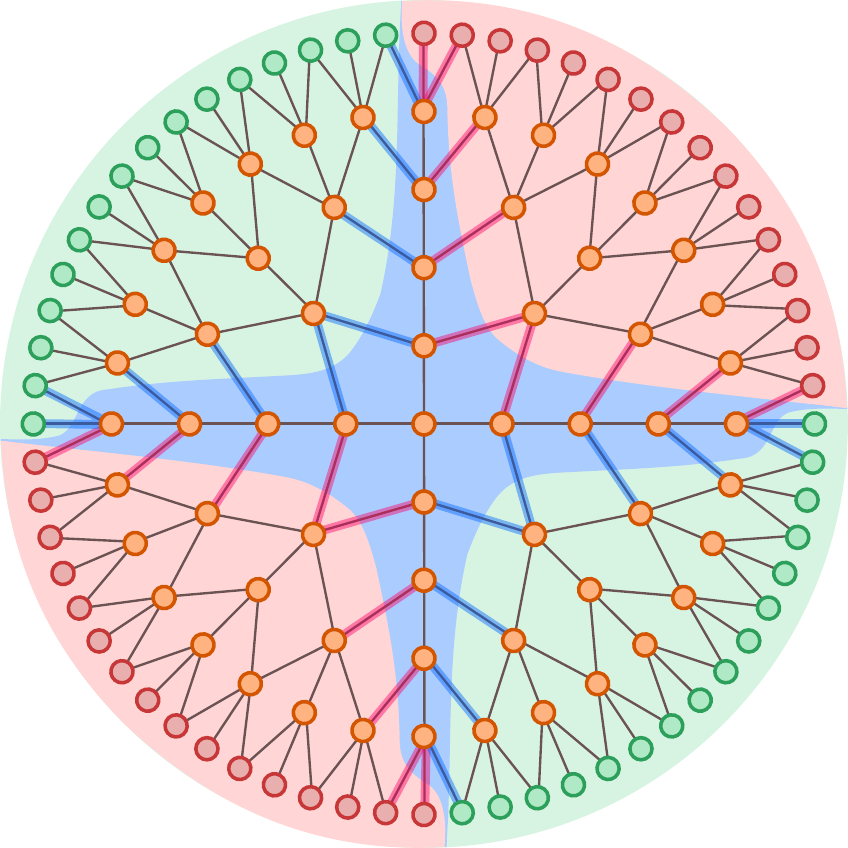}
      \put(52.5,65){\color{WildStrawberry}{\footnotesize{$\gamma_{A,1}$}}}
      \put(41,33){\color{WildStrawberry}{\footnotesize{$\gamma_{A,1}$}}}
      \put(29.5,55){\color{NavyBlue}{\footnotesize{$\gamma_{A,2}$}}}
      \put(62,43){\color{NavyBlue}{\footnotesize{$\gamma_{A,2}$}}}
      \put(88,88){\color{BrickRed}{$A$}}
      \put(9,9){\color{BrickRed}{$A$}}
      \put(68.5,64){\color{Bittersweet}{\footnotesize{$\Gamma_{A,1}$}}}
      \put(23.5,33){\color{Bittersweet}{\footnotesize{$\Gamma_{A,1}$}}}
      \put(40.5,43.5){\color{NavyBlue}{\footnotesize{$\Gamma_{A,2} \backslash \Gamma_{A,1}$}}}
    \end{overpic}
    \caption{
      The boundary domain~$A$ has two minimal cuts~$\Gamma_{A,i}$ with cut-sets~$\gamma_{A,i}$.
      In the notation of \cref{sec:ground state configurations}, the red region is $\Gamma_{A,1} = V_1$, the blue region is $\Gamma_{A,2}\setminus \Gamma_{A,1} = V_2$ and the green region is $V \setminus \Gamma_{A,2} = V_3$.
      Note that~$\Gamma_{A,1}$ has two connected components, while~$\Gamma_{A,2}$ is connected.
    }
    \label{subfig:two-cut}
  \end{subfigure}
  \caption{Tensor networks with one and two minimal cuts.}
  \label{fig:test}
\end{figure}

For our first main result, in \cref{sec:almost max entangled}, we consider a family of link states with increasing bond dimension~$D$ as in \cref{eq:link state intro}.
For each~$D$, the link state has an associated distribution
\begin{align*}
  \mu_e^{(D)} = \frac{1}{D}\sum_{i=1}^D \delta_{D\lambda_{e,i}},
\end{align*}
where $\delta_{x}$ is a $\delta$-distribution centered at $x$, so this is the discrete probability distribution given by a uniform distribution over the spectrum of the link state.
Note that $\lambda_{e,i}=\lambda_{e,i}^{(D)}$ also depends on $D$. We then require that the moments
\begin{align*}
  m_k^{(D)}
  = D^{k-1} \sum_{i=1}^D \lambda_{e,i}^k
\end{align*}
of the distributions $\mu_e^{(D)}$ converge to a finite limit as $D\to \infty$ for all positive integer $k$.
We refer to this as the \emph{bounded spectral variation} limit.
This means, in particular, that we must have $\lambda_{e,i} = \bigO(1/D)$ for all but a vanishing fraction of the eigenvalues $\lambda_{e,i}$.
If we let $\gamma_A$ denote a minimal cut for a boundary domain~$A$, then we may similarly define the associated~distribution
\begin{align*}
  \mu_{\gamma_A}^{(D)} =\frac{1}{D^{|\gamma_A|}} \sum_{\{i_e\}_{e \in \gamma_A}} \delta_{D^{|\gamma_A|} \prod_{e \in\gamma_A} \lambda_{e,i_e}}.
\end{align*}
By assumption, the moments of the distribution $\mu_{\gamma_A}^{(D)}$ converge to a finite limit (because those of each distribution $\mu_e^{(D)}$ do), implying that $\mu_{\gamma_A}^{(D)}$ converges weakly to some distribution $\mu_{\gamma_A}$.
Now consider the empirical distribution of the spectrum of reduced state $\rho_A$, which is the (random) distribution
\begin{align*}
  \mu_A^{(D)} = \frac{1}{D^{\abs{\gamma_A}}} \sum_{\lambda \in \spec(\rho_A)} \delta_{D^{\abs{\gamma_A}} \lambda}.
\end{align*}
In the case where there are two non-intersecting minimal cuts $\gamma_{A,1}$ and $\gamma_{A,2}$, we find that $\mu^{(D)}_A$ converges weakly, in probability, to a limiting distribution~$\mu_A$ given by a \emph{free product} $\MP(1) \boxtimes \mu_{\gamma_{A,1}} \boxtimes \mu_{\gamma_{A,2}}$, a notion from the theory of free probability.
Here, $\MP(1)$ is the Marchenko-Pastur distribution of parameter~1.
The situation is summarized by our first main result, which we state more precisely as \cref{thm:spectrum near max entangled}:

\begin{thm*}[Informal]
  Consider a family of link states in the bounded spectral variation limit.
  If there is a unique minimal cut~$\gamma_A$ for a boundary subsystem $A$, then $\mu_A^{(D)}$ converges weakly, in probability, to~$\mu_{\gamma_A}$, while if there are exactly two non-intersecting minimal cuts $\gamma_{A,1}$ and $\gamma_{A,2}$, it converges to~$\MP(1) \boxtimes \mu_{\gamma_{A,1}} \boxtimes \mu_{\gamma_{A,2}}$.
\end{thm*}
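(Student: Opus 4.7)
I would use the method of moments: weak convergence in probability of $\mu_A^{(D)}$ to a limit $\mu_A$ follows by showing $\EE[m_k(\mu_A^{(D)})] \to m_k(\mu_A)$ and $\Var(m_k(\mu_A^{(D)})) \to 0$ for each integer $k\geq 1$, provided $\mu_A$ is moment-determined. The definition of the empirical distribution gives
$m_k(\mu_A^{(D)}) = D^{(k-1)|\gamma_A|}\tr[\rho_A^k]$,
and the replica trick of \cref{sec:replica trick} rewrites $\EE[\tr[\rho_A^k]]$ (after Haar-averaging each bulk state $\ket{\psi_{v_b}}$) as a sum over permutation assignments $\{\sigma_v\in S_k\}_{v\in V}$ with boundary conditions $\sigma_v=\tau$ (a fixed $k$-cycle) for $v\in A$ and $\sigma_v=e$ for $v\in V_\partial\setminus A$. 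The only new input compared to the maximally entangled case is the edge weight: for a general link state $\ket{\phi_e} = \sum_i\sqrt{\lambda_{e,i}}\ket{ii}$, a direct calculation yields
\[
W_e(\sigma_v,\sigma_w) \;=\; \prod_{c\,\in\,\mathrm{cyc}(\sigma_v^{-1}\sigma_w)} \sum_i \lambda_{e,i}^{|c|} \;=\; D^{-(k - C(\sigma_v^{-1}\sigma_w))} \prod_{c} m_{|c|,e}^{(D)},
\]
so the $D$-scaling factorizes exactly as for maximally entangled links, and the residual factor is a product of the converging link-spectrum moments.

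In the $D\to\infty$ limit, the expectation is dominated by configurations minimizing the total Cayley energy $\sum_{(v,w)\in E}|\sigma_v^{-1}\sigma_w|$, where $|\sigma|:=k-C(\sigma)$ is the Cayley distance from the identity. The triangle inequality forces this minimum to equal $(k-1)|\gamma_A|$, attained exactly when the permutations along every edge lie on a Cayley geodesic in $S_k$ from $e$ to $\tau$. \emph{One-cut case:} the unique ground state has $\sigma_v=\tau$ on the ``inside'' of $\gamma_A$ and $\sigma_v=e$ on the ``outside'', contributing $\prod_{e\in\gamma_A} m_{k,e}$ to the limiting $k$-th moment, which matches $m_k(\mu_{\gamma_A})$. \emph{Two-cut case:} ground states partition the bulk into the three regions $V_1,V_2,V_3$ of \cref{subfig:two-cut}, with $\sigma_v=\tau$ on $V_1$, $\sigma_v=\pi$ on $V_2$ for some $\pi\in S_k$, and $\sigma_v=e$ on $V_3$; the geodesic condition $|\pi|+|\pi^{-1}\tau|=k-1$ is equivalent to $\pi$ being a non-crossing permutation associated to $\tau$.

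Summing over such $\pi$ yields the limiting moment
\[
m_k(\mu_A) \;=\; \sum_{\pi\in\mathrm{NC}(\tau)} \; \prod_{e\in\gamma_{A,1}} \prod_{c\,\in\,\mathrm{cyc}(\pi)} m_{|c|,e} \;\prod_{e\in\gamma_{A,2}} \prod_{c\,\in\,\mathrm{cyc}(\pi^{-1}\tau)} m_{|c|,e}.
\]
Since $\pi$ and $\pi^{-1}\tau$ are Kreweras-complementary non-crossing permutations, this expression is precisely the moment-cumulant expansion of the triple free product $\MP(1)\boxtimes\mu_{\gamma_{A,1}}\boxtimes\mu_{\gamma_{A,2}}$: the cardinality-weighted sum over $\mathrm{NC}(\tau)$ delivers the Catalan moments characterising $\MP(1)$, while the two cycle-weighted products realise the free multiplicative convolutions with the cut spectra. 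To promote convergence-in-expectation to convergence-in-probability I would run an analogous replica calculation on $\EE[(\tr[\rho_A^k])^2]$ and check that its dominant $2k$-copy ground states are replica-symmetric (i.e.\ products of two single-copy ground states), so that $\Var(m_k(\mu_A^{(D)}))=o(1)$.

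\textbf{Main obstacle.} The delicate step is the two-cut analysis. One must (i) \emph{uniformly} bound the contribution of non-ground-state configurations, using bounded spectral variation to control the products of link moments appearing as edge weights, and (ii) identify the restricted geodesic sum above with the moment-cumulant formula for $\MP(1)\boxtimes\,\cdot\,\boxtimes\,\cdot$, which relies on the standard combinatorial description of free multiplicative convolution in terms of non-crossing partitions and the Kreweras complement.
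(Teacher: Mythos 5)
Your proposal follows essentially the same route as the paper: method of moments, replica-trick rewrite as a permutation spin model with a $D^{-d(\pi_x,\pi_y)}$ edge weight times moment factors, a Cayley-distance/flow argument to isolate the dominant (geodesic) configurations, identification of the two-cut geodesic sum with the moment-cumulant formula over $NC(k)$ and the Kreweras complement (the paper proves the equivalent statement as \cref{thm:free product} via the $S$-transform), and a $2k$-replica computation to kill the variance. The one detail your sketch elides is that when $D_{\gamma_{A,1}}/D_{\gamma_{A,2}} \to t < 1$ the sum acquires a $t^{d(\sigma,\id)}$ weight and the limit becomes $\MP(1)\boxtimes\mu_{\gamma_{A,1}}\boxtimes\bigl((1-t)\delta_0+t\mu_{\gamma_{A,2}}\bigr)$; the informal statement tacitly assumes $t=1$, so your formula (whose $\gamma_{A,1}\leftrightarrow\gamma_{A,2}$ labelling is equivalent to the paper's via the Kreweras bijection on $NC(k)$) recovers it.
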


In \cref{sec:negativity} we briefly discuss the closely related problem of computing the entanglement negativity spectrum in the same regime.

For our second main result, in \cref{sec:far from max entangled}, we investigate a different regime, in which link states are allowed to have \emph{unbounded spectral variation} in the large $D$ limit.
This is the more relevant regime for holography, where fluctuations in the area of a surface (in Planck units) grow sublinearly but without bound in the semiclassical limit.
%
When the spectral variation is unbounded, there still exists a reasonable notion of a minimal cut that determines the entanglement spectrum of the boundary state, but the key difference is that minimality must now be defined entropically, rather than geometrically. In fact, the underlying graph essentially plays no role in this regime.
A sensible way to formalize this would be to use one-shot entropies:
we might say that a cut is ``minimal'' if the rank of the state along the cut is smaller than the inverse of the largest element in the entanglement spectrum along any other cut.
This condition, while intuitive, is a little too restrictive, and one can use smooth conditional entropies to get a weaker, but still meaningful, condition.
In \cref{dfn:gen mincut}, following \cite{akers2020leading} we introduce the notion of a \emph{unique $(\eps, K)$-minimal cut}~$\Gamma_A$, where $\Gamma_A\subseteq V$ is a subset of vertices such that the $\eps$-smooth conditional min-entropy of $\Gamma_A$ compared to competing cuts either contained in~$\Gamma_A$ or containing~$\Gamma_A$ is lower bounded by~$K$.
Intuitively, we want $K$ to be as large as possible, and indeed, we show in \cref{thm:rtn with arbitrary link states} that the spectrum of a reduced density matrix $\rho_A$ will be close to the spectrum along a unique $(\eps, K)$-minimal cut with an error exponentially small in~$K$.

The situation is more complicated if there are two non-intersecting  $(\eps, K)$-minimal cuts, defined in \cref{dfn:gen pair cut} as both cuts satisfying an $(\eps, K)$-minimality property for the same $\eps$ and $K$.
Here, we need to impose a regularity condition on the link states, motivated by the example of a link state that is a (large) number of copies of a fixed state:
\begin{align}\label{eq:many copies intro}
  \ket{\phi_e} = \ket{\phi_0}^{\ot n}
\end{align}
where $\ket{\phi_0}$ is a bipartite state with local dimension $d$, so the total bond dimension is $D = d^n$.
In this case, if $\phi_0$ is not maximally entangled, the measure $\mu_e^{(D)}$ will not converge with increasing $n$ as the spectrum is not concentrated around $\frac1D$.
However, using a different measure, the entanglement spectrum of $\ket{\phi_e}$ satisfies a central limit theorem.
Namely, if~$\ket{\phi_e}$ has Schmidt coefficients $\{\lambda_i\}$,  and~$\ket{\phi_0}$ has entanglement entropy $H_0$, then the distribution of the random variable $X^{(n)}$ which takes values $\frac{1}{\sqrt{n}}(\log(\frac{1}{\lambda_i}) - nH_0)$ with probability~$\lambda_i$, converges weakly to a centered Gaussian distribution as $n\to\infty$.
Since we subtracted the entropy $nH_0$, the random variable $X^{(n)}$ has expectation zero.
Its variance can be thought of as a measure of the fluctuation of $\log(\frac{1}{\lambda_i})$ around the entropy, and is relevant for second-order asymptotic rates in quantum information processing tasks \cite{tomamichel2013hierarchy}.
%
%
We take this central limit theorem as motivation for a regularity condition on the spectra of general link and background states, and we allow the states to have varying bond dimensions $D(n)$, e.g. $D(n) \sim d^n$.
To be more precise, we define the following measure along a cut $\gamma_A$:
\begin{align*}
  \nu_{\gamma_A}^{(n)} = \sum_{\{i_e\}_{e \in \gamma_A}} \prod_{e \in\gamma_A} \lambda_{e,i_e} \delta_{\frac{1}{\sqrt n}\left( \sum_{e \in \gamma_A} \log \frac{1}{\lambda_{e,i_e}} - H(n)\right)},
\end{align*}
where $H(n)$ is a function of $n$ (which one can think of as being approximately equal to the entanglement entropy along the cut $\gamma_A$), and we assume that $\nu_{\gamma_A}^{(n)}$ converges weakly to a \emph{continuous} distribution.
Note that the distribution described above reduces to the distribution of $X^{(n)}$ if the link state is of the form in \cref{eq:many copies intro}, and is very different from the distribution $\mu_A^{(D)}$ we study for link states with bounded spectral variation.
Similarly, we let
\begin{align*}
  \nu_{A}^{(n)} = \sum_{\lambda \in \spec(\rho_A)} \lambda \, \delta_{\frac{1}{\sqrt{n}}(\log(\frac{1}{\lambda}) - H(n))}
\end{align*}
be the corresponding (random) distribution for the boundary spectrum.
Knowledge of this distribution allows computation of the entropy of $\rho_A$ (and fluctuations) as a correction to $H(n)$.
In the random tensor network setting, we find that in a situation with two competing minimal cuts, the random tensor network will `select' the minimal parts of each cut, in the following sense:

\begin{thm*}[Informal]
  Assume that we have a family of (states with) two non-intersecting $(\eps(n),K(n))$-minimal cuts $\gamma_{A,1}$ and $\gamma_{A,2}$, as defined in \cref{dfn:gen pair cut}. Suppose the entanglement spectra along the two minimal cuts are such that $\nu_{\gamma_{A,1}}^{(n)}$ and $\nu_{\gamma_{A,2}}^{(n)}$ converge weakly to continuous measures $\nu_1$ and $\nu_2$ respectively as $n \to \infty$.
  Then $\nu_{A}^{(n)}$ converges weakly, in probability, to $\minstar(\nu_1,\nu_2)$ which is the pushforward of $\nu_1$ and $\nu_2$ along the function $\min \colon \RR \times \RR \to \RR$.
  In other words, for any bounded continuous function $f \in C_b(\RR)$
  \begin{align} \label{eq:minareaspec}
    \sum_{\lambda \in \spec(\rho_A)} \lambda \, f\mleft(\frac{\log(\frac1\lambda) - H(n)}{\sqrt{n}}\mright) \rightarrow \int \int f(\min(x_1,x_2)) \, \d \nu_1(x_1) \d \nu_2(x_2)
  \end{align}
  in probability.
\end{thm*}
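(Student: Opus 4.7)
The plan is to reduce the two-cut statement to a family of essentially ``flat'' two-cut problems indexed by Schmidt sectors, within each of which the unique-cut result \cref{thm:rtn with arbitrary link states} applies. First I would observe that, by a standard approximation argument (Portmanteau, together with continuity of $\minstar(\nu_1,\nu_2)$, which is inherited from continuity of $\nu_1$ and $\nu_2$), it suffices to prove convergence of $\EE[\int f\, d\nu_A^{(n)}]$ together with a vanishing variance bound, for a dense family of test functions $f$, e.g.\ indicators of intervals whose endpoints are continuity points of $\minstar(\nu_1,\nu_2)$.

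Next, discretize the rescaled log-axis into a mesh $\{I_\ell\}_\ell$ of width $\delta$ and, for each cut $i\in\{1,2\}$, let $P^{(i)}_\ell$ be the projector onto those product Schmidt eigenstates of the link states along $\gamma_{A,i}$ whose rescaled log-value $\frac{1}{\sqrt n}(\sum_{e\in \gamma_{A,i}} \log(1/\lambda_{e,i_e}) - H_i(n))$ lies in $I_\ell$. By hypothesis, the total link-state weight on sector $\ell$ for cut $i$ equals $w^{(i)}_\ell = \nu^{(n)}_{\gamma_{A,i}}(I_\ell) \to \nu_i(I_\ell)$, while the dimension $D^{(i)}_\ell$ of the range of $P^{(i)}_\ell$ satisfies $D^{(i)}_\ell \approx w^{(i)}_\ell \cdot e^{H_i(n) + \sqrt n\, x_\ell}$ for any representative $x_\ell \in I_\ell$. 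Because $\gamma_{A,1}$ and $\gamma_{A,2}$ are edge-disjoint, the sector projectors factor across the two cuts.

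Within a fixed sector pair $(\ell_1,\ell_2)$, the restriction of the background state is proportional to a maximally mixed state on a subspace of joint dimension $D^{(1)}_{\ell_1} D^{(2)}_{\ell_2}$, realizing a random tensor network with \emph{flat} effective link states and two candidate cuts. Applying \cref{thm:rtn with arbitrary link states} inside the sector, with the cut of smaller effective dimension playing the role of the unique minimal cut, gives that the sectored boundary state has entanglement spectrum supported, up to $e^{-K(n)}$ error, on roughly $\min(D^{(1)}_{\ell_1}, D^{(2)}_{\ell_2})$ eigenvalues of common magnitude. Translating to rescaled logs, the contribution of the sector to $\int f\, d\nu_A^{(n)}$ is $w^{(1)}_{\ell_1} w^{(2)}_{\ell_2}\, f(\min(x_{\ell_1}, x_{\ell_2}))$ up to $o(1)$; summing over sectors and letting first $n\to\infty$ and then $\delta\to 0$ produces the Riemann-sum approximation of $\int\int f(\min(x_1,x_2))\, d\nu_1(x_1)\, d\nu_2(x_2)$, which is the right-hand side of \cref{eq:minareaspec}.

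The hard part will be controlling the coupling between sectors: the true state $\rho_A$ is produced by a single Haar integral at the bulk vertices that coherently mixes different $(\ell_1,\ell_2)$ components, so the decomposition above is only approximately a direct sum. I would handle the cross-sector contributions by a second-moment / replica calculation as in \cref{sec:replica trick}, expanding $\EE[\tr[\rho_A^k]]$ as a sum over permutation patterns and showing that sector by sector only the diagonal, identity-on-both-replicas pattern dominates; this is essentially the same calculation that powers \cref{thm:rtn with arbitrary link states}, so the key technical tools should already be available. A secondary subtlety is the inheritance of $(\eps(n),K(n))$-minimality by each sector: since each sector carries only an $O(\nu_i(I_\ell))$ fraction of the total weight, the effective smoothing parameter and min-entropy gap degrade, and one needs $K(n)\to\infty$ and $\eps(n)\to 0$ fast enough to dominate the $O(\delta^{-2}\sqrt n)$ many sectors and still leave a net vanishing error; this is where the precise quantitative form of the hypotheses is used. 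Continuity of $\nu_1$ and $\nu_2$ finally ensures that no mass accumulates on grid boundaries as $\delta\to 0$.
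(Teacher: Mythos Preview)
Your high-level strategy---discretize the spectra along the two cuts into bins, reduce to flat link states in each bin, apply the single-cut result sector by sector, then reassemble---is indeed the paper's approach. But several of the technical moves you sketch either do not work as stated or are handled quite differently in the paper, and one of them is a genuine gap.

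\textbf{The near-diagonal sectors.} You say ``with the cut of smaller effective dimension playing the role of the unique minimal cut'' and invoke \cref{thm:rtn with arbitrary link states}. But that theorem requires a quantitative min-entropy gap $K$ between the two cuts, and for sector pairs $(\ell_1,\ell_2)$ with $|x_{\ell_1}-x_{\ell_2}|$ small the gap is only $O(\sqrt n\,|x_{\ell_1}-x_{\ell_2}|)$, which can be arbitrarily small. The paper deals with this by explicitly excising a neighborhood of the diagonal $\{x_1=x_2\}$ of width $n^{-1/4}$ from the binned state; continuity of $\nu_1,\nu_2$ is used here (not, as you suggest, to control grid-boundary mass) to show this excision has vanishing weight. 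After this excision, every remaining sector pair has a gap of order at least $n^{1/4}$, and only then can the single-cut machinery be applied. Your proposal does not address this and would fail on the diagonal.

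\textbf{Why sector projectors act on the boundary.} You write that the sector projectors factor across the two cuts because $\gamma_{A,1}\cap\gamma_{A,2}=\emptyset$. That is true on the \emph{edge} Hilbert spaces, but $\rho_A$ lives on $\HH_A$, and in general the bin projectors on the cut edges have no direct action there. The paper's first step is a reduction (via \cref{thm:split transfer}) to a subnetwork $G'$ in which $\gamma_{A,1}$ and $\gamma_{A,2}$ are literally incident to the new boundary $B\cup\bar B$; only after this reduction do the bin projectors along $\gamma_{A,1}$ act on $\HH_B$ and those along $\gamma_{A,2}$ on $\HH_{\bar B}$, giving genuine orthogonality on the boundary. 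Without this step your ``approximately a direct sum'' claim has no handle.

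\textbf{Cross-term control and the asymmetric decomposition.} Rather than a symmetric sum over sector pairs $(\ell_1,\ell_2)$, the paper decomposes the (diagonal-excised) state as $\sum_j\ket{\phi^{(j,a)}}+\sum_j\ket{\phi^{(j,b)}}$, where $\ket{\phi^{(j,a)}}$ fixes a single maximally entangled block $\Phi^+_{2,j}$ on $\gamma_{A,2}$ and sums over all smaller blocks on $\gamma_{A,1}$ (and vice versa for $(j,b)$). This asymmetric grouping makes the $\sigma^{(j,a)}$ mutually orthogonal on $\bar B$ and the $\sigma^{(j,b)}$ mutually orthogonal on $B$ by construction; the only nontrivial cross terms are between the $a$- and $b$-families, and these are bounded not by a replica computation but by a direct Uhlmann/fidelity estimate (\cref{lem:tracenorm off-diagonal}) combined with the rank bound \cref{eq:rank sigma a}. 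Your proposed second-moment/replica route for cross terms might be workable, but it is not what the paper does and you would have to redo a fair amount of the analysis.

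\textbf{From spectra to weak convergence.} Finally, the paper does not proceed via ``$\EE[\int f\,d\nu_A^{(n)}]$ plus a variance bound''. Instead it proves trace-norm closeness $\EE\|\spec_+(\rho_A)-\spec_+(\phi^{(a)}_B\oplus\phi^{(b)}_{\bar B})\|_1\to 0$ and then invokes the tailor-made \cref{lem:trace distance convergence}, which converts trace-norm convergence of the underlying probability vectors directly into weak convergence of the measures $\nu^{(n)}$. This is both cleaner and avoids any moment computation beyond $k=2$.
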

We also show in \cref{cor:convergence of entropy} that, up to an error of size $\bigO(\log(n))$, this allows us to compute the entropy of $\rho_A$ .

An analogous statement to \eqref{eq:minareaspec} was previously conjectured in \cite{marolf2020probing} to be valid in quantum gravity and justified at a physics level of rigour in \cite{akers2020leading}.
Our proof of \eqref{eq:minareaspec} is closely related to the arguments in \cite{akers2020leading}, but converting the physical arguments into a mathematical proof and careful controlling the relevant sources of error requires significant technical work, and constitutes the majority of \cref{sec:far from max entangled}.

In \cref{sec:gravity context}, we relate our results to the study of holographic gravity computations, particularly in situations with competing minimal surfaces.
This is not needed to understand the results of this work, but provides additional motivation for the relevance of our results and clarifies the way in which random tensor networks provide a useful toy model for holographic quantum gravity. Then, in \cref{sec:split_recovery}, we discuss the relation of our results to split transfer and its relevance in holography. This is again not needed to understand the main text, but rather pointing at topics our work is connected to.
Next, in \cref{sec:joint smoothing}, we prove two technical lemmas on joint smoothing that are necessary to analyze competing minimal cuts in the unbounded spectral variation regime.
Finally, in \cref{sec:metric}, we prove that a certain function on the symmetric group is a metric; this is not directly used in the current work, but may be of independent interest for the study of random tensor networks.

After the completion of this manuscript, we became aware of independent work by Jinzhao Wang \cite{wang2022information} on the use of free products to describe entanglement in the toy model of quantum gravity introduced in \cite{penington2019replica} that has strong overlap with the ideas in \cref{sec:bounded spectral variation and free product} and \cref{sec:gravity context}.

\subsection{Notation and conventions}
For $k \in \NN$, we let $[k] = \{1, \ldots, k\}$, and we denote by $S_k$ the group of permutations of this set.
We denote by $\norm{a}_p$ the $\ell_p$-norm of a vector~$a$, defined by $\norm a_p^p = \sum_i \abs{a_i}^p$.
If $a$ and $b$ are vectors of different dimension, we extend the shorter vector by zeros and still write $\norm{a-b}_p$ for their distance.
For example, if $a \in \CC^{d_1}$ and~$b \in \CC^{d_2}$ with $d_2 > d_1$, we write
$\norm{a - b}_1 = \sum_{i=1}^{d_1} \abs{a_i - b_i} + \sum_{i=d_1 + 1}^{d_2} \abs{b_i}$.
We also denote by $\norm{A}_p$ the Schatten $p$-norm of an operator $A$, defined as the $\ell_p$-norm of the singular values $\{s_i\}$ of~$A$; it can also be computed by $\norm{A}_p^p = \tr((A^\dagger A)^{p/2})$.
The operator norm is given by $\norm{A}_{\infty} = \max\{s_i\}$.
If~$\HH$ is a Hilbert space, we introduce the notation $\PSD(\HH)$ for the set of positive semidefinite operators on $\HH$.
We often refer to positive semidefinite operators as `density operators' or `states', \emph{without} requiring them to be normalized to unit trace.
We write $\Peq(\HH)$ for the set of $\rho \in \PSD(\HH)$ with unit trace, $\tr[\rho] = 1$, and we denote by $\Pleq(\HH)$ the set of subnormalized states, that is, $\rho \in \PSD(\HH)$ with $\tr[\rho] \leq 1$.
We use the convention that for a vector $\ket{\phi}$, we denote the corresponding pure state by $\phi$, so $\phi = \proj{\phi}$.
Given a positive semidefinite operator~$\rho$, we denote by $\spec(\rho)$ the vector containing its spectrum in non-increasing order, and we write $\spec_+(\rho)$ for the nonzero part of the spectrum.
It is a well-known fact that
\begin{align}\label{eq:spectrum bound}
  \norm{\spec(\rho) - \spec(\sigma)}_1 = \norm{\spec_+(\rho) - \spec_+(\sigma)}_1 \leq \norm{\rho - \sigma}_1
\end{align}
(in the second expression we use the convention for the distance of vectors of possibly different dimension introduced above).
If $A$ is a quantum system with Hilbert space $\HH_A$, we write $\PSD(A) = \PSD(\HH_A)$, $\Peq(A) = \Peq(\HH_A)$, and $\Pleq(A) = \Pleq(\HH_A)$, and we use subscripts, e.g.~$\rho_A \in \PSD(A)$, to indicate which system and Hilbert space a quantum state is associated with.
For a bipartite state~$\rho_{AB} \in \PSD(AB)$ defined on a tensor product of Hilbert spaces $\HH_{AB} = \HH_A \ot \HH_B$, we obtain the reduced state or reduced density operator $\rho_A \in \PSD(A)$ by taking the partial trace over the complement: $\rho_A = \tr_B[\rho_{AB}]$, and similarly in multipartite situations.
Finally, we adopt the standard notation that if $\mu_n$ is some sequence of finite measures, we write~$\mu_n \Rightarrow \mu$ if~$\mu_n$ converges weakly (or in distribution) to a finite measure $\mu$, meaning that for any bounded continuous function $f \in C_b(\RR)$,
\begin{align}\label{eq:weak convergence conventions}
  \int f(x) \d \mu_n(x) \to \int f(x) \d \mu(x).
\end{align}
If $\mu_n$ is a sequence of \emph{random} finite measures on $\RR$, we say that the sequence $\mu_n$ \emph{converges weakly, in probability,} to a finite measure $\mu$, if, for any bounded continuous function $f \in C_b(\RR)$, \cref{eq:weak convergence conventions} converges in probability, i.e., if for every~$\eps > 0$ we have that
\begin{align*}
  \lim_{n \to \infty} \Pr\left( \abs{\int f(x) \d\mu_n(x) - \int f(x) \d\mu(x) } \geq \eps\right) = 0.
\end{align*}
In this situation we will also write $\mu_n \Rightarrow \mu$, in probability.
All logarithms are to base 2.

\section{Random tensor network states}\label{sec:rtn}
We first review the random tensor network model, closely following \cite{hayden2016holographic,dong2021holographic}.
Let~$G = (V,E)$ be a connected undirected graph, and let~$V = V_\partial \sqcup V_b$ be a partition of the vertices into a set of boundary vertices~$V_\partial$ and bulk vertices~$V_b$.
If $A \subseteq V_\partial$, we write $\bar{A} = V_\partial \setminus A$.
We assign a bond dimension~$D_e$ to each edge, and we will consider families of states with increasing bond dimensions; for example, we may take~$D_e = D$ for all edges and let $D$ increase.
For each vertex $x \in V$, let $\partial \{x\}$ denote the set of edges~$e = (xy) \in E$ connecting~$x$ to some~$y \in V$.
We define Hilbert spaces~$\HH_{e,x} = \CC^{D_e}$ for $e \in \partial \{x\}$, and $\HH_x := \bigotimes_{e \in \partial \{x\}} \HH_{e,x}$.
We call the pair $(e,x)$ a \emph{half-edge}.
Moreover, we write $\HH_{e} = \HH_{e,x} \ot \HH_{e,y}$ for an edge $e = (xy) \in E$.
Let~$D_x = \dim(\HH_x)$.
For a subset $A \subseteq V$, we write $\HH_A = \bigotimes_{x \in A} \HH_x$, and similarly, for a subset~$S \subseteq E$ we write~$\HH_S = \bigotimes_{e \in S} \HH_e$.
Similarly, for a set $T$ of half-edges we write $\HH_T = \bigotimes_{(e,x) \in T} \HH_{e,x}$.
At each edge $e = (xy) \in E$, we place a pure state $\ket{\phi_e} \in \Peq(\HH_e)$
\begin{align}\label{eq:link state edge}
  \ket{\phi_{e}} = \sum_{i=1}^{D_e} \sqrt{\lambda_{e,i}} \ket{ii} \in \HH_e = \HH_{e,x} \ot \HH_{e,y},
\end{align}
that we call a \emph{link state}. Then, $\phi_{e,x} = \phi_{e,y} = \sum_{i=1}^D \lambda_{e,i} \proj{i}$ is the reduced density matrix of the link state on either of the two subsystems. We refer to the vector $\spec(\phi_{e,x}) = \spec(\phi_{e,y})$, which is ordered in non-increasing fashion, as the \textit{entanglement spectrum} of $\phi_e$. Let $\phi \in \Peq(V)$ be the full state on edges given by the tensor product of link states
\begin{align}\label{eq:link state}
  \ket{\phi} = \bigotimes_{e \in E} \ket{\phi_{e}}.
\end{align}
At every bulk vertex $x \in V_b$, we place a random vector~$\ket{\psi_x} \in \HH_x$, where the entries of $\ket{\psi_x}$ are independent standard (circularly-symmetric) complex Gaussian random variables: each entry of the tensor can be written as~$\frac{1}{\sqrt2}(x + iy)$ where $x$ and $y$ are independent real Gaussian random variables of mean 0 and unit variance.
We note that, in the model of \cite{hayden2016holographic}, the tensors $\ket{\psi_x}$ were not chosen as random Gaussian vectors, but as uniformly random vectors on the unit sphere.
However, for our choice of Gaussian $\ket{\psi_x}$, the norm $\norm{\ket{\psi_x}}$ is independent of the normalized vector $\ket{\psi_x}/\norm{\ket{\psi_x}}$, and $\ket{\psi_x}/\norm{\ket{\psi_x}}$ will be a uniformly random vectors on the unit sphere.
Therefore, these two models only differ by their normalization.
We write~$\ket{\psi} = \bigotimes_{x \in V_b} \ket{\psi_x}$.
The resulting \emph{random tensor network state} $\rho \in \PSD(V_\partial)$ is defined by
\begin{align}\label{eq:rtn state}
  \ket{\rho} = (I_{V_\partial} \otimes \bra{\psi}) \ket{\phi}.
\end{align}
The random tensor network state is obtained by projecting the link states onto random vectors, so that the final state lives in the boundary Hilbert space.
We can make this manifest by using the cyclicity of the trace to write the density matrix:
\begin{align}\label{eq:rtn state 2}
  \rho
  = \left(I_{V_{\partial}} \ot \bra{\psi} \right) \phi \left(I_{V_{\partial}} \ot \ket{\psi} \right)
  = \tr_{V_b}\mleft[\left( I_{V_{\partial}} \ot \psi \right) \phi \mright].
\end{align}
Note that this state need not be normalized, but we chose the standard deviation of the $\ket{\psi_x}$ such that $\rho$ is normalized on average, given that the link state $\phi$ is normalized:
\begin{align}\label{eq:mean normalization}
  \EE \tr[\rho] = \tr[\phi].
\end{align}
In \cref{sec:normalization}, we prove the stronger statement that $\rho$ is normalized with high probability for appropriately connected tensor networks and large bond dimension. Note also, that in \cref{eq:link state edge}, we have chosen states which have a Schmidt decomposition in a fixed basis (the standard basis).
Since we project onto uniformly random tensors, we can choose to do so without loss of generality.

\subsection{The replica trick for random tensor networks}\label{sec:replica trick}
We now consider a boundary subset $A \subseteq V_{\partial}$ and use the \emph{replica trick} to study the R\'enyi entropies of the reduced density matrix $\rho_A$.
The replica trick for random tensor network models was first studied in \cite{hayden2016holographic}, and it is the key tool we apply throughout this work.
Let $\HH$ be a Hilbert space.
The R\'enyi entropies of a (normalized) density matrix $\rho \in \Peq(\HH)$ are defined by
\begin{align*}
  H_k(\rho) = \frac{1}{1-k}\log(\tr[\rho^k])
\end{align*}
for $k \in (0,1) \cup(1,\infty)$.
For $k = 0,1,\infty$, there are well-defined limits, given by
\begin{align}\label{eq:renyi special cases}
  \begin{split}
    H_0(\rho) &:= \log(\rank(\rho))\\
    H_1(\rho) &:= -\tr[\rho\log(\rho)]\\
    H_{\infty} &:= -\log(\norm{\rho}_{\infty}).
  \end{split}
\end{align}
In particular, we see that $H(\rho) = H_1(\rho)$ is the von Neumann entropy.
We will also write $H(A)_\rho := H(\rho_A)$ and $H_k(A)_{\rho} := H_k(\rho_A)$ for reduced density matrices.
If $\rho \in \Pleq(\HH)$ is subnormalized, we let
\begin{align}\label{eq:renyi subnormalized}
  H_k(\rho) = \frac{1}{1-k}\log \frac{\tr\mleft[\rho^k\mright]}{\tr[\rho]}.
\end{align}
Denote by~$R$ the representation of~$S_k$ on~$\HH^{\ot k}$ which permutes the~$k$ copies of $\HH$ according to the action of~$S_k$.
We will write $R_x(\pi)$ when $\HH = \HH_x$ and $R_A(\pi)$ if $\HH = \HH_A$ for~$A \subseteq V$.
We let~$\tau$ denote the standard~$k$-cycle in $S_k$, i.e.,
\begin{align*}
  \tau = (1 2 \ldots k).
\end{align*}
The key idea of the replica trick is the observation that the $k$-th moment of $\rho \in \PSD(\HH)$ can be written as
\begin{align}\label{eq:replica trick}
  \tr \mleft[\rho^k\mright]  = \tr \mleft[R(\tau)\rho^{\ot k}\mright].
\end{align}
Recall the notion of the cycle type of a permutation $\pi$: if $\pi$ can be written as as a product of $m$ disjoint cycles of lengths $l_1, \ldots, l_m$, then $\pi$ has cycle type $C(\pi) = \{l_1,\ldots,l_m\}$.
Then, for an arbitrary $\pi \in S_k$,
\begin{align*}
  \tr\mleft[ R(\pi)\rho^{\ot k}\mright] = \prod_{l \in C(\pi)} \tr\mleft[ \rho^l \mright].
\end{align*}
Note that this is the generalization of the well-known \emph{swap trick} for two copies of a state $\rho$. The other crucial ingredient is a property of the Gaussian random vectors:
\begin{align}\label{eq:expectation random state}
  \EE\mleft[\psi_x^{\ot k}\mright] = \sum_{\pi \in S_k} R_x(\pi).
\end{align}
Using \cref{eq:rtn state 2}, we may then compute
\begin{align}\label{eq:replica trick computation}
  \begin{split}
    \EE\tr\mleft[\rho_A^k\mright] &= \EE\tr\mleft[R_A(\tau) \rho_{A}^{\ot k}\mright] \\
    &= \EE\tr\mleft[\left(R_A(\tau) \ot R_{V_\partial \setminus A}(\id)\right) \rho^{\ot k}\mright]\\
    &= \EE\tr\mleft[\left(R_A(\tau) \ot R_{V_\partial \setminus A}(\id)\right) \left( I_{V_{\partial}} \ot \psi \right)^{\ot k} \phi^{\ot k}\mright] \\
    &= \tr\mleft[\left(R_A(\tau) \ot R_{V \setminus A}(\id)\right) \EE \mleft[(I_{V_\partial} \ot \psi)^{\ot k}\mright] \phi^{\ot k}\mright].
  \end{split}
\end{align}
To further simplify this expression, we define the following set:
\begin{align}\label{eq:boundary conditions}
  \mathcal S_{A,\sigma} = \left\{ \{\pi_x\}_{x \in V} : \pi_x \in S_k, \text{ where } \pi_x = \sigma \text{ for } x \in A \text{ and } \pi_x = \id \text{ for } x \in \bar{A} \right\},
\end{align}
for any $\sigma \in S_k$ and $A \subseteq V_\partial$.
An element of $\mathcal S_{A,\sigma}$ assigns a permutation to each vertex in $V$ subject to a `boundary condition.'
Now, using \cref{eq:expectation random state}, we find that
\begin{align*}
  \EE\tr\mleft[\rho_A^k\mright] & = \sum_{\{\pi_x\} \in \mathcal S_{A,\tau}} \tr\mleft[\bigotimes_{x \in V} R_x(\pi_x) \phi^{\ot k}\mright].
\end{align*}
Finally, we observe that for $e = (xy)$
\begin{align*}
  \tr\mleft[R(\pi_x) \ot R(\pi_y) \phi^{\ot k}\mright] = \prod_{l \in C(\pi_x^{-1} \pi_y)} \tr\mleft[\phi_{e,x}^l\mright],
\end{align*}
where we recall that $\phi_{e,x}$ is the reduced density matrix of the link state on edge $e = (xy)$. Thus, we conclude that
\begin{align}\label{eq:replica trick for network}
  \EE\tr\mleft[\rho_A^k\mright] & = \sum_{\{\pi_x\} \in \mathcal S_{A,\tau}} \prod_{e = (xy) \in E} \prod_{l \in C(\pi_x^{-1} \pi_y)} \tr\mleft[\phi_{e,x}^l\mright].
\end{align}
We can interpret the expectation as the partition function of a classical spin model
\begin{align}\label{eq:spin model}
  \EE\tr\mleft[\rho_A^k\mright] & = \sum_{\{\pi_x\} \in \mathcal S_{A,\tau}} 2^{-\sum_{e = (xy) \in E} J_{e}(\pi_x, \pi_y)},
\end{align}
where the site variables in the spin model are permutations $\pi_x \in S_k$, the interaction at the edges between sites is given by
\begin{align*}
  J_{e}(\pi_x,\pi_y) = -\sum_{l \in C(\pi_x^{-1} \pi_y)} \log(\tr\mleft[\phi_{e,x}^l\mright]) = \sum_{l \in C(\pi_x^{-1} \pi_y)} (l-1) H_l(\phi_{e,x}),
\end{align*}
with $H_l$ the $l$-th R\'enyi entropy, and the model as boundary conditions such that the permutation must be $\tau$ on $A$ and $\id$ on $\bar{A}$.
It turns out that $J_e$ is a metric on the symmetric group $S_k$ -- see \cref{sec:metric}.
Similarly, we may place an arbitrary permutation $\pi$ on $A$ instead of $\tau$, which yields (by exactly the same reasoning)
\begin{align}\label{eq:replica trick for network general pi}
  \EE\tr \mleft[R(\pi)\rho^{\ot k}\mright] & = \sum_{\{\pi_x\} \in \mathcal S_{A,\pi}} \prod_{e = (xy) \in E} \prod_{l \in C(\pi_x^{-1} \pi_y)} \tr\mleft[\phi_{e,x}^l\mright].
\end{align}

\subsection{Maximally entangled link states and minimal cuts}\label{sec:ground state configurations}
We will now discuss the special case where all the link states are maximally entangled states of dimension $D$, which has been studied extensively in \cite{hayden2016holographic}.
We will generalize the results we discuss here to a wider class of link states in \cref{sec:almost max entangled}.
In this case, the entanglement spectra of the link states are flat: for~$e \in E$, we have $\lambda_{e,i} = \frac{1}{D}$ for~$i = 1, \ldots, D$. In particular, for all $l \in \NN$ we have~$H_l(\phi_{e}) = \log(D)$ and hence
\begin{align*}
  J_{e}(\pi_x,\pi_y) = \log(D)\sum_{l \in C(\pi_x^{-1} \pi_y)} (l-1).
\end{align*}
This leads to the so-called \textit{Cayley distance} on $S_k$:
\begin{align}\label{eq:cayley}
  d(\pi_x,\pi_y) = \sum_{l \in C(\pi_x^{-1} \pi_y)} (l-1) = k - \abs{C(\pi_x^{-1} \pi_y)},
\end{align}
where $\abs{C(\pi)}$ is the number of cycles in $\pi$.  Moreover, $d(\pi_x,\pi_y)$ is a metric and equals the minimal number of transpositions needed to transform $\pi_x$ into $\pi_y$. We say that $\pi \in S_k$ is on a geodesic between $\pi_1$ and $\pi_2$ if $d(\pi_1, \pi) + d(\pi, \pi_2) = d(\pi_1,\pi_2)$ (recall that $d$ is a metric).
We can rewrite the spin model in terms of this distance:
\begin{align}\label{eq:spin model max ent}
  \EE\tr\mleft[\rho_A^k\mright] & = \sum_{\{\pi_x\} \in \mathcal S_{A,\sigma}} 2^{-\log(D) \sum_{e = (xy) \in E} d(\pi_x, \pi_y)}.
\end{align}
The physically inclined reader may observe that the logarithm of the bond dimension has the role of an inverse temperature, and for large $D$, the dominant contribution to the partition function will be the ground state of the spin model, subject to the relevant boundary conditions.

To describe the dominant contribution to the sum in \cref{eq:spin model max ent} for large~$D$, we need the \emph{minimal cuts} for~$A$ in~$G$.
A cut for~$A$ is a subset of the vertices~$\Gamma_A \subset V$ such that $\Gamma_A \cap V_{\partial} = A$.
Throughout this work, we will denote the set of all cuts for~$A$ by~$C(A)$.
We will use the convention of denoting cuts (i.e. subsets of vertices) by capital Greek letters. Given a cut~$\Gamma_A \in C(A)$, we will denote the set of edges crossing the cut, that is, edges connecting a vertex in $\Gamma_A$ with a vertex in~$V \setminus \Gamma_A$, by lowercase Greek letters~$\gamma_A$ (and by an abuse of language, also refer to this set as a `cut'). A minimal cut for~$A$ is a cut such that the number of edges~$\abs{\gamma_A}$ is minimal.
We write $m(A) = \abs{\gamma_A}$ for a minimal cut $\gamma_A \in C(A)$.
If $\Gamma_A \in C(A)$, we write~$\Gamma_A^c = V \setminus \Gamma_A$.
Note that $\Gamma_A^c$ is a cut for $\bar{A} = V_{\partial} \setminus A$.

In the simplest case, there is a \emph{unique} minimal cut $\gamma_A$.
For this case, one can show that the dominant configuration is the one in which $\pi_x = \tau$ for $x \in \Gamma_A$ and $\pi_x = \id$ for $x \in V \setminus \Gamma_A$, see \cite{hayden2006aspects}, or \cref{prop:convergence of moments}.
That is, there are two domains in the spin model corresponding to $\tau$ and $\id$, and the minimization of the domain wall corresponds to the minimal cut in the graph.

We will also be interested in the case of exactly two non-intersecting minimal cuts~$\Gamma_{A,1}$ and $\Gamma_{A,2}$.
In this case, we have that $\Gamma_{A,1} \subset \Gamma_{A,2}$, or $\Gamma_{A,2} \subset \Gamma_{A,1}$.
After relabeling, we may assume that the first is the case, and define three domains in the graph: $V = V_1 \sqcup V_2 \sqcup V_3$ given by $V_1 = \Gamma_{A,1}$, $V_2 = \Gamma_{A,2} \setminus \Gamma_{A,1}$ and~$V \setminus \Gamma_{A,2}$.
If there are exactly two minimal cuts, then multiple dominant configurations contribute equally to the partition function \cref{eq:spin model max ent}. These dominant configurations can be constructed as follows:
for each $\pi$ on a geodesic between $\tau$ and $\id$, set $\pi_x = \tau$ for $x \in V_1$, $\pi_x = \pi$ for $x \in V_2$ and $\pi_x = \id$ for $x \in V_3$. That these are the dominant configurations follows immediately from the fact that $d(\tau, \pi) + d(\pi, \id) \geq d(\tau, \id)$, with equality if and only if $\pi$ is on a geodesic between $\tau$ and $\id$.

To understand this degeneracy, we use the following fact \cite{nica2006lectures}: \emph{the set of permutations $\pi$ on a geodesic between $\tau$ and $\id$ is in a one-to-one correspondence with the set of non-crossing partitions $NC(k)$ of $[k]$.}
See \cref{sec:free probability} for a definition and properties of $NC(k)$.
Thus, the degeneracy for the the $k$-th moment is~$\abs{NC(k)} = C_k$ where
\begin{align*}
  C_k = \frac{1}{k+1} \binom{2k}{k}
\end{align*}
is the $k$-th Catalan number.
These are the moments of the \emph{Marchenko-Pastur distribution} $\MP(t)$
\begin{align}\label{eq:mp dist}
  \begin{split}
    MP(t) &= \max(1 - t,0) \delta_0 + \nu_t \\
    \d \nu_t(x) &= \frac{\sqrt{4t - (x - 1 - t)^2}}{2\pi x} \ind_{(x - 1 - t)^2 \leq 4t} \d x.
  \end{split}
\end{align}
This allows one to show the folklore result (which we prove and extend to more general link states in \cref{thm:spectrum near max entangled}) that upon an appropriate rescaling, the empirical distribution of the spectrum of $\rho_A$ converges to a Marchenko-Pastur distribution.
This is in line with the case of a single random tensor, which precisely yields a Wishart matrix (see \cref{sec:rmt} for a brief introduction to these objects).
In the first case, where there is a unique minimal cut, the entanglement spectrum of $\rho_A$ is flat, while, as we have seen, in the second case, the degeneracy gives rise to a nontrivial spectrum in the right scaling limit.

\subsection{The replica trick for general background states}\label{sec:background states}
In \cref{eq:spin model}, we computed the result of the replica trick for the $k$-th moment for a random tensor network state.
We will also consider the more general setting where the link state is replaced by some arbitrary state~$\phi_V$.
In this setting, there need not be a graph structure, and the Hilbert space at each vertex $x \in V$ can be some arbitrary Hilbert space, rather than a tensor product of Hilbert spaces labelled by half-edges.
We will refer to $\phi_V$ as a ``background state'' instead of a ``link state'' (as the interpretation of links along the edges does not necessarily make sense in this situation).
That is, where before we had a link state
\begin{align*}
  \ket{\phi} = \bigotimes_{e \in E} \ket{\phi_e},
\end{align*}
we will now consider some \emph{arbitrary} possibly mixed and subnormalized $\phi_V \in \Pleq(V)$ in the tensor network construction. We can generalize \cref{eq:rtn state 2} to also apply for general background states to obtain a state~$\rho \in \PSD(V_\partial)$ given by
\begin{align}\label{eq:rtn background state}
  \rho
  = \tr_{V_b}\mleft[\left( I_{V_{\partial}} \ot \psi \right) \phi_V\mright]
\end{align}
where $\ket{\psi}$ is a tensor product of random states at the bulk vertices.
If $\phi$ is pure, then so is $\rho$, since in that case
\begin{align*}
  \ket{\rho} = \left(I_{V_{\partial}} \ot \bra{\psi}\right) \ket{\phi}.
\end{align*}
If $\phi$ is not pure, we can consider a purification $\phi_{VR} \in \Pleq(VR)$ and consider $R$ as an additional boundary system; this leads to a random tensor network state $\rho_{V_{\partial}R}$ which is a purification of $\rho_{V_\partial}$.
This set-up is illustrated in \cref{fig:general background}.
While formally very similar, the resulting state is no longer a PEPS tensor network state in general.

\begin{figure}
  \centering
  \centering
  \begin{overpic}[width=.5\linewidth,grid=false]{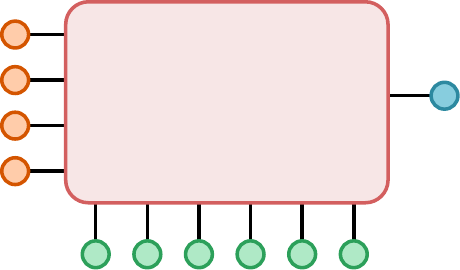}
    \put(47,35){\color{BrickRed}{\Large{$\ket{\phi}$}}}
    \put(-12,50){\color{Bittersweet}{$\bra{\psi_{x_1}}$}}
    \put(-12,40){\color{Bittersweet}{$\bra{\psi_{x_2}}$}}
    \put(-12,30){\color{Bittersweet}{$\bra{\psi_{x_3}}$}}
    \put(-12,20){\color{Bittersweet}{$\bra{\psi_{x_4}}$}}
    \put(-20,35){\color{Bittersweet}{\Large{$V_b$}}}
    \put(45,-7){\color{ForestGreen}{\Large{$V_{\partial}$}}}
    \put(103,36){\color{NavyBlue}{\Large{$R$}}}
  \end{overpic}
  \vspace*{0.5cm}
  \caption{The structure of a (purified) random tensor network with a general background state.}
  \label{fig:general background}
\end{figure}

There are multiple reasons to also allow general background states.
The first reason is of a technical nature: they are useful for estimates based on \textit{smooth entropies}, which we discuss in \cref{sec:far from max entangled}.
In this application, the full state on edges is still pure, but is no longer a tensor product of link states along the edges.
A second motivation for considering general background states is that they can be used as a toy model for holographic systems where there is ``bulk entropy'' present.
Finally, these states are closely related to protocols for the quantum information processing task of split transfer \cite{dutil2010one}.
We comment on this connection in \cref{sec:split_recovery}.

Even for a general background state, a version of the replica trick still applies.
Consider a boundary subsystem $A \subseteq V_{\partial}$ with corresponding boundary state $\rho_A$.
Then, the computation in \cref{eq:replica trick computation} is still valid, and we find
\begin{align}\label{eq:replica two}
  \EE \tr[\rho_A^k] & = \sum_{\{\pi_x\} \in \mathcal S_{A,\tau}} \tr_V\mleft[\bigotimes_{x \in V} R_x(\pi_x) \phi_V^{\ot k}\mright]
\end{align}
where $\tau = (1 2 \ldots k)$.
However, \cref{eq:replica two} no longer has the interpretation of a spin model with local interactions.

For general background states, we will only need the replica trick for $k = 2$.
Since $S_2$ has only two elements, each configuration of permutations is completely characterized by the domain $\Delta_A = \{x \in V \text{ such that } \pi_x = \tau\}$.
Because of the boundary conditions in $\mathcal S_{A,\tau}$, the collection of these sets coincides with $C(A)$,  and hence
\begin{align}\label{eq:general replica trick k=2}
  \EE \tr[\rho_A^2] = \sum_{\Delta_A \in C(A)} \tr[\phi_{\Delta_A}^2] = \sum_{\Delta_A \in C(A)} \tr[\phi] 2^{-H_2(\Delta_A)_\phi} .
\end{align}
Another useful fact is that by \cref{eq:expectation random state},
\begin{align}\label{eq:expectation of rtn state}
  \EE \rho_{V_\partial} = \phi_{V_\partial}.
\end{align}

We remark that if one only uses the $k = 2$ replica trick, one could also use tensors which are drawn from a \textit{projective 2-design}, a distribution which produces tensors with the same first and second moments as uniformly random tensors of unit norm \cite{klappenecker2005mutually,gross2007evenly}. An example of a projective 2-design is the set of uniformly random stabilizer states.
For tensors $\ket{\psi_x}$ drawn from a projective 2-design of dimension $D_x$, it holds that
\begin{align*}
  \EE \psi_x^{\ot 2} = \frac{1}{D_x(D_x - 1)}I + \frac{1}{D_x(D_x - 1)}R_x(\tau),
\end{align*}
and hence
\begin{align*}
  \EE \tr[\rho_A^2] = \frac{D_x}{D_x + 1} \sum_{\Delta_A \in C(A)} \tr[\phi_{\Delta_A}^2],
\end{align*}
which is close to \cref{eq:general replica trick k=2} for large $D_x$.
Thus, it is not hard to see that all random tensor network results which only use the $k = 2$ replica trick are also valid for states with tensors drawn from projective 2-designs.
This was already observed in~\cite{hayden2016holographic} and random tensor networks with random stabilizer tensors were further studied in~\cite{nezami2020multipartite}.
The results of \cref{sec:far from max entangled} only use the $k=2$ replica trick, and thus will extend to states with tensors drawn from projective 2-designs.
This will not be true for the results in \cref{sec:almost max entangled}, which requires usage of the replica trick for all $k \in \NN$.

\subsubsection{Normalization of random tensor network states}\label{sec:normalization}

One immediate consequence of the replica trick for $k=2$ is that the random tensor network state $\rho$ will be approximately normalized with high probability, so long as a mild condition on the background state is satisfied: the bulk needs to be connected, with sufficiently entangled edges. Let
\begin{align}\label{eq:delta normalization}
  \eta = \max_{\Delta \subseteq V_b, \Delta \neq \emptyset} \tr[\phi_{\Delta}^2] = \max_{\Delta \subseteq V_b, \Delta \neq \emptyset} \tr[\phi] 2^{-H_2(\Delta)_\phi}.
\end{align}
If the state has enough correlations along each cut (or more precisely, if $H_2(\Delta)_\phi$ is large for each $\Delta$), then $\eta$ is small.
Concretely, if we consider a random tensor network state with maximally entangled link states of bond dimension $D$, we will have $\eta \leq \frac{1}{D}$.
We then have

\begin{lem}\label{lem:normalization}
  For any background state $\phi \in \Pleq(V)$, with associated $\rho \in \PSD(V_\partial)$ as in \cref{eq:rtn background state}, it holds that for any $\eps > 0$
  \begin{align*}
    \Pr\bigl(\abs{\tr[\rho] - \tr[\phi]} \geq \eps\bigr) \leq 2^{|V_b|}\frac{\eta}{\eps^2}
  \end{align*}
  where $\eta$ is defined in \cref{eq:delta normalization}.
\end{lem}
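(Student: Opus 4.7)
The plan is to apply Chebyshev's inequality to $\tr[\rho]$, so the core task is to bound its variance using the $k=2$ replica trick described in \cref{sec:background states}. We already know from \cref{eq:mean normalization} that $\EE \tr[\rho] = \tr[\phi]$, so
\begin{align*}
  \Var(\tr[\rho]) = \EE\bigl[(\tr[\rho])^2\bigr] - (\tr[\phi])^2,
\end{align*}
and it suffices to show this variance is at most $2^{|V_b|} \eta$.

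First, I would rewrite the square of the trace as $(\tr[\rho])^2 = \tr_{V_\partial \ot V_\partial}[\rho^{\ot 2}] = \tr\bigl[R_{V_\partial}(\id)\, \rho^{\ot 2}\bigr]$ with $\id \in S_2$, so that this quantity fits the template of the replica trick with boundary condition $\sigma = \id$ on $A = V_\partial$. Repeating the derivation leading to \cref{eq:replica two} (which works equally well for any target permutation $\sigma$, as noted around \cref{eq:replica trick for network general pi}) gives
\begin{align*}
  \EE\bigl[(\tr[\rho])^2\bigr] = \sum_{\{\pi_x\} \in \mathcal S_{V_\partial, \id}} \tr_V\!\Bigl[\bigotimes_{x \in V} R_x(\pi_x)\, \phi_V^{\ot 2}\Bigr].
\end{align*}
Since $\pi_x = \id$ is forced for all $x \in V_\partial$ and each remaining $\pi_x \in S_2 = \{\id, \tau\}$, configurations are in bijection with subsets $\Delta \subseteq V_b$ via $\Delta = \{x : \pi_x = \tau\}$. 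The swap trick identity $\tr[(R_\Delta(\tau) \ot R_{V \setminus \Delta}(\id))\phi_V^{\ot 2}] = \tr[\phi_\Delta^2]$ then yields
\begin{align*}
  \EE\bigl[(\tr[\rho])^2\bigr] = \sum_{\Delta \subseteq V_b} \tr[\phi_\Delta^2].
\end{align*}

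Isolating the $\Delta = \emptyset$ term, which contributes $\tr[\phi_\emptyset^2] = (\tr[\phi])^2$, I get
\begin{align*}
  \Var(\tr[\rho]) = \sum_{\substack{\Delta \subseteq V_b \\ \Delta \neq \emptyset}} \tr[\phi_\Delta^2] \leq (2^{|V_b|} - 1)\, \eta \leq 2^{|V_b|}\, \eta,
\end{align*}
by the definition of $\eta$ in \cref{eq:delta normalization} and the trivial count of nonempty subsets. Plugging this into Chebyshev's inequality $\Pr(|\tr[\rho] - \EE \tr[\rho]| \geq \eps) \leq \Var(\tr[\rho])/\eps^2$ completes the proof. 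There is no real obstacle here; the only subtlety is simply to recognize that $(\tr[\rho])^2$ is accessed by the replica trick with $\sigma = \id$ on the full boundary rather than with $\sigma = \tau$, which is precisely what produces a sum ranging over subsets of $V_b$ (the bulk) and separates out the leading $(\tr[\phi])^2$ term from the variance.
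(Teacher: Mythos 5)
Your proof is correct and follows essentially the same route as the paper's: bound $\Var(\tr[\rho])$ by the $k=2$ replica trick and apply Chebyshev. The only cosmetic difference is that the paper invokes \cref{eq:general replica trick k=2} directly with $A = \emptyset$ (for which $C(\emptyset)$ ranges over subsets of $V_b$ and $\rho_\emptyset = \tr[\rho]$), whereas you re-derive the same identity by imposing the boundary permutation $\id$ on all of $V_\partial$ — but $\mathcal S_{V_\partial, \id} = \mathcal S_{\emptyset, \tau}$, so these are literally the same computation.
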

\begin{proof}
  This follows from a special case of \cref{eq:general replica trick k=2}.
  In this case, the empty cut contributes $\tr[\phi]^2,$ so we find
  \begin{align*}
    \Var(\tr[\rho]) = \EE \abs{\tr[\rho] - \tr[\phi]}^2 = \EE \abs{\tr[\rho]^2 - \tr[\phi]^2}  \leq 2^{V_b} \max_{\Delta \subseteq V_b, \Delta \neq \emptyset} \tr[\phi_{\Delta}^2],
  \end{align*}
  where we have used the normalization of $\rho$ in expectation $\EE\tr[\rho] = \tr[\phi]$, as in \cref{eq:mean normalization}.
  The result follows by an application of Chebyshev's inequality.
\end{proof}

We can improve this result by taking advantage of the fact that our random projectors are random Gaussian vectors, allowing us to use Gaussian concentration of measure rather than the Chebyshev's inequality. For instance, using a concentration bound for Gaussian polynomials (see \cite{aubrun2017alice}, Corollary 5.49) one can show that for any $\eps \geq (\sqrt{2}e)^{2V_b} \eta$:
\begin{align*}
  \Pr\bigl(\abs{\tr[\rho] - \tr[\phi]} \geq \eps\bigr) \leq \exp\left(-\frac{\abs{V_b}}{2e}\eps^{\frac{1}{\abs{V_b}}}\eta^{-\frac{1}{\abs{V_b}}}\right),
\end{align*}
where $\eta$ is defined as in \cref{eq:delta normalization}. We will not need this refinement.

\section{Link states with bounded spectral variation}\label{sec:almost max entangled}
In this section, we study random tensor network states with link states that have \textit{bounded spectral variation}, meaning that there is an effective bond dimension $D$ such that the Schmidt coefficients of the link state are of the order $\tfrac{1}{D}$.

We start by providing background material on random matrix theory and \emph{free probability}, which is a key tool in the study of products of random matrices.
In \cref{sec:bounded spectral variation and free product}, we will precisely define the notion of bounded spectral variation and generalize the results in \cref{sec:ground state configurations} for random tensor network states with maximally entangled link states to this wider class of link states.
This leads to the main result of this section, \cref{thm:spectrum near max entangled}, which shows that the asymptotic entanglement spectrum can be expressed in terms of a free product of distributions.
We will see that the results are similar to the quantum gravity set-up described in \cref{sec:jt gravity}.
Finally, in \cref{sec:negativity}, we investigate the entanglement negativity for random tensor network states with link states of bounded spectral variation.

\subsection{Random matrices, free probability and non-crossing partitions}\label{sec:free probability}

\subsubsection{Random matrix theory and Wishart matrices}\label{sec:rmt}
We start by reviewing relevant concepts from probability and random matrix theory that are relevant for our analyses. This material can be found in any introduction to random matrix theory, e.g. \cite{anderson2010introduction, bai2010spectral, potters2020first}.

A fundamental question in random matrix theory is as follows: given a family of $n \times n$ matrices with entries selected according to some distribution, what is the asymptotic distribution of the eigenvalues as~$n \to \infty$?
This question has been extensively studied, and in many cases has an elegant and concise answer.
We discuss a basic example which is closely related to our purposes: \textit{Wishart matrices}.
Consider an $n \times m$ matrix $X$ whose entries are drawn i.i.d. from a Gaussian distribution with mean zero and unit variance. The sample covariance matrix of $X$ is the $n\times n$ matrix defined as
\begin{equation}
  Y_{n,m} = \frac{1}{m}XX^T.
\end{equation}
Such random matrices are called \textit{(real) Wishart matrices}, and can be thought of as a sample second moment matrix (where one has $m$ realizations of an $n$-dimensional random variable).
One can also consider \textit{complex} Wishart matrices: in this case the entries of the $n \times m$ matrix $X$ are complex i.i.d. standard (circularly symmetric) complex Gaussian random variables.
We then let $Y_{n,m} = \frac{1}{m}XX^\dagger$.
We would like to understand the spectrum of $Y_{n,m}$, and to that end, we consider the empirical distribution of the eigenvalues.
This empirical distribution is itself random, depending on the particular realization of $Y_{n,m}$.
To characterize the convergence, we recall that if $\{\mu_n\}_{n \in \NN}$ is a sequence of random finite measures on $\RR$, we say that the sequence $\mu_n$ \emph{converges weakly, in probability,} to a finite measure $\mu$, if, for any bounded continuous function~$f \in C_b(\RR)$, it holds that for every $\eps > 0$
\begin{align*}
  \lim_{n \to \infty} \Pr\left( \abs{\int f(x) \d\mu_n(x) - \int f(x) \d\mu(x) } \geq \eps\right) = 0.
\end{align*}

The asymptotic distribution of the eigenvalues of Wishart matrices is known to obey the \textit{Marchenko-Pastur law} (see, for instance, Theorem 3.6 and Theorem 3.7 in \cite{bai2010spectral}):
\begin{thm}\label{thm:wishart}
  Consider (real or complex) Wishart matrices $Y_{n,m}$ and let
  \begin{align*}
    \mu_{n,m} = \frac{1}{n}\sum_{\lambda \in \spec(Y_{n,m})}\delta_{\lambda}
  \end{align*}
  be the empirical distribution of its eigenvalue spectrum.
  Suppose that the ratio of dimensions $n/m$ converges to a constant $t > 0$ as $n \to \infty$. Then $\mu_{n,m}$ converges weakly, in probability, to the Marchenko-Pastur distribution $\MP(t)$ with parameter $t > 0$, as defined in \cref{eq:mp dist}.
\end{thm}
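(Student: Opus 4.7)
This is the classical Marchenko-Pastur theorem, and the natural approach is the method of moments, which meshes well with the non-crossing-partition combinatorics already introduced in \cref{sec:ground state configurations}. To prove weak convergence in probability of $\mu_{n,m}$ to $\MP(t)$, it suffices to show that for every $k \in \NN$, the random variable $\int x^k \, \d\mu_{n,m}(x) = \tfrac{1}{n}\tr Y_{n,m}^k$ converges in probability to $\int x^k \, \d\MP(t)(x)$; since $\MP(t)$ has compact support it is determined by its moments, so such moment convergence upgrades to weak convergence in probability. The plan is to split this into (i) convergence of the expected moment $\EE \tfrac{1}{n}\tr Y_{n,m}^k$ to the right limit, and (ii) a variance bound $\Var\bigl(\tfrac{1}{n}\tr Y_{n,m}^k\bigr) = o(1)$.

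For (i), I would expand
\begin{align*}
  \EE \tr Y_{n,m}^k = \frac{1}{m^k} \sum_{i_1,\ldots,i_k;\, j_1,\ldots,j_k} \EE\mleft[ X_{i_1 j_1}\bar{X}_{i_2 j_1} X_{i_2 j_2}\bar{X}_{i_3 j_2} \cdots X_{i_k j_k}\bar{X}_{i_1 j_k}\mright]
\end{align*}
and apply complex Wick's theorem to rewrite the expectation as a sum over permutations $\sigma \in S_k$ pairing each $X$ with a $\bar X$. Each pairing constrains the row and column indices to be constant along certain orbits, and counting free indices yields a contribution of the form $n^{c_1(\sigma)} m^{c_2(\sigma)}/(n\, m^k)$, where $c_1, c_2$ are the cycle counts of two permutations built from $\sigma$ and the standard $k$-cycle $\tau_0 = (1\,2\,\ldots\,k)$. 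The genus inequality $c_1(\sigma) + c_2(\sigma) \le k+1$, with equality iff $\sigma$ is on a geodesic between $\id$ and $\tau_0$ in the Cayley graph of $S_k$, shows that non-planar terms vanish in the limit $n, m \to \infty$ with $n/m \to t$, while each surviving planar term contributes a power $t^{c_1(\sigma)-1}$. Under the bijection between geodesic permutations in $S_k$ and non-crossing partitions $NC(k)$, already exploited in \cref{sec:ground state configurations}, this reduces to a sum over $NC(k)$ weighted by an appropriate power of $t$; this expression can then be matched to $\int x^k\,\d \MP(t)(x)$ either by a direct residue computation from the density in \cref{eq:mp dist} or by recognizing $\MP(t)$ as the free Poisson law and reading off its free cumulants.

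For (ii), a parallel Wick expansion of $\EE (\tr Y_{n,m}^k)^2$ produces at leading order the disconnected pairings, which combine to give $(\EE \tr Y_{n,m}^k)^2$; the remaining connected pairings are suppressed by an additional factor of $n^{-2}$. This delivers $\Var\bigl(\tfrac{1}{n}\tr Y_{n,m}^k\bigr) = O(n^{-2})$, and Chebyshev together with moment-determinacy of $\MP(t)$ then completes the argument. The real case differs only in the Wick combinatorics (pairings among $X$'s with themselves are now also allowed), producing subleading corrections that do not affect the limit; alternatively, Gaussian concentration of measure for polynomials in i.i.d.\ Gaussian entries handles both cases uniformly and even yields exponential concentration. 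The main obstacle throughout is the combinatorial bookkeeping in step (i): carefully identifying which permutations saturate the genus inequality, matching them bijectively with $NC(k)$, and tracking the exponent of $t$ coming from the ratio of free row indices to free column indices. Once this combinatorial extraction of $NC(k)$ is set up, both the expectation and variance estimates fall out of the same pairing structure.
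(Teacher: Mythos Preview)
Your sketch is a correct outline of the standard method-of-moments proof of the Marchenko--Pastur law, and the combinatorics you describe (Wick pairings, the genus inequality, the bijection between geodesic permutations and $NC(k)$) are exactly the relevant ingredients. However, the paper does not actually prove this theorem: it is stated as background and attributed to the literature (specifically Theorems~3.6 and~3.7 in \cite{bai2010spectral}), with no argument given. So there is nothing to compare your approach against in the paper itself; you have supplied a proof where the paper simply cites one.

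That said, your moment-method approach is precisely the template the paper later uses for its own results: \cref{prop:convergence of moments} and \cref{thm:spectrum near max entangled} follow the same two-step pattern (dominant-configuration analysis for the expected moments, then a variance bound via the $2k$-th replica), so your sketch here is good preparation for what the paper actually does prove.
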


Generalizations to this result are possible. For example, one still has convergence if the entries of $X$ are chosen according to non-Gaussian distributions with mean zero and unit variance.
Also, one can prove weak convergence, almost surely (rather than just in probability); see~\cite{bai2010spectral}.

If $Y_{n,m} = \frac{1}{m}XX^\dagger$ is a complex Wishart matrix, $X$ can also be interpreted as a uniformly random pure quantum state on $\CC^n \ot \CC^m$, and $Y_{n,m}$, up to normalization, as the reduced density matrix on $\CC^n$ \cite{hayden2006aspects}.
Note that $\frac{1}{n}Y_{n,m}$ is normalized in expectation in the sense that $\EE \frac{1}{n} Y_{n,m} = 1$.
So, complex Wishart matrices can be used as a model for the reduced state of a random bipartite quantum state, and this allows one to quantify the `typical entanglement' of a random state.
Equivalently, in the tensor network setting, $\frac{1}{\sqrt{n}}X$ can be thought of as a random tensor network state with a single bulk vertex, two boundary vertices, and maximally entangled link states.
We can then can interpret $\frac{1}{n}Y_{n,m}$ as the reduced density matrix on one of the boundary vertices.
We will provide a generalization of \cref{thm:wishart} for the entanglement spectrum of random tensor network states in \cref{thm:spectrum near max entangled}.

\subsubsection{Free probability}
The topic of probability distributions in random matrix theory is closely related to \emph{free probability} and, in particular, to the notion of the \emph{free product}. We provide a brief introduction here; the material in this section is very standard, and we only review a few relevant aspects. For an extensive treatment, see, for instance, Chapter 5 in \cite{anderson2010introduction} or the books \cite{nica2006lectures,mingo2017free,potters2020first}. As we will see later, the free product will allow us to concisely formulate replica trick results involving multiple minimal cuts.


A \emph{non-commutative probability space} is a pair $(\A, \omega)$, where $\A$ is a $C^*$-algebra and $\omega$ is a state on $\A$.
An element $a \in \A$ is called a non-commutative random variable.
The key example to have in mind is the space of~$n \times n$ random matrices, where the matrix entries are distributed according to some probability distribution, and $\omega(a) = \EE \frac{1}{n}\tr[a]$ defines a tracial state.
If $a \in \A$, the \emph{distribution} (or \textit{law}) $\mu_a$ of $a$ is defined as a map on polynomials, which evaluates on a polynomial $p$ as $\mu_a(p) = \omega(p(a))$.
If $a$ is self-adjoint, it has real spectrum and we can extend the domain of $\mu_a$ to all bounded continuous functions $f \in C_b(\RR)$, using the functional calculus to define $f(a)$ and letting $\mu_a(f) = \omega(f(a))$.
In this case we can identify $\mu_a$ with a distribution such that, for $f \in C_b(\RR)$, we have $\mu_a(f) = \int f(x) \d \mu_a(x)$.
In particular, if $a$ is an~$n \times n$ self-adjoint random matrix, then $\mu_a(f) = \frac{1}{n}\EE \sum_{\lambda \in \spec(a)} f(\lambda)$, and we may identify $\mu_a$ with the empirical measure of the eigenvalues of $a$.
If $\A$ is a commutative algebra, these notions reduce to the usual notions of probability theory, where $\omega$ is the expectation.

We call a set of $n$ non-commutative random variables $\{a_i\}$ on a non-commutative probability space $(\A,\omega)$ \textit{freely independent} or just \emph{free} if, for any set of $k \geq 2$ polynomials $\{p_j\}$, the variables satisfy
\begin{align*}
  \omega(p_1(a_{i_1})\ldots p_k(a_{i_k})) = 0
\end{align*}
whenever $\omega(p_m(a_{i_m})) = 0$ for all $1 \leq m \leq k$ and no two adjacent indices $i_m$ and $i_{m+1}$ for  $1 \leq m \leq k-1$ are equal. One can see that two freely independent variables $a_1,a_2$ satisfy:
\begin{equation}
  0 = \omega((a_1-\omega(a_1))(a_2-\omega(a_2))) = \omega(a_1a_2) - \omega(a_1)\omega(a_2),
\end{equation}
which, in the commutative case with random variables $x_1$, $x_2$, is the classical bivariate independence condition~$\EE[x_1x_2] = \EE[x_1]\EE[x_2]$. The definition of free independence does not specialize to independence in the commutative case (commuting independent random variables are only free when they are constant). However, the role of free independence is analogous to the role of classical independence for commuting random variables: it allows one to, in principle, compute the joint mixed moments of the variables.

We will be interested in the \textit{multiplicative free convolution} or \textit{free product} (there also exists an additive convolution or just \textit{free convolution}) of distributions.
Suppose $a,b$ are non-commutative self-adjoint free random variables on $(\A,\omega)$ with distributions $\mu_a$ and $\mu_b$.
Then we denote the distribution of $ab$ by~$\mu_{ab} = \mu_a \boxtimes \mu_b$.
Note that, generally, $ab$ need not be self-adjoint.
However, if $\omega$ is tracial (as in the random matrix case) and $a$ is positive, the distribution of $ab$ coincides with that of $\sqrt{a}b\sqrt{a}$ which is self-adjoint, and we can identify~$\mu_{ab}$ with a distribution on $\RR$.
If $\mu_a$ and $\mu_b$ are compactly supported distributions, then so is~$\mu_a \boxtimes \mu_b$.

As a concrete example of the freeness and the free product, let $X_n$ and $Y_n$ be two families of random~$n \times n$ positive diagonal matrices with uniformly bounded norm, such that their spectrum converges weakly to probability distributions $\mu$ and $\nu$ respectively. Let $U_n$ be a family of Haar random unitary $n \times n$ matrices. Then as $n$ goes to infinity, $X_n$ and $Y_n' = U_nY_nU_n^\dagger$ will be freely independent (so they are \emph{asymptotically free}), and we would like to study their product. The product of positive matrices need not be self-adjoint, so we consider $Z_n = \sqrt{X_n} Y_n' \sqrt{X_n}$ which is a positive matrix. One may then show that the distribution of the spectrum of $Z_n$ weakly converges in probability to $\mu \boxtimes \nu$. See Corollary 5.4.11 in \cite{anderson2010introduction} for a precise statement and proof.

The free product may be analyzed using generating functions: given a (non-commutative) random variable~$a$ with distribution $\mu_a$, let
\begin{align*}
  m_{a,k} = \int x^k \d \mu_a(x)
\end{align*}
be the $k$-th moment of $\mu_a$. Then the \emph{moment generating function} is the formal power series
\begin{align}\label{eq:moment generating function}
  M_{\mu_a}(z) = \sum_{k=1}^\infty m_{a,k} z^k.
\end{align}
We define the \emph{S-transform} to be the formal power series
\begin{align*}
  S_{\mu_a}(z) = \frac{1 + z}{z}M_{a}^{-1}(z),
\end{align*}
where $M_{a}^{-1}(z)$ is the power series corresponding to the formal inverse of $M_{\mu}(z)$ under composition, which is well-defined as long as $m_{a,1} \neq 0$.
For compactly supported distributions, the moment generating function, and hence the S-transform, uniquely determines the distribution.

If $a$ and $b$ are non-commutative self-adjoint free random variables, then
\begin{align}\label{eq:s transform}
  S_{\mu_{a} \boxtimes \mu_{b}}(z) = S_{\mu_{ab}}(z) = S_{\mu_a}(z)S_{\mu_b}(z).
\end{align}
This also provides a completely combinatorial interpretation of the free product, without reference to the associated non-commutative probability spaces.
That is, given compactly supported distributions~$\mu$ and~$\nu$, we can define $\mu \boxtimes \nu$ by \cref{eq:s transform}: it is the compactly supported distribution with moments prescribed by~$S_{\mu_{a} \boxtimes \mu_{b}}(z)$, and hence, $M_{\mu_{a} \boxtimes \mu_{b}}(z)$.
The free product is commutative and associative.

As an example, we compute the S-transform of the Marchenko-Pastur distribution $\mu \sim \MP(1)$.
The distribution is given by
\begin{align*}
  \d \mu(x) = \frac{1}{2\pi}\sqrt{4x^{-1} - 1} \d x.
\end{align*}
The moments can be computed directly:
\begin{equation}
  m_k = \sum_{i=0}^{k-1}\frac{1}{i+1}\binom{k}{i}\binom{k-1}{i}
\end{equation}
After some work, one can show that the moments above lead to a closed-form moment generating function
\begin{align*}
  M(z) = \frac{2z - 1 - \sqrt{1 - 4z}}{2z}.
\end{align*}
One may then invert the expression and obtain the S-transform
\begin{align*}
  S(z) = \frac{1}{1+z}.
\end{align*}
Similarly, for the Marchenko-Pastur distribution $MP(t)$ with parameter $t$, which has distribution as given in \cref{eq:mp dist}, we find that
\begin{align*}
  S(z) = \frac{1}{t+z}.
\end{align*}
See, for instance, \cite{banica2011free}.

\subsubsection{Non-crossing partitions}
Given $k \in \NN$, let $NC(k)$ denote the set of \emph{non-crossing partitions} of $[k]$.
A non-crossing partition of~$[k]$ is a partition $[k] = X_1 \sqcup \ldots \sqcup X_m$ which is such that, if $i < j \in X_\alpha$, then there are no $k, l \in X_{\beta}$ for~$\beta \neq \alpha$ with~$k < i < l < j$ or $i < k < j < l$. To any non-crossing partition, we associate a permutation $\pi \in S_k$ by mapping each subset $\{i_1, \ldots, i_l\}$ to the cycle $(i_1,\ldots, i_l)$ with $i_1 < \ldots < i_l$.
In a slight abuse of notation, we will write $\pi \in NC(k)$.
For any $\pi \in S_k$, and for a sequence of numbers $f_k$ for $k = 1,2, \ldots$, we write
\begin{align}\label{eq:permutation notation}
  f_{\pi} = \prod_{l \in C(\pi)} f_l
\end{align}
where $C(\pi)$ is the cycle type of $\pi$. We will need the following result, which is a straightforward consequence of the combinatorics of the S-transform.

\begin{thm}\label{thm:free product}
  Consider compactly supported probability distributions $\mu, \nu, \eta$.
  Suppose that the moments of $\eta$ are given by
  \begin{align*}
    m_k^\eta = \sum_{\pi \in NC(k)} m^\mu_{\pi} m^\nu_{\pi^{-1} \tau_k}
  \end{align*}
  where $\tau_k = (1 2 \ldots k)$ is the full cycle. Then
  \begin{align*}
    \eta = \MP(1) \boxtimes \mu \boxtimes \nu.
  \end{align*}
\end{thm}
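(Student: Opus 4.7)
The plan is to derive the claim directly from standard moment-cumulant identities in free probability, leveraging the combinatorial structure of non-crossing partitions. The key observation is that, under the cycle-map identification of $NC(k)$ with a subset of $S_k$ used throughout the excerpt, the permutation $\pi^{-1}\tau_k$ coincides with the \emph{Kreweras complement} $K(\pi)$, an order-reversing bijection on $NC(k)$ with respect to refinement (see Lecture 18 of \cite{nica2006lectures}). The hypothesis on $\eta$ can therefore be rewritten as $m_k^\eta = \sum_{\pi \in NC(k)} m_\pi^\mu \, m_{K(\pi)}^\nu$, and the goal becomes showing that this matches $m_k^{\MP(1) \boxtimes \mu \boxtimes \nu}$.

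First, I would substitute the free moment-cumulant relation $m_\pi^\mu = \sum_{\sigma \leq \pi} \kappa_\sigma^\mu$ (with $\kappa_\sigma^\mu = \prod_{V \in \sigma} \kappa_{|V|}^\mu$ and $\leq$ denoting refinement in $NC(k)$) and swap the order of summation. Using the order-reversal property $\sigma \leq \pi \iff K(\pi) \leq K(\sigma)$, the formula becomes
\begin{align*}
m_k^\eta = \sum_{\sigma \in NC(k)} \kappa_\sigma^\mu \sum_{\rho \in NC(k),\, \rho \leq K(\sigma)} m_\rho^\nu.
\end{align*}
The inner sum factorizes over the blocks of $K(\sigma)$ by multiplicativity of $m_\rho^\nu$, and each factor $\sum_{\rho \in NC(n)} m_\rho^\nu$ equals $m_n^{\MP(1) \boxtimes \nu}$. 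Indeed, all free cumulants of $\MP(1)$ are equal to $1$ (its $S$-transform $\tfrac{1}{1+z}$ is computed in \cref{sec:free probability}), so the standard free-product moment formula $m_n^{A \boxtimes B} = \sum_\pi \kappa_\pi^A m_{K(\pi)}^B$ specializes to $m_n^{\MP(1) \boxtimes \nu} = \sum_\pi m_{K(\pi)}^\nu = \sum_\rho m_\rho^\nu$ by bijectivity of $K$.

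Combining these identifications yields
\begin{align*}
m_k^\eta = \sum_{\sigma \in NC(k)} \kappa_\sigma^\mu \, m_{K(\sigma)}^{\MP(1) \boxtimes \nu} = m_k^{\mu \boxtimes \MP(1) \boxtimes \nu} = m_k^{\MP(1) \boxtimes \mu \boxtimes \nu},
\end{align*}
by a final application of the free-product moment formula and commutativity of $\boxtimes$. Since free products of compactly supported measures remain compactly supported and such measures are determined by their moments, we conclude $\eta = \MP(1) \boxtimes \mu \boxtimes \nu$. The main conceptual step, and the place where the presence of $\MP(1)$ originates, is the interplay between Kreweras duality and moment-cumulant inversion: it is what converts the ``moment-times-moment'' sum over $NC(k)$ in the hypothesis into a cumulant-indexed sum that can be recognized as the moments of a free product with an extra $\MP(1)$ factor. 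All the individual ingredients above (Kreweras complementation, the moment-cumulant relation, and the free-product moment formula) are classical and can be cited from \cite{nica2006lectures}; the content of the theorem is essentially the combinatorial observation that these ingredients fit together as claimed.
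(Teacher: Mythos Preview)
Your argument is correct and arrives at the same conclusion, but it takes a genuinely different route from the paper's proof. The paper works entirely at the level of generating functions: it observes that the hypothesis says $M_\eta = M_\mu \circledast M_\nu$ in the sense of the ``boxed-star'' convolution of formal power series, and then invokes Theorem~18.14 of \cite{nica2006lectures}, which states that the transform $\mathcal F(f)(z) = \tfrac{1}{z}f^{-1}(z)$ turns $\circledast$ into pointwise multiplication. This immediately yields $S_\eta(z) = \tfrac{1}{1+z}S_\mu(z)S_\nu(z)$, and the extra $\tfrac{1}{1+z}$ is identified as the $S$-transform of $\MP(1)$. You instead work at the level of moments and cumulants: you identify $\pi^{-1}\tau_k$ as the Kreweras complement $K(\pi)$, substitute the moment--cumulant relation for $\mu$, use the anti-automorphism property of $K$ to reorganize the sum, and then recognize the result via the free-product moment formula $m_k^{a \boxtimes b} = \sum_{\sigma} \kappa_\sigma^a\, m_{K(\sigma)}^b$ applied twice (once with $a = \MP(1)$, whose cumulants are all $1$, and once with $a = \mu$). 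Your approach is more combinatorially explicit and makes transparent \emph{why} $\MP(1)$ appears---it is exactly the measure whose cumulants trivialize one layer of the double summation---whereas the paper's $S$-transform argument is shorter and more of a black-box citation. Both are standard free-probability manipulations; they are essentially dual derivations of the same identity, one via generating functions and one via lattice combinatorics.
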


\begin{proof}
  We let $\mathcal F$ be the transformation that sends a formal power series $f(z)$ to the power series $\frac{1}{z}f^{-1}(z)$.
  This is such that for some distribution $\mu$, the S-transform is given by $S_{\mu}(z) = (1 + z) \mathcal F(M_{\mu})(z)$.
  Moreover, given power series $f(z) = \sum_k f_k z^k$ and $g(z) = \sum_k g_k z^k$, define a convolution operation $\circledast$ by
  \begin{align*}
    (f \circledast g) (z) = \sum_k \left(\sum_{\pi \in NC(k)} f_{\pi} g_{\pi^{-1} \tau_k}\right)z^k
  \end{align*}
  where $\tau_k$ is the full cycle in $S_k$.
  Then Theorem 18.14 in \cite{nica2006lectures} states that for any two $f$ and $g$ with~$f_1 \neq 0$ and $g_1\neq 0$, it holds that
  \begin{align*}
    \mathcal F(f \circledast g)(z) = \mathcal F(f)(z)\mathcal F(g)(z).
  \end{align*}
  Then the S-transform of $\eta$ can be written:
  \begin{align*}
    S_\eta(z) = (1 + z) \mathcal F(M_\eta)(z) = (1 + z) \mathcal F(M_\mu)(z) \mathcal F(M_\nu)(z) = \frac{1}{1 + z} S_\mu(z)S_\nu(z).
  \end{align*}
  This implies the desired result, as the S-transform of $\MP(1)$ is given by $\frac{1}{1+z}$, and the S-transform uniquely determines a compactly supported distribution.
\end{proof}

We remark briefly that free independence can equivalently be formulated in terms of the vanishing of \textit{free cumulants}, which are themselves defined in terms of sums over non-crossing partitions. We refer the interested reader to any of the previously cited references for a more in-depth discussion on the role of non-crossing partitions in free probability. For our purposes, the fact that non-crossing partitions are intimately related to free independence will allow us to later phrase random tensor network results in terms of free probability.

\subsection{Entanglement spectrum of random tensor network states as a free product}\label{sec:bounded spectral variation and free product}
We now return to studying random tensor network states.
Consider a family of states in $\Peq(V)$ composed of the tensor product of link states $\ket{\phi_e}$ along the edges $e \in E$ as in \cref{eq:link state edge}, and assume that along each edge, the bond dimensions scale with a parameter $D$, so $D_e = \Theta(D)$.
Our key assumption is that the link states have \emph{bounded spectral variation} -- by this we mean that the empirical distribution of the rescaled entanglement spectrum of the link states
\begin{align}\label{eq:link states near max entangled}
  \mu^{(D)}_{e} := \sum_{i=1}^{D_e} \frac{1}{D_e} \delta_{D_e\lambda_{e,i}}
\end{align}
has all moments converging to the moments $m_{e,k}$ of a compactly supported probability distribution $\mu_{e}$, as $D$ goes to infinity.
We assume that the link states are normalized, so $m_{e,1} = 1$.
This condition implies that, up to a vanishing fraction as $D\rightarrow\infty$, the elements of the entanglement spectrum of the link state are of order~$D^{-1}$.

For a minimal cut $\gamma_A$, let $\mu^{(D)}_{\gamma_A}$ be the distribution for the spectrum of the tensor product of the link states in $\gamma_A$:
\begin{align*}
  \mu^{(D)}_{\gamma_A} = \bigotimes_{e \in \gamma_A} \mu^{(D)}_{e} = \frac{1}{D_{\gamma_A}}\sum_{\{i_{e}\}} \delta_{D_{\gamma_A} \prod_{e \in \gamma_A}\lambda_{e,i_e}}
\end{align*}
where $i_e = 1,\ldots,D_e$ and $D_{\gamma_A} = \prod_{e \in \gamma_A} D_e$.
We define the tensor product of distributions as follows: if $X_1$ and $X_2$ are independent real valued random variables with distributions $\mu_{X_1}$ and $\mu_{X_2}$, then $\mu_{X_1} \ot \mu_{X_2}$ is defined as the joint distribution of $(X_1, X_2)$. The distribution $\mu^{(D)}_{\gamma_A}$ has $k$-th moment given by $m^{(D)}_{\gamma_A,k} = \prod_{e \in \gamma_A} m^{(D)}_{e,k}$, and we can see that~$m^{(D)}_{\gamma_A,k}$ converges to~$m_{\gamma_A,k}$, the moments of the distribution
\begin{align*}
  \mu_{\gamma_A} := \bigotimes_{e \in \gamma_A} \mu_{e}.
\end{align*}

Let $\spec(\rho_A) = \{\lambda_{A,i}\}$ (recall that $\spec(\rho_A)$ is ordered in non-increasing order).
Let $\gamma_A$ be a cut for $A$.
By a standard argument, the number of nonzero eigenvalues of $\rho_A$ (that is, $\rank(\rho_A)$) is upper bounded by~$D_{\gamma_A}$.
If~$\gamma_A$ is the unique minimal cut, then we define
\begin{align}\label{eq:empirical measure boundary}
  \mu^{(D)}_{A} := \frac{1}{D_{\gamma_A}} \sum_{i=1}^{D_{\gamma_A}} \delta_{D_{\gamma_A} \lambda_{A,i}}.
\end{align}
If there are multiple minimal cuts, it is ambiguous which $\gamma_A$, and hence, which $D_{\gamma_A}$, we should pick; we choose the cut for which $D_{\gamma_A}$ is minimal in \cref{eq:empirical measure boundary}, and we will denote this minimal cut by $\gamma_{A,1}$.
The moments of~$\mu^{(D)}_{A}$ are given by
\begin{align*}
  m^{(D)}_{A,k} := \int z^k \d \mu_{A}(z) = D_{\gamma_A}^{k-1} \sum_{i=1}^{D_{\gamma_A}} (\lambda_{A,i})^k.
\end{align*}
Note that the distribution $\mu^{(D)}_A$ is random, and correspondingly, the moments~$m^{(D)}_{A,k}$ are random variables. In contrast, the moments~$m^{(D)}_{e,k}$ and~$m^{(D)}_{\gamma_A,k}$ are numbers depending only on the bond dimension.

The theorem we want to prove will follow straightforwardly from a key intermediate result: as $D$ goes to infinity, all the moments of the boundary distribution $\mu_A^{(D)}$ converge to the moments of $\mu_{\gamma_A}$.
We use the notation in \cref{eq:permutation notation} to write expressions like
\begin{align*}
  m_{\gamma_A,\pi} = \prod_{l \in C(\pi)} m_{\gamma_A,l}
\end{align*}
for a permutation $\pi \in S_k$. We will then apply the \textit{method of moments} to show that convergence of moments implies convergence in distribution.
As a remark on notation, in the error bounds in both the current section and \cref{sec:far from max entangled}, when we use $\bigO$-notation, the constants may depend on the graph underlying the tensor network (typically our bounds scale as $2^{\abs{V_b}}$, where $V_b$ is the set of bulk vertices).

\begin{prop}\label{prop:convergence of moments}
  If there exists a unique minimal cut $\gamma_A$ for $A$, then
  \begin{align}\label{eq:moments single surface}
    \lim_{D \rightarrow \infty} \EE m^{(D)}_{A,k} = m_{\gamma_A,k}.
  \end{align}
  If there exist exactly two minimal cuts $\gamma_{A,1}$ and $\gamma_{A,2}$, which do not intersect (so $\gamma_{A,1} \cap \gamma_{A,2} = \emptyset$) and for which $\frac{D_{\gamma_{A,1}}}{D_{\gamma_{A,2}}}$ converges to a constant $t \leq 1$, then
  \begin{align}\label{eq:moments two surfaces}
    \lim_{D \rightarrow \infty} \EE m^{(D)}_{A,k} = \sum_{\pi \in NC(k)} t^{d(\pi,\id)} m_{\gamma_{A,1},\tau^{-1}\pi}  m_{\gamma_{A,2},\pi}.
  \end{align}
  Moreover, in both cases the variance goes to zero as $D$ goes to infinity: for every $k$
  \begin{align}\label{eq:variance moments}
    \EE\mleft[\left(m^{(D)}_{A,k} - \EE \mleft[m^{(D)}_{A,k}\mright]\right)^2\mright] = \bigO\left(\frac{1}{D}\right).
  \end{align}
\end{prop}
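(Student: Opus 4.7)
The plan is to apply the replica trick \eqref{eq:replica trick for network}, substitute the bounded-variation scaling, and read off the leading asymptotic in $D$. Writing $\tr[\phi_{e,x}^l] = D_e^{1-l} m^{(D)}_{e,l}$, using $\sum_{l \in C(\pi)}(l-1) = k - \abs{C(\pi)} = d(\pi,\id)$ from \eqref{eq:cayley}, and setting $m^{(D)}_{e,\pi} := \prod_{l \in C(\pi)} m^{(D)}_{e,l}$, the identity \eqref{eq:replica trick for network} becomes
\[
  \EE \tr[\rho_A^k] = \sum_{\{\pi_x\} \in \mathcal{S}_{A,\tau}} \prod_{e=(xy) \in E} D_e^{-d(\pi_x,\pi_y)}\, m^{(D)}_{e,\pi_x^{-1}\pi_y}.
\]
Multiplication by $D_{\gamma_{A,1}}^{k-1}$ gives $\EE m^{(D)}_{A,k}$. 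Since each $m^{(D)}_{e,l}$ stays uniformly bounded as $D\to\infty$, the large-$D$ behavior is governed by the configurations that minimize the total Cayley weight $\sum_e d(\pi_x,\pi_y)$ subject to the boundary conditions. The key combinatorial input is the min-cut bound $\sum_e d(\pi_x,\pi_y) \geq (k-1)\, m(A)$, which follows from the triangle inequality for $d$ applied along $m(A)$ edge-disjoint paths from $A$ to $\bar A$ (guaranteed by Menger's theorem): each such path contributes at least $d(\tau,\id)=k-1$.

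In the unique-minimal-cut case, the bound is saturated only by the configuration $\pi_x = \tau$ on $\Gamma_A$ and $\pi_x = \id$ on $\Gamma_A^c$: any other configuration either uses a larger cut or inserts a non-trivial intermediate permutation $\pi$ in a region bounded by a strictly larger cut, both of which strictly increase the Cayley weight. Its contribution equals $\prod_{e \in \gamma_A} m^{(D)}_{e,\tau} \to m_{\gamma_A,k}$, while each of the at most $(k!)^{\abs{V_b}}$ remaining configurations is $\bigO(1/D)$-suppressed, establishing \eqref{eq:moments single surface}.

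For two non-intersecting minimal cuts, the saturating configurations have $\pi_x$ constant on each of $V_1$, $V_2$, $V_3$, taking values $\tau$, $\pi$, $\id$ respectively, with $d(\tau,\pi)+d(\pi,\id)=k-1$; by the geodesic$\leftrightarrow$non-crossing-partition correspondence recalled after \eqref{eq:spin model max ent}, the admissible $\pi$ are precisely $NC(k)$. Using $\abs{\gamma_{A,1}}=\abs{\gamma_{A,2}}$ and $d(\tau,\pi)=(k-1)-d(\pi,\id)$, each such configuration contributes
\[
  D_{\gamma_{A,1}}^{k-1}\, D_{\gamma_{A,1}}^{-d(\tau,\pi)} D_{\gamma_{A,2}}^{-d(\pi,\id)}\, m^{(D)}_{\gamma_{A,1},\tau^{-1}\pi}\, m^{(D)}_{\gamma_{A,2},\pi^{-1}} = \mleft(\frac{D_{\gamma_{A,1}}}{D_{\gamma_{A,2}}}\mright)^{d(\pi,\id)} m^{(D)}_{\gamma_{A,1},\tau^{-1}\pi}\, m^{(D)}_{\gamma_{A,2},\pi},
\]
where I used that cycle type (and hence $m^{(D)}_{\cdot,\pi}$) is invariant under inversion. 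Summing over $\pi \in NC(k)$, noting that all other configurations are $\bigO(1/D)$-suppressed, and passing to the limit $D\to\infty$ yields \eqref{eq:moments two surfaces}.

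For the variance \eqref{eq:variance moments}, apply the replica trick with $2k$ copies and boundary permutation $\tau\tau' \in S_{2k}$, the product of two disjoint $k$-cycles, to compute $\EE[\tr(\rho_A^k)^2]$. Since $d_{2k}(\tau\tau',\id) = 2(k-1)$ and any minimal factorization of $\tau\tau'$ into transpositions factors across the two disjoint $k$-cycles, geodesics in $(S_{2k},d)$ from $\tau\tau'$ to $\id$ are parameterized by $NC(k)\times NC(k)$ and the Cayley distance is additive on these factors. Consequently, the saturating configurations in the $S_{2k}$ spin model factor into independent pairs of $S_k$-ground states, and their total contribution is exactly $(\EE m^{(D)}_{A,k})^2$; any non-factorizing or non-saturating configuration picks up at least one extra unit of Cayley cost and is $\bigO(1/D)$-suppressed. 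The main technical obstacle throughout is precisely the characterization of the saturating configurations of the min-cut bound --- excluding stray minimizers in the unique-cut case, pinning down exactly the $NC(k)$-indexed family in the two-cut case, and controlling geodesics in $(S_{2k},d)$ from $\tau\tau'$ to $\id$ for the variance --- which relies on careful use of the geodesic structure of the Cayley graph and of the non-intersecting hypothesis on the cuts.
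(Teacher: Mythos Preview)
Your proposal is correct and follows essentially the same route as the paper: replica trick rewritten via $D_e^{-d(\pi_x,\pi_y)} m^{(D)}_{e,\pi_x^{-1}\pi_y}$, max-flow/min-cut (edge-disjoint paths) plus the Cayley triangle inequality to bound the exponent, identification of saturating configurations (single domain wall or $NC(k)$-indexed geodesic in $V_2$), and the $S_{2k}$ trick with $\tilde\tau=(1\cdots k)(k{+}1\cdots 2k)$ for the variance. The one place where the paper is more explicit than your sketch is the characterization of saturating configurations you flag as the main obstacle: they introduce the level sets $\Delta_n=\{x:d(\pi_x,\tau)\le n\}$, observe each $\Delta_n$ is a cut of size $\le m(A)$ and hence minimal, and then use uniqueness (resp.\ connectedness of $V_2$ in the two-cut case) to force the claimed form---this is exactly the ``careful use of the geodesic structure'' you anticipate.
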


\begin{proof}
  We first provide a sketch of the proof. It proceeds via the following steps:
  \begin{enumerate}
    \item Write the expectation of the moments of $\mu_A^{(D)}$ as the partition function for a classical spin model, as in \cref{sec:replica trick}.
    \item Show that the contributions from terms of the form given in the statement of the proposition dominate the partition function by carefully tracking the powers of $D$, and showing that all other contributions are suppressed polynomially in $D$.
    \item Show that the variance of the moments vanishes in the limit $D \to \infty$ by direct computation.
  \end{enumerate}

  We begin with Step 1.
  First, we observe that the $k$-th moment of $\mu_A^{(D)}$ is given by $m^{(D)}_{A,k} = D_{\gamma_A}^{k-1} \tr\mleft[\rho_A^k\mright]$.
  Consider the expression in \cref{eq:replica trick for network general pi} for the replica trick with permutation $\pi$ on $A$:
  \begin{align}
    Z_{k, \pi} := \EE \tr\mleft[R_A(\pi) \rho_A^{\ot k}\mright] =  \sum_{\{\pi_x\} \in \mathcal S_{A,\pi}} \prod_{e = (xy) \in E} \prod_{l \in C(\pi_x^{-1} \pi_y)} \tr\mleft[\phi_{e,x}^l\mright].
  \end{align}
  Recall that the set $\mathcal S_{A,\pi}$, as defined in \cref{eq:boundary conditions}, consists of assignments of permutations to each $x \in V$, subject to $\pi_x = \pi$ for $x \in A$ and $\pi_x = \id$ for $x \in \bar{A}$.
  As in \cref{eq:replica trick for network}, if $\pi = \tau$, then we indeed have~$Z_{k, \tau} = \EE\tr\mleft[\rho_A^k\mright]$, so
  \begin{align}\label{eq:Z is moment}
    \EE m^{(D)}_{A,k} = D_{\gamma_A}^{k-1}Z_{k, \tau}.
  \end{align}
  On the other hand, if we let $k = 2n$ and $\pi = \tilde{\tau} = (1 2 \ldots n)(n+1 \, n+2 \ldots 2n)$, then $Z_{k, \pi} = \EE\mleft[\tr\mleft[\rho_A^n\mright]^2\mright]$, and hence
  \begin{align}\label{eq:variance moments Z}
    \EE \mleft[\left(m^{(D)}_{A,n}\right)^2\mright] = D_{\gamma_A}^{2n-2}Z_{k, \tilde \tau}.
  \end{align}
  Recall that $m^{(D)}_{e,l} = D_e^{l-1}\tr\mleft[\phi_{e,x}^l\mright]$, and write
  \begin{align*}
    Z_{k, \pi} = \sum _{\{\pi_x\} \in \mathcal S_{A,\pi}} Z_k(\{\pi_x\})
  \end{align*}
  where
  \begin{align}\label{eq:Z pi}
    \begin{split}
      Z_{k}(\{\pi_x\}) &:= \prod_{e = (xy) \in E} \prod_{l \in C(\pi_x^{-1} \pi_y)} \tr\mleft[\phi_{e,x}^l\mright]\\
      &= \prod_{e = (xy) \in E} D_e^{\abs{C(\pi_x^{-1} \pi_y)} - k}\prod_{l \in C(\pi_x^{-1} \pi_y)} m^{(D)}_{e,l} \\
      &= \prod_{e = (xy) \in E} D_e^{-d(\pi_x,\pi_y)} m^{(D)}_{e,\pi_x^{-1} \pi_y}.
    \end{split}
  \end{align}
  This accomplishes Step 1: we have recast the problem of computing moments into a question of computing a partition function for a classical spin model with fixed boundary conditions.

  For Step 2, we want to show that the dominant contribution(s) to $Z_{k, \pi}$ as $D$ goes to infinity are those given in the statement of the proposition. This will simply be a matter of checking powers of $D$, and using the triangle inequality property of the Cayley distance. If $\Gamma_{A}$ is the unique minimal cut, then we let $\pi^{\min}_x = \pi$ for $x \in \Gamma_A$ and $\pi^{\min}_x = \id$ for $x \in V \setminus \Gamma_A$, and we have
  \begin{align}\label{eq:ground state unique cut}
    Z_k(\{\pi^{\min}_x\}) & = D_{\gamma_A}^{-d(\pi,\id)} \prod_{e \in \gamma_A}  m^{(D)}_{e,\pi} = D_{\gamma_A}^{-d(\pi,\id)} m^{(D)}_{\gamma_A,\pi}.
  \end{align}
  If there are exactly two minimal cuts $\Gamma_{A,1} \subset \Gamma_{A,2}$, we let~$V = V_1 \sqcup V_2 \sqcup V_3$, with~$V_1 = \Gamma_{A,1}$, $V_2 = \Gamma_{A,2} \cap \Gamma_{A,1}^c$ and~$V_3 = \Gamma_{A,2}^c$. Now consider the permutations $\sigma \in S_k$ that are on a geodesic between~$\pi$ and~$\id$ (recall this implies $d(\pi,\sigma) + d(\sigma, \id) = d(\pi,\id)$), and consider the configuration given by~$\pi^{\sigma}_x = \pi$ for~$x \in V_1$,~$\pi^{\sigma}_x = \sigma$ for~$x \in V_2$, and $\pi^{\sigma}_x = \id$ for $x \in V_3$.
  By hypothesis, $\gamma_{A,1}$ and $\gamma_{A,2}$ do not intersect, and hence, the edges in each cut are distinct. Then this configuration has weight
  \begin{align}\label{eq:ground state two cuts}
    \begin{split}
      Z_k(\{\pi^{\sigma}_x\}) &=\prod_{e_1 \in \gamma_{A,1}} D_{e_1}^{-d(\pi,\sigma)} \prod_{l_1 \in C(\pi^{-1}\sigma)} m^{(D)}_{e_1,l_1} \prod_{e_2 \in \gamma_{A,2}} D_{e_2}^{-d(\sigma,\id)} \prod_{l_2 \in C(\sigma)} m^{(D)}_{e,l}\\
      &= D_{\gamma_{A,1}}^{-d(\pi,\id)} \left(\tfrac{D_{\gamma_{A,1}}}{D_{\gamma_{A,2}}}\right)^{d(\sigma,\id)} \prod_{e_1 \in \gamma_{A,1}} m^{(D)}_{e_1,\pi^{-1}\sigma} \prod_{e_2 \in \gamma_{A,2}}  m^{(D)}_{e_2,\sigma}\\
      &= D_{\gamma_{A,1}}^{-d(\pi,\id)} \left(\tfrac{D_{\gamma_{A,1}}}{D_{\gamma_{A,2}}}\right)^{d(\sigma,\id)}  m^{(D)}_{\gamma_{A,1},\pi^{-1}\sigma}   m^{(D)}_{\gamma_{A,2},\sigma},
    \end{split}
  \end{align}
  where $D_{\gamma_A,1}/D_{\gamma_A,2}$ converges to $t$, by assumption. Now, to show that these configurations yield the dominant contributions, we will need to use that $D_e = \Theta(D)$, so let us write $\frac{D}{D_e} = C_e^{(D)} = \Theta(1)$.
  Then for general configurations labeled by $\pi_x$, we may rewrite \cref{eq:Z pi} as
  \begin{align*}
    Z_k(\{\pi_x\}) = \prod_{e = (xy) \in E} D^{-d(\pi_x,\pi_y)} (C^{(D)}_e)^{d(\pi_x,\pi_y)} m^{(D)}_{e,\pi_x^{-1} \pi_y}.
  \end{align*}
  The configurations we claimed to be dominant satisfy $Z_k(\{\pi\}) = \Theta(D^{-m(A)d(\pi,\id)})$, where we recall that~$m(A)$ is the size of a minimal cut for $A$. Now we will show that all other configurations satisfy $Z_k(\{\pi\}) = \bigO(D^{-m(A)d(\pi,\id) - 1})$. To this end, consider some arbitrary configuration $\{\pi_x\} \in \mathcal S_{A,\pi}$.
  Let $P$ be a maximal set of edge-disjoint paths in~$G$ from~$A$ to~$\bar{A}$.
  It is a well-known fact that such a set has size~$m(A)$, by the max-flow min-cut theorem.
  Let
  \begin{align*}
    C_k^{(D)} := \left(\max_{e \in E, l = 1, \ldots, k} (C^{(D)}_e)^{l-1}\right) \left(\max_{e \in E, \pi \in S_k} m^{(D)}_{e,\pi}\right).
  \end{align*}
  Then we may bound
  \begin{align}\label{eq:upper bound paths}
    Z_k(\{\pi_x\}) & \leq (C_k^{(D)})^{\abs{E}} \prod_{e = (xy) \in E} D^{-d(\pi_x, \pi_y)}
    \leq (C_k^{(D)})^{\abs{E}} \prod_{p \in P} D^{- \sum_{e = (xy) \in p}d(\pi_x, \pi_y)}.
  \end{align}
  The first inequality is clear from the definition of $C_k^{(D)}$, and in the second inequality, we simply restrict to a subset of the edges we multiply over.
  Note that $C_k^{(D)} = \bigO(1)$.
  Then, by the triangle inequality for the Cayley distance $d$, it holds that
  \begin{align*}
    \sum_{e = (xy) \in p}d(\pi_x, \pi_y) \geq d(\pi,\id)
  \end{align*}
  with equality if and only if the only edges $(xy)$ for which $\pi_x \neq \pi_y$ are on a path in $P$, and each of the paths is a geodesic. Then we conclude
  \begin{align*}
    Z_k(\{\pi_x\}) \leq C_k^{(D)} \prod_{p \in P} D^{- d(\pi,\id)} = C_k^{(D)} D^{- m(A)d(\pi,\id)},
  \end{align*}
  and we see that the weight of every configuration can be bounded by the product of a $\bigO(1)$ number and a polynomial in $D$.

  Now, as promised, we show that if $\{\pi_x\}$ is not one of the minimal configurations described above, we actually have
  \begin{align}\label{eq:non-dominant configurations}
    Z_k(\{\pi_x\}) = \bigO(D^{- m(A)d(\pi,\id) - 1}).
  \end{align}
  To see this, we rewrite the triangle inequality for the Cayley distance as:
  \begin{align}\label{eq:upper bound paths 2}
    \prod_{e = (xy) \in E} D^{-d(\pi_x, \pi_y)} \leq \prod_{p \in P} D^{- \sum_{e = (xy) \in p}d(\pi_x, \pi_y)} \leq D^{- m(A)d(\pi,\id)}
  \end{align}
  with equality if and only if the $\pi_x$ are on a geodesic path in $P$. We now show that this is satisfied only for the configurations we claimed to be minimal. Assume that $\{\pi_x\} \in \mathcal S_{A,\pi}$ is such that the inequalities in \cref{eq:upper bound paths 2} are equalities and let
  \begin{align*}
    \Delta_n = \{x \in V \text{ such that } d(\pi_x, \pi) \leq n \}.
  \end{align*}
  Then $\Delta_n \in C(A)$ for $0 \leq n < d(\pi,\id)$, and we denote by $\delta_n$ the associated set of edges crossing the cut.
  Each edge~$(xy) \in \delta_n$ must be such that $\pi_x \neq \pi_y$, so it must be on a path in $P$, and because the permutations are geodesics along the paths, they must be on different paths.
  Hence~$\abs{\delta_n} \leq \abs{P} = m(A)$, implying each~$\Delta_n$ is a minimal cut.
  This immediately implies the claim if there is a unique minimal cut, since we must have~$\Delta_{d(\pi,\id) - 1} = \Delta_{0} = \Gamma_A$. If there are exactly two minimal cuts, then we must have~$\pi_x = \pi$ for~$x \in V_1$,~$\pi_x = \id$ for~$x \in V_3$, and there must be some $l$ such that for all~$x \in V_2$ we have~$d(\pi,\pi_x) = l$ and~$d(\pi_x,\id) = d(\pi,\id) - l$.
  Then in order to have equality in \cref{eq:upper bound paths 2}, we must have that for all $x \in V_2$,~$\pi_x$ equals some fixed permutation~$\sigma$, because the assumption of having exactly two cuts implies that $V_2$ is connected, and we must have~$d(\pi_x, \pi_y) = 0$ for all $(xy) \in E$ with $x, y\in V_2$.
  This proves \cref{eq:non-dominant configurations}.

  In conclusion, if there is a unique minimal cut, then by \cref{eq:ground state unique cut} and \cref{eq:non-dominant configurations}, we find
  \begin{align}\label{eq:Zk one cut}
    Z_{k,\pi} = D_{\gamma_A}^{-d(\pi,\id)} m^{(D)}_{\gamma_A,\pi} + \bigO(D^{-m(A)d(\pi,\id) - 1}),
  \end{align}
  and if there are exactly two (non-intersecting) cuts, then by \cref{eq:ground state two cuts} and \cref{eq:non-dominant configurations}, we find
  \begin{align}\label{eq:Zk two cuts}
    Z_{k,\pi} = \sum_{\sigma, \, d(\pi,\sigma) + d(\sigma,\id) = d(\pi,\id)} D_{\gamma_{A,1}}^{-d(\pi,\id)} \left(\tfrac{D_{\gamma_{A,1}}}{D_{\gamma_{A,2}}}\right)^{d(\sigma,\id)}  m^{(D)}_{\gamma_{A,1},\pi^{-1}\sigma}   m^{(D)}_{\gamma_{A,2},\sigma} + \bigO(D^{-m(A)d(\pi,\id) - 1}).
  \end{align}

  Finally, we set $\pi = \tau$ for the full cycle $\tau$ and we use \cref{eq:Z is moment}. For a unique minimal cut $\gamma_A$, by \cref{eq:Zk one cut}
  \begin{align}\label{eq:moment for cut expansion}
    \begin{split}
      \EE m_{\gamma_A,k} &= D_{\gamma_A}^{k-1} Z_{k, \tau}\\
      &= m^{(D)}_{\gamma_A,\pi} + \bigO( D_{\gamma_A}^{k-1} D^{-m(A)(k-1) - 1})\\
      &= m^{(D)}_{\gamma_A,k} + \bigO\left(\frac{1}{D}\right).
    \end{split}
  \end{align}
  using $d(\tau,\id) = k-1$ and $D_{\gamma_A} = \Theta(D^{m(A)})$.
  This proves \cref{eq:moments single surface} as $m^{(D)}_{\gamma_A,k}$ converges to $m_{\gamma_A,k}$.

  For two non-intersecting minimal cuts, we saw that the dominant contribution is due to configurations~$\{\pi^\sigma_x\}$ for~$\sigma$ on a geodesic between $\tau$ and $\id$. Then applying the observation that $\sigma$ is on such a geodesic if and only if $\sigma$ is a non-crossing partition similarly yields that by \cref{eq:Zk two cuts}
  \begin{align}\label{eq:moment for two cut expansion}
    \begin{split}
      \EE m_{\gamma_A,k} &= D_{\gamma_A}^{k-1} Z_{k, \tau}\\
      &= D_{\gamma_A}^{k-1}\sum_{\sigma \in NC(k)} \left(\tfrac{D_{\gamma_{A,1}}}{D_{\gamma_{A,2}}}\right)^{d(\sigma,\id)}  m^{(D)}_{\gamma_{A,1},\pi^{-1}\sigma}   m^{(D)}_{\gamma_{A,2},\sigma} + \bigO( D_{\gamma_A}^{k-1} D^{-m(A)(k-1) - 1})\\
      &= \sum_{\sigma\in NC(k)}\left(\tfrac{D_{\gamma_{A,1}}}{D_{\gamma_{A,2}}}\right)^{d(\sigma,\id)} m^{(D)}_{\gamma_{A,1},\tau^{-1}\sigma} m^{(D)}_{\gamma_{A,2},\sigma}  + \bigO\left(\frac{1}{D}\right).
    \end{split}
  \end{align}
  Since $m^{(D)}_{\gamma_{A,1},\tau^{-1}\sigma} \to m_{\gamma_{A,1},\tau^{-1}\sigma}$, $m^{(D)}_{\gamma_{A,2},\sigma} \to m_{\gamma_{A,2},\sigma}$ and $D_{\gamma_{A,1}}/ D_{\gamma_{A,2}} \to t$, this proves \cref{eq:moments two surfaces}.

  This accomplishes Step 2: we have shown that the configurations $\{\pi^{\min}_x\}$ (in case of a unique minimal cut for $A$) and $\{\pi^{\sigma}_x\}$ (in case there are exactly two non-intersecting minimal cuts for $A$) dominate in the computation of the expectation of the $k$-th moment of $\mu_A$ in terms of powers of $D$.

  We complete the proof by showing that the variance of $m_{A,k}$ vanishes as $D \to \infty$.
  We use the observation in \cref{eq:variance moments Z}, applying the analysis of $Z_{2k,\pi}$ to the case where $\pi = \tilde \tau = (1 2 \ldots k)(k+1 \, k+2 \ldots 2k)$.
  If there is a unique minimal cut, then using \cref{eq:Zk one cut} and the fact that $d(\tilde \tau, \id) = 2k - 2$, we find
  \begin{align*}
    \EE \mleft[\left(m^{(D)}_{A,k}\right)^2\mright] = D_{\gamma_A}^{2k-2} Z_{2k, \tilde \tau} = \prod_{l \in C(\tilde \tau)} m^{(D)}_{\gamma_A,l} + \bigO(D_{\gamma_A}^{2k-2} D^{-m(A)(2k-2) - 1}) = (m^{(D)}_{\gamma_A,k})^2 + \bigO\left(\frac{1}{D}\right).
  \end{align*}
  By \cref{eq:moment for cut expansion} we know that $(\EE m^{(D)}_{A,k})^2 = (m^{(D)}_{\gamma_A,k} + \bigO(\frac{1}{D}))^2$, and we conclude that the variance obeys
  \begin{align*}
    \EE\mleft[\left(m^{(D)}_{A,k} - \EE \mleft[m^{(D)}_{A,k}\mright]\right)^2\mright] = \EE \mleft[(m^{(D)}_{A,k})^2\mright] - \left(\EE m^{(D)}_{A,k}\right)^2 = \bigO\left(\frac{1}{D}\right).
  \end{align*}
  For the case with exactly two minimal cuts, a similar argument holds.
  Here, the key observation is that $\sigma$ is on a geodesic between $\tilde{\tau}$ and $\id$ if and only if $\sigma = \sigma_1 \sigma_2$ where $\sigma_1$ is on a geodesic between $(1 2 \ldots k)$ and $\id$ and $\sigma_2$ is on a geodesic between $(k+1 \, k+2 \ldots 2k)$ and $\id$.
  Using \cref{eq:Zk two cuts}, this implies
  \begin{align*}
    \EE (m^{(D)}_{A,k})^2 & = D_{\gamma_A}^{2k-2} Z_{2k, \tilde \tau}                                                                                                                                                                                                                                                                         \\
                          & =\sum_{\sigma_1,\sigma_2 \in NC(k)}\left(\tfrac{D_{\gamma_{A,1}}}{D_{\gamma_{A,2}}}\right)^{d(\sigma_1,\id) +d(\sigma_2,\id) } m^{(D)}_{\gamma_{A,1},\tau^{-1}\sigma_1} m^{(D)}_{\gamma_{A,1},\tau^{-1}\sigma_2}  m^{(D)}_{\gamma_{A,2},\sigma_1} m^{(D)}_{\gamma_{A,2},\sigma_2} + \bigO\left(\frac{1}{D}\right) \\
                          & = \left(\sum_{\sigma \in NC(k)}\left(\tfrac{D_{\gamma_{A,1}}}{D_{\gamma_{A,2}}}\right)^{d(\sigma,\id)} m^{(D)}_{\gamma_{A,1},\tau^{-1}\sigma}  m^{(D)}_{\gamma_{A,2},\sigma} \right)^2 + \bigO\left(\frac{1}{D}\right).
  \end{align*}
  By \cref{eq:moment for two cut expansion}, we see that this coincides with $(\EE m^{(D)}_{A,k})^2$, up to $\bigO(\tfrac{1}{D})$, and hence \cref{eq:variance moments} holds.
\end{proof}

We now have the ingredients to prove that the entanglement spectrum of random tensor networks with link states with bounded spectral variation can be written in a simple fashion. We will use the \textit{method of moments} to translate the above result on convergence of moments to convergence in distribution. The basic statement is that, given certain conditions on the distributions in question, if the moments of a sequence of distribution $\mu_n$ converge to those of $\mu$, then $\mu_n \Rightarrow \mu$ -- see for instance Theorem 30.8 in \cite{billingsley2008probability}.

The method of moments is valid, so long as a distribution $\mu$ is completely determined by its moments. This occurs if, for all $k$, the $k$-th moment $m_{\mu,k}$ is bounded as
\begin{equation}
  m_{\mu,k} \leq AB^kk!
\end{equation}
for constants $A,B$ independent of $k$. If the distributions have compact support, as in \cref{prop:convergence of moments}, then this condition is satisfied.\footnote{A basic example of a distribution which does not have compact support, but is nevertheless uniquely determined by its moments is a standard Gaussian distribution. On the other hand, a standard example of distributions that are not determined by their moments are the densities on $\RR_{\geq 0}$ with $\d \mu_1(x) = \sqrt{2\pi}x^{-1}e^{-(\log x)^2/2} \d x$ and $\d \mu_2(x) = (1+\sin(2\pi\log x)) \d \mu_1(x)$, for which it can be verified that the $n$-th moments of both distributions are equal to $e^{-n^2/2}$, while the distributions are clearly not identical.}

Now that we have established the convergence of moments in \cref{prop:convergence of moments}, we have our main result of the (conditional) convergence in distribution.
As in \cref{prop:convergence of moments} we consider a family of random tensor network states with link states with bounded spectral variation with increasing $D$, as defined in the beginning of this section.

\begin{thm}\label{thm:spectrum near max entangled}
  If there exists a unique minimal cut $\gamma_A$ for $A$, then $\mu_A^{(D)} \Rightarrow \mu_{\gamma_A}$, in probability,  as $D \to \infty$.
  If there exist exactly two minimal cuts $\gamma_{A,1}$ and $\gamma_{A,2}$, which do not intersect and for which $\lim_{D \to \infty} \frac{D_{\gamma_{A,1}}}{D_{\gamma_{A,2}}} = t \leq 1$, then $\mu^{(D)}_A \Rightarrow \MP(1) \boxtimes \mu_{\gamma_{A,1}} \boxtimes \mu_{\gamma_{A,2}}(t)$, in probability, where $\mu_{\gamma_{A,2}}(t) = (1 - t)\delta_0 + t\mu_{\gamma_{A,2}}$
\end{thm}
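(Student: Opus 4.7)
The plan is to combine the moment convergence established in \cref{prop:convergence of moments} with the classical method of moments. That proposition provides $\EE[m^{(D)}_{A,k}] \to M_k$ together with $\Var(m^{(D)}_{A,k}) = \bigO(1/D)$ for every fixed $k \in \NN$, where $M_k$ denotes either $m_{\gamma_A,k}$ (single-cut case) or $\sum_{\sigma \in NC(k)} t^{d(\sigma,\id)}\, m_{\gamma_{A,1},\tau^{-1}\sigma}\, m_{\gamma_{A,2},\sigma}$ (two-cut case). By Chebyshev's inequality this upgrades immediately to convergence in probability: for each fixed $k$, $m^{(D)}_{A,k} \to M_k$ in probability as $D \to \infty$.

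Next I would identify the candidate limit distribution. For a unique minimal cut, $M_k = m_{\gamma_A,k}$ is by definition the $k$-th moment of the compactly supported product measure $\mu_{\gamma_A} = \bigotimes_{e \in \gamma_A} \mu_e$, so the limit is $\mu_{\gamma_A}$. For two non-intersecting minimal cuts, I would match the moment formula to \cref{thm:free product} via the Kreweras complement $\sigma \mapsto \pi = \sigma^{-1}\tau$, which is a bijection on $NC(k)$ satisfying $\abs{C(\sigma)} + \abs{C(\pi)} = k+1$. Since the cycle-type products $m^\mu_\rho$ depend only on the cycle structure of $\rho$, they are invariant under conjugation and inversion; this lets one absorb the $t$-weight into the moments of the mixture $\mu_{\gamma_{A,2}}(t) = (1-t)\delta_0 + t\mu_{\gamma_{A,2}}$, whose $\ell$-th moment equals $t\, m^{\mu_{\gamma_{A,2}}}_\ell$ for $\ell \geq 1$. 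Applying \cref{thm:free product} with $\mu = \mu_{\gamma_{A,1}}$ and $\nu = \mu_{\gamma_{A,2}}(t)$ then identifies the limiting moment sequence with that of $\MP(1) \boxtimes \mu_{\gamma_{A,1}} \boxtimes \mu_{\gamma_{A,2}}(t)$.

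Finally, I would close by invoking the method of moments. Under the bounded-spectral-variation hypothesis each $\mu_e$ is compactly supported, so $\mu_{\gamma_A}$ and the free product $\MP(1) \boxtimes \mu_{\gamma_{A,1}} \boxtimes \mu_{\gamma_{A,2}}(t)$ are also compactly supported and hence uniquely determined by their moments. The uniform bounds on the link moments $m^{(D)}_{e,k}$ translate, together with the variance bound, into uniform-in-$D$ tail bounds on the support of $\mu^{(D)}_A$ with probability tending to one; so any $f \in C_b(\RR)$ can be uniformly approximated by a polynomial on a common compact set, and combining this with moment convergence in probability yields $\int f\,\d\mu^{(D)}_A \to \int f\,\d\mu_A$ in probability for every $f \in C_b(\RR)$, which is exactly weak convergence in probability. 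The main technical obstacle is the combinatorial step in the two-cut case: carrying out the Kreweras substitution and bookkeeping the $t$-exponent correctly to bring the sum $\sum_\sigma t^{d(\sigma,\id)}\, m_{\gamma_{A,1},\tau^{-1}\sigma}\, m_{\gamma_{A,2},\sigma}$ into the form required by \cref{thm:free product}; the remaining method-of-moments argument is standard and essentially spelled out in the paragraph immediately preceding the theorem statement.
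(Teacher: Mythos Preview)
Your overall strategy matches the paper's proof exactly: use \cref{prop:convergence of moments} for moment convergence plus vanishing variance, identify the limiting moments via \cref{thm:free product}, and close with the method of moments (valid because the limit distributions are compactly supported). The paper's proof is only a few lines; you have simply spelled out the Chebyshev step and the polynomial-approximation argument that the paper leaves implicit.

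The one substantive difference is how the $t$-weight is handled in the two-cut case, and here your proposed route does not go through. The paper performs no Kreweras substitution: it uses directly that
\[
t^{d(\sigma,\id)}=t^{k-|C(\sigma)|}=\prod_{l\in C(\sigma)}t^{\,l-1}
\]
factors over the cycles of $\sigma$, so that with $m^\nu_l:=t^{l-1}m_{\gamma_{A,2},l}$ one has $t^{d(\sigma,\id)}m_{\gamma_{A,2},\sigma}=m^\nu_\sigma$, and since $m_{\gamma_{A,1},\tau^{-1}\sigma}=m_{\gamma_{A,1},\sigma^{-1}\tau}$ the formula from \cref{prop:convergence of moments} is already in the shape $\sum_{\sigma\in NC(k)} m^{\mu_{\gamma_{A,1}}}_{\sigma^{-1}\tau}\,m^\nu_\sigma$ required by \cref{thm:free product}. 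By contrast, your route---substituting $\sigma\mapsto\pi=\sigma^{-1}\tau$ and then trying to absorb the weight using the mixture moments $t\,m_\ell$---leaves a residual factor $t^{|C(\pi)|-1}$ which is \emph{not} multiplicative over the cycles of either $\pi$ or $\pi^{-1}\tau$, so it cannot be packaged into either moment sequence. (You are right that the mixture $(1-t)\delta_0+t\mu_{\gamma_{A,2}}$ has $\ell$-th moment $t\,m_\ell$ rather than the $t^{\ell-1}m_\ell$ the paper asserts; but it is the sequence $t^{\ell-1}m_\ell$ that makes the cycle-by-cycle factorization go through.) In short: drop the Kreweras detour and instead factor $t^{d(\sigma,\id)}$ directly over the cycles of $\sigma$.
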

\begin{proof}
  It is straightforward to see that the $k$-th moment of $\mu_{\gamma_{A,2}}(t)$ is given by $t^{k-1} m_{\gamma_{A,2},k}$, and then the result follows immediately from \cref{prop:convergence of moments}, \cref{thm:free product}, and the method of moments. Because we assumed that for any minimal cut $\gamma_A$ for $A$, the limiting distributions $\mu_{\gamma_A}$ are compactly supported, they are uniquely determined by their moments. Hence, the method of moments is valid, and the convergence of moments implies convergence in distribution.
\end{proof}

\begin{rmk}
  In \cref{thm:spectrum near max entangled}, we assumed that the two cuts were non-intersecting.
  What happens if there are still only exactly two minimal cuts, but $\gamma_{A,1} \cap \gamma_{A,2}$ is nonempty?
  This extension is straightforward.
  Let $\gamma_A^{(a)} := \gamma_{A,1} \cap \gamma_{A,2}$ and let $\gamma_{A,i}^{(b)} = \gamma_{A,i} \setminus \gamma_A^{(a)}$ for $i = 1,2$.
  In line with previous notation, let~$\mu_{\gamma_A^{(a)}}$ and~$\mu_{\gamma_{A,i}^{(b)}}$ denote the corresponding limiting distributions of the entanglement spectra along these sets, with moments $m_{\gamma_A^{(a)},k}$ and $m_{\gamma_{A,i}^{(b)},k}$.
  The only step in the proof of \cref{prop:convergence of moments} where we used that the cuts were non-intersecting is when we computed the value of $Z_k(\{\pi_x\})$ for the optimal configuration.
  If the cuts do intersect, and we consider the configuration with $\pi_x = \tau$ for $x \in V_1$ with $\tau$ the complete cycle,~$\pi_x = \sigma$ for~$x \in V_2$ and $\sigma \in NC(k)$, and $\pi_x = \id$ for $x \in V_3$, then a quick calculation shows
  \begin{align*}
    Z_{k}(\{\pi_x\}) \to D_{\gamma_{A,1}}^{-d(\pi,\id)} (D_{\gamma_A,1}/D_{\gamma_A,2})^{d(\sigma,\id)} \prod_{e \in \gamma_A^{(a)}} m_{e,k} \prod_{e_1 \in \gamma^{(b)}_{A,1}} m_{e_1,\pi^{-1}\sigma} \prod_{e_2 \in \gamma^{(b)}_{A,2}}  m_{e_2,\sigma}.
  \end{align*}
  Apart from this modification, the proof of \cref{prop:convergence of moments} is still valid, leading to
  \begin{align*}
    Z_{k,\tau} = m_{\gamma_A^{(a)},k} \sum_{\sigma \in NC(k)} t^{d(\sigma,\id)}m_{\gamma_{A,1}^{(b)},\tau^{-1}\sigma} m_{\gamma_{A,2}^{(b)},\sigma}.
  \end{align*}
  If, in \cref{thm:spectrum near max entangled}, we do not assume that the cuts are non-intersecting, then the partition function above leads to a limiting distribution given by
  \begin{align*}
    \mu_{\gamma_A^{(a)}} \otimes \left(MP(1) \boxtimes \mu_{\gamma_{A,1}^{(b)}} \boxtimes \mu_{\gamma_{A,2}^{(b)}}(t)\right).
  \end{align*}
\end{rmk}

\subsection{Nontrivial link states and entanglement negativity}\label{sec:negativity}
As another application of the theory of free probability, we will compute the entanglement negativity spectrum for random tensor network states with link states with bounded spectra.
In \cite{dong2021holographic}, it was shown how to compute the entanglement negativity spectrum for a random tensor network state with maximally entangled link states using a replica trick.
Using the methods from the previous subsection, we can analyze the negativity for entangled link states with bounded spectral variation.
We remark that similar computations have recently been performed in \cite{dong2021replica} in the context of replica wormholes, and our assumption on the link states is a generalization of the ``pairwise connected regime'' in \cite{dong2021replica}.
Another work investigating nontrivial entanglement negativity spectra in random tensor networks is \cite{kudler2021negativity}, where they focus on the effect of having multiple minimal cuts in the network.
As our analysis will be a straightforward combination of the arguments in \cite{dong2021holographic} and \cref{sec:bounded spectral variation and free product}, we will be rather concise; the main message of this section is to show that the language of free probability applies to other random tensor network computations as well.

We first recall how negativity functions as an entanglement measure for mixed states.
Let $\mathcal T$ be the superoperator which maps an operator $X$ to its transpose $X^{\tran}$, and $\mathcal I$ be the identity superoperator.
For $\rho_{AB} \in \PSD(AB)$,
\begin{align*}
  \rho_{AB}^{T_B} := (\mathcal I_A \ot \mathcal T_B)(\rho_{AB})
\end{align*}
is the partial transpose of $\rho_{AB}$ on the $B$ system.
The \emph{logarithmic} or \emph{entanglement negativity} is given by
\begin{align*}
  E_N(\rho_{AB}) = \log \frac{\norm{\rho_{AB}^{T_B}}_1}{\tr[\rho]}.
\end{align*}
It is a measure for the entanglement of the mixed state $\rho_{AB}$: if $E_N(\rho_{AB}) > 0$ the state must be entangled.
We call $\spec(\abs{\rho_{AB}^{T_B}})$ the \textit{entanglement negativity spectrum}.
In analogy to the R\'enyi entropies, we can generalize the logarithmic negativity to a one-parameter family of negativities. The $k$-th R\'{e}nyi negativity is given by
\begin{align*}
  N_k(\rho_{AB}) = \tr\mleft[(\rho_{AB}^{T_B})^k\mright].
\end{align*}
If we let $N^{(\mathrm{even})}_m(\rho_{AB}) = N_{2m}(\rho_{AB})$, then the logarithmic negativity is obtained as an analytic continuation in the R\'enyi index $m \to \frac12$ of $\log(N^{(\mathrm{even})}_m(\rho_{AB}))$.
More precisely, in the expression
\begin{align*}
  \log \sum_{\lambda \in \spec(\rho_{AB}^{T_B})} \abs{\lambda}^{\alpha},
\end{align*}
we may take $\alpha \to \frac12$ to obtain $E_N(\rho_{AB}) + \log \tr[\rho]$.

In the context of random tensor networks, we partition the boundary in three regions: $V_{\partial} = A \sqcup B \sqcup C$, and we would like to compute the R\'{e}nyi negativities of the reduced state $\rho_{AB}$.
We will then use this to determine the entanglement negativity spectrum, and compute the entanglement negativity.
The idea is that the $k$-th R\'{e}nyi negativity can be computed using a replica trick, by placing the full cycle $\tau_k = (1 2 \ldots k) \in S_k$ on $A$ and $\tau_k^{-1} = (k \, k-1 \ldots 1)$ on $B$:
\begin{align*}
  N_k(\rho_{AB}) = \tr\mleft[\rho_{AB}^{\ot k} \left(R_A(\tau) \ot R_B(\tau^{-1})\right)\mright] = \tr\mleft[\rho_{ABC}^{\ot k} \left(R_A(\tau) \ot  R_B(\tau^{-1}) \ot R_C(\id)\right)\mright].
\end{align*}

Let us first discuss the case with maximally entangled link states, following~\cite{dong2021holographic}.
The same arguments as in \cref{sec:ground state configurations} show that one can compute the expectation of $N_k(\rho_{AB})$ for a random tensor network state using a spin model, now with boundary conditions of $\tau_k$ on $A$, $\tau_k^{-1}$ on $B$ and $\id$ on $C$.
We will assume that the minimal cuts $\Gamma_A$, $\Gamma_B$ and $\Gamma_C$ are unique.
Note that the minimal cut for $AB$ is given by $\Gamma_{AB} = \Gamma_C^c$.
From the theory of multi-commodity flows, it is known that there exist sets of edge-disjoint paths $P = P_{AB} \cup P_{AC} \cup P_{BC}$, where $P_{AB}$ consists of paths from $A$ to $B$, and similarly for $P_{AC}$ and $P_{BC}$, and which are such that
\begin{align*}
  \abs{P_{AB}} + \abs{P_{AC}} = \abs{\gamma_A}, \qquad \abs{P_{AB}} + \abs{P_{BC}} = \abs{\gamma_B}, \qquad \abs{P_{AC}} + \abs{P_{BC}} = \abs{\gamma_C}.
\end{align*}
This can be used to show (in analogous fashion to the proof of \cref{prop:convergence of moments}) that, if $k = 2n$ is even, any spin model configuration contributing to $\EE N_k(\rho_{AB})$ is of order $\bigO(D^{-(n-1)(\abs{\gamma_A} + \abs{\gamma_B}) - n\abs{\gamma_C}})$.
If $k = 2n+1$ is odd, any spin model configuration contributing to $\EE N_k(\rho_{AB})$ is of order $\bigO(D^{-n(\abs{\gamma_A} + \abs{\gamma_B} + \abs{\gamma_C})})$.

In order to determine what happens as $D \to \infty$, we need to determine the dominant configurations.
Let $r$ be the number of connected components of $V \setminus (\Gamma_A \cup \Gamma_B \cup \Gamma_C)$.
There are two distinct cases. The first is when the minimal cut for $AB$ (which is the complement of the minimal cut for $C$) is the union of the minimal cuts for $A$ and $B$, so $\Gamma_{AB} = \Gamma_A \cup \Gamma_B$ and hence $\gamma_{AB} = \gamma_A \cup \gamma_B$. Then the minimal cuts naturally partition the bulk vertices into three cuts $\Gamma_A$, $\Gamma_B$ and $\Gamma_C$, and we have $r=0$. In this case, the dominant configurations in the spin model are those where the vertices in $\Gamma_A$ are assigned $\tau_k$, those in $\Gamma_B$ are assigned~$\tau_k^{-1}$ and those in $\Gamma_C$ are assigned $\id$.
This is illustrated in \cref{subfig:negativity 1}.

\begin{figure}
  \centering
  \begin{subfigure}[t]{.45\textwidth}
    \centering
    \begin{overpic}[width=0.9\textwidth,grid=false]{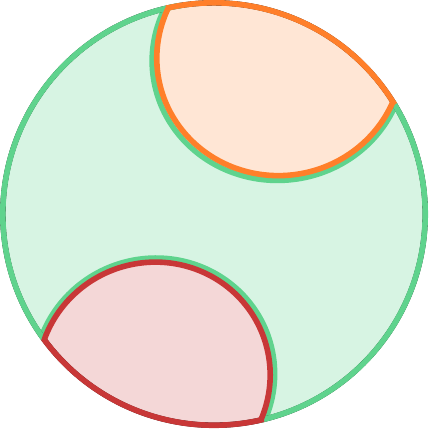}
      \put(80,95){\color{Orange}{\large{$B$}}}
      \put(0,80){\color{Green}{\large{$C$}}}
      \put(93,15){\color{Green}{\large{$C$}}}
      \put(20,0){\color{BrickRed}{\large{$A$}}}
      \put(40,18){$\tau_k$}
      \put(60,78){$\tau_k^{-1}$}
      \put(50,46){$\id$}
      \put(30,70){\color{Green}{$\gamma_C$}}
      \put(30,30){\color{BrickRed}{$\gamma_A$}}
      \put(70,65){\color{Orange}{$\gamma_B$}}
    \end{overpic}
    \caption{Illustration of the case where $\gamma_C = \gamma_A \cup \gamma_B$, where the dominant configuration is given by $\tau_k$ on $\Gamma_A$, $\tau_k^{-1}$ on $\Gamma_B$ and $\id$ on $\Gamma_C$.}
    \label{subfig:negativity 1}
  \end{subfigure}%
  \hspace*{0.5cm}
  \begin{subfigure}[t]{.45\textwidth}
    \centering
    \begin{overpic}[width=0.9\textwidth,grid=false]{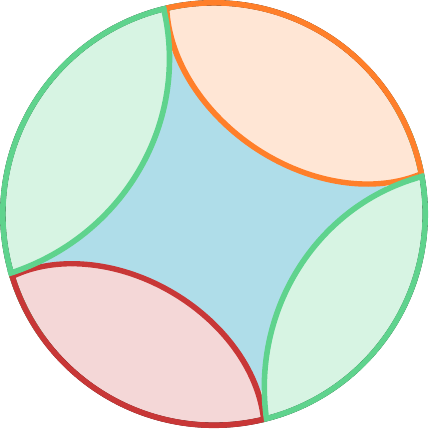}
      \put(80,95){\color{Orange}{\large{$B$}}}
      \put(0,80){\color{Green}{\large{$C$}}}
      \put(93,15){\color{Green}{\large{$C$}}}
      \put(20,0){\color{BrickRed}{\large{$A$}}}
      \put(40,18){$\tau_k$}
      \put(60,78){$\tau_k^{-1}$}
      \put(50,46){$\pi_1$}
      \put(20,66){$\id$} \put(80,28){$\id$}
      \put(50,30){\color{NavyBlue}{\large{$V_1$}}}
      \put(28,72){\color{Green}{$\gamma_C$}}
      \put(80,42){\color{Green}{$\gamma_C$}}
      \put(29,29){\color{BrickRed}{$\gamma_A$}}
      \put(70,65){\color{Orange}{$\gamma_B$}}
    \end{overpic}
    \caption{Illustration of the case where $\gamma_C \neq \gamma_A \cup \gamma_B$, where the dominant configuration is given by $\tau_k$ on $\Gamma_A$, $\tau_k^{-1}$ on $\Gamma_B$ and $\id$ on $\Gamma_C$ and some non-crossing pairing $\pi_1$ on the domain $V_1$.}
    \label{subfig:negativity 2}
  \end{subfigure}
  \caption{Tensor networks with one and two minimal cuts. The relevant ground state configuration domains are denoted by $\Gamma_A$.}
  \label{fig:negativity}
\end{figure}

The second case is when $\Gamma_A \cup \Gamma_B \subsetneq \Gamma_{AB}$ and hence $\gamma_{AB} \neq \gamma_A \cup \gamma_B$. Now, we have again the domains~$\Gamma_A$,~$\Gamma_B$ and~$\Gamma_C$, but upon removing these vertices, there may also be connected components $V_1, \ldots, V_r$ which are not connected to $A$, $B$ or $C$.
Here, the minimal configurations are those for which, again, the vertices in $\Gamma_A$ are assigned $\tau_k$, those in $\Gamma_B$ are assigned $\tau_k^{-1}$ and those in $\Gamma_C$ are assigned $\id$, and where in each component~$V_i$ the vertices are assigned a permutation $\pi_i$ which is such that it satisfies three conditions: it must be on a geodesic between $\tau_k$ and $\tau_k^{-1}$, on a geodesic between $\tau_k$ and $\id$ and on a geodesic between $\tau_k^{-1}$ and $\id$.
If $k = 2n$ is even, such permutations are given by \emph{non-crossing pairings}: permutations corresponding to non-crossing partitions in which each cycle has length 2.
The set of non-crossing pairings on $2n$ elements is in bijection with the set of non-crossing partitions on $n$ elements, so the number of non-crossing pairings on $2n$ elements is given by $\abs{NC(n)} = C_n$.
One way to obtain this correspondence is as follows.
If~$\pi$ is a non-crossing pairing,~$\tau_{2n} \pi$ will map even numbers to even numbers, and restricting to the even numbers and relabeling~$2i \mapsto i$ yields a non-crossing partition~$\sigma \in NC(n)$.
Moreover, restricting to the odd numbers and relabeling~$2i + 1 \mapsto i$ yields the non-crossing partition~$\sigma^{-1}\tau \in  NC(n)$.
This leads to~$C_n^r$ dominant contributions to~$\EE N_{2n}(\rho_{AB})$ of size~$D^{-(n-1)(\abs{\gamma_A} + \abs{\gamma_B}) - n\abs{\gamma_C}}$ since we can choose a non-crossing pairing~$\pi_i$ for each component.
Such a configuration is illustrated in \cref{subfig:negativity 2}.

For odd~$k = 2n + 1$, we similarly have permutations which correspond to a non-crossing partition, and which have a single fixed point and all other cycles with length 2.
This leads to $((2n + 1)C_n)^r$ dominant contributions to $\EE N_{2n}(\rho_{AB})$, of size $D^{-(n-1)(\abs{\gamma_A} + \abs{\gamma_B}) - n\abs{\gamma_C}}$.
We also note that $\rank(\rho_{AB}^{T_B}) \leq D^{\abs{\gamma_A} + \abs{\gamma_B}}$.
If~$\spec(\rho_{AB}^{T_B}) = \{s_i\}$, then we define the measure
\begin{align}\label{eq:mu negativity}
  \mu^{(D)}_{AB} = \frac{1}{D^{\abs{\gamma_A} + \abs{\gamma_B}}} \sum_{i=1}^{D^{\abs{\gamma_A} + \abs{\gamma_B}}} \delta_{D^{\frac12(\abs{\gamma_A} + \abs{\gamma_B} + \abs{\gamma_C})}s_i}.
\end{align}
This has moments given by
\begin{align*}
  m^{(D)}_{AB,k} = \int x^k \d \mu^{(D)}_{AB}(x) = D^{(\tfrac{k}{2}-1)(\abs{\gamma_A} + \abs{\gamma_B}) + \tfrac{k}{2}\abs{\gamma_C}} N_{k}(\rho_{AB}).
\end{align*}
If we take the expectation of the moments, we again need to distinguish the two cases.
If~$\abs{\gamma_A} + \abs{\gamma_B} = \abs{\gamma_C}$, we see that the powers of $D$ cancel for the dominant configurations, so $m^{(D)}_{AB,k} \to 1$ for all $k$.
On the other hand, for $\abs{\gamma_A} + \abs{\gamma_B} > \abs{\gamma_C}$, we see that for $D \to \infty$ with odd $k$, we have $\EE m^{(D)}_{AB,k} \to 0$. For even $k$, we recover the degeneracy of the dominant configurations, leading to
\begin{align}\label{eq:moments negativity}
  \lim_{D \rightarrow \infty} m^{(D)}_{AB,k} =
  \begin{cases}
    1         & \text{if $\abs{\gamma_A} + \abs{\gamma_B} = \abs{\gamma_C}$,}              \\
    0         & \text{if $k$ odd and $\abs{\gamma_A} + \abs{\gamma_B} > \abs{\gamma_C}$,}  \\
    C_{k/2}^r & \text{if $k$ even and $\abs{\gamma_A} + \abs{\gamma_B} > \abs{\gamma_C}$},
  \end{cases}
\end{align}
where $r$ is the number of connected components of $V \setminus (\Gamma_A \cup \Gamma_B \cup \Gamma_C)$.
In fact, one can show that, as in \cref{prop:convergence of moments}, the variance goes to zero as well, and hence the method of moments allows one to conclude that $\mu^{(D)}_{AB} \Rightarrow \mu_{AB}$, in probability, where
\begin{align}\label{eq:distribution negativity spectrum}
  \mu_{AB} =
  \begin{cases}
    \sigma^{\ot r}                       & \text{if $r > 0$,}                                                      \\
    \frac12\delta_1 + \frac12\delta_{-1} & \text{if $r=0$ and $\abs{\gamma_A} + \abs{\gamma_B} > \abs{\gamma_C}$,} \\
    \delta_1                             & \text{if $r=0$ and $\abs{\gamma_A} + \abs{\gamma_B} = \abs{\gamma_C}$},
  \end{cases}
\end{align}
where $\sigma$ is the semi-circle distribution with density
\begin{align*}
  \d \sigma(x) = \frac{1}{2\pi}{\sqrt{4-x^2}} \ind_{\abs{x}\leq 2}\,dx
\end{align*}
Alternatively, one may study the empirical distribution of the \emph{squared} entanglement negativity spectrum
\begin{align}\label{eq:nu negativity}
  \nu_{AB}^{(D)} = \frac{1}{D^{\abs{\gamma_A} + \abs{\gamma_B}}} \sum_i \delta_{D^{\abs{\gamma_A} + \abs{\gamma_B} + \abs{\gamma_C}}s_i^2}.
\end{align}
This distribution has $k$-th moment given by $m^{(D)}_{AB,2k}$, and in comparison with the limiting moments in \cref{eq:moments negativity}, one can conclude that $\nu_{AB}^{(D)} \Rightarrow \nu_{AB}$, in probability, where
\begin{align}\label{eq:nu max ent}
  \nu_{AB} = MP(1)^{\ot r}.
\end{align}
The logarithmic negativity can be computed using the distribution $\mu^{(D)}_{AB}$ or $\nu^{(D)}_{AB}$ as
\begin{align}\label{eq:negativity}
  E_N(\rho_{AB}) & = \log \int  \abs{\lambda} \d \mu^{(D)}_{AB}(\lambda)  + \frac{\log D}{2}(\abs{\gamma_A} + \abs{\gamma_B} - \abs{\gamma_C}) - \log \tr\mleft[\rho\mright]   \\
                 & = \log \int  \sqrt{\lambda} \d \nu^{(D)}_{AB}(\lambda)  + \frac{\log D}{2}(\abs{\gamma_A} + \abs{\gamma_B} - \abs{\gamma_C}) - \log \tr\mleft[\rho\mright].
\end{align}
The convergence of $\nu^{(D)}_{AB}$ to $\nu_{AB}$ implies\footnote{The function $f(\lambda) = \sqrt{\lambda}$ is not in $C_b(\RR)$, but the method of moments actually shows a stronger convergence, allowing test functions to have polynomial growth.} that $E_N(\rho_{AB}) - \frac{\log D}{2}(\abs{\gamma_A} + \abs{\gamma_B} - \abs{\gamma_C})$ converges in probability to
\begin{align*}
  \log \int \sqrt{\lambda} \d \nu_{AB}(\lambda) = r\log \frac{8}{3\pi}.
\end{align*}
See Appendix D of \cite{dong2021holographic} for details and proofs.

A straightforward combination of the arguments in \cref{sec:bounded spectral variation and free product} and
\cite{dong2021holographic} shows that the same configurations are the dominant contributions for link states with bounded spectral variation as in \cref{sec:bounded spectral variation and free product}.
To determine the limiting distribution in this case, we can generalize \cref{eq:distribution negativity spectrum} in the same fashion as in \cref{sec:bounded spectral variation and free product}.
We assume the minimal cuts $\Gamma_A$, $\Gamma_B$ and $\Gamma_C$ are unique.
We also assume that~$\gamma_A \cap \gamma_B = \emptyset$, and in the case where $\gamma_C = \gamma_{AB} \neq \gamma_A \cup \gamma_B$ (so $\abs{\gamma_A} + \abs{\gamma_B} > \abs{\gamma_C}$), all pairwise intersections between~$\gamma_A$,~$\gamma_B$ and~$\gamma_C$ are empty.
This excludes the case where $\abs{\gamma_A} + \abs{\gamma_B} > \abs{\gamma_C}$, but $r = 0$.
We let $\gamma_{A,i}$ and $\gamma_{B,i}$ denote the components of $\gamma_A$ and $\gamma_B$ which are connected to $V_i$, and we let $\mu^{(D)}_{\gamma_{A,i}}$ and $\mu^{(D)}_{\gamma_{B,i}}$ denote the distribution of the spectrum along these sets, with associated $k$-th moments $m^{(D)}_{\gamma_{A,i},k}$, $m^{(D)}_{\gamma_{B,i},k}$, which we assume to converge to the moments $m_{\gamma_{A,i},k}$, $m_{\gamma_{B,i},k}$ of compactly supported distributions $\mu_{\gamma_{A,i}}$ and $\mu_{\gamma_{B,i}}$.
For convenience, we assume $D_e = D$ for all edges $e \in E$.

We can now compute the dominant contributions to~$\EE N_{k}(\rho_{AB})$.
If~$\gamma_C = \gamma_A \cup \gamma_B$, then there is a unique dominant configuration, which contributes~$D^{-(k-1)\abs{\gamma_C}} m_{\gamma_C,k}$.
If~$\abs{\gamma_A} + \abs{\gamma_B} > \abs{\gamma_C}$ and $k =2n$ is even, consider the configuration which assigns $\pi_i$ to $V_i$, where each $\pi_i$ is a non-crossing pairing.
For each edge~$e \in \gamma_C$, we have $m^{(D)}_{e,\pi_i} = (m^{(D)}_{e,2})^{n}$, so this configuration contributes
\begin{align*}
  D^{-(n-1)(\abs{\gamma_A} + \abs{\gamma_B}) - n\abs{\gamma_C}}\left(m^{(D)}_{\gamma_C,2}\right)^{n} \prod_{i=1}^r\bigl(m^{(D)}_{\gamma_{A,i},\tau_{2n}^{-1} \pi_i}  m^{(D)}_{\gamma_{B,i},\tau_{2n} \pi_i} \bigr) .
\end{align*}
Recalling the construction of the equivalence between $NC(n)$ and non-crossing pairings on $2n$ elements, we see that
\begin{align*}
  m^{(D)}_{\gamma_{B,i},\tau_{2n} \pi_i} & = m^{(D)}_{\gamma_{B,i},\sigma_{i}} m^{(D)}_{\gamma_{B,i},\sigma_i^{-1}\tau_{n}}
\end{align*}
for some unique $\sigma_i \in NC(n)$.
Similarly, one may verify
\begin{align*}
  m^{(D)}_{\gamma_{A,i},\tau_{2n}^{-1} \pi_i} & = m^{(D)}_{\gamma_{A,i},\sigma_{i}} m^{(D)}_{\gamma_{A,i},\sigma_i^{-1}\tau_{n}}.
\end{align*}
This implies that the contribution of all dominant configurations is given by
\begin{align*}
  \left(m^{(D)}_{\gamma_C,2}\right)^{n}\prod_{i=1}^r\bigl(\sum_{\sigma \in NC(n)} m^{(D)}_{\gamma_{A,i},\sigma}  m^{(D)}_{\gamma_{B,i},\sigma} m^{(D)}_{\gamma_{A,i},\sigma^{-1}\tau_n}  m^{(D)}_{\gamma_{B,i},\sigma^{-1}\tau_n}\bigr)
\end{align*}
As in the maximally entangled case, upon rescaling, the odd moments vanish as $D \to \infty$.
In conclusion, the resulting asymptotic moments are given by
\begin{align}\label{eq:moments negativity with link states}
  \lim_{D \rightarrow \infty} \EE m^{(D)}_{AB,k} =
  \begin{cases}
    m_{\gamma_C,k} & \text{if $\abs{\gamma_A} + \abs{\gamma_B} = \abs{\gamma_C}$,}             \\
    0              & \text{if $k$ odd and $\abs{\gamma_A} + \abs{\gamma_B} > \abs{\gamma_C}$,} \\
    m_k            & \text{if $k$ even and $\abs{\gamma_A} + \abs{\gamma_B} > \abs{\gamma_C}$}
  \end{cases}
\end{align}
with
\begin{align*}
  m_{2n} = m_{\gamma_C,2}^{n}\prod_{i=1}^r\bigl(\sum_{\sigma \in NC(n)} m_{\gamma_{A,i},\sigma}  m_{\gamma_{B,i},\sigma} m_{\gamma_{A,i},\sigma^{-1}\tau_n}  m_{\gamma_{B,i},\sigma^{-1}\tau_n}\bigr).
\end{align*}
As before, one can also show, in similar fashion to the proof of \cref{prop:convergence of moments}, that the variance of the moments goes to zero as $D \to \infty$.
For the case $\abs{\gamma_A} + \abs{\gamma_B} > \abs{\gamma_C}$, we consider $\nu^{(D)}_{AB}$ similar to \cref{eq:nu negativity}, but with an additional rescaling by $m_{\gamma_C,2}$:
\begin{align*}
  \nu_{AB}^{(D)} = \frac{1}{D^{\abs{\gamma_A} + \abs{\gamma_B}}} \sum_i \delta_{D^{\abs{\gamma_A} + \abs{\gamma_B} + \abs{\gamma_C}}m_{\gamma_C,2}^{-1} s_i^2}.
\end{align*}
This has moments, which compute $N^{(\mathrm{even})}_k(\rho_{AB})$, converging to
\begin{align*}
  \lim_{D \rightarrow \infty} \EE \int x^k \d \nu_{AB}^{(D)}(x) = \prod_{i=1}^r\bigl(\sum_{\sigma \in NC(n)} m_{\gamma_{A,i},\sigma}  m_{\gamma_{B,i},\sigma} m_{\gamma_{A,i},\sigma^{-1}\tau_n}  m_{\gamma_{B,i},\sigma^{-1}\tau_n}\bigr).
\end{align*}
Thus, by the method of moments and \cref{thm:free product}, it holds that $\nu^{(D)}_{AB} \Rightarrow \nu_{AB}$, in probability, where
\begin{align}\label{eq:nu with link states}
  \nu_{AB} =
  \begin{cases}
    \bigotimes_{i = 1}^r \nu_i & \text{if $r > 0$,} \\
    \mu_{\gamma_C}             & \text{if $r=0$},
  \end{cases}
\end{align}
and where $\nu_i$ is given by
\begin{align*}
  \nu_i = (\mu_{\gamma_{A,i}} \ot \mu_{\gamma_{B,i}})^{\boxtimes 2} \boxtimes \MP(1).
\end{align*}
This reduces to \cref{eq:nu max ent} if the link states are maximally entangled.
We can use this to compute the logarithmic negativity, as we did previously.
For $r > 0$,
\begin{align*}
  E_N(\rho_{AB})
  = \log \int \sqrt{\lambda} \d \nu_{AB}^{(D)}(\lambda) + \frac{\log D}{2}(\abs{\gamma_A} + \abs{\gamma_B} - \abs{\gamma_C}) + \frac{1}{2}\log m_{\gamma_{C,2}} - \log \tr\mleft[\rho\mright],
\end{align*}
from which we find that $E_N(\rho_{AB}) - \frac{\log D}{2}(\abs{\gamma_A} + \abs{\gamma_B} - \abs{\gamma_C})$ converges in probability to
\begin{align*}
  \log \int \sqrt{\lambda} \d \nu_{AB}(\lambda) + \frac{1}{2}\log m_{\gamma_{C,2}}.
\end{align*}
For the case $\abs{\gamma_A} + \abs{\gamma_B} = \abs{\gamma_C}$, it is more elegant to use the limiting distribution of $\mu^{(D)}_{AB}$, as defined in \cref{eq:mu negativity}.
By the method of moments and \cref{eq:moments negativity with link states}, $\mu^{(D)}_{AB} \Rightarrow \mu_{\gamma_C}$, in probability.
We may then compute the entanglement negativity as
\begin{align*}
  E_N(\rho_{AB})
  = \log \int \abs{\lambda} \d \mu_{AB}^{(D)}(\lambda) + \frac{\log D}{2}(\abs{\gamma_A} + \abs{\gamma_B} - \abs{\gamma_C}) - \log \tr\mleft[\rho\mright],
\end{align*}
and hence $E_N(\rho_{AB}) - \frac{\log D}{2}(\abs{\gamma_A} + \abs{\gamma_B} - \abs{\gamma_C})$ converges in probability to $\log \int \abs{\lambda} \d \mu_{AB}(\lambda)$.

\section{Link states with unbounded spectral variation}\label{sec:far from max entangled}
We will now consider a different regime, where the link states have unbounded spectral variation.
Our methods in this section are distinct from the previous one, and the two sections can be considered separately.

\subsection{One-shot entropies} 
We begin by introducing one of our main tools for studying entanglement spectra in random tensor network states: \textit{one-shot entropies}.
In quantum information theory, the rates of certain important protocols, such as compression or state merging can be expressed as entropic quantities.
One-shot entropies are the appropriate analogs for settings where one would like to analyze a task for a single or finite number of copies of the relevant state.
Asymptotic rates in terms of ordinary von Neumann entropies are then recovered in the limit of infinitely many independent copies.
For an extensive introduction to this point of view, see \cite{tomamichel2015quantum}; here we provide the basic definitions and introduce the relevant concepts.

A random tensor network built from link states that are maximally entangled (or more generally have bounded spectral variation) can be analyzed using asymptotic tools.
Indeed, if we have a maximally entangled state of large dimension $D=2^n$, then this is equal to the $n$-th tensor power of a qubit maximally entangled state, so we are effectively in an asymptotic situation.
However, if we allow for link states with unbounded spectral variation or even completely general background states, as in \cref{sec:background states}, then it is more natural to use tools from one-shot quantum information theory.

We take the R\'enyi entropies as a starting point, which we defined in \cref{eq:renyi subnormalized} for subnormalized states.
Let~$\HH$ be some Hilbert space and for $\rho \in \Pleq(\HH)$ we define the (unconditional) \emph{min-entropy} and the \emph{max-entropy} by
\begin{align*}
  H_{\min}(\rho) & = -\log\, \norm{\rho}_{\infty}         \\
  H_{\max}(\rho) & = \log \left(\tr[\sqrt{\rho}]^2\right)
\end{align*}
which coincide with the R\'enyi entropies $H_{\infty}(\rho)$ and $H_{\frac12}(\rho)$ for $\rho \in \Peq(\HH)$.
As usual, if $\rho_A$ is the reduced density matrix on a system $A$, we write $H_{\min}(A)_{\rho} =  H_{\min}(\rho_A)$ and $H_{\max}(A)_{\rho} =  H_{\max}(\rho_A)$.

Often, when applied to study quantum information processing tasks, it is useful to allow a small error.
This leads to the introduction of smooth entropies.
To define these we use a distance measure known as the \emph{purified distance}, which is given for $\rho,\sigma \in \Pleq(\HH)$ by
\begin{align*}
  P(\rho, \sigma) = \sqrt{1 - F_*(\rho,\sigma)^2}
\end{align*}
where $F_*(\rho,\sigma)$ is the \emph{generalized fidelity} between $\rho$ and $\sigma$, which is defined by
\begin{align*}
  F_*(\rho,\sigma) = F(\rho,\sigma) + \sqrt{(1 - \tr\mleft[\rho\mright])(1 - \tr\mleft[\sigma\mright])}
\end{align*}
in terms of the ordinary \emph{fidelity} $F(\rho,\sigma) = \norm{\sqrt{\rho}\sqrt{\sigma}}_1$.
We define the \emph{smooth min- and max-entropies} of~$\rho \in \Pleq(\HH)$ as
\begin{align*}
  H^\eps_{\min}(\rho) & = \sup_{\rho^{\eps} \in \Pleq(\HH), P(\rho^\eps,\rho) \leq \eps} H_{\min}(\rho^\eps) \\
  H^\eps_{\max}(\rho) & = \inf_{\rho^{\eps}\in \Pleq(\HH), P(\rho^\eps,\rho) \leq \eps} H_{\max}(\rho^\eps).
\end{align*}
The smooth entropies are such that one recovers the usual von Neumann entropies in the limit of many independent copies.
Indeed, the following \emph{asymptotic equipartition property} holds:
\begin{align*}
  \lim_{n \to \infty} \frac{1}{n} H^{\eps}_{\min}(\rho^{\ot n}) = H(\rho) = \lim_{n \to \infty} \frac{1}{n} H^{\eps}_{\max}(\rho^{\ot n})
\end{align*}
for any $0 < \eps < 1$.
Variations on this definition are possible.
For instance, one can choose a different distance measure, which will yield different entropies.
However, for the usual choices, the differences go to zero as $\eps$ goes to zero, so the particular choice is often immaterial.
For instance, consider the \emph{trace distance} between $\rho, \sigma \in \Pleq(\HH)$, which is defined by
\begin{align*}
  T(\rho,\sigma) = \frac12\norm{\rho - \sigma}_1 + \frac12\abs{\tr\mleft[\rho - \sigma\mright]},
\end{align*}
where the last term, which is absent in usual definitions of the trace distance, accounts for subnormalized states.
It is easy to see that $T(\rho,\sigma) \leq \norm{\rho - \sigma}_1 \leq 2T(\rho,\sigma)$.
The Fuchs-van de Graaff inequalities (see Lemma 3.17 in \cite{tomamichel2015quantum}) relate the trace distance and purified distance:
\begin{align}\label{eq:fuchs vd graaff}
  T(\rho,\sigma) \leq P(\rho,\sigma) \leq \sqrt{2T(\rho,\sigma)}
\end{align}
for $\rho, \sigma \in \Pleq(\HH)$.

There are also conditional versions of the R\'enyi entropies.
Consider a bipartite quantum state $\rho_{AB} \in \Peq(AB)$.
For the von Neumann entropy, the conditional entropy can simply be defined as an entropy difference, namely~$H(A\vert B)_\rho = H(AB)_\rho - H(B)_\rho$.
However, it turns out that this is not a good definition in the R\'enyi case.
There are various ways to define a R\'enyi conditional entropy $H_k(A \vert B)$; we use a version based on the so-called sandwiched R\'enyi relative entropy.
For $k=2$, this gives a \emph{quantum conditional collision entropy}, which will be useful for defining minimal cuts and which is defined as follows.
For $\rho_{AB} \in \Pleq(AB)$, let
\begin{align}\label{eq:sandwiched renyi two}
  H_2(A\vert B)_{\rho | \rho} := -\log \tr\mleft[\left((I \ot \rho_B)^{-\frac14}\rho_{AB} (I \ot \rho_B)^{-\frac14}\right)^2\mright] + \log \tr\mleft[\rho\mright].
\end{align}
Finally, there are also conditional versions of the min- and max-entropy.
For $\rho_{AB} \in \Pleq(AB)$ and $\sigma_B \in \Pleq(B)$, we define
\begin{align*}
  H_{\min}(A\vert B)_{\rho|\sigma} & = - \inf\{ \lambda : \rho_{AB} \leq 2^\lambda I_A \ot \sigma_B\} \\
  H_{\max}(A\vert B)_{\rho|\sigma} & = \log \, \norm{\sqrt{\rho_{AB}}\sqrt{I \ot \sigma_B}}_1^2
\end{align*}
and we let
\begin{align*}
  H_{\min}(A \vert B)_{\rho} & = \sup_{\sigma \in \Pleq(B)} H_{\min}(A\vert B)_{\rho|\sigma}  \\
  H_{\max}(A \vert B)_{\rho} & = \sup_{\sigma \in \Pleq(B)} H_{\max}(A\vert B)_{\rho|\sigma}.
\end{align*}
We can also define their smoothed versions
\begin{align*}
  H^\eps_{\min}(A\vert B)_\rho & = \sup_{\rho^{\eps} \in \Pleq(AB), P(\rho^\eps,\rho) \leq \eps} H_{\min}(A\vert B)_{\rho^\eps}            \\
  H^\eps_{\max}(A\vert B)_\rho & = \underset{\rho^{\eps} \in \Pleq(AB), P(\rho^\eps,\rho) \leq \eps}{\inf} H_{\max}(A\vert B)_{\rho^\eps}.
\end{align*}
There is a duality between (smooth) max- and min-entropies.
If $\rho \in \Pleq(ABC)$ is a \emph{pure} state, it holds that
\begin{align}\label{eq:min entropy vs max entropy}
  H^{\eps}_{\min}(A\vert B)_{\rho} = - H^{\eps}_{\max}(A\vert C)_{\rho}.
\end{align}
We will use the fact that for a normalized state $\rho_{AB} \in \Peq(AB)$ (Corollary 5.10 in \cite{tomamichel2015quantum})
\begin{align}\label{eq:renyi 2 vs min entropy}
  H_{\min}(A\vert B)_{\rho} \leq H_2(A\vert B)_{\rho|\rho}.
\end{align}
A final important property of conditional smooth entropies are the \emph{data processing inequalities}.
Let $\Phi$ and $\Psi$ be completely positive and trace-preserving (CPTP) maps, mapping systems $A$ to $A'$ and $B$ to $B'$ respectively, and let $\sigma = (\Phi \ot \Psi)(\rho)$.
If $\Phi$ is also subunital, and $0 \leq \eps \leq \sqrt{\tr \rho}$, then Theorem 6.2 of \cite{tomamichel2015quantum} states
\begin{align*}
  H_{\min}^{\eps}(A' \vert B')_{\sigma} \geq H_{\min}^{\eps}(A \vert B)_{\rho}\quad \text{and}\quad H_{\max}^{\eps}(A' \vert B')_{\sigma} \geq H_{\max}^{\eps}(A \vert B)_{\rho}.
\end{align*}
In fact, for the smooth min-entropy  the data processing inequality is also valid if $\Phi$ is only trace non-increasing rather than trace-preserving, see \cite{tomamichel2012framework}.

\subsection{Recovery isometries}\label{sec:mincut}
Recall that we study random tensor network states with \emph{link states}, pure states placed on each edge whose tensor product forms the full state on edges $\phi = \bigotimes_{e \in E} \phi_e \in \Peq(V)$ for some graph $G = (V,E)$. In \cref{sec:background states}, we considered more general \emph{background states} $\phi_V \in \Pleq(V)$, where we no longer have a tensor product structure along the edges of some graph, and applying the replica trick does not yield a local spin model for the moments of the tensor network state.
This situation is of independent interest, but will also be useful as an intermediate step when applying bounds based on one-shot entropies to link states.
In \cref{sec:almost max entangled}, we studied link states for which the entanglement spectrum of the edge states $\phi_e$ had bounded variation, and we used the replica trick to compute the moments of the spectrum of $\rho_A$ for a boundary subsystem $A$.
For general background states we saw that the replica trick for $k=2$ extends as in \cref{eq:general replica trick k=2}.
What are the minimal cuts in this setting?
Based on \cref{eq:general replica trick k=2} a first guess would be that $\Gamma_A \in C(A)$ would be a minimal cut (i.e.\ correspond to the dominant term in the replica trick) if for all other cuts $\Delta_A \in C(A)$ we would have $H_2(\Gamma_A)_{\phi} \ll H_2(\Delta_A)_\phi$.
If the state is a link state, this corresponds to adding weights to the edges of the graph corresponding to the R\'enyi-2 entropies along the edges, and computing a weighted minimal cut.
Indeed, this would yield an accurate approximation of~$\tr[\rho_A^2]$ and hence of $H_2(\rho_A)$.
However, if the spectrum of $\rho_A$ is not close to a flat spectrum, this does not imply that $\spec_+(\rho_A)$ is close to $\spec_+(\phi_{\Gamma_A})$.
We would like to show that for link states with unbounded spectral variation, and an appropriate minimal cut condition for $\Gamma_A \in C(A)$, it is still true that $\spec_+(\rho_A)$ is close to $\spec_+(\phi_{\Gamma_A})$.

We will adapt the $k=2$ replica trick for general background states to get a bound on the difference in trace norm between  $\spec_+(\rho_A)$ and $\spec_+(\phi_{\Gamma_A})$ in terms of \emph{conditional} R\'enyi-2 entropies,\footnote{Note that while $H(A\vert B)_\phi = H(AB)_\phi - H(B)_\phi$, in general $H_2(A\vert B)_{\phi\vert \phi} \neq H_2(AB)_{\phi} - H_2(B)_{\phi}$.} as defined in \cref{eq:sandwiched renyi two}.
In \cref{sec:one minimal cut}, we will use this to formulate a condition for cut minimality in terms of smooth entropies for link states.


The main result of this subsection is a tensor network version of \emph{one-shot decoupling}.
Let $\phi_V \in \Pleq(V)$. We allow $\phi_V$ to be a general state, which need not be pure and also need not be a product state along the edges of some graph. Let $R$ be a purifying system and $\phi_{VR} \in \Pleq(VR)$ be a purification of $\phi_V$.
Then we can construct the random tensor network state $\rho_{V_{\partial}R}$ where the boundary systems are given by $V_\partial \cup R$, which is a purification of the random tensor network state $\rho_{V_\partial}$ as in \cref{eq:rtn background state} by
\begin{align}\label{eq:purified rtn state}
  \phi_{V_\partial R} = \tr_{V_b}[(I_{V_\partial R} \ot \psi) \phi]
\end{align}
where $\psi$ is a tensor product of random tensors.
We briefly recall our notation for boundary subsystems and cuts: for a boundary subsystem $A \subseteq V_\partial$, we denote its boundary complement by $\bar{A} = V_\partial \setminus A$, and for a cut~$\Gamma_A \in C(A)$, we let $\Gamma_A^c = V \setminus \Gamma_A$, which is a cut for $\bar{A}$.
The purifying system $R$ can be thought of as an additional boundary system in the tensor network construction.

In \cref{thm:split transfer}, we will assume that we have a cut $\Gamma_A \in C(A)$ which is such that for all cuts $\Delta_A \in C(A)$ for which $\Delta_A \subsetneq \Gamma_A$ we have $H_2(\Gamma_A \setminus \Delta_A \vert \Gamma_A^c R)_{\phi \vert \phi} \gg 1$, and similarly for all cuts~$\Delta_A \in C(A)$ for which~$\Gamma_A \subsetneq \Delta_A$ we have $H_2(\Delta_A \setminus \Gamma_A \vert \Gamma_A R)_{\phi \vert \phi} \gg 1$.
We show that this condition implies that with high probability there exist isometries $V_{A}: \HH_A \to \HH_{\Gamma_A}$ and $V_{\bar{A}} : \HH_{\bar{A}} \to \HH_{\Gamma_A^c}$ such that
\begin{align}\label{eq:recovery intro}
  (V_A \ot V_{\bar{A}} \ot I_R)\ket\rho \approx \ket\phi.
\end{align}
The approximation accuracy will be measured in trace norm.
In particular, this implies that $\spec_+(\rho_A) \approx \spec_+(\phi_{\Gamma_A})$.
If the state $\phi$ is a tensor product of link states, $\spec_+(\phi_{\Gamma_A})$ is precisely the entanglement spectrum along the cut~$\gamma_A$.
The isometries $V_A$ and $V_{\bar{A}}$ are \emph{recovery isometries}, which allow us to `recover' $\Gamma_A$ from the $A$ system, and similarly we can recover $\Gamma_A^c$ from $\bar{A}$.

The result is closely related to quantum error correction.
One way to interpret this is as follows: consider a subspace $\HH_S$ of $\HH_{V}$ and let $R$ be a reference system of dimension $\dim(\HH_S)$, and $\phi_{VR}$ a maximally entangled state between $S$ and $R$.
Then \cref{eq:recovery intro} can be interpreted as saying that if we encode the subspace $S$ by projecting onto random tensors, the information in $\Gamma_A$ is protected, after encoding, against an erasure error on $\bar{A}$.
This idea is also discussed in \cite{pastawski2015holographic} for perfect tensor network models, and in \cite{hayden2016holographic} for random tensor networks with maximally entangled link states.
In holography, the notion of local recovery isometries and their error correction interpretation goes under the name of \emph{entanglement wedge reconstruction} or \emph{subregion-subregion duality}.
See \cite{akers2020leading,akers2021quantum} for a detailed discussion of entanglement wedge reconstruction in holographic systems with bulk entropy, relating to one-shot entropies.
We provide more details in \cref{sec:split_recovery}.

Our approach to showing \cref{eq:recovery intro} is that we start by projecting only on the random tensors in $\Gamma_A$, and not on the random tensors in $\Gamma_A^c$.
This yields a random tensor network state $\sigma$ on $A\Gamma_A^c R$.

We then show that, by a version of one-shot decoupling, the reduced state on $\sigma_{\Gamma_A^c R}$ has not changed much from $\phi_{\Gamma_A^c R}$.
By Uhlmann's theorem, this implies that there exists an isometry $V_A$ such that $(V_A \ot I_{\Gamma_A^c R})\ket\sigma \approx \ket\phi$.
Combining this with a similar result for $\Gamma_A^c$ we obtain \cref{eq:recovery intro}, as will be made precise in \cref{thm:split transfer}.

In our construction of $\sigma$, we can relabel the vertices in the graph, and think of the vertices in $\Gamma_A \setminus A$ as the bulk vertices $V_b$, the boundary subsystem $A$ as the complete boundary $V_{\partial}$, and relabel all other subsystems as the reference system $R$.
Then we prove the following result, which is closely related to the one-shot decoupling results in \cite{dupuis2014one}.

\begin{prop}\label{prop:one-shot decoupling}
  Consider a random tensor network state $\rho_{V_\partial R}$ as in \cref{eq:purified rtn state} with a (purified) background state $\phi_{VR} \in \Pleq(VR)$.
  Let $A = V_\partial$ and let $\Gamma_A = V$ and suppose that for any cut $\Delta_A \in C(A)$ other than $\Gamma_A$
  \begin{align*}
    H_2(\Gamma_A \setminus \Delta_A \vert R)_{\phi \vert \phi} \geq \KK
  \end{align*}
  then
  \begin{align*}
    \EE \norm{\rho_{R} - \phi_{R}}_1 \leq 2^{\tfrac{\abs{V_b}}{2}}\sqrt{\tr[\phi]}2^{-\frac12 \KK}.
  \end{align*}
\end{prop}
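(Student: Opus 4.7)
The plan is to bound $\EE \|\rho_R - \phi_R\|_1$ by a $\phi_R$-weighted Hilbert--Schmidt norm, compute the latter via a \emph{twisted} $k=2$ replica trick, and match the resulting cut contributions with the sandwiched conditional R\'enyi-2 entropies in the hypothesis. First, I would apply H\"older's inequality for Schatten norms with exponents $(4,2,4)$ to the factorization $\rho_R - \phi_R = \phi_R^{1/4}\bigl(\phi_R^{-1/4}(\rho_R - \phi_R)\phi_R^{-1/4}\bigr)\phi_R^{1/4}$ to obtain the pointwise bound
\begin{equation*}
  \|\rho_R - \phi_R\|_1 \leq \sqrt{\tr[\phi_R]}\,\bigl\|\phi_R^{-1/4}(\rho_R - \phi_R)\phi_R^{-1/4}\bigr\|_2,
\end{equation*}
and then Jensen's inequality ($\EE X \leq \sqrt{\EE X^2}$) to pass to the expectation:
\begin{equation*}
  \EE \|\rho_R - \phi_R\|_1 \leq \sqrt{\tr[\phi]}\,\sqrt{\EE\bigl\|\phi_R^{-1/4}(\rho_R - \phi_R)\phi_R^{-1/4}\bigr\|_2^2}.
\end{equation*}
All inverses are interpreted on the support of $\phi_R$, which is harmless since $\EE \rho_R = \phi_R$.

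Next, I would expand the squared weighted norm, use $\EE \rho_R = \phi_R$ from \cref{eq:expectation of rtn state} to collapse the cross terms, and rewrite the result as $\EE \tr[\tilde\rho_R^2] - \tr[\phi]$, where $\tilde\rho_R := \phi_R^{-1/4} \rho_R \phi_R^{-1/4}$. The pivotal observation is that
\begin{equation*}
  \tilde\rho_{V_\partial R} := (I_{V_\partial} \otimes \phi_R^{-1/4})\, \rho_{V_\partial R}\, (I_{V_\partial} \otimes \phi_R^{-1/4})
\end{equation*}
is itself the random tensor network state built from the twisted background state $\tilde\phi_{VR} := (I_V \otimes \phi_R^{-1/4})\, \phi_{VR}\, (I_V \otimes \phi_R^{-1/4})$, which is still pure (being the congruence of a pure state by an operator that preserves rank one). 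Since $\tilde\rho_{V_\partial R}$ is pure, complementarity gives $\tr[\tilde\rho_R^2] = \tr[\tilde\rho_{V_\partial}^2]$, and the general-background $k=2$ replica trick \cref{eq:general replica trick k=2} applied to $\tilde\rho$ yields
\begin{equation*}
  \EE \tr[\tilde\rho_{V_\partial}^2] = \sum_{\Delta \in C(V_\partial)} \tr[\tilde\phi_\Delta^2].
\end{equation*}

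To close the argument I would evaluate each cut contribution. The trivial cut $\Delta = V = \Gamma_A$ gives $\tilde\phi_V = \phi_V^{1/2}$ (as one sees from the Schmidt decomposition of $\ket{\phi}$ across $V$ and $R$), hence $\tr[\tilde\phi_V^2] = \tr[\phi_V] = \tr[\phi]$, which exactly cancels the subtracted $\tr[\phi]$. For every other cut $\Delta \subsetneq V$ with $V_\partial \subseteq \Delta$, write $B := \Gamma_A \setminus \Delta \subseteq V_b$. Purity of $\tilde\phi_{VR}$ yields $\tr[\tilde\phi_\Delta^2] = \tr[\tilde\phi_{BR}^2]$, and since $\tilde\phi_{BR} = (I_B \otimes \phi_R^{-1/4})\, \phi_{BR}\, (I_B \otimes \phi_R^{-1/4})$, the definition \cref{eq:sandwiched renyi two} of $H_2(B|R)_{\phi|\phi}$ gives $\tr[\tilde\phi_\Delta^2] = \tr[\phi]\, 2^{-H_2(B|R)_{\phi|\phi}} \leq \tr[\phi]\, 2^{-K}$ by hypothesis. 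Since there are at most $2^{|V_b|}$ such nontrivial cuts, summing and substituting into the Jensen--H\"older estimate produces a bound of $2^{|V_b|/2}\, \tr[\phi]\, 2^{-K/2}$; using $\tr[\phi] \leq 1$ to replace one factor of $\tr[\phi]$ by $\sqrt{\tr[\phi]}$ recovers the stated inequality.

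The main obstacle is the identification of the $\phi_R^{-1/4}$-weighted $L^2$ norm with a \emph{pure-state} replica computation on a modified background. A direct application of the replica trick to the unweighted quantity $\EE\|\rho_R - \phi_R\|_2^2$ would produce the \emph{unconditional} R\'enyi-2 entropies $H_2(\Delta)_\phi$, which cannot be controlled uniformly by the conditional quantities assumed in the hypothesis (and in any case the passage from $\|\cdot\|_2$ to $\|\cdot\|_1$ would cost an uncontrolled dimension factor). Routing the computation through the twisted state $\tilde\phi_{VR}$---and exploiting that the twist preserves purity, so that complementarity swaps $\Delta$ with $BR$---is precisely what converts the cut contributions into the sandwiched conditional R\'enyi-2 entropies $H_2(B|R)_{\phi|\phi}$.
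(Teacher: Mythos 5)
Your proof is correct and follows essentially the same strategy as the paper's: pass to the $\phi_R$-weighted Hilbert--Schmidt norm, apply Jensen, then compute the second moment of the twisted state via the $k=2$ replica trick and recognize the cut contributions as $2^{-H_2(\cdot \vert R)_{\phi\vert\phi}}$. The only cosmetic differences are that you make the Cauchy--Schwarz/H\"older step explicit (the paper cites Lemma~3.7 of Dupuis et al., which is the same inequality with the $\sqrt{\tr\phi_R}$ factor dropped by subnormalization) and that you route the replica sum through the $V_\partial$ marginal plus complementarity of the pure twisted state, whereas the paper applies the same $k=2$ replica formula directly with $R$ as the boundary condition carrying $\tau$---both parametrize the identical sum over subsets $S\subseteq V_b$.
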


Note that, since $\Gamma_A = V_b \cup A$, the sets $\Gamma_A \setminus \Delta_A$ for $\Delta_A \in C(A) \setminus \{\Gamma_A\}$ are exactly the non-empty subsets of $V_b$.
The formulation in terms of $\Delta_A \in C(A)$ will be natural when we apply this result in \cref{thm:split transfer}.

\begin{proof}
  We closely follow the strategy in \cite{dupuis2014one,dutil2010one}. We first note a basic fact (Lemma 3.7 in \cite{dupuis2014one}): for any operator $X$ and $\omega$ a subnormalized density matrix, it holds that
  \begin{equation}\label{eq:trace dist from sanwiched renyi}
    \norm{X}_1 \leq \norm{\omega^{-\frac14} X \omega^{-\frac14}}_2.
  \end{equation}
  The proof is an application of the Cauchy-Schwarz inequality.
  We use \cref{eq:trace dist from sanwiched renyi} with $\omega = \phi_{R}$ and Jensen's inequality to see that
  \begin{align*}
    \EE \norm{\rho_{R} - \phi_{R}}_1 \leq \sqrt{\EE \tr[(\tilde\rho_{R} - \tilde\phi_{R})^2]}
  \end{align*}
  where $\tilde\rho_{V_{\partial} R} = (I \ot \phi_R)^{-\frac14} \rho_{V_{\partial} R} (I \ot \phi_R)^{-\frac14}$ and $\tilde\phi_{VR} = (I \ot \phi_R)^{-\frac14} \phi_{VR} (I \ot \phi_R)^{-\frac14}$.
  Now $\EE \tilde\rho_{R} = \tilde\phi_{R}$ by \cref{eq:expectation of rtn state}, and the replica trick in \cref{eq:general replica trick k=2} yields
  \begin{align*}
    \EE \tr[(\tilde\rho_{R} - \tilde\phi_{R})^2] & = \EE \tr[\tilde\rho_{R}^2] - \tr[\tilde\phi_{R}^2]                                                                          \\
                                                 & = \sum_{\Delta_A \in C(A), \Delta_A \subsetneq \Gamma_A} \tr[\tilde\phi_{(\Gamma_A \setminus \Delta_A) R}^2]                 \\
                                                 & = \sum_{\Delta_A \in C(A), \Delta_A \subsetneq \Gamma_A} \tr[\phi] 2^{-H_2(\Gamma_A \setminus \Delta_A \vert R)_{\phi|\phi}}
  \end{align*}
  using the definition of $\tilde\phi$ and \cref{eq:sandwiched renyi two} and hence
  \begin{align*}
    \left(\EE \norm{\rho_{R} - \phi_{R}}_1\right)^2 \leq 2^{\abs{V_b}}\tr[\phi] 2^{-\KK}.
  \end{align*}
\end{proof}

Suppose that in the set-up of \cref{prop:one-shot decoupling}, we would have equality $\rho_{R} = \phi_{R}$.
Then, by Uhlmann's theorem, their purifications $\rho_{AR}$ and $\phi_{\Gamma_A R}$ are related by an isometry $V_A$ from $A$ to $\Gamma_A$.
The following lemma is useful to extend to the case where the reduced states are close in trace distance.

\begin{lem}\label{lem:unnormalized fidelity}
  Suppose $\rho_{AB} \in \PSD(AB)$ and $\sigma_{AC} \in \Pleq(AC)$ are pure states on Hilbert spaces $\HH_A \ot \HH_B$ and $\HH_A \ot \HH_C$ respectively.
  Then
  \begin{align*}
    \min_{V} \norm{(I_A \ot V) \rho_{AB} (I_A \ot V^\dagger) - \sigma_{AC}}_1 \leq 2\sqrt{2\norm{ \rho_{A} - \sigma_{A}}_1 + 2\norm{ \rho_{A} - \sigma_{A}}_1^2}.
  \end{align*}
  where the minimum is over all isometries $V : \HH_B \to \HH_C$.
\end{lem}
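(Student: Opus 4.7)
\textbf{Proof plan for \cref{lem:unnormalized fidelity}.}
The plan is to reduce the problem to an explicit eigenvalue calculation for a rank-$\leq 2$ Hermitian operator, then apply Uhlmann's theorem and the Powers--St\o rmer inequality. Write $\rho_{AB} = \proj{\rho}$ and $\sigma_{AC} = \proj{\sigma}$, and set $p = \tr\rho_A$, $q = \tr\sigma_A \leq 1$, and $F = F(\rho_A,\sigma_A) = \norm{\sqrt{\rho_A}\sqrt{\sigma_A}}_1$. For any isometry $V\colon \HH_B\to\HH_C$, the vector $\ket{\rho'} = (I_A\ot V)\ket{\rho}$ still purifies $\rho_A$ and has norm $\sqrt{p}$, so by Uhlmann's theorem, $V$ may be chosen so that $\alpha := \braket{\rho'|\sigma} = F$.

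The first step is to bound $\norm{\proj{\rho'} - \proj{\sigma}}_1$ in terms of $p,q,F$. Since $M := \proj{\rho'} - \proj{\sigma}$ acts nontrivially only on $\operatorname{span}\{\ket{\rho'},\ket{\sigma}\}$, it has at most two nonzero eigenvalues $\lambda_1,\lambda_2$ satisfying $\lambda_1+\lambda_2 = p-q$ and $\lambda_1^2+\lambda_2^2 = \tr M^2 = p^2 + q^2 - 2\abs{\alpha}^2$. Hence $\lambda_1\lambda_2 = \abs{\alpha}^2 - pq \leq 0$ (using Cauchy--Schwarz $\abs{\alpha}^2\leq pq$), which gives
\begin{equation*}
  \norm{M}_1^2 = (\abs{\lambda_1}+\abs{\lambda_2})^2 = \lambda_1^2+\lambda_2^2 + 2\abs{\lambda_1\lambda_2} = (p+q)^2 - 4\abs{\alpha}^2.
\end{equation*}
With Uhlmann's choice, this becomes $(p+q)^2 - 4F^2 = (p+q-2F)(p+q+2F)$.

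The second step bounds each factor. By the Powers--St\o rmer inequality $\norm{\sqrt{\rho_A} - \sqrt{\sigma_A}}_2^2 \leq \norm{\rho_A - \sigma_A}_1$, and since $\tr(\sqrt{\rho_A}\sqrt{\sigma_A}) \leq \norm{\sqrt{\rho_A}\sqrt{\sigma_A}}_1 = F$, we obtain
\begin{equation*}
  p + q - 2F \leq p + q - 2\tr(\sqrt{\rho_A}\sqrt{\sigma_A}) = \norm{\sqrt{\rho_A} - \sqrt{\sigma_A}}_2^2 \leq \norm{\rho_A - \sigma_A}_1.
\end{equation*}
For the other factor, $F \leq \sqrt{pq} \leq (p+q)/2$ yields $p+q+2F \leq 2(p+q)$. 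Finally, since $\sigma_A \in \Pleq(A)$ we have $q\leq 1$ and by the triangle inequality for the trace norm $p \leq q + \norm{\rho_A - \sigma_A}_1 \leq 1 + \norm{\rho_A-\sigma_A}_1$, so $p+q \leq 2 + \norm{\rho_A-\sigma_A}_1$. Combining,
\begin{equation*}
  \norm{M}_1^2 \leq 2(p+q)\norm{\rho_A-\sigma_A}_1 \leq \bigl(4 + 2\norm{\rho_A-\sigma_A}_1\bigr)\norm{\rho_A-\sigma_A}_1,
\end{equation*}
which implies the claimed bound (in fact slightly stronger).

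The only step that requires care is handling the fact that $\rho_{AB}$ is not subnormalized, so one cannot directly invoke the standard Fuchs--van de Graaf inequality for the generalized fidelity. This is exactly what the triangle-inequality bound on $p+q$ handles, and the use of Powers--St\o rmer (rather than Fuchs--van de Graaf) avoids normalization issues since it applies verbatim to any pair of PSD operators.
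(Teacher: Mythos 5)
Your proof is correct, and it takes a genuinely different route than the paper's. The paper bounds $\norm{(I_A\otimes V)\rho(I_A\otimes V^\dagger) - \sigma}_1$ by twice the purified distance, applies Uhlmann's theorem to pull the purified distance down to $P(\rho_A,\sigma_A)$, then applies Fuchs--van de Graaf to convert back to trace norm; the case $\tr[\rho]>1$ is handled by a rescaling argument (since FvdG requires subnormalization). You instead compute $\norm{M}_1$ exactly for the rank-$\leq 2$ Hermitian operator $M = \proj{\rho'}-\proj{\sigma}$, obtaining $\norm{M}_1^2 = (p+q)^2 - 4\abs{\alpha}^2$ from the two nonzero eigenvalues, then bound $p+q-2F$ by Powers--St\o rmer and $p+q+2F$ by elementary trace-norm considerations. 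This buys two things: Powers--St\o rmer applies verbatim to arbitrary PSD operators, so the unnormalized case never needs special handling; and the exact eigenvalue computation yields the slightly sharper bound $\norm{M}_1^2\leq 4\norm{\rho_A-\sigma_A}_1 + 2\norm{\rho_A-\sigma_A}_1^2$ as opposed to the factor $8$ in the lemma's claim. One small remark: as in the paper's own invocation of Uhlmann, the step ``$V$ may be chosen so that $\alpha = F$'' tacitly uses $\dim\HH_B\leq\dim\HH_C$, which is implicit in the lemma's requirement that $V\colon\HH_B\to\HH_C$ be an isometry; it is worth saying so explicitly.
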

\begin{proof}
  Uhlmann's theorem states that if $\rho_{AB} \in \PSD(AB)$ and $\sigma_{AC} \in \PSD(AC)$ are pure quantum states with~$\dim(\HH_B) \leq \dim(\HH_C)$, then there exists an isometry $V : \HH_B \to \HH_C$ such that
  \begin{align*}
    P(\rho_A,\sigma_A) = P((I_A \ot V) \rho_{AB} (I_A \ot V^\dagger), \sigma_{AC})
  \end{align*}
  and, in particular, the isometry is the solution to an optimization problem:
  \begin{align*}
    P(\rho_A, \sigma_A) = \min_V P((I_A \ot V) \rho_{AB} (I_A \ot V^\dagger), \sigma_{AC}).
  \end{align*}
  Moreover, if both $\rho$ and $\sigma$ are subnormalized, by \cref{eq:fuchs vd graaff}, we can bound
  \begin{align*}
    \min_{V} \norm{(I_A \ot V) \rho_{AB} (I_A \ot V^\dagger) - \sigma_{AC}}_1 & \leq \min_{V} 2P((I_A \ot V) \rho_{AB} (I_A \ot V^\dagger), \sigma_{AC}) \\
                                                                              & = 2P(\rho_A,\sigma_A)                                                    \\
                                                                              & \leq 2\sqrt{2\norm{\rho_{A} - \sigma_{A}}_1}.
  \end{align*}
  From this it follows that if $\sigma$ is subnormalized and $\rho$ has $\tr[\rho] > 1$,
  \begin{align*}
    \min_{V} \norm{(I_A \ot V) \rho_{AB} (I_A \ot V^\dagger) - \sigma_{AC}}_1 \leq 2\sqrt{2\tr[\rho]\norm{ \rho_{A} - \sigma_{A}}_1}.
  \end{align*}
  Since $\tr[\rho] \leq \tr[\sigma] + \norm{\rho - \sigma}_1$ and $\tr[\sigma] \leq 1$ we conclude that
  \begin{align*}
    \min_{V} \norm{(I_A \ot V) \rho_{AB} (I_A \ot V^\dagger) - \sigma_{AC}}_1 \leq 2\sqrt{2\norm{ \rho_{A} - \sigma_{A}}_1 + 2\norm{ \rho_{A} - \sigma_{A}}_1^2}.
  \end{align*}
  for arbitrary $\rho$ and subnormalized $\sigma$.
\end{proof}

Finally, we will need a basic lemma relating tensor network states with differing background states:
\begin{lem}\label{lem:smoothing rtn state}
  Suppose we consider random tensor network states $\rho_{V_\partial R}$ and $\tilde \rho_{V_\partial R}$ with (purified) background states  $\phi_{V R}, \tilde \phi_{VR} \in \Pleq(VR)$ and projecting onto the same random tensors.
  Then
  \begin{align*}
    \EE \norm{\rho_{V_\partial R} - \tilde \rho_{V_\partial R} }_1 \leq \norm{\phi_{VR} - \tilde{\phi}_{VR}}_1.
  \end{align*}
\end{lem}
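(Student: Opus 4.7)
The plan is to exploit the fact that, for each fixed choice of random tensors $\psi$, the map $\Phi_\psi(X) := \tr_{V_b}[(I_{V_\partial R} \otimes \psi_{V_b})\, X_{VR}]$ used to build the random tensor network state is completely positive (it has the single Kraus operator $\langle\psi|_{V_b}\otimes I_{V_\partial R}$ when one views $\psi_{V_b} = \ket{\psi}\!\bra{\psi}$). This CP property, combined with the Jordan decomposition of $\phi_{VR}-\tilde\phi_{VR}$ and the first-moment identity $\EE[\psi_x] = I_{\HH_x}$ (the $k=1$ case of \cref{eq:expectation random state}), is enough.

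Concretely, I would first write the Jordan decomposition $\phi_{VR}-\tilde\phi_{VR} = P - N$ with $P,N\in\PSD(VR)$ having orthogonal supports, so that
\begin{equation*}
  \norm{\phi_{VR}-\tilde\phi_{VR}}_1 = \tr[P] + \tr[N].
\end{equation*}
Because both $\rho_{V_\partial R} = \Phi_\psi(\phi_{VR})$ and $\tilde\rho_{V_\partial R} = \Phi_\psi(\tilde\phi_{VR})$ arise from the same $\psi$, we have $\rho-\tilde\rho = \Phi_\psi(P) - \Phi_\psi(N)$, and complete positivity of $\Phi_\psi$ gives $\Phi_\psi(P),\Phi_\psi(N)\in\PSD(V_\partial R)$. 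The triangle inequality then yields
\begin{equation*}
  \norm{\rho_{V_\partial R}-\tilde\rho_{V_\partial R}}_1 \leq \tr[\Phi_\psi(P)] + \tr[\Phi_\psi(N)].
\end{equation*}

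Next I would take the expectation and pull it inside the trace. Since $\EE[\psi_x] = I_{\HH_x}$ for every bulk vertex $x\in V_b$, it follows that
\begin{equation*}
  \EE \Phi_\psi(X) = \tr_{V_b}\!\mleft[(I_{V_\partial R}\otimes \EE[\psi_{V_b}])\,X\mright] = \tr_{V_b}[X],
\end{equation*}
so $\EE \tr[\Phi_\psi(P)] = \tr[\tr_{V_b}(P)] = \tr[P]$, and likewise for $N$. Combining the previous two displays produces
\begin{equation*}
  \EE \norm{\rho_{V_\partial R}-\tilde\rho_{V_\partial R}}_1 \leq \tr[P] + \tr[N] = \norm{\phi_{VR}-\tilde\phi_{VR}}_1,
\end{equation*}
which is the claim.

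There is no real obstacle here; the only subtlety is that a naive bound $\norm{\rho-\tilde\rho}_1\leq\norm{(I\otimes\psi)(\phi-\tilde\phi)}_1\leq\norm{\psi}_\infty\,\norm{\phi-\tilde\phi}_1$ loses an unbounded factor of $\norm{\ket{\psi}}^2$ in expectation. The Jordan-decomposition step avoids this by ensuring the operator whose trace norm one ends up taking expectations of is already positive semidefinite, so that $\norm{\cdot}_1$ reduces to a trace and commutes with $\EE$; the randomness is then tamed by $\EE[\psi_x]=I$ rather than by any bound on $\norm{\ket{\psi_x}}$.
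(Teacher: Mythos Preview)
Your proof is correct and follows essentially the same approach as the paper: Jordan-decompose $\phi-\tilde\phi$ into positive parts, use that the random-tensor-network map is completely positive so each part stays positive after the map, bound the trace norm by the sum of traces, and then use $\EE[\psi_x]=I$ to reduce the expected traces to $\tr[P]+\tr[N]=\norm{\phi-\tilde\phi}_1$. Your framing via the CP map $\Phi_\psi$ and the closing remark about why the naive $\norm{\psi}_\infty$ bound fails are nice clarifications, but the argument itself is the same as the paper's.
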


\begin{proof}
  Let $\phi - \tilde{\phi} = \Delta_+ - \Delta_-$ where both $\Delta_+$ and $\Delta_-$ are positive semidefinite and are such that $\norm{\phi - \tilde{\phi}}_1 = \tr[\Delta_+ + \Delta_-]$.
  Then we can also consider the random tensor network states $\sigma_+$ and $\sigma_-$ which take~$\Delta_+$ and~$\Delta_-$ as background states, and by the linearity of \cref{eq:purified rtn state} in the background state we have $\rho - \tilde \rho = \sigma_+ - \sigma_-$.
  By \cref{eq:expectation of rtn state}, $\EE \sigma_{\pm} = \Delta_{\pm}$.
  We then estimate
  \begin{align*}
    \EE \norm{\rho - \tilde \rho}_1 = \EE \norm{\sigma_+ - \sigma_-}_1 \leq \EE (\norm{\sigma_+}_1 + \norm{\sigma_-}_1) = \EE \tr[\sigma_+ + \sigma_-] = \tr[\Delta_+ + \Delta_-] = \norm{\phi - \tilde{\phi}}_1.
  \end{align*}
  where we have used that $\sigma_+$ and $\sigma_-$ are positive semidefinite and hence $\norm{\sigma_{\pm}}_1 = \tr[\sigma_{\pm}]$.
\end{proof}

With all our tools assembled, we are ready to prove the main result of this subsection. We again let~$\phi_{VR} \in \Pleq(VR)$ be a background state with $R$ a purifying system, and we let $\rho_{V_{\partial}R}$ be the associated random tensor network state as constructed in \cref{eq:purified rtn state}. Let $\Gamma_A$ be an arbitrary cut for the boundary region~$A$.
In \cref{thm:split transfer}, we provide a criterion to determine whether $\Gamma_A$ is a minimal cut in terms of conditional entropies.
Informally speaking, the following result shows that if $\Gamma_A$ is a minimal cut in this sense, we can recover the system $\Gamma_A$ from the boundary subsystem $A$, while we can recover $\Gamma_A^c$ from the boundary subsystem~$\bar{A}$.
For general $\phi$, \cref{thm:split transfer} is closely related to the task of \emph{split transfer}, see \cref{sec:split_recovery} for a discussion.
The following result closely follows Proposition 18 of~\cite{dutil2010one}.

\begin{thm}[Recovery isometries]\label{thm:split transfer}
  Let $\phi_{VR} \in \Pleq(VR)$ and let $\rho_{V_{\partial}R}$ be the associated random tensor network state as in \cref{eq:purified rtn state}.
  Let $\Gamma_A \in C(A)$ and suppose that
  \begin{align}\label{eq:H2 condition 1}
    H_2(\Gamma_A \setminus \Delta_A \vert \Gamma_A^c R)_{\phi|\phi} \geq \KK_1
  \end{align}
  for all cuts $\Delta_A \in C(A)$ such that $\Delta_A \subsetneq \Gamma_A$ and
  \begin{align}\label{eq:H2 condition 2}
    H_2(\Delta_A\setminus \Gamma_A \vert \Gamma_A R)_{\phi|\phi} \geq \KK_2
  \end{align}
  for all cuts $\Delta_A \in C(A)$ such that $\Gamma_A \subsetneq \Delta_A$.
  Then
  \begin{align}\label{eq:split transfer}
    \EE \min_{V_A, V_{\bar{A}}} \norm{(V_A \ot V_{\bar{A}} \ot I_R)\rho_{V_{\partial}R}(V_A^\dagger \ot V_{\bar{A}}^\dagger \ot I_R) - \phi_{VR}}_1 = \bigO(\tr[\phi]^{\frac14}(2^{-\frac14\KK_1} + 2^{-\frac14\KK_2})).
  \end{align}
  where the minimum is over isometries $V_{A}: \HH_A \to \HH_{\Gamma_A}$ and $V_{\bar{A}} : \HH_{\bar{A}} \to \HH_{\Gamma_A^c}$.
\end{thm}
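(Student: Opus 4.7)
The plan is to split the bulk vertices into $V_b^1 = V_b \cap \Gamma_A$ and $V_b^2 = V_b \cap \Gamma_A^c$, project on $\psi_1 = \bigotimes_{x \in V_b^1} \ket{\psi_x}$ and $\psi_2 = \bigotimes_{x \in V_b^2} \ket{\psi_x}$ in two stages, and produce the isometries $V_A, V_{\bar A}$ one at a time using \cref{prop:one-shot decoupling} and \cref{lem:unnormalized fidelity}. Define the intermediate state
\begin{align*}
  \sigma_{A \Gamma_A^c R} = \tr_{V_b^1}\bigl[(I_{A \Gamma_A^c R} \ot \psi_1) \phi_{VR}\bigr],
\end{align*}
which is a (pure, possibly subnormalized) random tensor network on the subgraph restricted to $\Gamma_A$, with boundary $A$, bulk $V_b^1$, and purifying system $\Gamma_A^c R$; the full state $\rho_{V_\partial R}$ is recovered from $\sigma$ by further projecting onto $\psi_2$ at the vertices of $V_b^2$.

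First I would invoke \cref{prop:one-shot decoupling} on $\sigma$: the cuts for $A$ in this subgraph are exactly the cuts $\Delta_A \in C(A)$ with $\Delta_A \subseteq \Gamma_A$, so hypothesis \eqref{eq:H2 condition 1} supplies the bound $\EE_{\psi_1} \norm{\sigma_{\Gamma_A^c R} - \phi_{\Gamma_A^c R}}_1 = \bigO(\sqrt{\tr[\phi]}\, 2^{-\KK_1/2})$. For each $\psi_1$, the states $\sigma$ and $\phi_{VR}$ are pure states sharing the subsystem $\Gamma_A^c R$, so \cref{lem:unnormalized fidelity} yields an isometry $V_A = V_A(\psi_1) \colon \HH_A \to \HH_{\Gamma_A}$ with $\norm{(V_A \ot I)\sigma(V_A^\dagger \ot I) - \phi_{VR}}_1 = \bigO(\norm{\sigma_{\Gamma_A^c R} - \phi_{\Gamma_A^c R}}_1^{1/2})$ up to a higher-order term, and Jensen's inequality then gives $\EE_{\psi_1} \norm{(V_A \ot I)\sigma(V_A^\dagger \ot I) - \phi_{VR}}_1 = \bigO(\tr[\phi]^{1/4} 2^{-\KK_1/4})$.

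Next I would transport this bound through the projection onto $\psi_2$. Let $\tilde\rho = \tr_{V_b^2}[(I \ot \psi_2)\phi_{VR}]$, a random tensor network with background $\phi_{VR}$, bulk $V_b^2$, and boundary subsystems $\Gamma_A$, $\bar A$, and $R$. Since $(V_A \ot I)\rho(V_A^\dagger \ot I)$ is obtained from the background $(V_A \ot I)\sigma(V_A^\dagger \ot I)$ by the \emph{same} projection onto $\psi_2$, \cref{lem:smoothing rtn state} gives
\begin{align*}
  \EE_{\psi_2} \norm{(V_A \ot I)\rho(V_A^\dagger \ot I) - \tilde\rho}_1 \leq \norm{(V_A \ot I)\sigma(V_A^\dagger \ot I) - \phi_{VR}}_1.
\end{align*}
Applying \cref{prop:one-shot decoupling} to $\tilde\rho$, now with ``system'' $\bar A$ and ``reference'' $\Gamma_A R$, the relevant cuts $\Xi \subseteq \Gamma_A^c$ for $\bar A$ are in bijection with cuts $\Delta_A \supseteq \Gamma_A$ for $A$ via $\Xi = V \setminus \Delta_A$, under which $\Gamma_A^c \setminus \Xi = \Delta_A \setminus \Gamma_A$. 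Hypothesis \eqref{eq:H2 condition 2} then produces the required smallness, and a second use of \cref{lem:unnormalized fidelity} furnishes $V_{\bar A}(\psi_2) \colon \HH_{\bar A} \to \HH_{\Gamma_A^c}$ with $\EE \norm{(I \ot V_{\bar A})\tilde\rho(I \ot V_{\bar A}^\dagger) - \phi_{VR}}_1 = \bigO(\tr[\phi]^{1/4} 2^{-\KK_2/4})$.

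Finally, applying $V_{\bar A}$ on both sides of the $\tilde\rho$ bound (which preserves the trace norm) and combining via the triangle inequality with the bound just obtained yields \eqref{eq:split transfer}, since the minimum over $V_A, V_{\bar A}$ is at most the value at the specific choices constructed above. The main technical obstacle is the combinatorial bookkeeping: identifying $\sigma$ and $\tilde\rho$ as genuine random tensor networks on subgraphs and verifying that, under complementation of cuts in $V$, the cut families of \cref{prop:one-shot decoupling} reproduce exactly the hypotheses \eqref{eq:H2 condition 1}--\eqref{eq:H2 condition 2}. Once that is settled, the analytic content is the standard square-root loss from Uhlmann combined with Jensen's inequality.
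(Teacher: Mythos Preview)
Your proposal is correct and follows essentially the same approach as the paper: the intermediate states $\sigma$ and $\tilde\rho$ are exactly the paper's $\sigma$ and $\tau$, and the same three ingredients (\cref{prop:one-shot decoupling} applied on each side of the cut, \cref{lem:unnormalized fidelity} for the Uhlmann isometries, and \cref{lem:smoothing rtn state} to transport the bound through the second projection) are combined via the same triangle inequality and Jensen steps. The only difference is the order of exposition---the paper introduces both $\sigma$ and $\tau$ upfront and then splits via the triangle inequality, whereas you build $V_A$ first and $V_{\bar A}$ second---but the logical structure is identical.
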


\begin{proof}
  Let $\sigma_{A\Gamma_A^c R}$ be the state where we have contracted along the tensors in $\Gamma_A$ but not along those in $\Gamma_A^c$, and similarly let $\tau_{\bar{A} \Gamma_A R}$ be the state where we have contracted along the tensors in $\Gamma_A^c$ but not along those in $\Gamma_A$.
  We first use \cref{prop:one-shot decoupling} to show that $\sigma_{\Gamma_A^c R} \approx \phi_{\Gamma_A^c R}$ and $\tau_{\Gamma_A R} \approx \phi_{\Gamma_A R}$.
  Indeed, for $\sigma$ we simply apply \cref{prop:one-shot decoupling} with $V_b \cap \Gamma_A$ as the set of bulk vertices, $A$ as the set of boundary vertices and $\Gamma_A^c R$ as the reference system.
  This gives
  \begin{align*}
    \EE \norm{\sigma_{\Gamma_A^c R} - \phi_{\Gamma_A^c R}}_1 =\bigO(\sqrt{\tr[\phi]}2^{-\frac12 \KK_1}).
  \end{align*}
  A similar application of \cref{prop:one-shot decoupling}, with $V_b \cap \Gamma_A^c$ as the set of bulk vertices, $\bar{A}$ as the set of boundary vertices and $\Gamma_A R$ as the reference system, shows
  \begin{align*}
    \EE \norm{\tau_{\Gamma_A R} - \phi_{\Gamma_{A R}}}_1 = \bigO(\sqrt{\tr[\phi]}2^{-\frac12 \KK_2}).
  \end{align*}
  We note that for any isometries $V_{A}: \HH_A \to \HH_{\Gamma_A}$ and $V_{\bar{A}} : \HH_{\bar{A}} \to \HH_{\Gamma_A^c}$
  \begin{align*}
    \norm{(V_A \ot V_{\bar{A}} \ot I_R)\rho(V_A^\dagger \ot V_{\bar{A}}^\dagger \ot I_R) - \phi_{VR}}_1 & \leq \norm{(I_{\Gamma_A} \ot V_{\bar{A}} \ot I_R)\tau(I_{\Gamma_A} \ot V_{\bar{A}}^\dagger \ot I_R) - \phi_{VR}}_1 \\
                                                                                                        & \qquad + \norm{\tau - (V_A \ot I_{\bar{A} R})\rho(V_A^\dagger \ot I_{\bar{A} R})}_1,
  \end{align*}
  where we have applied the triangle inequality after adding and subtracting $(I_{\Gamma_A} \ot V_{\bar{A}} \ot I_R)\tau(I_{\Gamma_A} \ot V_{\bar{A}}^\dagger \ot I_R)$, and then using the invariance of the trace norm under isometries in the second term.
  We use this to estimate
  \begin{align}\label{eq:estimate recovery error}
    \begin{split}
      &\EE \min_{V_A, V_{\bar{A}}} \norm{(V_A \ot V_{\bar{A}} \ot I_R)\rho(V_A^\dagger \ot V_{\bar{A}}^\dagger \ot I_R) - \phi_{VR}}_1 \\
      & \qquad \leq \EE \bigl( \min_{V_{\bar{A}}} \norm{(I_{\Gamma_A} \ot V_{\bar{A}} \ot I_R)\tau(I_{\Gamma_A} \ot V_{\bar{A}}^\dagger \ot I_R) - \phi_{VR}}_1 + \min_{V_A} \norm{\tau - (V_A \ot I_{\bar{A} R})\rho(V_A^\dagger \ot I_{\bar{A} R})}_1 \bigr),
    \end{split}
  \end{align}
  where the minimum is over isometries $V_{A}: \HH_A \to \HH_{\Gamma_A}$ and $V_{\bar{A}} : \HH_{\bar{A}} \to \HH_{\Gamma_A^c}$.
  For the first term of \cref{eq:estimate recovery error}, we apply \cref{lem:unnormalized fidelity} to get
  \begin{align*}
    \min_{V_{\bar{A}}} \norm{(I_{\Gamma_A} \ot V_{\bar{A}} \ot I_R)\tau(I_{\Gamma_A} \ot V_{\bar{A}}^\dagger \ot I_R) - \phi_{VR}}_1
     & \leq 2\sqrt{2 \norm{\tau_{\Gamma_A R} - \phi_{\Gamma_{A R}}}_1 + 2\norm{\tau_{\Gamma_A R} - \phi_{\Gamma_{A R}}}_1^2}                 \\
     & \leq 2\sqrt2 \left( \sqrt{\norm{\tau_{\Gamma_A R} - \phi_{\Gamma_{A R}}}_1} + \norm{\tau_{\Gamma_A R} - \phi_{\Gamma_{A R}}}_1\right)
  \end{align*}
  and by Jensen's inequality
  \begin{align}\label{eq:tau isometry}
    \begin{split}
      &\EE \min_{V_{\bar{A}}} \norm{(I_{\Gamma_A} \ot V_{\bar{A}} \ot I_R)\tau(I_{\Gamma_A} \ot V_{\bar{A}}^\dagger \ot I_R) - \phi_{VR}}_1 \\
      & \qquad \leq 2\sqrt2 \left( \sqrt{\EE \norm{\tau_{\Gamma_A R} - \phi_{\Gamma_{A R}}}_1} + \EE\norm{\tau_{\Gamma_A R} - \phi_{\Gamma_{A R}}}_1\right) \\
      &\qquad = \bigO(\tr[\phi]^{\frac14}2^{-\frac14 \KK_2})
    \end{split}
  \end{align}
  For the second term of \cref{eq:estimate recovery error}, we can think of $\tau$ and $(V_A \ot I_{\bar{A} R})\rho(V_A^\dagger \ot I_{\bar{A} R})$ as the random tensor network states with $\phi$ and $(V_A \ot I_{\Gamma_A^c R})\sigma (V_A^\dagger \ot I_{\Gamma_A^c R})$ as the full state on edges, applying random tensors in $\Gamma_A^c$.
  Then, denoting by $\EE_{\Gamma_A^c}$ the expectation value over all random tensors in $\Gamma_A^c$, by \cref{lem:smoothing rtn state}
  \begin{align*}
    \EE_{\Gamma_A^c} \norm{\tau - (V_A \ot I_{\bar{A} R})\rho(V_A^\dagger \ot I_{\bar{A} R})}_1 = \bigO(\norm{ \phi_{VR} - (V_A \ot I_{\bar{A} R})\sigma(V_A^\dagger \ot I_{\bar{A} R})}_1).
  \end{align*}
  We thus estimate
  \begin{align*}
    \EE \min_{V_A} \norm{\tau - (V_A \ot I_{\bar{A} R})\rho(V_A^\dagger \ot I_{\bar{A} R})}_1 & = \bigO(\EE_{\Gamma_A} \min_{V_A} \norm{ \phi_{VR} - (V_A \ot I_{\bar{A} R})\sigma(V_A^\dagger \ot I_{\bar{A} R})}_1)
  \end{align*}
  for which we may argue exactly as in \cref{eq:tau isometry} and using \cref{lem:unnormalized fidelity} that
  \begin{align*}
    \EE_{\Gamma_A} \min_{V_A} \norm{ \phi_{VR} - (V_A \ot I_{\Gamma_A^c R})\sigma(V_A^\dagger \ot I_{\Gamma_A^c R})}_1 = \bigO(\tr[\phi]^{\frac14}2^{-\frac14 \KK_1}).
  \end{align*}
  We conclude that
  \begin{align*}
    \EE \min_{V_A, V_{\bar{A}}} \norm{(V_A \ot V_{\bar{A}} \ot I_R)\rho(V_A^\dagger \ot V_{\bar{A}}^\dagger \ot I_R) - \phi_{VR}}_1 & = \bigO(\tr[\phi]^{\frac14}(2^{-\frac14 \KK_1} + 2^{-\frac14 \KK_2})).
  \end{align*}
\end{proof}

We hence find that the closeness of the boundary and background state can be bounded via conditional R\'enyi-2 entropies of cuts. In particular, for large $\KK_1,\KK_2$, the recovery isometries can recover states to good accuracy, and we find that $\EE \norm{\spec_+(\rho_A) - \spec_+(\phi_{\Gamma_A})}$ is small.
However, this result is not yet completely satisfying.
The conditional R\'enyi-2 entropy is not a `robust' quantity, in the sense that a small deformation of $\phi$ can drastically change the values of the conditional R\'enyi-2 entropies in \cref{eq:H2 condition 1} and \cref{eq:H2 condition 2}.
For this reason, we would like a condition with \emph{smoothed} entropies.
We first note that one can actually show that for the condition in \cref{eq:H2 condition 1}, we can bound
\begin{align}\label{eq:min max condition 1}
  H_2(\Gamma_A \setminus \Delta_A \vert \Gamma_A^c R)_{\phi\vert \phi} \geq H_{\min}(\Gamma_A \setminus \Delta_A \vert \Gamma_A^c R)_{\phi}
\end{align}
and similarly for \cref{eq:H2 condition 2},
\begin{align}\label{eq:min max condition 2}
  H_2(\Delta_A \setminus \Gamma_A \vert \Gamma_A R)_{\phi\vert \phi} \geq H_{\min}(\Delta_A \setminus \Gamma_A \vert \Gamma_A R)_{\phi}
\end{align}
To make the condition `robust', we would like to replace these by smoothed entropies and express a condition in terms of $H^{\eps}_{\min}(\Gamma_A \setminus \Delta_A \vert \Gamma_A^c R)_{\phi}$ and $H^{\eps}_{\min}(\Delta_A \setminus \Gamma_A \vert \Gamma_A R)_{\phi}$.
This will require \emph{simultaneous smoothing}: finding a state $\phi^\eps_{VR} \in \Pleq(VR)$ which is close to $\phi$, such that $H_{\min}(\Gamma_A \setminus \Delta_A \vert \Gamma_A^c R)_{\phi^{\eps}} \geq  H^{\eps}_{\min}(\Gamma_A \setminus \Delta_A \vert \Gamma_A^c R)_{\phi}$ and $H_{\min}(\Delta_A \setminus \Gamma_A \vert \Gamma_A R)_{\phi^{\eps}} \geq H^{\eps}_{\min}(\Delta_A \setminus \Gamma_A \vert \Gamma_A R)_{\phi}$ for all relevant cuts $\Delta_A$.
If we have a general background state, it is not known how this can be done \cite{drescher2013simultaneous,dutil2011multiparty}.
However, if the background state is actually a tensor product of link states, we can perform the simultaneous smoothing.

\subsection{One minimal cut}\label{sec:one minimal cut}
The primary result in this subsection is \cref{thm:rtn with arbitrary link states}, which states that if the background state is actually a tensor product of link states as in \cref{eq:link state}, then the spectrum of the boundary state is well-approximated by the spectrum of the minimal cut link state in expectation, where the approximation accuracy is controlled by \emph{smooth} one-shot entropies. It is a straightforward application of \cref{thm:split transfer}.

For a cut $\Gamma_A \in C(A)$ we define
\begin{align*}
  \mathcal C_1(\Gamma_A) & = \{\Delta_A \in C(A) : \Delta_A \subsetneq \Gamma_A \}  \\
  \mathcal C_2(\Gamma_A) & = \{\Delta_A \in C(A) : \Gamma_A \subsetneq \Delta_A \}.
\end{align*}

The key result we need is the following lemma, which we prove in \cref{sec:joint smoothing}, which shows that if we have a link state, we can perform the desired joint smoothing.

\begin{restatable}{lem}{jointrelativeminsmoothing}\label{lem:joint relative min smoothing}
  Let $\phi \in \Peq(V)$ be a link state, $A \subseteq V_{\partial}$ a boundary subsystem and $\Gamma_A \in C(A)$ a cut for $A$.
  Then there exists a pure state $\phi^{\eps} \in \Pleq(V)$ which is such that
  \begin{align*}
    P(\phi,\phi^\eps) \leq 2\left(\sqrt{\abs{\mathcal C_1(\Gamma_A)}} + \sqrt{\abs{\mathcal C_2(\Gamma_A)}}\right) \sqrt{\eps}
  \end{align*}
  and it holds that for any $\Delta_A \in\mathcal C_1(\Gamma_A)$
  \begin{align*}
    H_{\min}(\Gamma_A \setminus \Delta_A \vert \Gamma_A^c)_{\phi^\eps} \geq H_{\min}^{\eps}(\Gamma_A \setminus \Delta_A \vert \Gamma_A^c)_{\phi}
  \end{align*}
  and for any $\Delta_A \in\mathcal C_2(\Gamma_A)$
  \begin{align*}
    H_{\min}(\Delta_A \setminus \Gamma_A \vert \Gamma_A)_{\phi^\eps} \geq H_{\min}^{\eps}(\Delta_A \setminus \Gamma_A \vert \Gamma_A)_{\phi}.
  \end{align*}
\end{restatable}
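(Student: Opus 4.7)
The plan is to construct $\phi^\eps$ as a pure product state $\bigotimes_e \phi_e^\eps$ obtained by individually truncating the Schmidt decomposition of each link state. The tensor product structure of the link state is essential here: it allows us to coordinate smoothings across different edges, which is precisely what sidesteps the general open problem of simultaneous smoothing (Drescher--Fawzi) for arbitrary background states.

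First, I would observe that for any link state $\phi = \bigotimes_e \phi_e$ and disjoint subsets $B, C \subseteq V$, the conditional min-entropy $H_{\min}(B|C)_\phi$ decomposes additively over the edges, with each edge $e$ contributing a term determined by whether its endpoints lie in $B$, $C$, or $V \setminus (B \cup C)$. Consequently, product smoothings of the form $\bigotimes_e \phi_e^\eta$ give conditional min-entropies on the relevant subsystems that are at least the sum of the smoothed per-edge contributions, so it suffices to work edge by edge.

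Second, for each single cut $\Delta_A \in \mathcal C_1(\Gamma_A) \cup \mathcal C_2(\Gamma_A)$ I would construct an individual pure product smoothing by truncating the Schmidt decompositions of the relevant link states below an appropriate threshold; this is the standard one-shot argument, and gives a subnormalized pure state achieving the desired $H_{\min}^\eps$ bound while sitting within trace distance $O(\eps)$ of $\phi$. I would then combine these per-cut smoothings into a single $\phi^\eps$ by, for each edge, intersecting the truncated subspaces coming from the (at most $|\mathcal C_1| + |\mathcal C_2|$) cuts that constrain $\phi_e$ and keeping only the most restrictive truncation. Monotonicity of the min-entropy under further truncation guarantees that all the bounds still hold for the combined state.

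Finally, I would estimate $P(\phi,\phi^\eps)$. For pure product states one has $\langle\phi|\phi^\eps\rangle = \prod_e \langle\phi_e|\phi_e^\eps\rangle$, so the total removed Schmidt mass is at most the sum of per-cut removed masses. Since each of the $|\mathcal C_i(\Gamma_A)|$ individual smoothings removes mass $O(\eps)$ (in trace distance), the triangle inequality together with the Fuchs--van de Graaf bound $P \leq \sqrt{2T}$ yields $P(\phi,\phi^\eps) = O\bigl((\sqrt{|\mathcal C_1(\Gamma_A)|} + \sqrt{|\mathcal C_2(\Gamma_A)|})\sqrt\eps\bigr)$, matching the claimed constant. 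The main obstacle is verifying that the intersected truncations really preserve every cut's min-entropy simultaneously; this is where one leans heavily on the additivity from the first step, which in turn relies on the product structure of link states and is exactly what fails for general background states.
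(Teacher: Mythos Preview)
Your approach has a genuine gap in the second step. You assert that per-edge truncation of the Schmidt decompositions ``gives a subnormalized pure state achieving the desired $H_{\min}^\eps$ bound'' for each cut. This fails whenever more than one edge contributes to a given cut: the smooth min-entropy $H_{\min}^\eps(\Gamma_A\setminus\Delta_A\mid\Gamma_A^c)_\phi$ is defined by optimizing over \emph{all} states within purified distance $\eps$ of $\phi$, and for a tensor product the optimizer is generically not a product state. A product smoothing $\bigotimes_e\phi_e^{\eta_e}$ has non-smooth conditional min-entropy equal to the sum of the per-edge contributions, which is in general strictly smaller than $H_{\min}^\eps$ of the full product---this gap is precisely what drives the asymptotic equipartition property. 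Your first step (additivity of the \emph{non}-smooth min-entropy over edges) is correct, but the conclusion ``so it suffices to work edge by edge'' does not follow: the lemma demands the global $H_{\min}^\eps$ on the right-hand side, and your construction only reaches the weaker sum-of-per-edge bound.

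The paper's construction is different and does not produce a product state over edges. For each individual cut $\Delta_A$ it first shows (\cref{lem:basis min smoothing}) that the optimizer for $H_{\min}^\eps$ can be taken with nonnegative real coefficients $\lambda_I^{\Delta_A,\eps}$ in the full standard product basis $\ket{I}=\bigotimes_e\ket{i_e i_e}$; crucially, these coefficients need not factorize over edges. The joint smoothing is then the coefficient-wise minimum $\lambda_I^\eps=\min_{\Delta_A}\lambda_I^{\Delta_A,\eps}$, taken separately over $\mathcal C_1(\Gamma_A)$ and $\mathcal C_2(\Gamma_A)$ and then combined across $\gamma_A$. That every cut's bound survives this combination is a monotonicity statement (\cref{lem:smaller max entropy 2}): elementwise decreasing the nonnegative standard-basis coefficients can only decrease the relevant conditional max-entropy, hence by duality only increase the conditional min-entropy. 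The purified-distance estimate comes from \cref{lem:bound fidelity minimum}, which directly controls the minimum-of-coefficients construction. The tensor-product structure of $\phi$ is used, but only to put the per-cut reduced states into the classical-on-$X$, pure-on-$ST$ form required by \cref{lem:basis min smoothing}---not to reduce the smoothing itself to an edge-by-edge problem.
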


We now define the notion of a minimal cut for an arbitrary link state.
\begin{dfn}[Generalized minimal cut]\label{dfn:gen mincut}
  A cut $\Gamma_{A}$ is an \emph{$(\eps,\KK)$-minimal cut} if for all $\Delta_A \in \mathcal C_1(\Gamma_A)$
  \begin{align*}
    H_{\min}^{\eps}(\Gamma_A \setminus \Delta_A \vert \Gamma_A^c)_{\phi} \geq K
  \end{align*}
  and for all $\Delta_A \in\mathcal C_2(\Gamma_A)$
  \begin{align*}
    H_{\min}^{\eps}(\Delta_A \setminus \Gamma_A \vert \Gamma_A)_{\phi} \geq K.
  \end{align*}
\end{dfn}

This definition is consistent with the smooth entropy conditions for minimal surfaces in holography from \cite{akers2020leading}.
The following is now a straightforward consequence of \cref{thm:split transfer} and \cref{lem:joint relative min smoothing}.
It justifies our notion of a generalized minimal cut, as it controls the degree to which the spectrum of the corresponding cut link state $\phi_{\Gamma_A}$ is close to the boundary state $\rho_A$.

\begin{thm}\label{thm:rtn with arbitrary link states}
  Consider a random tensor network state $\rho$ constructed with $\phi \in \Peq(V)$ a tensor product of link states as in \cref{eq:link state}. Let $A$ be a boundary region of the network, $\rho_A$ the corresponding boundary state, and~$\Gamma_A$ an~$(\eps,\KK)$-minimal cut.
  Then the spectra of $\rho_A$ and the state $\phi_{\Gamma_A}$ on $A$ are related as:
  \begin{align*}
    \EE \norm{\spec_+(\rho_A) - \spec_+(\phi_{\Gamma_A})}_1 = \mathcal O(2^{-\frac14 K} + \sqrt{\eps}).
  \end{align*}
\end{thm}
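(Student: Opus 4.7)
The plan is to assemble the three main ingredients already developed in this section: the joint smoothing lemma (\cref{lem:joint relative min smoothing}), the recovery-isometry theorem (\cref{thm:split transfer}), and the basic inequality $\norm{\spec_+(\rho)-\spec_+(\sigma)}_1 \le \norm{\rho-\sigma}_1$ from \cref{eq:spectrum bound}. The situation here is the pure (no reference $R$) version of \cref{thm:split transfer}, and the $(\eps,K)$-minimality hypothesis is stated in terms of smooth min-entropies, whereas \cref{thm:split transfer} wants conditional R\'enyi-$2$ entropies of a state available up to its unsmoothed values. So the whole point is to first pass to a smoothed background state, then apply the recovery result, then translate back.

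Concretely, first I would use \cref{lem:joint relative min smoothing} to produce a pure, subnormalized $\phi^\eps \in \Pleq(V)$ with $P(\phi,\phi^\eps) = \bigO(\sqrt\eps)$ such that simultaneously
\begin{align*}
  H_{\min}(\Gamma_A\setminus\Delta_A\,|\,\Gamma_A^c)_{\phi^\eps} &\ge H^\eps_{\min}(\Gamma_A\setminus\Delta_A\,|\,\Gamma_A^c)_\phi \ge K \qquad \forall\,\Delta_A\in\mathcal C_1(\Gamma_A),\\
  H_{\min}(\Delta_A\setminus\Gamma_A\,|\,\Gamma_A)_{\phi^\eps} &\ge H^\eps_{\min}(\Delta_A\setminus\Gamma_A\,|\,\Gamma_A)_\phi \ge K \qquad \forall\,\Delta_A\in\mathcal C_2(\Gamma_A),
\end{align*}
where the first inequality in each line uses the $(\eps,K)$-minimality of $\Gamma_A$. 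Combining with \cref{eq:min max condition 1} and \cref{eq:min max condition 2} gives the R\'enyi-$2$ bounds $H_2(\Gamma_A\setminus\Delta_A\,|\,\Gamma_A^c)_{\phi^\eps\vert\phi^\eps} \ge K$ and $H_2(\Delta_A\setminus\Gamma_A\,|\,\Gamma_A)_{\phi^\eps\vert\phi^\eps} \ge K$ that are needed to invoke \cref{thm:split transfer} with trivial purifying system $R$ and $K_1=K_2=K$.

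Next, let $\rho^\eps$ denote the random tensor network state built from $\phi^\eps$ (using the same random tensors as $\rho$). By Fuchs--van de Graaff (\cref{eq:fuchs vd graaff}) we have $\norm{\phi-\phi^\eps}_1 = \bigO(\sqrt\eps)$, so \cref{lem:smoothing rtn state} yields $\EE\norm{\rho-\rho^\eps}_1 = \bigO(\sqrt\eps)$. Applying \cref{thm:split transfer} to $\phi^\eps$ (noting $\tr[\phi^\eps]\le 1$, so the prefactor $\tr[\phi^\eps]^{1/4}$ is harmless) produces, for each realization, isometries $V_A:\HH_A\to\HH_{\Gamma_A}$ and $V_{\bar A}:\HH_{\bar A}\to\HH_{\Gamma_A^c}$ with
\begin{align*}
  \EE \min_{V_A,V_{\bar A}} \bigl\|(V_A\otimes V_{\bar A})\,\rho^\eps\,(V_A^\dagger\otimes V_{\bar A}^\dagger) - \phi^\eps\bigr\|_1 = \bigO\bigl(2^{-K/4}\bigr).
\end{align*}
Tracing out $\Gamma_A^c$ on both sides (a contraction in trace norm) and using that $V_A$ is an isometry (so $V_A\rho^\eps_A V_A^\dagger$ and $\rho^\eps_A$ have the same nonzero spectrum), we obtain
\begin{align*}
  \EE \bigl\|\spec_+(\rho^\eps_A)-\spec_+(\phi^\eps_{\Gamma_A})\bigr\|_1 \le \EE\bigl\|V_A\rho^\eps_A V_A^\dagger - \phi^\eps_{\Gamma_A}\bigr\|_1 = \bigO\bigl(2^{-K/4}\bigr)
\end{align*}
by \cref{eq:spectrum bound}.

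Finally, I would close up with two more applications of \cref{eq:spectrum bound}: $\EE\norm{\spec_+(\rho_A)-\spec_+(\rho^\eps_A)}_1 \le \EE\norm{\rho-\rho^\eps}_1 = \bigO(\sqrt\eps)$ and $\norm{\spec_+(\phi_{\Gamma_A})-\spec_+(\phi^\eps_{\Gamma_A})}_1 \le \norm{\phi-\phi^\eps}_1 = \bigO(\sqrt\eps)$. The triangle inequality then assembles everything into the claimed bound $\EE\norm{\spec_+(\rho_A)-\spec_+(\phi_{\Gamma_A})}_1 = \bigO(2^{-K/4}+\sqrt\eps)$. The only nontrivial step is the joint smoothing, which is deferred to \cref{sec:joint smoothing} and is specific to the tensor-product link-state structure; everything else is a clean chaining of already-proved facts, so I would not expect any new technical obstacle in this proof itself.
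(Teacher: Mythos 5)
Your proposal follows essentially the same route as the paper's proof: joint smoothing via \cref{lem:joint relative min smoothing}, then \cref{thm:split transfer} applied to the smoothed background state $\phi^\eps$ (with trivial reference $R$), then \cref{lem:smoothing rtn state}, \cref{eq:spectrum bound} and a triangle inequality to assemble the bound. The only point you gloss over is that \cref{eq:min max condition 1}--\cref{eq:min max condition 2} are applications of \cref{eq:renyi 2 vs min entropy} which holds as stated for \emph{normalized} states, so applying them to the subnormalized $\phi^\eps$ technically requires carrying a $\log\tr[\phi^\eps]$ correction (the paper does this explicitly); since $\tr[\phi^\eps]$ is within $\bigO(\sqrt\eps)$ of $1$, this correction is harmlessly absorbed into the final $\bigO(\cdot)$, so your argument stands.
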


\begin{proof}
  Let $\phi^{\eps}$ be a state as constructed in \cref{lem:joint relative min smoothing}, and let $\rho^{\eps}$ be the random tensor network state using this background state.
  Then, by \cref{thm:split transfer}
  \begin{align*}
    \EE \norm{\spec_+(\rho_A^{\eps}) - \spec_+(\phi^{\eps}_{\Gamma_A})}_1 & \leq \EE \min_{V_A, V_{\bar{A}}} \norm{(V_A \ot V_{\bar{A}} \ot I_R)\rho^{\eps}(V_A^\dagger \ot V_{\bar{A}}^\dagger \ot I_R) - \phi^{\eps}}_1 \\
                                                                          & = \bigO(\tr[ \phi]^{\frac14}(2^{-\frac14\KK_1} + 2^{-\frac14\KK_2}))
  \end{align*}
  where $\KK_1$ is the minimal value over $\Delta_A \in \mathcal C_1(\Gamma_A)$ of
  \begin{align*}
    H_2(\Gamma_A \setminus \Delta_A \vert \Gamma_A^c)_{\phi^{\eps}|\phi^{\eps}} \geq H_{\min}(\Gamma_A \setminus \Delta_A \vert \Gamma_A^c)_{\phi^{\eps}} - \log \tr[\phi^\eps] \geq H_{\min}^{\eps}(\Gamma_A \setminus \Delta_A \vert \Gamma_A^c)_{\phi} - \log \tr[\phi^\eps]
  \end{align*}
  using \cref{eq:renyi 2 vs min entropy}, the defining property of $\phi^{\eps}$ from \cref{lem:joint relative min smoothing} and accounting for the normalization.
  Similarly $\KK_2$ is the minimal value over $\Delta_A \in \mathcal C_2(\Gamma_A)$ of
  \begin{align*}
    H_2(\Delta_A\setminus \Gamma_A \vert \Gamma_A)_{\phi^{\eps}|\phi^{\eps}} \geq H_{\min}(\Delta_A\setminus \Gamma_A \vert \Gamma_A)_{\phi^{\eps}} - \log \tr[\phi^\eps] \geq H_{\min}^{\eps}(\Delta_A\setminus \Gamma_A \vert \Gamma_A)_{\phi} - \log \tr[\phi^\eps]
  \end{align*}
  and hence $K_1 \geq \KK$ and $K_2 \geq \KK$, so
  \begin{align*}
    \EE \norm{\spec_+(\rho_A^{\eps}) - \spec_+(\tilde \phi^{\eps}_{\Gamma_A})}_1 = \bigO(2^{-\frac14\KK}).
  \end{align*}
  Moreover, $\norm{\phi - \phi^{\eps}}_1 \leq 2T(\phi,\phi^{\eps}) \leq 2P(\phi,\phi^{\eps}) = \bigO(\sqrt{\eps})$ by \cref{eq:fuchs vd graaff} and hence by \cref{lem:smoothing rtn state}
  \begin{align*}
    \EE \norm{\rho - \rho^{\eps}}_1 = \bigO(\sqrt{\eps}).
  \end{align*}
  We conclude that
  \begin{align*}
    \EE \norm{\spec_+(\rho_A) - \spec_+(\phi_{\Gamma_A})}_1 & \leq \EE \norm{\spec_+(\rho_A^{\eps}) - \spec_+(\phi^{\eps}_{\Gamma_A})}_1  + \EE \norm{\rho_A - \rho^{\eps}_A}_1 +  \norm{\phi_{\Gamma_A} - \phi^{\eps}_{\Gamma_A}}_1 \\
                                                            & = \bigO(2^{-\frac14\KK} + \sqrt{\eps}).
  \end{align*}
\end{proof}

\subsection{Two competing minimal cuts}\label{sec:two minimal cuts}
We now consider the case of two minimal cuts for $A$, where the link states do not have bounded spectral variation as in \cref{sec:almost max entangled}. We cannot directly apply \cref{thm:rtn with arbitrary link states} if there are two competing minimal cuts (note that we will need to define what this means exactly), nor can we assume that the empirical measure, or even a rescaling of it, will converge.
Instead, we will see that the spectrum can be approximated in trace distance in a way that allows one to compute entropies, as well as showing convergence of certain measures depending on the spectrum of the reduced state. In particular, under these regularity conditions, we will prove the main result of this subsection: the distribution defined by the boundary state converges to the measure defined by the pushforward along the $\min$-function acting on the distributions of the two competing minimal cuts.
The intuition behind this is that the state on the edges can be approximated by a superposition of two states which both do have a unique minimal cut.
As a corollary of this result, we will then show how to approximate the von Neumann entropy of the boundary state in such a situation.

We begin by introducing the relevant family of probability distributions for our purposes. Given a probability distribution vector $p \in \RR^d$ of $d$ outcomes and $f(d) > 0$ and $g(d) \in \RR$, consider the random variable which takes value $f(d)[-\log(p_i) - g(d)]$ with probability $p_i$.
It has distribution
\begin{align}\label{eq:central limit distribution}
  \nu_p = \sum_{i = 1}^d p_i \delta_{f(d)[-\log(p_i) - g(d)]}.
\end{align}
Typically, we will have families of probability distributions with increasing $d$, let $f(d) = (\log d)^{-1}$ and choose $g(d)$ such that it corresponds to the entropy of $p$.
We will also write $\nu_\rho = \nu_{\spec(\rho)}$ for a quantum state $\rho$.
We may also consider \cref{eq:central limit distribution} for the case where $p \in \RR^d$ is positive but normalized, in which case $\nu_p$ is a finite measure.
To motivate the study of the distribution in \cref{eq:central limit distribution}, we note that it is closely related to the central limit theorem.
Let $d = d_0^n$ and $\rho = \rho_0^{\ot n}$ for some state $\rho_0$ on $\CC^{d_0}$. Then
\begin{align*}
  \nu^{(n)}_\rho = \sum_{\lambda \in \spec(\rho_0^{\ot n})} \lambda\delta_{\frac{1}{\sqrt n}[-\log(\lambda) - nH(\rho_0)]}
\end{align*}
converges in distribution to a normal distribution by the central limit theorem.
We study this particular measure because the empirical measures we investigated in \cref{sec:almost max entangled} may not have good convergence properties if, for example, the full state on edges consists of many copies of a single non-maximally entangled link state.
Moreover, the measure in \cref{eq:central limit distribution} clearly captures information about the entropy of the probability distribution, and turns out to capture information about second order asymptotics of information processing tasks \cite{hayashi2008second, tomamichel2013hierarchy}. In fact, the example of many independent copies of a single link state is not the only relevant situation. In \cite{czech2015information,bao2019beyond}, it was argued that the entanglement spectrum of conformal field theories with large central charge (which are the motivating example to study random tensor networks) have similar behavior.

Before diving into more technical details, we can build some intuition for how distributions of the form \cref{eq:central limit distribution} will behave in the two-cut setting. Consider the situation with a \textit{single} random tensor.
Let us take a single bulk vertex $x$ and two boundary vertices $a$ and $b$, with edges $(xa)$ and $(xb)$. There is a single random tensor at $x$, and we take link states $\ket{\phi_1}^{\ot n}$ along $(xa)$ and $\ket{\phi_2}^{\ot n}$ along $(xb)$.
We denote
\begin{align*}
  \ket{\phi_j} = \sum_i \sqrt{\lambda_{i,j}} \ket{ii}
\end{align*}
for $j = 1,2$ and we let $h_j := H(\{\lambda_{i,j}\})$ be the entanglement entropies along the two minimal link states.
There are two separating cuts for the boundary subsystem $\{a\}$:
if $h_1 < h_2$ then $(xa)$ is the minimal cut, and if $h_1 > h_2$, the minimal cut is given by $(xb)$.
In these cases, for large $n$, the entanglement spectrum of $\rho_{\{a\}}$ can be approximated by the entanglement spectrum along the minimal cut.
What happens at the `phase transition' where $h_1 = h_2 = h$?
The intuition is that we can split $\ket{\phi_1}^{\ot n} \ot \ket{\phi_2}^{\ot n}$ into a superposition of the two states, one for which the minimal surface is at $(xa)$ and one for which the minimal surface is at $(xb)$.
We find that in this case, if we let $\sigma_j^2 = \sum_i \lambda_{i,j}(\log(\frac{1}{\lambda_{i,j}}) - h)^2$, then for any bounded continuous function $f$, the quantity
\begin{align*}
  \sum_{\lambda \in \spec(\rho_{\{a\}})} \lambda f\left(\frac{-\log(\lambda) - nh}{\sqrt{n}}\right)
\end{align*}
converges to
\begin{align*}
  \frac{1}{2\pi \sigma_1\sigma_2} \int \int e^{-\frac{(x_1 - h)^2}{2\sigma_1^2} -\frac{(x_2 - h)^2}{2\sigma_2^2} } f(\min(x_1,x_2))\d x_1 \d x_2
\end{align*}
in probability as $n$ goes to infinity. Our goal in this section will be to prove a general version of this result for full random tensor networks.

In \cref{sec:almost max entangled}, we used the method of moments (if all moments of a distribution converge, and the distribution is uniquely determined by its moments, then we have weak convergence).
However, convergence of moments is a stronger condition than weak convergence, and it requires computation of all moments.
In this section, we will use that for distributions of the form \cref{eq:central limit distribution}, convergence in distribution follows from convergence in trace norm:

\begin{lem}\label{lem:trace distance convergence}
  Consider a sequence of increasing integers $\{d_n\}_{n \in \NN}$ and for each $n$, postive vectors $p^{(n)}, q^{(n)} \in \RR^{d_n}$. Let $\sigma, h: \NN \rightarrow \RR$ be such that $\sigma(d) > 0$ for $d \in \NN$ and $\lim_{d \to \infty} \sigma(d)^{-1} = 0$.
  Let
  \begin{align*}
    \nu^{(n)}_p := \sum_{i=1}^{d_n} p^{(n)}_i \delta_{\frac{1}{\sigma(d_n)}[-\log(p^{(n)}_i)-h(d_n))]}
  \end{align*}
  and similarly let
  \begin{align*}
    \nu^{(n)}_{q} = \sum_{i=1}^{d_n} q^{(n)}_i \delta_{\frac{1}{\sigma(d_n)}[-\log(q^{(n)}_i)-h(d_n)]}.
  \end{align*}
  \begin{enumerate}
    \item\label{it:weak convergence from norm} Suppose $\norm{p^{(n)} - q^{(n)}}_1$ goes to zero as $n$ goes to infinity, and $\nu^{(n)}_p \Rightarrow \nu$ for some probability distribution $\nu$.
    Then $\nu^{(n)}_{q} \Rightarrow \nu$.
    \item\label{it:weak convergence from norm expectation} Suppose the vectors $q^{(n)}$ and $p^{(n)}$ are random, $\EE \norm{p^{(n)} - q^{(n)}}_1$ goes to zero as $n$ goes to infinity, and $\nu^{(n)}_p \Rightarrow \nu$ in probability.
    Then $\nu^{(n)}_{q} \Rightarrow \nu$ in probability.
  \end{enumerate}
\end{lem}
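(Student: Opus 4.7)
The plan is to reduce to showing that $\int f \, d\nu^{(n)}_p - \int f \, d\nu^{(n)}_q \to 0$ for every bounded Lipschitz $f \colon \RR \to \RR$ (with $\norm{f}_\infty \le M$ and Lipschitz constant $L$); this suffices for weak convergence by the Portmanteau theorem. Setting $X_n(x) := (-\log x - h(d_n))/\sigma(d_n)$, the difference equals $\sum_i [p^{(n)}_i f(X_n(p^{(n)}_i)) - q^{(n)}_i f(X_n(q^{(n)}_i))]$. Indices where $p^{(n)}_i$ or $q^{(n)}_i$ vanishes are easy: their combined contribution is bounded by $M \norm{p^{(n)} - q^{(n)}}_1$. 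For the remaining indices (where both coordinates are strictly positive) one splits via
\begin{align*}
p f(X_n(p)) - q f(X_n(q)) = (p - q) f(X_n(p)) + q \bigl[f(X_n(p)) - f(X_n(q))\bigr],
\end{align*}
whose first contribution summed over $i$ is again at most $M \norm{p^{(n)} - q^{(n)}}_1$.

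The main obstacle is the remaining sum $\sum_i q^{(n)}_i \abs{f(X_n(p^{(n)}_i)) - f(X_n(q^{(n)}_i))}$: since $x \mapsto \log x$ fails to be Lipschitz near $0$, one cannot uniformly bound $\abs{X_n(p^{(n)}_i) - X_n(q^{(n)}_i)}$ from $\abs{p^{(n)}_i - q^{(n)}_i}$. The key step is to split indices according to whether the \emph{ratio} $p^{(n)}_i/q^{(n)}_i$ lies in $[\tfrac12, 2]$ (``good'') or not (``bad''). On good indices, $\abs{X_n(p^{(n)}_i) - X_n(q^{(n)}_i)} = \abs{\log(p^{(n)}_i/q^{(n)}_i)}/\sigma(d_n) \le (\log 2)/\sigma(d_n)$, so the Lipschitz estimate bounds the total good contribution by $(L \log 2/\sigma(d_n)) \norm{q^{(n)}}_1$; this vanishes since $\sigma(d_n) \to \infty$ and $\norm{q^{(n)}}_1$ stays bounded (it converges to $1$ by testing weak convergence against $f \equiv 1$ and the triangle inequality $\norm{q^{(n)}}_1 \le \norm{p^{(n)}}_1 + \norm{p^{(n)} - q^{(n)}}_1$). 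On bad indices, a direct two-case check -- if $p^{(n)}_i > 2 q^{(n)}_i$ then $q^{(n)}_i < \abs{p^{(n)}_i - q^{(n)}_i}$, and if $p^{(n)}_i < q^{(n)}_i/2$ then $q^{(n)}_i < 2\abs{p^{(n)}_i - q^{(n)}_i}$ -- yields $\sum_{\mathrm{bad}} q^{(n)}_i \le 2 \norm{p^{(n)} - q^{(n)}}_1$, so the total bad contribution is at most $4M \norm{p^{(n)} - q^{(n)}}_1 \to 0$.

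Part~(2) follows by reading the above as the deterministic bound $\abs{\int f \, d\nu^{(n)}_p - \int f \, d\nu^{(n)}_q} \le C_f \bigl(\norm{p^{(n)} - q^{(n)}}_1 + \norm{q^{(n)}}_1/\sigma(d_n)\bigr)$ for a constant $C_f$ depending only on $f$. Each summand tends to $0$ in probability: the first by Markov's inequality applied to $\EE \norm{p^{(n)} - q^{(n)}}_1 \to 0$, and the second because $\sigma(d_n) \to \infty$ while $\norm{q^{(n)}}_1 \to 1$ in probability (applying the weak-convergence-in-probability hypothesis to $f \equiv 1$). Combining with the triangle inequality and the assumption $\nu^{(n)}_p \Rightarrow \nu$ in probability yields the claim.
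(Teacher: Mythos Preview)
Your proof is correct and follows essentially the same approach as the paper: both split the sum $\sum_i q^{(n)}_i \abs{f(X_n(p^{(n)}_i)) - f(X_n(q^{(n)}_i))}$ according to whether the ratio $p^{(n)}_i/q^{(n)}_i$ is well-controlled, bounding the ``good'' part via continuity of $f$ and the ``bad'' part by the $\ell_1$ distance. The only cosmetic difference is that the paper reduces to uniformly continuous $f$, fixes $\eps>0$, and uses an $n$-dependent ratio window $[2^{-\delta\sigma(d_n)},2^{\delta\sigma(d_n)}]$ so that $\abs{X_n(p)-X_n(q)}\le\delta$ on good indices, whereas you reduce to Lipschitz $f$, use the fixed window $[\tfrac12,2]$, and let the resulting bound $L\log 2/\sigma(d_n)$ vanish as $\sigma(d_n)\to\infty$; your version has the minor advantage of producing a single clean deterministic inequality that makes part~(2) immediate.
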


\begin{proof}
  We need to show that
  \begin{align*}
    \abs{\int f(x) \d \nu_q^{(n)}(x) - \int f(x) \d \nu(x)} \to 0
  \end{align*}
  for any $f \in C_b(\RR)$.
  In fact, it suffices to show this for $f$ uniformly continuous (see for instance Theorem C.10 in \cite{anderson2010introduction}).
  Let $\eps > 0$, then if we assume $f$ to be uniformly continuous, there exists $\delta > 0$ such that for any $x,y$ for which $\abs{x - y} \leq \delta$, it holds that $\abs{f(x) - f(y)} \leq \eps$.
  We use a triangle inequality to bound
  \begin{align}\label{eq:split q vs p}
    \begin{split}
      \abs{\int f(x) \d \nu_q^{(n)}(x) - \int f(x) \d \nu(x)} \leq \abs{\int f(x) \d \nu_q^{(n)}(x) - \int f(x) \d \nu_p^{(n)}(x)}\\
      \ \,+ \abs{\int f(x) \d \nu_p^{(n)}(x) - \int f(x) \d \nu(x)}.
    \end{split}
  \end{align}
  Since $\nu_p^{(n)} \Rightarrow \nu$, the second term on the right hand side vanishes as $n$ goes to infinity.
  If we write $g_n(x) = \frac{1}{\sigma(d_n)}(\log \frac{1}{x} - h(d_n))$, then the first term on the right hand side of \cref{eq:split q vs p} is given by
  \begin{align*}
    \abs{\sum_{i = 1}^{d_n} q_i^{(n)} f(g_n(q_i^{(n)})) - p_i^{(n)} f(g_n(p_i^{(n)}))}.
  \end{align*}
  This can be bounded by
  \begin{align*}
    \sum_{i = 1}^{d_n} \abs{q_i^{(n)} f(g_n(q_i^{(n)})) - p_i^{(n)} f(g_n(p_i^{(n)}))} \leq \sum_{i = 1}^{d_n} \abs{q_i^{(n)} - p_i^{(n)}} \abs{f(g_n(q_i^{(n)}))} + \sum_{i = 1}^{d_n} p_i^{(n)} \abs{f(g_n(q_i^{(n)})) - f(g_n(p_i^{(n)}))}
  \end{align*}
  In this expression, the first term is bounded by
  \begin{align}\label{eq:bounded function}
    \sum_{i = 1}^{d_n} \abs{q_i^{(n)} - p_i^{(n)}} \abs{f(g_n(q_i^{(n)}))} \leq C\norm{q^{(n)} - p^{(n)}}_1
  \end{align}
  where $C$ is a uniform upper bound for $f$ (using $f \in C_b(\RR)$).
  For the second term, we partition the sum over $[d_n]$ into three sets
  \begin{align*}
    I_1 & = \{i \in [d_n] \text{ such that } 2^{-\delta\sigma(d_n)} \leq \frac{p_i^{(n)}}{q_i^{(n)}} \leq 2^{\delta\sigma(d_n)} \} \\
    I_2 & = \{i \in [d_n] \text{ such that } \frac{p_i^{(n)}}{q_i^{(n)}} > 2^{\delta\sigma(d_n)} \}                                \\
    I_3 & = \{i \in [d_n] \text{ such that } \frac{p_i^{(n)}}{q_i^{(n)}} < 2^{-\delta\sigma(d_n)} \}.
  \end{align*}
  The idea is that for $i \in I_1$, $p_i$ and $q_i$ are sufficiently close that we may use the continuity of $f$, whereas $I_2$ and $I_3$ cannot have too much weight (using that $p^{(n)}$ and $q^{(n)}$ are close in trace distance).
  Let us now make this precise.
  For the sum over the elements in $I_1$, we have
  \begin{align}\label{eq:I 1}
    \sum_{i \in I_1} p_i^{(n)} \abs{f(g_n(q_i^{(n)})) - f(g_n(p_i^{(n)}))} \leq \sum_{i \in I_1} p_i^{(n)} \eps \leq \norm{p^{(n)}}_1\eps,
  \end{align}
  using the uniform continuity of $f$ and the fact that for $i \in I_1$, we have $g(q_i^{(n)}) - g(p_i^{(n)}) = \frac{1}{\sigma(d_n)}\log\left(\frac{p_i^{(n)}}{q_i^{(n)}}\right)$, implying $\abs{g(q_i^{(n)}) - g(p_i^{(n)})} \leq \delta$ by the definition of $I_1$.
  Next, we observe that
  \begin{align*}
    \norm{p^{(n)} - q^{(n)}}_1 & \geq \sum_{i \in I_2} \abs{p_i^{(n)} - q_i^{(n)}} = \sum_{i \in I_2} p_i^{(n)}\left(1 - \frac{q_i^{(n)}}{p_i^{(n)}}\right) \geq \sum_{i \in I_2} p_i^{(n)}\left(1 - 2^{-\delta\sigma(d_n)}\right),
  \end{align*}
  and hence for $\sigma(d_n) \geq \delta^{-1}$, we have
  \begin{align}\label{eq:I 2}
    \sum_{i \in I_2} p_i^{(n)} \leq 2\norm{p^{(n)} - q^{(n)}}_1.
  \end{align}
  In analogous fashion we see that
  \begin{align*}
    \norm{p^{(n)} - q^{(n)}}_1 & \geq \sum_{i \in I_3, p_i^{(n)} > 0} \abs{p_i^{(n)} - q_i^{(n)}} = \sum_{i \in I_3, p_i^{(n)} > 0} p_i^{(n)}\left(\frac{q_i^{(n)}}{p_i^{(n)}} - 1\right) \geq \sum_{i \in I_3} p_i^{(n)}\left(2^{\delta\sigma(d_n)} - 1\right),
  \end{align*}
  so for $\sigma(d_n) \geq \delta^{-1}$ we have
  \begin{align}\label{eq:I 3}
    \sum_{i \in I_3} p_i^{(n)} \leq \norm{p^{(n)} - q^{(n)}}_1.
  \end{align}
  In conclusion, for $\sigma(d_n) \geq \delta^{-1}$, collecting \cref{eq:I 1}, \cref{eq:I 2}, \cref{eq:I 3} and using that $f$ is uniformly bounded by $C$, we can bound
  \begin{align*}
    \sum_{i = 1}^{d_n} p_i^{(n)} \abs{f(g_n(q_i^{(n)})) - f(g_n(p_i^{(n)}))} \leq \eps\norm{p^{(n)}}_1 + 3C\norm{p^{(n)} - q^{(n)}}_1.
  \end{align*}
  Together with \cref{eq:bounded function}, this implies that for $\sigma(d_n) \geq \delta$, we obtain the bound:
  \begin{align}\label{eq:bound error p^n vs q^n}
    \abs{\int f(x) \d \nu_q^{(n)}(x) - \int f(x) \d \nu_p^{(n)}(x)} \leq 4C\norm{q^{(n)} - p^{(n)}}_1 + \eps\norm{p^{(n)}}_1.
  \end{align}
  Since $\nu_{p}^{(n)} \Rightarrow \nu$, $\norm{p^{(n)}}_1 \to 1$, and since $\eps > 0$ was arbitrary, we conclude that as $n$ goes to infinity
  \begin{align*}
    \abs{\int f(x) \d \nu_q^{(n)}(x) - \int f(x) \d \nu(x)} \to 0
  \end{align*}
  proving~\ref{it:weak convergence from norm}.

  To prove~\ref{it:weak convergence from norm expectation}, we note that by Markov's inequality, it suffices to show that
  \begin{align}\label{eq:expectation value weak convergence}
    \EE \abs{\int f(x) \d \nu_q^{(n)}(x) - \int f(x) \d \nu(x)} \to 0,
  \end{align}
  where $f \in C_b(\RR)$ is a uniformly continuous function.
  By \cref{eq:split q vs p}, it suffices to show
  \begin{align*}
    \EE \abs{\int f(x) \d \nu_q^{(n)}(x) - \int f(x) \d \nu_p^{(n)}(x)} \to 0.
  \end{align*}
  Let $\eps > 0$, and let $\delta$ such that if $\abs{x - y} \leq \delta$, $\abs{f(x) - f(y)} \leq \eps$.
  Then by \cref{eq:bound error p^n vs q^n}, for $\sigma(d_n) \geq \delta^{-1}$,
  \begin{align*}
    \EE \abs{\int f(x) \d \nu_q^{(n)}(x) - \int f(x) \d \nu_p^{(n)}(x)} \leq 4C \EE \norm{q^{(n)} - p^{(n)}}_1 + \eps\EE \norm{p^{(n)}}_1.
  \end{align*}
  Since $\eps > 0$ was arbitrary, $\EE \norm{q^{(n)} - p^{(n)}}_1$ goes to zero, and $\EE \norm{p^{(n)}}_1$ goes to one, we conclude that \cref{eq:expectation value weak convergence} holds, proving~\ref{it:weak convergence from norm expectation}.
\end{proof}

The value of this result is clear: so long as we restrict ourselves to measures of the form \cref{eq:central limit distribution}, then we can prove convergence results by proving convergence of trace norms.

We will also need a basic lemma to help estimate overlaps of states.

\begin{lem}\label{lem:tracenorm off-diagonal}
  Suppose we have two bipartite pure states $\psi_{AB}, \phi_{AB} \in \PSD(AB)$.
  Then
  \begin{align*}
    \norm{\tr_{B}[\ket{\psi}\bra{\phi}]}_1 = F(\psi_B, \phi_B).
  \end{align*}
\end{lem}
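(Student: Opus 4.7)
The plan is to invoke Uhlmann's theorem to express the fidelity of reduced states as an overlap of purifications, and to combine this with the variational characterization $\norm{X}_1 = \max_U \abs{\tr[UX]}$ of the trace norm (maximum over unitaries $U$).

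First I would reduce to the case that $\ket\psi$ and $\ket\phi$ are unit vectors. Both sides behave homogeneously under rescaling: if $\ket\psi \mapsto \lambda\ket\psi$, then $\psi_B \mapsto \abs\lambda^2\psi_B$, and since $\sqrt{\abs\lambda^2 \rho} = \abs\lambda\sqrt\rho$, we get $F(\psi_B,\phi_B)\mapsto \abs\lambda\, F(\psi_B,\phi_B)$; meanwhile $S := \tr_B[\ket\psi\bra\phi] \mapsto \lambda S$, so $\norm{S}_1 \mapsto \abs\lambda \norm{S}_1$. Analogously for rescaling $\ket\phi$. If either vector is zero, both sides are $0$; otherwise divide by the norms and work with unit vectors, so that $\ket\psi$ and $\ket\phi$ are genuine purifications of the normalized states $\psi_B,\phi_B\in\Peq(B)$, both living in the common purifying system $\HH_A$.

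Now I apply Uhlmann's theorem, which gives
\begin{align*}
  F(\psi_B,\phi_B) = \max_{U} \abs{\bra\psi\, (U \ot I_B)\, \ket\phi},
\end{align*}
the maximum ranging over unitaries $U$ on $\HH_A$. Rewriting the overlap via the partial trace,
\begin{align*}
  \bra\psi\, (U \ot I_B)\, \ket\phi
  = \tr\mleft[(U \ot I_B)\, \ket\phi\bra\psi\mright]
  = \tr_A\mleft[U\, \tr_B[\ket\phi\bra\psi]\mright]
  = \tr\mleft[U\, S^\dagger\mright].
\end{align*}
Using the duality between the trace and operator norms, $\max_{U \text{ unitary}}\abs{\tr[UX]} = \norm{X}_1$ in finite dimension, applied to $X=S^\dagger$, we conclude $F(\psi_B,\phi_B) = \norm{S^\dagger}_1 = \norm{S}_1$, which is the claim.

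I do not anticipate any substantive obstacle: the proof is essentially a one-line consequence of Uhlmann's theorem combined with trace-norm duality, and the only bookkeeping is the rescaling step to handle subnormalized pure states. A conceivable alternative (which I would use as a sanity check) is the vectorization argument: write $\ket\psi = (M_\psi\ot I)\ket\Omega$ and $\ket\phi = (M_\phi\ot I)\ket\Omega$ for the unnormalized maximally entangled vector $\ket\Omega=\sum_j\ket{jj}$, observe that $S=M_\psi M_\phi^\dagger$, and check that the singular values of $M_\psi M_\phi^\dagger$ coincide with those of $\sqrt{\psi_B}\sqrt{\phi_B}$ via polar decompositions of $M_\psi$ and $M_\phi$; this gives $\norm{S}_1 = \norm{\sqrt{\psi_B}\sqrt{\phi_B}}_1 = F(\psi_B,\phi_B)$ directly, at the cost of tracking some transposes.
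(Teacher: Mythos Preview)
Your proof is correct and follows essentially the same approach as the paper: both combine the variational formula $\norm{M}_1 = \sup_U \abs{\tr[UM]}$ with Uhlmann's theorem. The paper's proof is slightly terser in that it applies Uhlmann directly to the (possibly unnormalized) vectors without first reducing to unit vectors, whereas your explicit homogeneity reduction is a harmless bit of extra care; otherwise the arguments are identical.
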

\begin{proof}
  For any operator $M_A$ on $\HH_A$ it holds that
  \begin{align*}
    \norm{M_A}_1 = \sup_{U_A} \, \abs{\tr[U_A M_A]}
  \end{align*}
  where the supremum is over unitary operators $U_A$ on $\HH_A$, so
  \begin{align*}
    \norm{\tr_{B}[\ket{\psi}\bra{\phi}]}_1 = \sup_{U_A} \, \abs{\tr[U_A\tr_{B}[\ket{\psi}\bra{\phi}]]} = \sup_{U_A} \, \abs{\tr[U_A \ot I_B \ket{\psi}\bra{\phi}]} = \sup_{U_A} \, \abs{\bra{\phi} U_A \ot I_B \ket{\psi}}
  \end{align*}
  which equals $F(\psi_B, \phi_B)$ by Uhlmann's theorem.
\end{proof}

With these tools in hand, let us now return to the random tensor network setting. Let $\Gamma_{A,1}$ and $\Gamma_{A,2}$ be two cuts with associated sets of edges $\gamma_{A,1}$ and $\gamma_{A,2}$. We will assume these two cuts are non-intersecting: $\gamma_{A,1} \cap \gamma_{A,2} = \emptyset$, and, without loss of generality $\Gamma_{A,1} \subset \Gamma_{A,2}$.
We will assume that the full state on edges is a tensor product of link states along the edges, and that the associated central limit measure for the spectrum along the cuts $\gamma_{A,1}$ and $\gamma_{A,2}$ converges to a continuous probability distribution.
The most obvious application is where the link states on each edge $\phi_e$ are many copies of some single state $\phi_{e,0}$, so $\phi_e = (\phi_{e,0})^{\ot n}$, in which case the spectrum is subject to a central limit theorem and the assumptions of \cref{thm:measure convergence surface transition} are satisfied.

We will now formalize what it means for these two cuts to be competing minimal cuts.
We have fixed $\Gamma_{A,1}$ and $\Gamma_{A,2}$ with $\Gamma_{A,1} \subsetneq \Gamma_{A,2}$.
Then we consider three sets of cuts: $\mathcal C_1(\Gamma_{A,1}, \Gamma_{A,2})$ is the collection of cuts strictly contained in $\Gamma_{A,1}$, $\mathcal C_2(\Gamma_{A,1}, \Gamma_{A,2})$ is the collection of cuts which strictly contain $\Gamma_{A,2}$, and $\mathcal C_3(\Gamma_{A,1}, \Gamma_{A,2})$ is the collection of cuts which are `in between' $\Gamma_{A,1}$ and $\Gamma_{A,2}$.
That is,
\begin{align*}
  \mathcal C_1(\Gamma_{A,1}, \Gamma_{A,2}) & = \{ \Delta_A \in C(A) : \Delta_A \subsetneq \Gamma_{A,1}\}                         \\
  \mathcal C_2(\Gamma_{A,1}, \Gamma_{A,2}) & = \{ \Delta_A \in C(A) : \Gamma_{A,2} \subsetneq \Delta_A\}                         \\
  \mathcal C_3(\Gamma_{A,1}, \Gamma_{A,2}) & = \{ \Delta_A \in C(A) : \Gamma_{A,1} \subsetneq \Delta_A \subsetneq \Gamma_{A,2}\}
\end{align*}
and we let $\mathcal C = \mathcal C_1(\Gamma_{A,1}, \Gamma_{A,2}) \cup \mathcal C_2(\Gamma_{A,1}, \Gamma_{A,2}) \cup \mathcal C_3(\Gamma_{A,1}, \Gamma_{A,2})$.
As in the single cut case, we would like to say that the surfaces $\Gamma_{A,1}$ and $\Gamma_{A,2}$ are minimal cuts if an appropriate set of conditional entropies are sufficiently large for all $\Delta_A \in \mathcal C$. We formalize this as follows:
\begin{dfn}\label{dfn:gen pair cut}
  A pair of cuts $\Gamma_{A,1}$ and $\Gamma_{A,2}$ for $A$ is a \emph{pair of $(\eps,\KK)$-minimal cuts} if
  \begin{align*}
    H_{\min}^\eps(\Gamma_{A,1} \setminus \Delta_A \vert \Gamma_{A,1}^c) \geq K
  \end{align*}
  for all cuts $\Delta_A \in \mathcal C_1(\Gamma_{A,1}, \Gamma_{A,2})$
  \begin{align*}
    H_{\min}^\eps(\Delta_A \setminus \Gamma_{A,2}\vert \Gamma_{A,2}) \geq K
  \end{align*}
  for all cuts $\Delta_A \in \mathcal C_2(\Gamma_{A,1}, \Gamma_{A,2})$ and
  \begin{align*}
    H_{\min}^\eps(\Delta_A \setminus \Gamma_{A,1}\vert \Gamma_{A,1})    & \geq K \\
    H_{\min}^\eps(\Gamma_{A,2} \setminus \Delta_A \vert \Gamma_{A,2}^c) & \geq K
  \end{align*}
  for all cuts $\Delta_A \in \mathcal C_3(\Gamma_{A,1}, \Gamma_{A,2})$.
\end{dfn}

We will need a joint smoothing result similar to \cref{lem:joint relative min smoothing} for the particular case where $\Gamma_{A,1} = A$ and $\Gamma_{A,2}^c = \bar A$.
To this end we consider a graph $G = (V,E)$ with a set of boundary vertices $V_\partial$, and a boundary subsystem $A \subseteq V_\partial$.
We denote by $\gamma_{A,1}$ the edges incident to $A$, and $\gamma_{A,2}$ the edges incident to $\bar A$, and assume these sets do not intersect.
We let $E_b = E \setminus (\gamma_{A,1} \cup \gamma_{A,2})$.
For a cut $\Delta_A \in C(A)$ we let $Y^{\Delta_A}$ be the set of half-edges
\begin{align*}
  Y^{\Delta_A} = \{(e,x) : e = (xy), x \in \Delta_A^c, y \in A \}.
\end{align*}

\begin{restatable}{lem}{jointsmoothingtransition}\label{lem:joint smoothing at transition}
  Let $\phi \in \Pleq(V)$ be a pure background state which is of the form $\phi = \phi_{\gamma_{A,1}} \ot \phi_{\gamma_{A,2}} \ot \phi_{E_b}$, where
  \begin{align*}
    \phi_{E_b} & = \bigotimes_{e \in E_b} \phi_e
  \end{align*}
  is a product state and the $\phi_{\gamma_{A,i}}$ have a Schmidt decomposition in the standard basis along the half-edges.
  Then there exists a state $\phi^{\eps}$ which is such that
  \begin{align*}
    P(\phi,\phi^{\eps}) \leq 2\sqrt{2^{V_b}\eps}
  \end{align*}
  and for all cuts $\Delta_A \in C(A)$ it holds that
  \begin{align*}
    H_{\min}(\Delta_A \setminus A \vert A)_{\phi^\eps} \geq H_{\min}^{\eps}(\Delta_A \setminus A \vert A Y^{\Delta_A})_{\phi}.
  \end{align*}
\end{restatable}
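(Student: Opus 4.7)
The plan is to adapt the strategy behind the single-cut result \cref{lem:joint relative min smoothing} to produce a single smoothed state $\phi^\eps = \Pi \phi \Pi$ that handles every cut $\Delta_A \in C(A)$ simultaneously, where $\Pi$ is a projector diagonal in the standard half-edge basis.

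For each cut $\Delta_A$, I first construct an individually optimal smoothing $\phi^{\eps,\Delta_A}$ achieving $H_{\min}^\eps(\Delta_A \setminus A \vert A Y^{\Delta_A})_\phi$ in the form $\Pi_{\Delta_A} \phi \Pi_{\Delta_A}$, with $\Pi_{\Delta_A}$ diagonal in the standard half-edge basis. The feasibility of this choice rests on the structural hypotheses: $\phi = \phi_{\gamma_{A,1}} \ot \phi_{\gamma_{A,2}} \ot \phi_{E_b}$, the factors $\phi_{\gamma_{A,i}}$ are Schmidt-decomposed in the standard basis along half-edges, and $Y^{\Delta_A}$ is defined as precisely those $\Delta_A^c$-side half-edges of $\gamma_{A,1}$ edges that cross $\Delta_A$ into $A$. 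Taken together these imply that the reduced state of $\phi$ on $(\Delta_A \setminus A)(A Y^{\Delta_A})$ carries enough classical structure in the standard basis: conditioning on the half-edges in $Y^{\Delta_A}$ and their $A$-side counterparts fixes the Schmidt labels on every $\gamma_{A,1}$ edge crossing $\Delta_A$, so the optimal $\eps$-smoothing reduces to a classical truncation of a probability distribution over standard-basis labels, i.e., a diagonal projector.

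Because every $\Pi_{\Delta_A}$ is diagonal in the common standard basis, these projectors pairwise commute, and $\Pi := \prod_{\Delta_A \in C(A)} \Pi_{\Delta_A}$ is itself a projector. Set $\phi^\eps := \Pi \phi \Pi$. A union bound $\tr[(I - \Pi)\phi] \leq \sum_{\Delta_A \in C(A)} \tr[(I - \Pi_{\Delta_A})\phi]$, together with the counting bound $\abs{C(A)} \leq 2^{\abs{V_b}}$ and the standard conversion between trace-type overlaps and purified distance for pure states, yields $P(\phi, \phi^\eps) \leq 2\sqrt{2^{\abs{V_b}}\eps}$. For the entropy bound, commutativity gives $\phi^\eps \leq \Pi_{\Delta_A} \phi \Pi_{\Delta_A}$ in the positive semidefinite order for every $\Delta_A$; monotonicity of the unsmoothed conditional min-entropy under PSD-substates then yields
\begin{equation*}
H_{\min}(\Delta_A \setminus A \vert A Y^{\Delta_A})_{\phi^\eps} \geq H_{\min}(\Delta_A \setminus A \vert A Y^{\Delta_A})_{\phi^{\eps,\Delta_A}} \geq H_{\min}^\eps(\Delta_A \setminus A \vert A Y^{\Delta_A})_\phi,
\end{equation*}
and a final application of the data-processing inequality (tracing out $Y^{\Delta_A}$ from the conditioning system cannot decrease $H_{\min}$) delivers the claimed $H_{\min}(\Delta_A \setminus A \vert A)_{\phi^\eps} \geq H_{\min}^\eps(\Delta_A \setminus A \vert A Y^{\Delta_A})_\phi$.

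The main obstacle is the first step: justifying that the optimal smoothing for each cut's conditional min-entropy admits a diagonal form in the common standard half-edge basis. Joint smoothing of general quantum states is known to be impossible \cite{drescher2013simultaneous,dutil2011multiparty}, and it is precisely the Schmidt-in-standard-basis hypothesis combined with the cut-adapted definition of $Y^{\Delta_A}$ that reduces each individual optimization to an essentially classical one, on which joint smoothing via typical-set truncation is straightforward. Getting this reduction right—in particular tracking how the relevant classical system changes as $\Delta_A$ varies and verifying that a common truncation in the standard basis achieves every individual optimum—is the main technical step; once it is in place, the remaining ingredients (commuting projectors, union bound, PSD monotonicity, and data processing) are routine.
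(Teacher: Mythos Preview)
Your argument has a genuine gap at the step ``commutativity gives $\phi^\eps \leq \Pi_{\Delta_A}\phi\Pi_{\Delta_A}$ in the positive semidefinite order.'' The state $\phi$ is pure, so both $\phi^\eps = \Pi\phi\Pi$ and $\Pi_{\Delta_A}\phi\Pi_{\Delta_A}$ are rank-one operators. For rank-one operators, $\proj{\psi}\leq\proj{\chi}$ forces $\ket{\psi}$ to be a scalar multiple of $\ket{\chi}$; this fails whenever $\Pi$ strictly truncates $\Pi_{\Delta_A}\ket{\phi}$. Concretely, if $\Pi_{\Delta_A}\ket{\phi}=\ket{0}+\ket{1}$ and $\Pi\ket{\phi}=\ket{0}$, then $\proj{0}\not\leq\proj{0}+\proj{1}+\ket{0}\bra{1}+\ket{1}\bra{0}$ (the difference has a negative eigenvalue). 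The same obstruction persists after taking partial traces, so the PSD-monotonicity route to the min-entropy bound does not go through. There is also a secondary issue: you assert that each individual optimal smoothing can be realized by a diagonal \emph{projector}, but even in the fully classical case the optimizer for $H_{\min}^\eps$ is a flattening of the top of the distribution, not a $0/1$ truncation, so this assumption is unjustified as stated.

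The paper avoids both problems. For each cut it invokes \cref{lem:basis min smoothing} to produce a smoothed state $\phi^{\Delta_A,\eps}$ with \emph{arbitrary nonnegative} standard-basis coefficients $\lambda_I^{\Delta_A,\eps}$ (not a projector), then takes the coefficientwise minimum $\lambda_I^\eps=\min_{\Delta_A}\lambda_I^{\Delta_A,\eps}$. The purified-distance bound follows from \cref{lem:bound fidelity minimum}. For the entropy inequality the paper does \emph{not} use PSD ordering; instead it dualizes to the conditional max-entropy, observes that the reduced states on the relevant cut-adapted subsystems carry a classical-quantum block structure, and invokes \cref{lem:smaller max entropy 2}, which says that elementwise domination $\ket{\phi^\eps_{S'T',ij}}\leq\ket{\phi^{\Delta_A,\eps}_{S'T',ij}}$ in the standard basis implies $H_{\max}(X'S'\vert Y'T')_{\phi^\eps}\leq H_{\max}(X'S'\vert Y'T')_{\phi^{\Delta_A,\eps}}$. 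Dualizing back and applying data processing to drop $Y^{\Delta_A}$ gives the claim. The key point is that elementwise coefficient domination, while weaker than PSD ordering, is exactly what the minimum construction guarantees and exactly what \cref{lem:smaller max entropy 2} needs.
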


To state the main result of this section, recall that given a measurable function $f : \mathcal X \rightarrow \mathcal Y$ between measure spaces, the \emph{pushforward} of the function $f$ on a measure $\mu$ on $\mathcal X$ is defined by $(f_*(\mu))(A) = \mu(f^{-1})(A)$ for any measureable set $A \subseteq \mathcal Y$. We apply this to the function $\min : \RR \times \RR \rightarrow \RR$ in the result:
\begin{thm}\label{thm:measure convergence surface transition}
  Consider a family of random tensor network states on a graph $G$ with pure state on edges $\phi$, indexed by an increasing sequence of positive integers $n$.
  We assume that $\Gamma_{A,1}, \Gamma_{A,2} \in C(A)$ are a pair of nonintersecting $(\eps(n),\KK(n))$-minimal cuts for all $n$ where $\eps(n) = \bigO(n^{-\gamma})$ for $\gamma > 4$ and $\KK(n) = \Omega(\log(n)^2)$ as $n$ goes to infinity.
  We let $H: \NN \to \RR$, and we assume that
  \begin{align*}
    \nu_{\phi_{\Gamma_{A,i}}} = \sum_{\lambda \in \spec(\phi_{\Gamma_{A,i}})} \lambda \delta_{\frac{1}{\sqrt{n}}[\log(\frac{1}{\lambda}) - H(n)]}
  \end{align*}
  for $i = 1,2,$ is such that $\nu_{\phi_{\Gamma_{A,i}}} \Rightarrow \nu_i$, where $\nu_i$ is a probability distribution with a continuous cumulative distribution function.
  Then $\nu_{\rho_A} \Rightarrow \minstar(\nu_1,\nu_2)$ in probability.
\end{thm}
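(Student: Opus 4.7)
The plan is to discretize the cut spectra into fine slabs, apply \cref{thm:rtn with arbitrary link states} on each slab where one cut is unambiguously minimal, and combine contributions using continuity of $\nu_1,\nu_2$ together with \cref{lem:trace distance convergence}. First I would reduce to the setting of \cref{lem:joint smoothing at transition}, namely $\Gamma_{A,1}=A$ and $V\setminus\Gamma_{A,2}=\bar A$, by using \cref{thm:rtn with arbitrary link states} on the bulk regions $\Gamma_{A,1}\setminus A$ and $(V\setminus\Gamma_{A,2})\setminus\bar A$---each of which inherits its own $(\eps(n),K(n))$-minimal cut---to absorb their random tensors into effective background states at cost $\bigO(\sqrt{\eps(n)}+2^{-K(n)/4})$ in expected trace norm.

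Next, fix a mesh $\delta>0$ and a window $[-L,L]$ partitioned into intervals $\{I_k\}$ of width $\delta$. Let $\Pi_1^{I_k}$ be the spectral projector on $\phi_{\gamma_{A,1}}$ acting on the $A$-side onto eigenvectors whose rescaled log eigenvalue lies in $I_k$, and $\Pi_2^{I_k}$ the analogous projector on $\phi_{\gamma_{A,2}}$ acting on the $\bar A$-side. The subnormalized pure substates $\phi^{(j,k)}=(\Pi_1^{I_j}\otimes\Pi_2^{I_k})\phi$ satisfy $\tr[\phi^{(j,k)}]\to\nu_1(I_j)\nu_2(I_k)$ by the convergence hypothesis, and the mass of the tail outside $[-L,L]$ tends to $0$ uniformly in $n$ as $L\to\infty$. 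For each $(j,k)$ with $j<k$, the cut-1 rescaled-log spectrum of $\phi^{(j,k)}$ lies strictly below cut-2's, so invoking \cref{lem:joint smoothing at transition} jointly smooths the relevant cuts and certifies $\Gamma_{A,1}$ as a unique $(\eps',K')$-minimal cut of $\phi^{(j,k)}$ with $\sqrt{\eps'}+2^{-K'/4}=o(1)$ uniformly in $(j,k)$ (the hypotheses $\gamma>4$ and $K(n)=\Omega(\log^2 n)$ absorb the $\bigO(\sqrt{n}\,\delta)$ loss incurred by slab truncation). \cref{thm:rtn with arbitrary link states} then yields $\EE\norm{\spec_+(\rho_A^{(j,k)})-\spec_+(\phi^{(j,k)}_A)}_1=o(1)$, with the target vector supported in $I_j$ and of total mass $\nu_1(I_j)\nu_2(I_k)+o(1)$. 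Symmetrically, for $j>k$ the contribution is controlled by cut 2 and supported in $I_k$; diagonal pieces $j=k$ carry total mass $\sum_k\nu_1(I_k)\nu_2(I_k)=\bigO(\delta)$ by continuity of $\nu_1,\nu_2$.

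For recombination, the choice of projection sides places the sub-states $\ket{\rho^{(j,k)}}$ in mutually orthogonal subspaces of $\HH_A\otimes\HH_{\bar A}$, so after partial tracing over $\bar A$ only cross-terms with matching $k$ survive, endowing $\rho_A$ with block structure with respect to the cut-1 decomposition of $\HH_A$: the diagonal blocks are the summed per-slab contributions, and cross-terms with $j\neq j'$ yield off-diagonal blocks whose trace norms are controlled via the fidelity identity in \cref{lem:tracenorm off-diagonal}. Assembling everything produces a deterministic probability vector $q^{(n)}$ with $\EE\norm{\spec_+(\rho_A)-q^{(n)}}_1\to 0$ and whose induced rescaled-log measure converges weakly to $\minstar(\nu_1,\nu_2)$ after letting $n\to\infty$, then $L\to\infty$, then $\delta\to 0$; \cref{lem:trace distance convergence} then delivers the desired weak convergence in probability. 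The principal technical obstacle is the off-diagonal block estimate: applying \cref{lem:tracenorm off-diagonal} together with Cauchy--Schwarz and $\EE F(\rho,\sigma)\leq\sqrt{\EE\tr[\rho]\,\EE\tr[\sigma]}$ gives only an $\bigO(L/\delta)$ bound on the total cross-term trace norm, which blows up as $\delta\to 0$. A second-moment computation in the spirit of \cref{eq:general replica trick k=2}---exploiting the randomness of the bulk tensors to show that cross-terms cancel in expectation---is the technical crux needed to complete the argument, and more broadly ensuring that the $\Theta((L/\delta)^2)$ per-slab errors sum to something vanishing is where the strength of the hypotheses $\gamma>4$ and $K(n)=\Omega(\log^2 n)$ is actually consumed.
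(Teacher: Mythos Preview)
Your overall architecture---reduce to the sandwich region, discretize the two cut spectra into slabs, apply the single-cut theorem on each piece, recombine, and invoke \cref{lem:trace distance convergence}---matches the paper's proof. The reduction step and the use of continuity of $\nu_1,\nu_2$ to kill the diagonal are also the same.

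The genuine gap is precisely the one you flag: the off-diagonal control. Your proposed fix---a second-moment/replica computation \`a la \cref{eq:general replica trick k=2}---is \emph{not} how the paper closes this, and it is not clear such a variance argument would work here, since the cross terms $\tr_{\bar A}[\ket{\rho^{(j,k)}}\bra{\rho^{(j',k)}}]$ are not mean-zero under the Haar randomness. The paper's key idea is a deterministic rank-versus-operator-norm estimate. Rather than your double-indexed $(j,k)$ decomposition with a fixed mesh $\delta$, the paper groups pieces into single-indexed families $\phi^{(j,a)}$ (fix cut-2 slab $j$, sum over all cut-1 slabs with eigenvalue at least $2^{n^{1/4}} p_{2,j}$) and $\phi^{(k,b)}$ (symmetrically), and uses an $n$-dependent mesh $n^{-\alpha}$. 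The $n^{1/4}$ gap forces $\rank(\sigma^{(j,a)}_{\bar B}) \leq 2^{-n^{1/4}}/p_{2,j}$, while the projection onto the $j$-th cut-2 block gives $\lVert P_{\bar B,j}\phi^{(k,b)}_{\bar B}P_{\bar B,j}\rVert_\infty \leq p_{2,j}$. Feeding this into $F(\rho,\sigma)\leq\sqrt{\rank(\rho)\tr[\rho]\,\lVert\sigma\rVert_\infty}$ (together with \cref{lem:tracenorm off-diagonal}) yields cross terms of size $2^{-n^{1/4}/2}$, which beats the polynomially many slabs. The hypotheses $\gamma>4$ and $K(n)=\Omega(\log^2 n)$ are consumed in making the per-slab single-cut errors (now with $\bigO(n^{2\alpha+1}\log^2 n)$ terms) summable, not in the cross-term estimate itself.

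Two further differences worth noting: the paper makes all discretization parameters $n$-dependent (mesh $n^{-\alpha}$, window $\pm C\sqrt n\log n$, gap $n^{1/4}$), avoiding your triple limit $n\to\infty$, $L\to\infty$, $\delta\to0$; and the asymmetric treatment---analysing $a$-pieces on the $B$ side and $b$-pieces on the $\bar B$ side---is what makes the rank/norm product collapse. Your symmetric $(j,k)$ binning with projectors on both sides does give nice block structure, but the remaining cross terms between an $a$-type $(j,k)$ and a $b$-type $(j',k)$ cannot be killed without this rank-gap mechanism.
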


\begin{figure}[!t]
  \centering
  \begin{subfigure}{.45\textwidth}
    \centering
    \begin{overpic}[width=0.9\textwidth,grid=false]{two-surfaces}
      \put(51,66){\color{WildStrawberry}{\small{$\gamma_{A,1}$}}} \put(27,55){\color{RoyalBlue}{\small{$\gamma_{A,2}$}}}
      \put(90,90){\color{BrickRed}{$A$}} \put(90,8){\color{SeaGreen}{$\bar{A} $}}
    \end{overpic}
    \caption{Consider a tensor network with a boundary region $A$ with two competing minimal cuts.}
  \end{subfigure}%
  \hspace*{1cm}
  \begin{subfigure}{.45\textwidth}
    \centering
    \begin{overpic}[width=0.9\textwidth,grid=false]{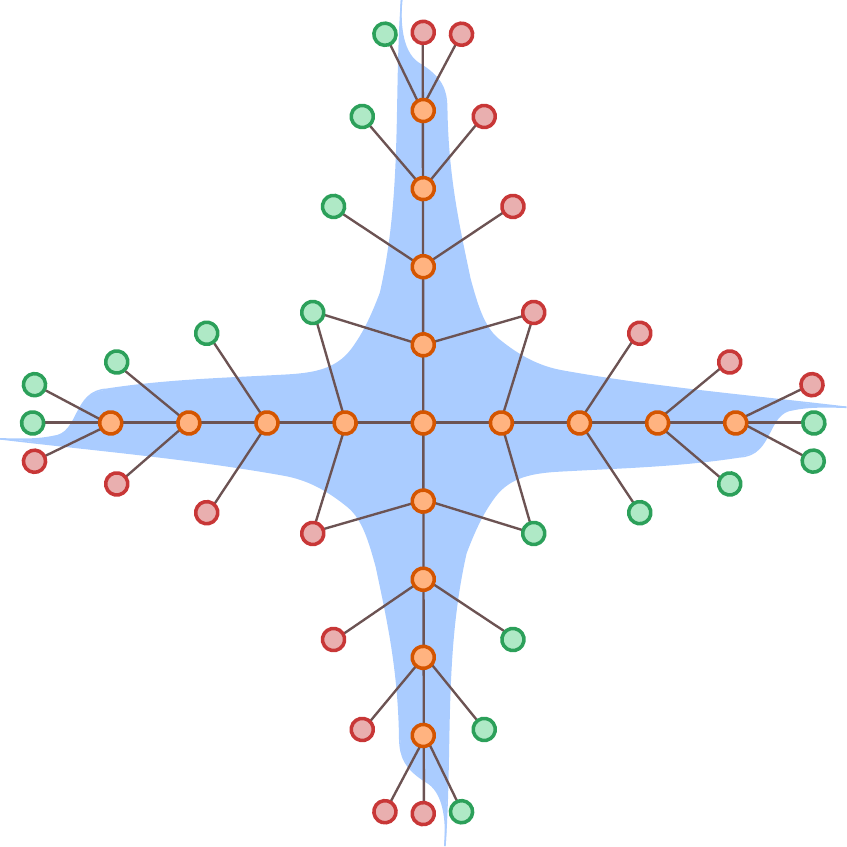}
      \put(70,70){\color{BrickRed}{$B$}} \put(70,24){\color{SeaGreen}{$\bar{B}$}}
    \end{overpic}
    \caption{We show that the spectrum is almost preserved when removing the boundary regions; that is, the spectrum of $\rho_A$ is approximated by that of $\sigma_B$.}
    \label{fig:reduce to sigma}
  \end{subfigure}
  \bigskip
  \begin{subfigure}{.9\textwidth}
    \centering
    \begin{overpic}[width=0.9\textwidth,grid=false]{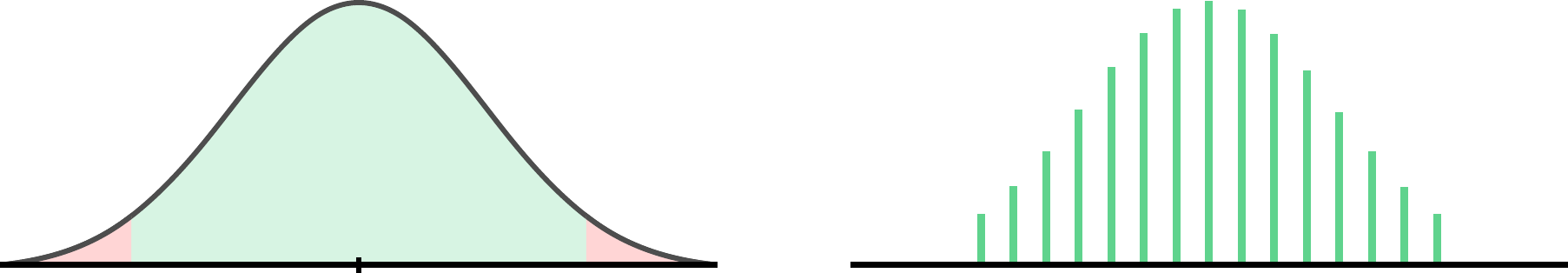}
      \put(21,-5){\footnotesize{$H(n)$}}
      \put(10,0.5){\footnotesize{$\underbrace{\framebox(12.5,0){}}_{\sqrt{n}}$}}
      \put(30,-3){\footnotesize{$\rightarrow \log(\frac{1}{\lambda_{i,j}})$}}
      \put(-11,5){\footnotesize{$ \nu_{\gamma_{A,i}} \uparrow$}}
      \put(46,5){\footnotesize{$ q_{i,j} \uparrow$}}
      \put(70,-3){\footnotesize{$\rightarrow \log(\frac{1}{\tilde\lambda_{i,j}})$}}
    \end{overpic}
    \bigskip
    \caption{The distribution of the logarithms of the eigenvalues has width of order $\sqrt{n}$. We approximate the spectrum of the state along the two minimal cuts by a superposition of maximally entangled states and discard the tails of the spectrum.}
    \label{fig:cut tails and bin}
  \end{subfigure}
  \begin{subfigure}{.45\textwidth}
    \centering
    \begin{overpic}[width=0.9\textwidth,grid=false]{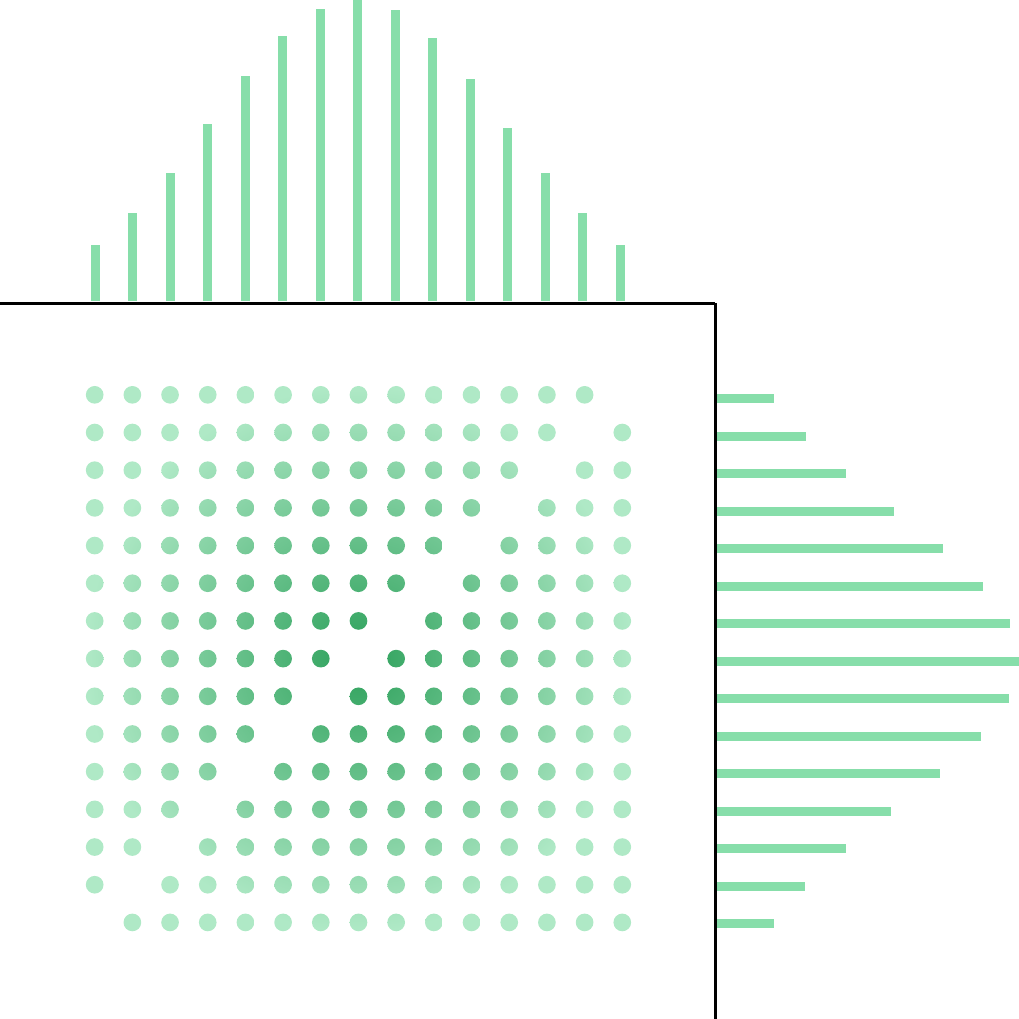}
      \put(0,90){\footnotesize{$\ket{\bar\phi_{\gamma_{A,2}}}$}}
      \put(85,60){\footnotesize{$\ket{\bar\phi_{\gamma_{A,1}}}$}}
    \end{overpic}
    \caption{We then discard the part of the state where the maximally entangled states along the two cuts are of almost equal dimension to get the approximate state on edges $\ket{\bar\phi}$.}
    \label{fig:remove middle}
  \end{subfigure}
  \hspace*{1cm}
  \begin{subfigure}{.45\textwidth}
    \centering
    \begin{overpic}[width=0.9\textwidth,grid=false]{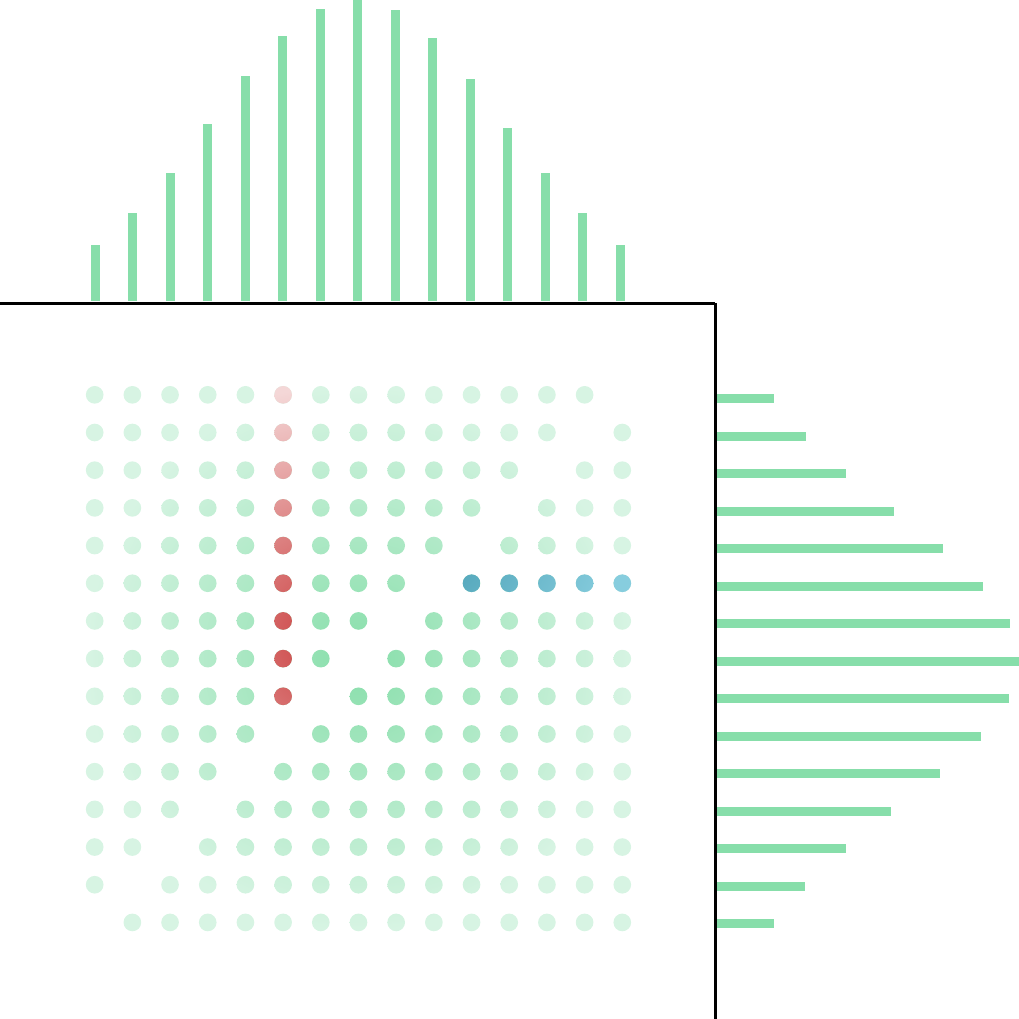}
      \put(0,90){\footnotesize{$\ket{\bar\phi_{\gamma_{A,2}}}$}}
      \put(85,60){\footnotesize{$\ket{\bar\phi_{\gamma_{A,1}}}$}}
      \put(5,40){\color{BrickRed}{\footnotesize{$\ket{\phi^{(j,a)}}$}}}
      \put(43,33){\color{RoyalBlue}{\footnotesize{$\ket{\phi^{(k,b)}}$}}}
    \end{overpic}
    \caption{We decompose the background state as a superposition over $\ket{\phi^{(j,a)}}$ and $\ket{\phi^{(j,b)}}$. We denote the corresponding tensor network states by $\sigma^{(j,a)}$ and $\sigma^{(j,b)}$.}
    \label{fig:decompose}
  \end{subfigure}
\end{figure}
\begin{figure}[!ht]\ContinuedFloat
  \begin{subfigure}{.9\textwidth}
    \centering
    \begin{overpic}[width=0.9\textwidth,grid=false]{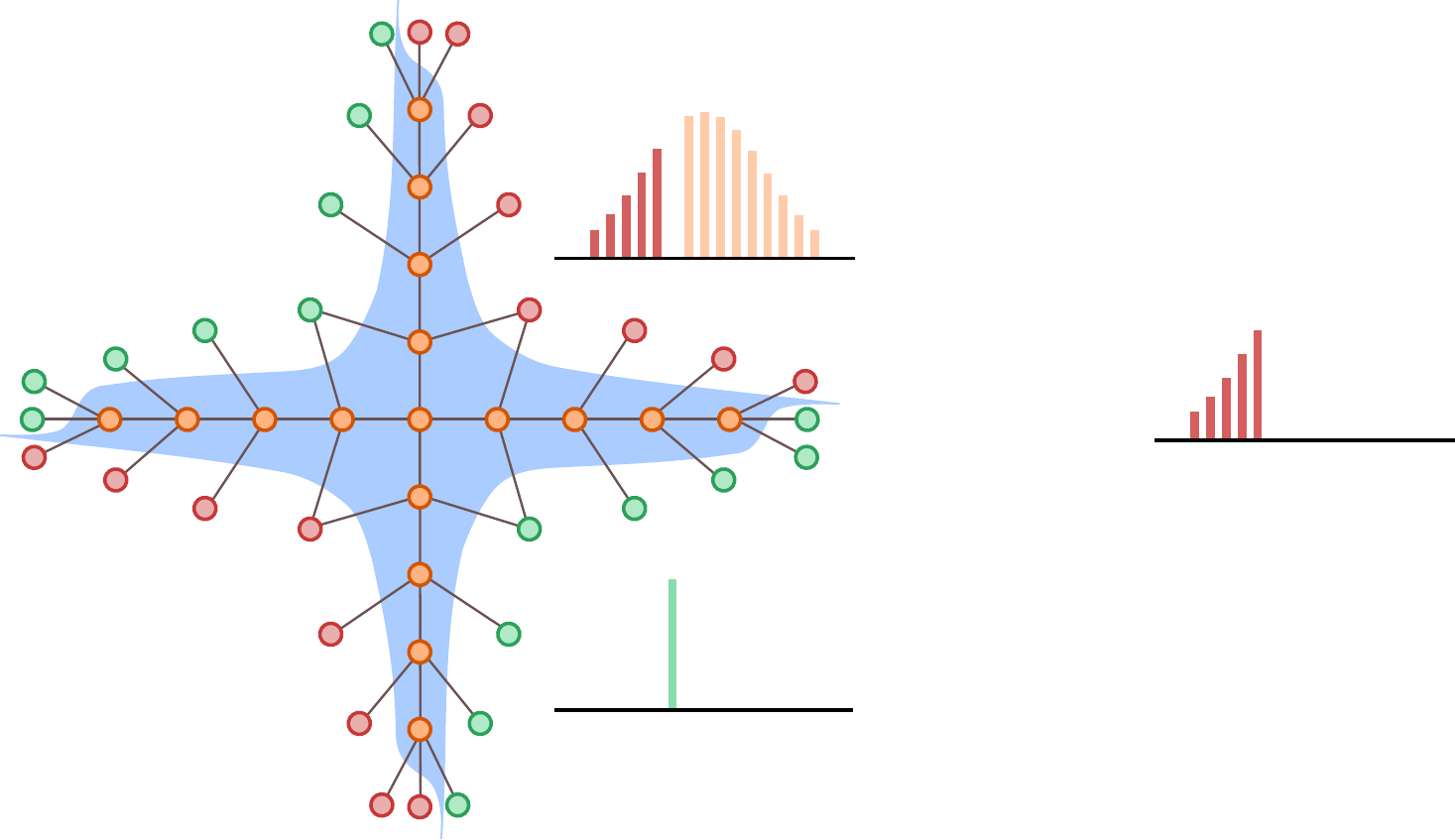}
      \put(35,48){\color{BrickRed}{\footnotesize{$\ket{\phi^{(j,a)}_{\gamma_{A,1}}}$}}}
      \put(87,22){\color{BrickRed}{$\sigma^{(j,a)}$}}
      \put(65,28){$\Rightarrow$}
      \put(50,13){\color{SeaGreen}{\footnotesize{$\ket{\Phi^+_{2,j}}$}}}
    \end{overpic}
    \vspace*{0.2cm}
    \caption{For the state $\ket{\phi^{(j,a)}}$, the minimal cut is $\gamma_{A,1}$, and along $\gamma_{A,2}$ we have a maximally entangled state, which gives $\sigma^{(j,a)}_B \approx \phi^{(j,a)}_B$.
    Similarly, for $\ket{\phi^{(j,b)}}$, the minimal cut is $\gamma_{A,2}$ and we get $\sigma^{(j,b)}_{\bar{B}} \approx \phi^{(j,b)}_{\bar{B}}$.}
    \label{fig:sigma min cut}
  \end{subfigure}
  \bigskip
  \vspace*{0.5cm}
  \begin{subfigure}{.9\textwidth}
    \centering
    \begin{overpic}[width=0.9\textwidth,grid=false]{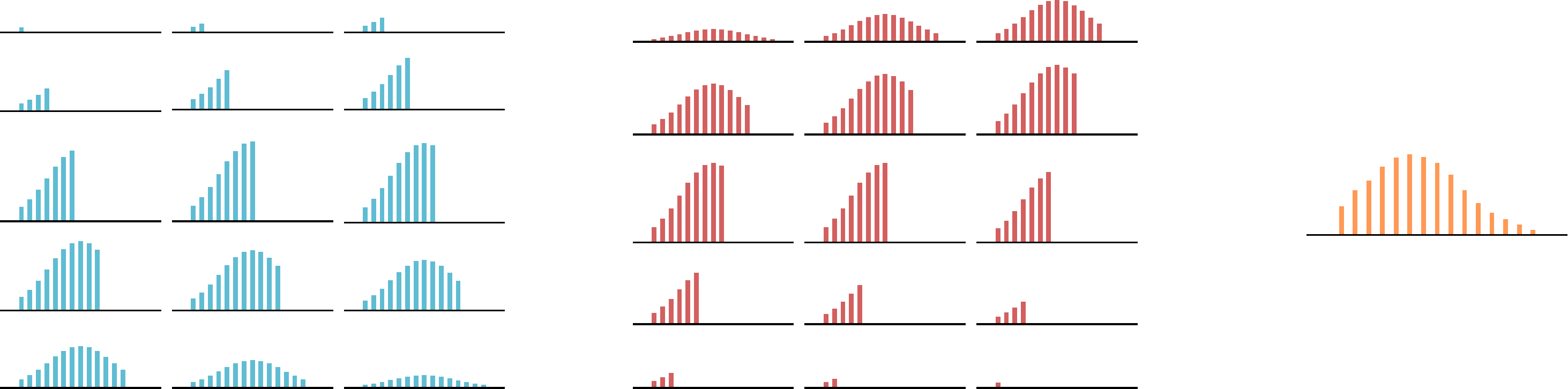}
      \put(35,10.5){\small{$+$}} \put(77.5,10.5){\footnotesize{$\Rightarrow$}}
      \put(12,-5){\color{RoyalBlue}{\footnotesize{$\sigma_B^{(a)}$}}}
      \put(58,-5){\color{BrickRed}{\footnotesize{$\sigma_B^{(b)}$}}}
      \put(89,6){\color{Bittersweet}{\footnotesize{$\sigma_B$}}}
    \end{overpic}
    \vspace*{0.5cm}
    \caption{All the $\sigma_B^{(j,a)}$ have the same eigenbasis, so we can add the spectra. We can do the same for the $\sigma_{\bar{B}}^{(j,b)}$, yielding $\sigma_B^{(a)}$ and $\sigma_{\bar{B}}^{(b)}$, which has the same spectrum as $\sigma_{B}^{(b)}$.  Finally we argue that the $\sigma_B^{(a)}$ and $\sigma_B^{(b)}$ are nearly orthogonal, so we can combine the spectra, as in \cref{eq:perturb sigma}. We complete the proof by showing that the resulting spectrum converges to the pushforward measure of the min function applied to the spectra of the two cuts.}
    \label{fig:sigma orth}
  \end{subfigure}

  \caption{Illustration of the proof of \cref{thm:measure convergence surface transition}.}
  \label{fig:phase transition}
\end{figure}

\begin{proof}
  Proving this result will require several intermediate results. We provide a very high-level, enumerated sketch of our proof here, involving the following steps:
  \begin{enumerate}
    \item Study a reduced problem on a subnetwork; this subnetwork is such that the minimal cuts $\gamma_{A,i}$ are incident to the boundary.
    \item Approximate the background state with superpositions of maximally entangled states by binning eigenvalues, and construct approximate tensor network states with the approximate background states.
    \item Prove that the spectrum of the boundary state converges to the spectrum of the approximate tensor network state, which in turn converges to the spectrum of the approximate background states. We write the background state as a superposition of two states, one of which has $\gamma_{A,1}$ as its minimal cut, whereas the other state has $\gamma_{A,2}$ as its minimal cut. We show that the resulting states are approximately orthogonal.
    \item Show that the distribution of the approximate background states converges to the min-distribution, and hence conclude that the spectrum of the boundary state converges to the min-distribution.
  \end{enumerate}
  \cref{fig:phase transition} provides a more detailed visual sketch of the intuition behind our proof strategy. \cref{lem:trace distance convergence} will be a key tool, as it implies that it will suffice to show convergence in trace norm.

  We assume without loss of generality that $\Gamma_{A,1} \subset \Gamma_{A,2}$.
  We now define the subnetwork that we will analyze in our proof. Let $G' = (V',E')$ be the induced graph on $V' = V_b' \cup V_\partial'$, where $V_b' = \Gamma_{A,2} \setminus \Gamma_{A,1}$, and $V_\partial' = B \cup \bar{B}$, with $B$ the set of vertices in $\Gamma_{A,1}$ which are incident to $V_b'$ and $\bar{B}$ the set of vertices in $V \setminus \Gamma_{A,2}$ which are incident to $V_b'$. The subgraph $G'$ is depicted in \cref{fig:reduce to sigma}.

  We also define the random tensor network state $\tau$ as the state obtained by applying random tensors only on bulk vertices in the complement of $V_b'$.
  Then for this state, by a slight variation on \cref{thm:rtn with arbitrary link states}, it holds that
  \begin{align*}
    \EE \min_{V_A, V_{\bar{A}}} \norm{(V_A \ot V_{\bar{A}} \ot I)\tau(V_A^\dagger \ot V_{\bar{A}}^\dagger \ot I) - \phi}_1 = \bigO(2^{-\KK(n)} + \sqrt{\eps(n)}),
  \end{align*}
  where the minimum is over isometries $V_A : \HH_A \rightarrow \HH_{\Gamma_{A,1}}$ and $V_{\bar{A}} : \HH_{\bar{A}} \rightarrow \HH_{\Gamma_{A,2}^c}$.
  Let $\tilde \tau = (V_A \ot V_{\bar{A}} \ot I)\tau(V_A^\dagger \ot V_{\bar{A}}^\dagger \ot I)$ where the $V_A$ and $V_{\bar{A}}$ are the isometries that realize the minimum.

  Now, consider two random tensor network states on $G'$ obtained by applying the same random tensors on $V_b'$ with background states $\tilde \tau$ and $\phi$ respectively.
  The state where we take $\tilde \tau$ as the background state yields $(V_A \ot V_{\bar{A}} \ot I)\rho(V_A^\dagger \ot V_{\bar{A}}^\dagger \ot I)$.
  Denote the state where we take $\phi$ as the background state by $\sigma$. Then by \cref{lem:smoothing rtn state}:
  \begin{align*}
    \EE \norm{(V_A \ot V_{\bar{A}})\rho(V_A^\dagger \ot V_{\bar{A}}^\dagger) - \sigma}_1 = \bigO(2^{-\KK(n)} + \sqrt{\eps(n)}).
  \end{align*}
  If we let $E_1$ denote the set of edges $(xy)$ for which both $x,y \in \Gamma_{A,1}$, and $E_2$ the set of edges $(xy)$ for which both $x,y \in \Gamma_{A,2}^c$, then $\sigma_{\Gamma_{A,1}} = \phi_{E_1} \ot \sigma_B$ and $\sigma_{\Gamma_{A,2}} = \phi_{E_2} \ot \sigma_{\bar{B}}$.
  This shows that
  \begin{align}\label{eq:tilde rho vs sigma}
    \EE \norm{\spec_+(\rho_A) - \spec_+(\sigma_{B})}_1 = \bigO(2^{-\KK(n)} + \sqrt{\eps(n)})
  \end{align}
  and $\spec_+(\sigma_B) = \spec_+(\sigma_{\bar{B}})$, because $\sigma$ is pure, which follows because its background state $\phi$ is pure.
  We will continue to study $\sigma$ on the reduced graph $G'$, as in \cref{fig:reduce to sigma}, and at the end of the proof, we will see that \cref{eq:tilde rho vs sigma} will be sufficient to prove the desired properties of $\spec(\rho_A)$.

  We have accomplished Step 1 by reducing the full network to a subnetwork, and we now try to construct an approximation to $\sigma$. We will do so by coarse-graining the spectrum of the background states along the two minimal cuts: slicing the tails of the distribution, binning the remaining eigenvalues, throwing away the smallest bins of the binned distribution, and then approximating the states as a superposition of maximally entangled states defined on the bins. Consider the background states along the two minimal cuts in the Schmidt basis:
  \begin{align}\label{eq: J basis}
    \ket{\phi_{\gamma_{A,i}}} = \sum_{J} \sqrt{\lambda_{i,J}}\ket{JJ}.
  \end{align}
  Here, the $\ket{J}$ are a tensor product basis along the half-edges (which we again may take to be the standard basis), so that
  \begin{align*}
    \ket{JJ} = \bigotimes_{e = (xy) \in \gamma_{A,i}} \ket{i_{e,x}} \ot \ket{i_{e,y}},
  \end{align*}
  where the $\ket{i_{e,x}}$ and $\ket{i_{e,y}}$ are a basis for $\HH_{e,x}$ and $\HH_{e,y}$.
  First, we truncate the allowed range of eigenvalues by slicing off the tails of the distribution, as in the left side of \cref{fig:cut tails and bin}. Let
  \begin{align*}
    I_n = [2^{-H(n)-C\sqrt{n}\log(n)}, 2^{-H(n)+C\sqrt{n}\log(n)}]
  \end{align*}
  for some constant $C$, and let
  \begin{align*}
    \ket{\phi_{\gamma_{A,i}}^{(1)}} = \sum_{\lambda_{i,J} \in I_n} \sqrt{\lambda_{i,J}}\ket{JJ}.
  \end{align*}
  Note that the entanglement spectrum of $\phi_{\gamma_{A,i}}$ is given by the spectrum of $\phi_{\Gamma_{A,i}}$.
  This implies that
  \begin{align*}
    \sum_{\lambda_{i,J} \notin I_n} \lambda_{i,J} = \nu_{\phi_{\Gamma_{A,i}}}((-\infty,-C\log(n))\cup(C\log(n),\infty)) \rightarrow 0
  \end{align*}
  and hence
  \begin{align*}
    \norm{\phi_{\gamma_{A,i}} - \phi_{\gamma_{A,1}}^{(1)}}_1 \rightarrow 0.
  \end{align*}
  Next, for $\lambda_{i,J} \in I_n$, we define new $\tilde{\lambda}_{i,J}$ according to $\log(\tilde{\lambda}_{i,J}) = \frac{1}{n^{\alpha}}\floor{n^{\alpha}\log(\lambda_{i,J})}$, effectively binning the values of $\log(\lambda_{i,J})$ into intervals of size $n^{-\alpha}$, as in the right side of \cref{fig:cut tails and bin}.
  We will choose $\alpha > 0$ small (but other choices will be useful for \cref{cor:convergence of entropy}), to be precise, we choose $\alpha < \frac{1}{8}(\gamma - 4)$.
  We can now define the background state
  \begin{align*}
    \ket{\phi_{\gamma_{A,i}}^{(2)}} = \sum_{\lambda_{i,J} \in I_n} \sqrt{\tilde\lambda_{i,J}}\ket{JJ}.
  \end{align*}
  Then by \cref{lem:consequence fvdg}
  \begin{align*}
    \norm{\phi_{\gamma_{A,i}}^{(1)}- \phi_{\gamma_{A,i}}^{(2)}}_1 & \leq 2\sqrt{\sum_{J} \abs{\lambda_{i,J} - \tilde\lambda_{i,J}}},
  \end{align*}
  which we may bound using the fact that $2^{-\frac{1}{n^{\alpha}}} \leq \frac{\tilde \lambda_{i,J}}{\lambda_{i,J}} \leq 1$, so
  \begin{align*}
    \sum_{J} \abs{\lambda_{i,J} - \tilde\lambda_{i,J}} & = \sum_J \lambda_{i,J}\left(1 - \frac{\tilde \lambda_{i,J}}{\lambda_{iJ}}\right) \leq \max_{J} \left(1 - \frac{\tilde \lambda_{i,J}}{\lambda_{i,J}}\right) \leq 1 - 2^{-\frac{1}{n^{\alpha}}} = \bigO\left(\frac{1}{n^{\alpha}}\right).
  \end{align*}
  Thus, $\norm{\phi_{\gamma_{A,i}}^{(1)}- \phi_{\gamma_{A,i}}^{(2)}}_1 = \bigO(n^{-\frac12 \alpha})$.
  Since the interval $I_n$ has length $\bigO(\sqrt{n}\log(n))$ and the distinct values $\tilde\lambda_{i,J}$ are $n^{-{\alpha}}$ apart, the $\tilde{\lambda}_{i,J}$ take $\bigO(n^{{\alpha} + \frac12}\log(n))$ different values.
  Denote these values by $p_{i,j}$, and let $d_{i,j}$ be their multiplicities.
  Setting
  \begin{align*}
    q_{i,j} = d_{i,j}p_{i,j},
  \end{align*}
  we can rewrite the state $\ket{\phi_{\gamma_{A,i}}^{(2)}}$ using the collected eigenvalues:
  \begin{align*}
    \ket{\phi_{\gamma_{A,i}}^{(2)}} = \sum_{j} \sqrt{q_{i,j}}\ket{\Phi^+_{i,j}},
  \end{align*}
  where the $\ket{\Phi^+_{i,j}}$ are maximally entangled states of dimension $d_{i,j}$ which are orthogonal, i.e. $\braket{\Phi^+_{i,j}|\Phi^+_{i,k}} = \delta_{j,k}$.
  Note that $0 \leq q_{i,j} \leq 1$ and $\sum_j q_{i,j} \leq 1$.
  Now, we discard any bins that are too small: let $\beta > \alpha + \frac12$ and consider the state
  \begin{align}\label{eq:decompose link state into max entangled}
    \ket{\phi_{\gamma_{A,i}}^{(3)}} = \sum_{q_{i,j} > n^{-\beta}} \sqrt{q_{i,j}}\ket{\Phi^+_{i,j}}
  \end{align}
  where we restricted the sum to terms for which $q_{i,j}$ is sufficiently large (and hence $p_{i,j}$ will be sufficiently close to $d_{i,j}^{-1}$).
  Then, since the number of terms is $\bigO(n^{\alpha + \frac12}\log(n))$ and using \cref{lem:consequence fvdg}, we have
  \begin{align*}
    \norm{\phi_{\gamma_{A,i}}^{(2)} - \phi_{\gamma_{A,i}}^{(3)}}_1 =\bigO\left(n^{\frac12\alpha + \frac14 - \frac12\beta}\sqrt{\log(n)}\right)
  \end{align*}
  To summarize, $\phi_{\gamma_{A,i}}^{(3)}$ is the state obtained from the original background state $\phi_{\gamma_{A,i}}$ by 1) removing the tails of the spectrum, 2) binning the eigenvalues, and finally 3) dropping any of the bins that are too small. The first approximation incurs an error $\norm{\phi_{\gamma_{A,i}} - \phi_{\gamma_{A,i}}^{(1)}}_1$, which converges to zero, and the second and third approximations incur errors of order $\bigO(n^{-\frac12\alpha})$ and $\bigO\left(n^{\frac12\alpha + \frac14 - \frac12\beta}\sqrt{\log(n)}\right)$ respectively.

  Now, let $\tilde\phi$ be the background state for which we have replaced $\phi_{\gamma_{A,i}}$ by $\phi_{\gamma_{A,i}}^{(3)}$ for $i=1,2$. Recall that $\sigma$ is the state on the random tensor network state constructed with background state $\phi$ on the subgraph $G'$, as in \cref{fig:reduce to sigma}. We then define the approximation $\tilde\sigma$ to be the tensor network state on $G'$ which uses $\tilde\phi$ as its background state instead of $\phi$. With these background states, we see
  \begin{align}\label{eq:phi vs tilde phi}
    \norm{\phi - \tilde\phi}_1 = \bigO\left(\norm{\phi_{\gamma_{A,i}} - \phi_{\gamma_{A,i}}^{(1)}}_1 + n^{-\frac12\alpha} + n^{\frac12\alpha + \frac14 - \frac12\beta}\sqrt{\log(n)} \right) \to 0.
  \end{align}
  We then apply \cref{lem:smoothing rtn state} to find:
  \begin{align}\label{eq:rho vs tilde rho}
    \EE \norm{\sigma - \tilde\sigma}_1 = \bigO\left(\norm{\phi - \tilde\phi}_1\right) \to 0.
  \end{align}
  We now make one final approximation to $\sigma$, in which we discard the parts of the background state where the maximally entangled states along each cut are close in dimension, as in \cref{fig:remove middle}. Consider the state
  \begin{align*}
    \ket{\bar\phi_{\gamma_{A,1},\gamma_{A,2}}} = \sum_{2^{n^{\nicefrac{1}{4}}}p_{2,k} \leq p_{2,j} \text{ or } 2^{-n^{\nicefrac{1}{4}}}p_{2,k} \geq p_{2,j}}\sqrt{q_{1,k}q_{2,j}} \ket{\Phi_{1,k}}\ot \ket{\Phi_{2,j}},
  \end{align*}
  where the sum is still only over those $j$ and $k$ for which $q_{1,k} > n^{-\beta}$ and $q_{2,j} > n^{-\beta}$.
  If we let
  \begin{align*}
    D_n = \{(x,y) \in \RR^2 : x - n^{-\frac14} - n^{-\frac12} \leq y \leq x + n^{-\frac14} + n^{-\frac12}, \abs{x} \leq \log(n) + 1\},
  \end{align*}
  then as $n$ increases, the measure of $D_n$ converges to zero and since the measures $\nu_{\phi_{\gamma_{A,i}}} \Rightarrow \nu_i$ and $\nu_i$ has continuous cumulative distribution function
  \begin{align}\label{eq:middle measure zero}
    \sum_{2^{-n^{\nicefrac{1}{4}}}p_{1,j} \leq p_{2,k} \leq 2^{n^{\nicefrac{1}{4}}}p_{1,j}} q_{1,j}q_{2,k} \leq (\nu_{\phi_{\Gamma_{A,1}}} \times\nu_{\phi_{\Gamma_{A,2}}}) (D_n)\rightarrow 0.
  \end{align}
  Hence, if we let $\bar \phi$ denote the background state with $\tilde{\phi}_{\gamma_{A,1}}\ot \tilde{\phi}_{\gamma_{A,2}}$ replaced by $\bar\phi_{\gamma_{A,1},\gamma_{A,2}}$, we get that $\norm{\bar \phi - \tilde \phi}_1 \rightarrow 0$.
  By \cref{lem:smoothing rtn state}, if we denote by $\bar\sigma$ the state we obtain on $G'$ by using $\bar\phi$ rather than $\tilde\phi$ as the background state, we get
  \begin{align}\label{eq:bar sigma vs sigma}
    \EE\norm{\bar\sigma - \sigma}_1 \rightarrow 0.
  \end{align}

  We pause here to note that we have accomplished Step 2: we have constructed an approximation $\bar\sigma$ to $\sigma$ by approximating the background state along the cuts as superpositions of maximally entangled states. The utility in doing so is that working with this approximated tensor network state allows us to reduce to calculations where we restrict to a maximally entangled state along one of the two surfaces. Our next major step is to then show that the spectrum of $\bar\sigma$ can be used to approximate the spectrum of $\rho$, which will then allow us to analyze the convergence of the corresponding distribution. As a first intermediate step, we will show that $\bar{\sigma}$ can be approximated as a superposition of a state with minimal cut $\gamma_{A,1}$ and a state with minimal cut $\gamma_{A,2}$. This will allow us to more easily reason about how the background states are related to the spectrum of $\bar{\sigma}$, in turn, $\sigma$, and in turn, $\rho_A$.

  Let us write
  \begin{align*}
    \ket{\phi_{\gamma_{A,1},\gamma_{A,2}}^{(j,a)}} & = \sum_{k \text{ s.t. } p_{1,k} \geq p_{2,j}2^{n^{\nicefrac{1}{4}}}} \sqrt{q_{1,k} q_{2,j}} \ket{\Phi^+_{1,k}} \ot \ket{\Phi^+_{2,j}}  \\
    \ket{\phi_{\gamma_{A,1},\gamma_{A,2}}^{(j,b)}} & = \sum_{k \text{ s.t. } p_{2,k} \geq p_{1,j}2^{n^{\nicefrac{1}{4}}}} \sqrt{q_{1,j} q_{2,k}} \ket{\Phi^+_{1,j}} \ot \ket{\Phi^+_{2,k}},
  \end{align*}
  allowing us to write the background state $\bar\phi_{\gamma_{A,1},\gamma_{A,2}}$ as a different superposition, as depicted in \cref{fig:decompose}:
  \begin{align*}
    \ket{\bar\phi_{\gamma_{A,1},\gamma_{A,2}}} & = \sum_{j} \ket{\phi_{\gamma_{A,1},\gamma_{A,2}}^{(j,a)}} + \sum_j \ket{\phi_{\gamma_{A,1},\gamma_{A,2}}^{(j,b)}}.
  \end{align*}
  Let $\ket{\phi^{(j,a)}}$ and $\ket{\phi^{(j,b)}}$ denote the background states on $G'$ where we have replaced $\ket{\bar\phi_{\gamma_{A,1},\gamma_{A,2}}}$ by $\ket{\phi_{\gamma_{A,1},\gamma_{A,2}}^{(j,a)}}$ and $\ket{\phi_{\gamma_{A,1},\gamma_{A,2}}^{(j,b)}}$ respectively, and let
  \begin{align*}
    \ket{\phi^{a}} = \sum_j \ket{\phi^{(j,a)}}, \qquad \ket{\phi^{b}} = \sum_j \ket{\phi^{(j,b)}}.
  \end{align*}
  Denote by $\ket{\sigma^{(j,a)}}$, $\ket{\sigma^{(j,b)}}$ the random tensor network states on $G'$ with background states $\ket{\phi^{(j,a)}}$ and $\ket{\phi^{(j,b)}}$, respectively. Similarly, denote by
  $\ket{\sigma^{(a)}}$ and $\ket{\sigma^{(b)}}$ the random tensor network states on $G'$ with background states $\ket{\phi^{(a)}}$ and $\ket{\phi^{(b)}}$, respectively.

  We start with the following bound on the rank of $\sigma^{(j,a)}_B$:
  \begin{align}\label{eq:rank sigma a}
    \begin{split}
      \rank(\sigma^{(j,a)}_B) &\leq \rank(\phi^{(j,a)}_B) \\
      &\leq \sum_{p_{1,k} \geq p_{2,j}2^{n^{\nicefrac{1}{4}}}} d_{1,k} \\
      &\leq \sum_{p_{1,k} \geq p_{2,j}2^{n^{\nicefrac{1}{4}}}} \frac{q_{1,k}}{p_{1,k}}\\
      &\leq \sum_{k} \frac{q_{1,k}2^{-n^{1/4}}}{p_{2,j}} \\
      &\leq \frac{2^{-n^{1/4}}}{p_{2,j}}
    \end{split}
  \end{align}
  By the same reasoning we may bound the rank of $\sigma^{(j,b)}_{\bar{B}}$ as
  \begin{align}\label{eq:rank sigma b}
    \rank(\sigma^{(j,b)}_{\bar{B}}) \leq \frac{2^{-n^{1/4}}}{p_{1,j}}.
  \end{align}

  Now, we will argue that the state $\sigma^{(j,a)}$ has minimal cut $\gamma_{A,1}$, and $\sigma^{(j,b)}$ has minimal cut $\gamma_{A,2}$, as in \cref{fig:sigma min cut}. Intuitively, it is sensible that for $\sigma^{(j,a)}$ the unique minimal cut is along $\gamma_{A,1}$, as for $\phi^{(j,a)}$, we have a fixed maximally entangled state along $\gamma_{A,2}$, and a superposition of maximally entangled states of lower dimension along $\gamma_{A,1}$.
  Similarly, for $\sigma^{(j,b)}$ the minimal cut is along $\gamma_{A,2}$.
  To confirm this intuition, we will now show that $\sigma^{(j,a)}_B \approx \phi^{(j,a)}_B$ and $\sigma^{(j,b)}_{\bar{B}} \approx \phi^{(j,b)}_{\bar{B}}$.

  We show this for $\sigma^{(j,a)}_B$. In this case, the `minimal' cut is simply $B$.
  Let $\Delta_B$ be a cut for $B$ unequal to $B$ or to $V_b' \cup B$.
  We denote by $\Delta_A$ the associated minimal cut for $A$ on the original graph $G$ given by $\Delta_A = \Delta_B \cup \Gamma_{A,1}$.
  We let
  \begin{align*}
    Y^{\Delta_B} = \{(e,x) : e  = (xy), x \in \Delta_B^c, y \in B \}.
  \end{align*}
  Note that as $\phi$ is a product state we have $H_{\min}^{\eps}(\Delta_B \setminus B \vert B)_{\phi} = H_{\min}^{\eps}(\Delta_B \setminus B \vert B Y^{\Delta_B})_{\phi}$.
  We can obtain $\phi^{(j,a)}$ from $\phi$ by acting with subunital CP maps on $B$ and $\bar{B}$ and therefore (by two applications of data processing)
  \begin{align*}
    H_{\min}^{\eps}(\Delta_B \setminus B \vert B Y^{\Delta_B})_{\phi^{(j,a)}} \geq H_{\min}^{\eps}(\Delta_B \setminus B \vert B Y^{\Delta_B})_{\phi} = H_{\min}^{\eps}(\Delta_B \setminus B \vert B)_{\phi} \geq H_{\min}^{\eps}(\Delta_A \setminus \Gamma_A \vert \Gamma_A)_{\phi}
  \end{align*}
  We now consider $\Delta_B = V_b' \cup B$ (in which case $Y^{\Delta_B}$ is empty, since we assume $\gamma_{A,1} \cap \gamma_{A,2} = \emptyset$).
  Then
  \begin{align*}
    H_{\min}^{\eps}(\Delta_B \setminus B \vert B)_{\phi^{(j,a)}} \geq H_{\min}(\Delta_B \setminus B \vert B)_{\phi^{(j,a)}} = H_{\min}(\bar B)_{\phi^{(j,a)}} - H_{\max}(B)_{\phi^{(j,a)}}
  \end{align*}
  using the product structure of $\phi^{(j,a)}$.
  Now, we compute
  \begin{align*}
    H_{\min}(\bar B)_{\phi^{(j,a)}} =  - \log(\norm{\phi^{(j,a)}_{\bar{B}}}_{\infty}) \geq -\log(p_{2,j}),
  \end{align*}
  where we recall that $p_{2,j}$ is a binned eigenvalue.
  By \cref{eq:rank sigma a}, we can bound
  \begin{align*}
    H_{\max}(B)_{\phi^{(j,a)}} & \leq  \log(\rank(\phi^{(j,a)}_B)) \leq \log\left(\frac{2^{-n^{1/4}}}{p_{2,j}}\right) = -\log(p_{2,j}) - n^{\frac14}.
  \end{align*}
  In conclusion,
  \begin{align*}
    H_{\min}^{\eps}(\Delta_B \setminus B \vert B)_{\phi^{(j,a)}} \geq n^{\frac14}.
  \end{align*}

  By \cref{lem:joint smoothing at transition} we can find $\phi^{(\eps,j,a)}$ such that $P(\phi^{(j,a)}, \phi^{(\eps,j,a)}) = \bigO(\sqrt{\eps})$ and for all cuts $\Delta_B \neq B$ we have
  \begin{align*}
    H_{\min}(\Delta_B \setminus B \vert B)_{\phi^{(\eps,j,a)}} \geq H_{\min}^{\eps}(\Delta_B \setminus B \vert B Y^{\Delta_B})_{\phi^{(j,a)}}
  \end{align*}
  and hence for $\Delta_B \neq V_b' \cup B$
  \begin{align*}
    H_{\min}(\Delta_B \setminus B \vert B)_{\phi^{(\eps,j,a)}} \geq H_{\min}^{\eps}(\Delta_A \setminus \Gamma_A \vert \Gamma_A)_{\phi}
  \end{align*}
  and for $\Delta_B = V_b' \cup B$
  \begin{align*}
    H_{\min}(\Delta_B \setminus B \vert B)_{\phi^{(\eps,j,a)}} \geq n^{\frac14}.
  \end{align*}

  Therefore, \cref{prop:one-shot decoupling} allows us to conclude that if we let $\sigma^{(\eps,j,a)}$ denote the random tensor network state with background state $\phi^{(\eps,j,a)}$, then
  \begin{align*}
    \EE \norm{\sigma^{(\eps,j,a)}_B - \phi^{(\eps,j,a)}_B}_1 = \bigO\left(2^{-\frac12 \KK(n)} + 2^{-\frac12 n^{\frac14}}\right),
  \end{align*}
  and by \cref{lem:smoothing rtn state} and the fact that $T(\phi^{(j,a)}, \phi^{(\eps,j,a)}) \leq P(\phi^{(j,a)}, \phi^{(\eps,j,a)})$ this implies
  \begin{align*}
    \begin{split}
      \EE \norm{\sigma^{(j,a)}_B - \phi^{(j,a)}_B}_1 &\leq \EE \norm{\sigma^{(\eps,j,a)}_B - \phi^{(\eps,j,a)}_B}_1 + \EE \norm{\sigma^{(\eps,j,a)}_B - \sigma^{(j,a)}_B}_1   +\norm{\phi^{(\eps,j,a)}_B - \phi^{(j,a)}_B}_1 \\
      &= \bigO\left(2^{-\frac12 \KK(n)} + 2^{-\frac12 n^{\frac14}} + \sqrt{\eps(n)}\right).
    \end{split}
  \end{align*}
  It follows that
  \begin{align}\label{eq:B spectrum sigma a}
    \EE \norm{\sigma^{(a)}_B - \phi^{(a)}_B}_1 \leq \sum_j \EE \norm{\sigma^{(j,a)}_B - \phi^{(j,a)}_B}_1 = \bigO\left(n^{\alpha + \frac12}\log(n)(2^{-\frac12 \KK(n)} + 2^{-\frac12 n^{\frac14}} + \sqrt{\eps(n)})\right) \to 0
  \end{align}
  using that the number of terms is $\bigO(n^{\alpha + \frac12}\log(n))$.
  This expression goes to zero since we assume $\KK(n) = \Omega(\log(n)^2)$ and $\eps(n) = \bigO(n^{-\gamma})$.
  A completely symmetric argument shows that
  \begin{align}\label{eq:sigma j b}
    \EE \norm{\sigma^{(j,b)}_{\bar B} - \phi^{(j,b)}_{\bar B}}_1 = \bigO\left(2^{-\frac12 \KK(n)} + 2^{-\frac12 n^{\frac14}} + \sqrt{\eps(n)}\right)
  \end{align}
  and hence
  \begin{align}\label{eq:B complement spectrum}
    \begin{split}
      \EE \norm{\sigma^{(b)}_{\bar{B}} - \phi^{(b)}_{\bar{B}}}_1 = \bigO(n^{\alpha + \frac12}\log(n)(2^{-\frac12 \KK(n)} + 2^{-\frac12 n^{\frac14}} + \eps(n))) \to 0.
    \end{split}
  \end{align}
  We have hence shown that $\sigma^{(j,a)}_B \approx \phi_B^{(j,a)}$ and $\sigma^{(j,b)}_{\bar B} \approx \phi_{\bar B}^{(j,b)}$ in expectation.

  We now show that we can approximate the state $\bar{\sigma}_B$ on $G'$ by the sum of these states. We first claim that we can sum over the $j$ index to obtain the $\sigma^{(a)}_B$ and $\sigma^{(b)}_B$. To see this, we introduce some notation: given a self-adjoint matrix $X$, let us write $\supp(X)$ for the image, or support, of $X$. In other words, $\supp(X)$ is the space spanned by the eigenvectors of $X$ with nonzero eigenvalue. We let $\HH_{B,j}$ be the support of the reduced state of $\ket{\Phi^+_{1,j}}$ on the $B$ system, and we similarly let $\HH_{\bar{B},j}$ be the support of the reduced state of $\ket{\Phi^+_{2,j}}$ on the $\bar{B}$ system.
  Then for $j \neq k$, the subspace $\HH_{B,j}$ is orthogonal to $\HH_{B,k}$ and similarly $\HH_{\bar{B},j}$ is orthogonal to $\HH_{\bar{B},k}$.
  By construction, it is clear that $\supp(\sigma^{(j,a)}_{\bar{B}}) \subseteq \HH_{\bar{B},j}$ and $\supp(\sigma^{(j,b)}_{B}) \subseteq \HH_{B,j}$.
  This orthogonality for indices $j\neq k$ then makes it clear that we can sum the reduced states, as in \cref{fig:sigma orth}:
  \begin{align*}
    \sigma^{(a)}_B = \sum_{j} \sigma^{(j,a)}_B, \qquad \sigma^{(b)}_{\bar{B}} = \sum_{j} \sigma^{(j,b)}_{\bar{B}},
  \end{align*}
  and by the purity of $\sigma^{(b)}$, we have that $\sigma_B^{(b)}$ has the same spectrum as $\sigma_{\bar{B}}^{(b)}$, We remind the reader that the $(a)$ and $(b)$ indices indicate that the minimal cut for the state is given by $\gamma_{A,1}$ and $\gamma_{A,2}$, respectively. Naturally, this also holds for the summed states, by orthogonality of the summands.

  We now claim that $\bar\sigma_B \approx \sigma^{(a)}_B +  \sigma^{(b)}_B$, i.e. the spectrum of the approximate state on $G'$, as in \cref{eq:bar sigma vs sigma}, is well-approximated by the sum of two states with differing minimal cuts. We estimate their difference by
  \begin{align*}
    \EE \norm{\sigma^{(a)}_B + \sigma^{(b)}_B - \bar\sigma_B}_1 & = \EE \norm{\sum_{j,k} \tr_{\bar{B}}[\ket{\sigma^{(j,a)}}\bra{\sigma^{(k,b)}} + \ket{\sigma^{(k,b)}}\bra{\sigma^{(j,a)}}]}_1 \\
                                                                & \leq 2 \sum_{j,k}\EE \norm{\tr_{\bar{B}}[\ket{\sigma^{(j,a)}}\bra{\sigma^{(k,b)}}]}_1                                        \\
                                                                & = 2\sum_{j,k} \EE F(\bar\sigma^{(j,a)}_{\bar{B}}, \bar\sigma^{(k,b)}_{\bar{B}})
  \end{align*}
  by \cref{lem:tracenorm off-diagonal}.
  Now, if $\rho$, $\sigma_1$ and $\sigma_2$ are positive operators, then by Lemma B.9 in \cite{fawzi2015quantum}, we have
  \begin{align*}
    \abs{F(\rho,\sigma_1) - F(\rho,\sigma_2)} \leq \sqrt{\norm{\sigma_1 - \sigma_2}_1 \tr[\rho]}.
  \end{align*}
  So, we find that
  \begin{align*}
    \EE F(\sigma^{(j,a)}_{\bar{B}}, \sigma^{(k,b)}_{\bar{B}}) & \leq \EE \abs{F(\sigma^{(j,a)}_{\bar{B}}, \sigma^{(k,b)}_{\bar{B}}) - F(\sigma^{(j,a)}_{\bar{B}}, \phi^{(k,b)}_{\bar{B}})} + \EE F(\sigma^{(j,a)}_{\bar{B}}, \phi^{(k,b)}_{\bar{B}})) \\
                                                              & \leq \EE \sqrt{\tr[\sigma^{(j,a)}_{\bar{B}}] \norm{\sigma^{(k,b)}_{\bar{B}} - \phi^{(k,b)}_{\bar{B}})}_1} + \EE F(\sigma^{(j,a)}_{\bar{B}}, \phi^{(k,b)}_{\bar{B}})                   \\
                                                              & \leq \sqrt{\EE \tr[\sigma^{(j,a)}_{\bar{B}}]} \sqrt{\EE \norm{\sigma^{(k,b)}_{\bar{B}} - \phi^{(k,b)}_{\bar{B}})}_1} + \EE F(\sigma^{(j,a)}_{\bar{B}}, \phi^{(k,b)}_{\bar{B}}),
  \end{align*}
  where we used Cauchy-Schwarz in the last step.
  For the first term, we may use \cref{eq:sigma j b} and $\EE \tr[\bar\sigma^{(j,a)}_{\bar{B}}] \leq 1$ to see
  \begin{align*}
    \sqrt{\EE \tr[\sigma^{(j,a)}_{\bar{B}}]} \sqrt{\EE \norm{\sigma^{(k,b)}_{\bar{B}} - \phi^{(k,b)}_{\bar{B}})}_1} = \bigO(2^{-\frac14 \KK(n)} + 2^{-\frac14 n^{\frac14}} + (\eps(n))^{\frac14}).
  \end{align*}
  For the second term we use a basic estimate on the fidelity: if $\rho$ and $\sigma$ are positive operators, then
  \begin{align*}
    F(\rho,\sigma) & = \norm{\sqrt{\rho}\sqrt{\sigma}}_1 \leq \norm{\sqrt{\rho}}_1 \norm{\sqrt{\sigma}}_{\infty} \leq \sqrt{\rank(\rho)} \norm{\sqrt{\rho}}_2 \norm{\sqrt{\sigma}}_{\infty} = \sqrt{\rank(\rho)\tr[\rho] \norm{\sigma}_{\infty}},
  \end{align*}
  which follows by H\"older's inequality and the standard relation between Schatten $1$ and $2$-norms. Then the second term can be estimated as follows: write $P_{\bar{B},j}$ for the projection onto $\HH_{\bar{B}, j}$, then
  \begin{align*}
    F(\sigma^{(j,a)}_{\bar{B}}, \phi^{(k,b)}_{\bar{B}}) & = F(\sigma^{(j,a)}_{\bar{B}}, P_{\bar{B},j}\phi^{(k,b)}_{\bar{B}}P_{\bar{B},j})                                                            \\
                                                        & \leq \sqrt{\tr[\sigma^{(j,a)}_{\bar{B}}] \rank(\sigma^{(j,a)}_{\bar{B}}) \norm{P_{\bar{B},j}\phi^{(k,b)}_{\bar{B}}P_{\bar{B},j}}_{\infty}} \\
                                                        & \leq \sqrt{\tr[\sigma^{(j,a)}_{\bar{B}}] \frac{2^{-n^{1/4}}}{p_{2,j}} p_{2,j}}                                                             \\ & \leq \sqrt{\tr[\sigma^{(j,a)}_{\bar{B}}] 2^{-n^{1/4}}}
  \end{align*}
  using \cref{eq:rank sigma a} and $\norm{P_{\bar{B},j}\phi^{(k,b)}_{\bar{B}}P_{\bar{B},j}}_{\infty} \leq p_{2,j}$.
  Thus
  \begin{align*}
    \EE F(\sigma^{(j,a)}_{\bar{B}}, \phi^{(k,b)}_{\bar{B}}) \leq \sqrt{ \EE \tr[\sigma^{(j,a)}_{\bar{B}}] 2^{-n^{1/4}}} \leq 2^{-\frac12 n^{1/4}}
  \end{align*}
  and we may estimate
  \begin{align*}
    \EE F(\sigma^{(j,a)}_{\bar{B}}, \sigma^{(k,b)}_{\bar{B}}) = \bigO(2^{-\frac14 \KK(n)} + 2^{-\frac14 n^{\frac14}} + (\eps(n))^{\frac14}).
  \end{align*}
  and therefore
  \begin{align}\label{eq:split sigma}
    \begin{split}
      \EE \norm{\sigma^{(a)}_B + \sigma^{(b)}_B - \bar\sigma_B}_1 &\leq 2\sum_{j,k} \EE F(\bar\sigma^{(j,a)}_{\bar{B}}, \bar\sigma^{(k,b)}_{\bar{B}}) = \bigO(n^{2\alpha + 1} \log(n)^2(2^{-\frac14 \KK(n)} + 2^{-\frac14 n^{\frac14}} + (\eps(n))^{\frac14})) \to 0
    \end{split}
  \end{align}
  using that the number of terms is $\bigO(n^{2\alpha + 1}\log(n)^2)$, $\KK(n) = \Omega(\log(n)^2)$ and $\eps(n) = \bigO(n^{-\gamma})$, and our choice of $\alpha$ is such that $\bigO(n^{2\alpha + 1} \log(n)^2 n^{-\frac14 \gamma})$ goes to zero.

  To summarize, we have shown that we can approximate the random tensor network state $\bar \sigma_B$ on $G'$ by $\sigma^{(a)}_B + \sigma^{(b)}_B$, and we can approximate $\sigma^{(a)}_B$ by $\phi^{(a)}_B$. Moreover, we can approximate $\sigma^{(b)}_{\bar{B}}$ by $\phi^{(b)}_{\bar{B}}$, so the spectrum of $\sigma^{(b)}_B$ can be approximated by the spectrum of $\phi^{(b)}_{\bar{B}}$. Recall that $\sigma^{(a)}$ and $\sigma^{(b)}$ are the random tensor network states on $G'$ which take $\phi^{(a)}$ and $\phi^{(b)}$ as background states, respectively. Furthermore, recall that our larger goal is to show that the spectra of the reduced background states $\phi_B^{(a)}$ and $\phi_B^{(b)}$ are, in some sense, close to the spectrum of the approximate random tensor network state $\sigma_B$ on the subgraph $G'$. This is a two-step process:
  \begin{enumerate}
    \item Show that $\phi^{(a)}_B$ and $\sigma_B^{(a)}$ can be slightly deformed to $\phi^{(a,\perp)}_B$ and $\sigma_B^{(a,\perp)}$, states with support orthogonal to $\supp(\sigma_B^{(j,b)})$ for all $j$.

    \item Use the slightly deformed states to show that $\spec_+(\phi^{(a)}_B\oplus\phi^{(b)}_B)$ is close to $\spec_+(\sigma^{(a)}_B+\sigma^{(b)}_B)$, and follow the chain of approximations: $\sigma^{(a)}_B+\sigma^{(b)}_B \approx \bar\sigma_B \approx \sigma_B$ along with $\spec_+(\sigma_B) \approx \spec_+(\rho_A)$ to conclude that $\spec_+(\phi^{(a)}_B\oplus\phi^{(b)}_B) \approx \spec_+(\rho_A)$
  \end{enumerate}

  Now, consider the state $\sigma^{(j,b)}_B$. It has support contained in $\HH_{B,j}$ and by \cref{eq:rank sigma b}, it has $\rank(\sigma^{(j,b)}_B) \leq 2^{-n^{1/4}} p_{1,j}^{-1}$.
  Let $Q_j$ be the projection onto $\supp(\sigma^{(j,b)}_B)$, so $\supp(Q_j) \subseteq \HH_{B,j}$ and $\rank(Q_j) \leq 2^{-n^{1/4}} p_{1,j}^{-1}$.
  We find that
  \begin{align*}
    \norm{Q_j \phi^{(a)}_B}_1 \leq \rank(Q_j) \norm{P_{B,j} \phi^{(a)}_B P_{B,j}}_{\infty} \leq \frac{2^{-n^{1/4}}}{p_{1,j}} p_{1,j} = 2^{-n^{1/4}}.
  \end{align*}
  If we let $Q = \sum_j Q_j$, then we see that $(I - Q)\phi^{(a)}(I - Q)$ is a small deformation of $\phi_B^{(a)}$:
  \begin{align}\label{eq:project phi}
    \norm{\phi^{(a)}_B - (I - Q)\phi^{(a)}(I - Q)}_1 \leq 2\norm{Q \phi^{(a)}}_1 \leq 2\sum_j \norm{Q_j \phi^{(a)}_B}_1= \bigO(n^{\nicefrac92}\log(n)2^{-n^{1/4}}) \to 0.
  \end{align}
  With this property in hand, we can show that $\sigma^{(a)}_B$ can similarly be deformed. Let $\sigma^{(a,\perp)}_B = (I - Q)\sigma^{(a)}(I - Q)$. By construction, $\sigma^{(a,\perp)}_B$ has support orthogonal to that of $\sigma^{(b)}_B$, and hence, $\spec_+(\sigma^{(a,\perp)}_B + \sigma^{(b)}_B) = \spec_+(\sigma^{(a,\perp)}_B \oplus \sigma^{(b)}_{\bar{B}})$.
  We observe that $\sigma^{(a,\perp)}_B$ is a small deformation of $\sigma^{(a)}_B$:
  \begin{align}\label{eq:perturb sigma}
    \begin{split}
      \EE\norm{\sigma_{B}^{(a)} - \sigma^{(a,\perp)}_B}_1 &\leq \EE \norm{\sigma_{B}^{(a)} - \phi_{B}^{(a)}}_1\\
      &\qquad + \EE \norm{(I- Q)\sigma_{B}^{(a)}(I- Q) - (I- Q)\phi_{B}^{(a)} (I - Q)}_1\\
      &\qquad + \norm{\phi^{(a)}_B - (I - Q)\phi^{(a)}(I - Q)}_1\\
      &\leq 2\EE \norm{\sigma_{B}^{(a)} - \phi_{B}^{(a)}}_1 + \norm{\phi^{(a)}_B - (I - Q)\phi^{(a)}(I - Q)}_1
    \end{split}
  \end{align}
  which goes to zero, using \cref{eq:B spectrum sigma a} for the first term, and \cref{eq:project phi} for the second term.
  By construction, $(I - Q)\phi^{(a)}(I - Q)$ has support orthogonal to $\supp(\sigma^{(b)}_{B})$.
  This allows us to bound
  \begin{align*}
     & \EE \norm{\spec_+(\phi^{(a)}_B \oplus \phi^{(a)}_{\bar{B}}) - \spec_+(\sigma^{(a)}_B + \sigma^{(b)}_B)}_1                                                                                    \\
     & \qquad \leq \EE \norm{\sigma_{B}^{(a)} - \sigma^{(a,\perp)}_B}_1 + \EE \norm{\spec_+(\phi^{(a)}_B \oplus \phi^{(a)}_{\bar{B}}) - \spec_+(\sigma^{(a,\perp)}_B \oplus \sigma^{(b)}_B)}_1      \\
     & \qquad = \EE \norm{\sigma_{B}^{(a)} - \sigma^{(a,\perp)}_B}_1 + \EE \norm{\spec_+(\phi^{(a)}_B)- \spec_+(\sigma^{(a,\perp)}_B) + \spec_+(\phi^{(a)}_{\bar{B}})  - \spec_+(\sigma^{(b)}_B)}_1 \\
     & \qquad \leq \EE \norm{\sigma_{B}^{(a)} - \sigma^{(a,\perp)}_B}_1 + \EE \norm{\phi^{(a)}_B - \sigma^{(a,\perp)}_B}_1 + \EE \norm{\spec_+(\phi^{(a)}_{\bar{B}}) - \spec_+(\sigma^{(b)}_B)}_1,
  \end{align*}
  where we have used the bound on the difference of spectra by the trace distance of the corresponding states \cref{eq:spectrum bound}.
  Each term on the RHS can be shown to converge to zero. The first term goes to zero by \cref{eq:perturb sigma}.
  Similarly, the second term can be bounded by
  \begin{align*}
    \EE \norm{\phi^{(a)}_B - \sigma^{(a,\perp)}_B}_1 \leq \EE \norm{\phi^{(a)}_B - \sigma^{(a)}_B}_1 + \EE \norm{\sigma^{(a)}_B - \sigma^{(a,\perp)}_B}_1
  \end{align*}
  which goes to zero by \cref{eq:perturb sigma} and \cref{eq:B spectrum sigma a}.
  The third term can be estimated by observing that $\spec_+(\sigma^{(b)}_B) = \spec_+(\sigma^{(b)}_{\bar{B}})$ and
  \begin{align*}
    \EE \norm{\spec_+(\phi^{(a)}_{\bar{B}}) - \spec_+(\sigma^{(b)}_{B})}_1 \leq \EE \norm{\phi^{(a)}_{\bar{B}} - \sigma^{(b)}_{\bar{B}}}_1
  \end{align*}
  which goes to zero by \cref{eq:B complement spectrum}. We conclude that $\EE \norm{\spec_+(\phi^{(a)}_B \oplus \phi^{(a)}_{\bar{B}}) - \spec_+(\sigma^{(a)}_B + \sigma^{(b)}_B)}_1 \to 0$.

  We now follow a chain of approximations to get the desired closeness between $\spec_+(\phi^{(a)}_B \oplus \phi^{(a)}_{\bar{B}})$ and $\spec_+(\rho_A)$. Using \cref{eq:bar sigma vs sigma} and \cref{eq:split sigma}, we can see that
  \begin{align*}
    \EE \norm{\sigma_B - (\sigma^{(a)}_B + \sigma^{(b)}_B)}_1 \leq \EE \norm{\sigma_B - \bar\sigma_B}_1 + \EE\norm{\sigma^{(a)}_B + \sigma^{(b)}_B - \bar\sigma_B}_1 \to 0,
  \end{align*}
  so $\EE \norm{\sigma_B - (\sigma^{(a)}_B + \sigma^{(b)}_B)}_1$ will converge to 0. This then implies
  \begin{align*}
    \EE \norm{\spec_+(\sigma_B) - \spec_+(\phi^{(a)}_B \oplus \phi^{(a)}_{\bar{B}})}_1 & \leq \EE \norm{\sigma_B - (\sigma^{(a)}_B + \sigma^{(b)}_B)}_1                                                      \\
                                                                                       & \qquad + \EE \norm{\spec_+(\phi^{(a)}_B \oplus \phi^{(a)}_{\bar{B}}) - \spec_+(\sigma^{(a)}_B + \sigma^{(b)}_B)}_1,
  \end{align*}
  and as we have just shown, both terms on the RHS converge to zero.
  Together with \cref{eq:tilde rho vs sigma}, this yields the desired relationship to the spectrum of $\rho$:
  \begin{align}\label{eq:rho vs phi a plus phi b}
    \begin{split}
      \EE \norm{\spec_+(\phi^{(a)}_B \oplus \phi^{(b)}_{\bar{B}}) - \spec_+(\rho_A)}_1 &\leq \EE \norm{\spec_+(\phi^{(a)}_B \oplus \phi^{(b)}_{\bar{B}}) - \spec_+(\sigma_{B})}_1\\
      &\qquad + \EE \norm{\spec_+(\rho_A) - \spec_+(\sigma_{B})}_1 \rightarrow 0.
    \end{split}
  \end{align}

  We pause here again to note that we have accomplished Step 3: approximating the spectrum of $\rho_A$ by the spectra of the (approximate) background states. The final step will then be to consider the convergence properties of the distributions on the background states, which will then translate to convergence for the distribution on $\rho_A$.

  Explicitly, we want to relate the above result to $\minstar(\nu_1,\nu_2)$.
  First, we observe that by \cref{lem:trace distance convergence} and \cref{eq:phi vs tilde phi}, $\nu_{\tilde \phi_B} \Rightarrow \nu_1$ and $\nu_{\tilde \phi_{\bar{B}}} \Rightarrow \nu_2$, and since $\min$ is a continuous function, convergence holds for the pushforward measure:
  \begin{align}\label{eq:min phi tilde converges}
    \minstar(\nu_{\tilde \phi_{B}},\nu_{\tilde \phi_{\bar{B}}}) \Rightarrow \minstar(\nu_1,\nu_2).
  \end{align}
  We compute
  \begin{align}\label{eq:min mu phi tilde}
    \begin{split}
      \minstar(\nu_{\tilde \phi_{B}},\nu_{\tilde \phi_{\bar{B}}}) &= \sum_{j,k} q_{1,k}q_{2,j}\delta_{\frac{1}{\sqrt{n}}[\min(\log(\frac{1}{p_{1,k}}),\log(\frac{1}{p_{2,j}})) - H(n)]}\\
      &= \sum_{j}\bigl( \sum_{p_{1,k} \geq 2^{n^{1/4}}p_{2,j}} q_{1,k}q_{2,j}\delta_{\frac{1}{\sqrt{n}}[\log(\frac{1}{p_{1,k}}) - H(n)]} \\
      &\quad+ \sum_{p_{1,k} \leq 2^{-n^{\nicefrac{1}{4}}}p_{2,j}} q_{1,k}q_{2,j}\delta_{\frac{1}{\sqrt{n}}[\log(\frac{1}{p_{2,j}}) - H(n)]} \bigr)\\
      &\quad+ \sum_{2^{-n^{\nicefrac{1}{4}}}p_{2,j} \leq p_{1,k} \leq 2^{n^{\nicefrac{1}{4}}}p_{2,j}} q_{1,k}q_{2,j}\delta_{\frac{1}{\sqrt{n}}[\min(\log(\frac{1}{p_{1,k}}),\log(\frac{1}{p_{2,j}})) - H(n)]}\\
      &= \sum_{k} q_{1,k}q_{2,>k}\delta_{\frac{1}{\sqrt{n}}[\log(\frac{1}{p_{1,k}}) - H(n)]} + \sum_{j} q_{1,> j}q_{2,j}\delta_{\frac{1}{\sqrt{n}}[\log(\frac{1}{p_{2,j}}) - H(n)]}  + \nu_0
    \end{split}
  \end{align}
  where we have written
  \begin{align*}
    q_{1,>j} & = \sum_{p_{1,k} \geq 2^{n^{1/4}}p_{2,j}} q_{1,k}              \\
    q_{2,>k} & = \sum_{p_{1,k} \leq 2^{-n^{\nicefrac{1}{4}}}p_{2,j}} q_{2,j}
  \end{align*}
  and
  \begin{align*}
    \nu_0 = \sum_{2^{-n^{\nicefrac{1}{4}}}p_{2,j} \leq p_{1,k} \leq 2^{n^{\nicefrac{1}{4}}}p_{2,j}} q_{1,k}q_{2,j}\delta_{\frac{1}{\sqrt{n}}[\min(\log(\frac{1}{p_{1,k}}),\log(\frac{1}{p_{2,j}})) - H(n)]}.
  \end{align*}
  Then by \cref{eq:middle measure zero}, we see that $\nu_0 \Rightarrow 0$.
  Using \cref{eq:min phi tilde converges}, we conclude
  \begin{align}\label{eq:min phi tilde min nu converges}
    \minstar(\nu_{\tilde \phi_{B}},\nu_{\tilde \phi_{\bar{B}}}) - \nu_0 \Rightarrow \minstar(\nu_1,\nu_2).
  \end{align}
  On the other hand, by construction, we have
  \begin{align*}
    \nu_{\spec(\phi^{(a)}_B \oplus \phi^{(b)}_{\bar{B}})} = \nu_{\phi^{(a)}_B} + \nu_{\phi^{(b)}_{\bar{B}}}.
  \end{align*}
  We can explicitly write down
  \begin{align*}
    \nu_{\phi^{(a)}_B} + \nu_{\phi^{(b)}_{\bar{B}}} = \sum_k q_{1,k}q_{2,>k}\delta_{\frac{1}{\sqrt{n}}[\log(\frac{1}{p_{1,k}q_{2,>k}}) - H(n)]} + \sum_{j} q_{1,> k}q_{2,j}\delta_{\frac{1}{\sqrt{n}}[\log(\frac{1}{p_{2,j}q_{1,> k}}) - H(n)]}.
  \end{align*}
  Since the $q_{1,k}$ and $q_{2,j}$ are at least $n^{-\beta}$ (if the corresponding term is nonzero), $\abs{\log(q_{1,>j})} \leq \beta\log(n)$ and $\abs{\log(q_{2,>k})} \leq \beta\log(n)$. With these bounds in mind, we can compare to the last line of \cref{eq:min mu phi tilde}. More precisely, if we let $\nu'_{\bar \phi} = \minstar(\nu_{\tilde \phi_{B}},\nu_{\tilde \phi_{\bar{B}}}) - \nu_0$, then for uniformly continuous $f \in C_b(\RR)$, we have
  \begin{align*}
    \lim_{n \to \infty} \abs{\int f(x) \d \nu'_{\bar \phi}(x) - \int f(x) \d \nu_{\phi^{(a)}_B}(x) - \int f(x) \d \nu_{\phi^{(b)}_{\bar{B}}}(x)} = 0
  \end{align*}
  We conclude that
  \begin{align}\label{eq:split phi converges to min}
    \nu_{\phi^{(a)}_B} + \nu_{\phi^{(b)}_{\bar{B}}} \Rightarrow \minstar(\nu_1,\nu_2).
  \end{align}
  Now, we can finally put all of our ingredients together. Recall the statement of convergence in probability implied by the convergence of spectra, in expectation, as in the second part of \cref{lem:trace distance convergence}. Then using \cref{eq:rho vs phi a plus phi b} as the vectors $p^{(n)}$ and $q^{(n)}$ in the statement of \cref{lem:trace distance convergence} and the convergence of distributions in \cref{eq:split phi converges to min}, we conclude
  \begin{align}\label{eq:smoothed state converges}
    \nu_{\rho_A} \Rightarrow \minstar(\nu_1,\nu_2)
  \end{align}

\end{proof}

\subsection{Computing entropies with two minimal cuts}
Ideally, we would like to use \cref{thm:measure convergence surface transition} to compute von Neumann entropies of random tensor network states. However, \cref{thm:measure convergence surface transition} alone is too weak to  allow us to directly compute entropies up to $o(\sqrt{n})$ corrections, as weak convergence of the spectrum does not directly imply convergence of the mean.
However, we kept track of various approximation errors in the proof of \cref{thm:measure convergence surface transition}, and we will use these to show that  with slightly stronger assumptions, these errors allow to compute the entropy up to $\bigO(\log(n))$ corrections.

To begin, we first bound the difference in the entropy of a sum of density matrices and the sum of the entropies of the individual density matrices:
\begin{lem}\label{lem:entropy convexity unnormalized}
  Suppose $\{p_i\}_{i=1}^m$ is a subnormalized distribution and $\{\rho_i\in \PSD(\HH) \}_{i=1}^m$ is a collection of (unnormalized) density matrices and let $\rho = \sum_i p_i \rho_i \in \PSD(\HH)$.
  Suppose that $C^{-1} \leq \tr[\rho_i] \leq C$, and $C^{-1} \leq \tr[\rho] \leq C$, then
  \begin{align*}
    \abs{H(\rho) - \sum_{i=1}^m p_i H(\rho_i)} \leq C(\log(m) + 2\log(C)).
  \end{align*}
\end{lem}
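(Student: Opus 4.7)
The strategy is to decouple the scale (trace) of each operator from its normalized part, and apply the standard bound for the entropy of a convex combination of normalized states to the latter. Concretely, I would set $\alpha_i = \tr[\rho_i]$, $\hat\rho_i = \rho_i/\alpha_i$, $\alpha = \tr[\rho]$, and $\hat\rho = \rho/\alpha$. Since $\alpha = \tr[\rho] = \sum_i p_i \alpha_i$, the numbers $\hat p_i := p_i\alpha_i/\alpha$ form a genuine (normalized) probability distribution, and $\hat\rho = \sum_i \hat p_i \hat\rho_i$ is a convex combination of normalized states in $\Peq(\HH)$.

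By homogeneity of the von Neumann entropy, $H(\beta\sigma) = -\beta\log\beta + \beta H(\sigma)$ for any $\sigma \in \Peq(\HH)$ and $\beta > 0$ (this follows from $-\tr[(\beta\sigma)\log(\beta\sigma)] = -\beta\log\beta\cdot\tr[\sigma] - \beta\tr[\sigma\log\sigma]$). Applying this separately to $\rho = \alpha\hat\rho$ and to each $\rho_i = \alpha_i\hat\rho_i$, and then subtracting, a short computation yields
\begin{align*}
  H(\rho) - \sum_{i=1}^m p_i H(\rho_i) = \Big[{-\alpha\log\alpha} + \sum_{i=1}^m p_i \alpha_i \log\alpha_i\Big] + \alpha\Big[H(\hat\rho) - \sum_{i=1}^m \hat p_i H(\hat\rho_i)\Big],
\end{align*}
where I used $\alpha\hat p_i = p_i\alpha_i$ to convert the mixture weights in the second bracket.

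It remains to bound the two brackets. For the first, since $\alpha,\alpha_i \in [C^{-1},C]$ we have $|\log\alpha|,|\log\alpha_i| \le \log C$, and using $\sum_i p_i \alpha_i = \alpha \le C$, the bracket has absolute value at most $\alpha\log C + \alpha\log C \le 2C\log C$. For the second bracket, I invoke the textbook inequality for mixtures of normalized density matrices: concavity of $H$ gives $H(\hat\rho) \ge \sum_i \hat p_i H(\hat\rho_i)$, while applying subadditivity to the classical-quantum extension $\sum_i \hat p_i \proj{i}\ot \hat\rho_i$ yields $H(\hat\rho) \le \sum_i \hat p_i H(\hat\rho_i) + H(\{\hat p_i\}) \le \sum_i \hat p_i H(\hat\rho_i) + \log m$. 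Multiplying by $\alpha \le C$ bounds this term by $C\log m$, and summing the two contributions gives the desired $C(\log m + 2\log C)$.

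There is no real obstacle: once the scale factors $\alpha_i,\alpha$ are factored out via the homogeneity identity, the estimate reduces to the standard normalized-mixture bound together with elementary control of $\log\alpha,\log\alpha_i$ coming from the hypothesis $\alpha,\alpha_i \in [C^{-1},C]$.
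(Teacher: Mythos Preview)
Your proof is correct and follows essentially the same approach as the paper: normalize each $\rho_i$ and $\rho$, reweight to obtain a genuine convex combination of normalized states, apply the standard concavity/Holevo bound $0 \le H(\hat\rho) - \sum_i \hat p_i H(\hat\rho_i) \le \log m$, and control the residual scale terms using $\alpha,\alpha_i \in [C^{-1},C]$. Your explicit decomposition formula makes the bookkeeping slightly cleaner than the paper's version, but the underlying argument is the same.
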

\begin{proof}
  If the $\rho_i$ are normalized and $\sum_i p_i = 1$, then by the Holevo bound we have
  \begin{align*}
    \sum_i p_i H(\rho_i) \leq H(\sum_i p_i \rho_i) \leq \sum_i p_i H(\rho_i) + H(\{p_i\}) \leq \sum_i p_i H(\rho_i) + \log(m).
  \end{align*}
  Now let $q_i = \tr[\rho_i]$, $P = \sum_i {q_i p_i}$ and let $\sigma_i = \frac{\rho_i}{q_i}$, $r_i = \frac{p_i q_i}{P}$.
  Then $H(\sum_i p_i \rho_i) = H(P\sum_i r_i \sigma_i) = P H(\sum_i r_i \sigma_i) - P\log(P)$.
  On the other hand
  \begin{align*}
    P\sum_i r_i H(\sigma_i) = \sum_i p_i H(\rho_i) + \sum_i p_i\log(q_i)
  \end{align*}
  so it follows that
  \begin{align*}
    \abs{H(\sum_i p_i \rho_i) - \sum_i p_iH(\rho_i)} \leq P\log(m) + \abs{P\log(P)} + \max \abs{\log(q_i)}\leq C(\log(m) + 2\log(C)).
  \end{align*}
\end{proof}

The idea is that for a random tensor network state, we will split up the background state as a superposition of states which are maximally entangled along the two minimal cuts and then use \cref{lem:entropy convexity unnormalized}.
This approach formalizes an argument sketched in \cite{akers2020leading}.

To state our result, we introduce a new function: for vectors of positive numbers $p \in \RR^{d_1}$, $q \in \RR^{d_2}$, we let
\begin{align*}
  H^*(p,q) := \sum_{i,j} p_i q_j \min\mleft(\log\left(\frac{1}{p_i}\right),\log\left(\frac{1}{q_j}\right) \mright).
\end{align*}
A key tool we will need is the continuity of the entropy: if $\rho, \sigma \in \Pleq(\HH)$ are quantum states on a Hilbert space $\HH$ of dimension $d$ with $T(\rho,\sigma) \leq \frac1e$, then the Fannes-Audenaert inequality states that
\begin{align}\label{eq:fannes}
  \abs{H(\rho) - H(\sigma)} \leq T(\rho,\sigma)\log d - T(\rho,\sigma)\log(T(\rho,\sigma)).
\end{align}
Let $\eta$ be the function defined by $\eta(x) = x\log x$.
Then, if we write $\spec(\rho) = \{p_j\}_{j=1}^d$ and $\spec(\sigma) = \{q_j\}_{j=1}^d$ for \cref{eq:fannes} one can actually show
\begin{align*}
  \abs{H(\rho) - H(\sigma)} \leq \sum_{j=1}^d \, \abs{\eta(p_j) - \eta(q_j)} \leq T(\rho,\sigma)\log d - T(\rho,\sigma)\log(T(\rho,\sigma)).
\end{align*}
We will use this to show continuity of $H^*$ as well.
Consider $\rho_i, \sigma_i \in \Pleq(\HH_i)$ with $\dim(\HH_i) \leq d$ for $i = 1,2$.
We let $\spec(\rho_i) = \{p_{i,j}\}_{j=1}^{d_i}$ and $\spec(\sigma_i) = \{q_{i,j}\}_{j=1}^{d_i}$.
For real numbers $x_i, y_i$ we have
\begin{align*}
  \abs{\min(x_1,y_1) - \min(x_2,y_2)} \leq \abs{x_1 - x_2} + \abs{y_1 - y_2}.
\end{align*}
Then we see that
\begin{align*}
   & \abs{H^*(\spec(\rho_1),\spec(\rho_2)) - H^*(\spec(\sigma_1),\spec(\sigma_2))}                                                                                                                                                                                     \\
   & \qquad \leq \sum_{j,k} \, \abs{ p_{1,j} p_{2,k} \min \left( \log \mleft(\frac{1}{p_{1,j}}\mright) ,\log \mleft(\frac{1}{p_{2,k}}\mright)\right) - q_{1,j} q_{2,k} \min \left(\log \mleft(\frac{1}{q_{1,j}}\mright), \log \mleft(\frac{1}{q_{2,k}}\mright)\right)} \\
   & \qquad \leq \sum_{j,k} \, \abs{ p_{1,j} p_{2,k} \log p_{1,j} - q_{1,j} q_{2,k}  \log q_{1,j}} + \abs{ p_{1,j} p_{2,k} \log p_{2,k} - q_{1,j} q_{2,k} \log q_{2,k}}.
\end{align*}
We may estimate the first term by
\begin{align*}
  \sum_{j,k} \, \abs{ p_{1,j} p_{2,k} \log p_{1,j} - q_{1,j} q_{2,k}  \log q_{1,j}} & \leq \sum_{j,k} p_{2,k}\abs{ p_{1,j} \log p_{1,j} - q_{1,j} \log q_{1,j}} - \sum_{j,k} \, \abs{ p_{2,k} - q_{2,k}} \, q_{1,j}  \log q_{1,j} \\
                                                                                    & \leq \sum_j \, \abs{\eta(p_{1,j}) - \eta(p_{2,j})} + \norm{\rho_2 - \sigma_2}_1 H(\sigma_1)                                                 \\
                                                                                    & \leq T(\rho_1,\sigma_1)\log d - T(\rho_1,\sigma_1)\log(T(\rho_1,\sigma_1)) + 2T(\rho_2,\sigma_2)\log d
\end{align*}
The second term may be estimated in similar fashion.
We conclude that if $T(\rho_1, \sigma_1)\leq \eps \leq e^{-1}$ and $T(\rho_2, \sigma_2)\leq \eps \leq e^{-1}$
\begin{align}\label{eq:fannes h star}
  \abs{H^*(\spec(\rho_1),\spec(\rho_2)) - H^*(\spec(\sigma_1),\spec(\sigma_2))} \leq 6\eps\log d - 2\eps\log(\eps).
\end{align}

We will now show that we can approximate the entropy as would be expected from \cref{thm:measure convergence surface transition}, if we make some additional assumptions (which in particular are satisfied if the state along each edge is a copy of $n$ states, $\phi_e = \phi_{e,0}^{\ot n}$).

\begin{cor}\label{cor:convergence of entropy}
  Let $\rho$ be a random tensor network state satisfying the same assumptions as in \cref{thm:measure convergence surface transition}, and assume additionally that the $\nu_{\phi_{\gamma_{A,i}}}$ have uniformly exponentially decaying tail probabilities and $\eps(n) = \bigO(n^{-\gamma})$ for $\gamma > 10$.
  Assume that for each edge $e = (xy)$, the bond dimension is $D_e = 2^{\bigO(n)}$ and $\tr[\phi_{e,x}^2] \leq 2^{-\Omega(n)}$.
  Then with high probability
  \begin{align*}
    \abs{H(\rho_A) - H^*(\spec(\phi_{\gamma_{A,1}}),\spec(\phi_{\gamma_{A,2}}))} = \bigO(\log(n)).
  \end{align*}
\end{cor}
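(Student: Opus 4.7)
The plan is to combine \cref{lem:entropy convexity unnormalized} with the binned decomposition used in the proof of \cref{thm:measure convergence surface transition}, estimating the entropy of each resulting maximally-entangled component via a rank upper bound and a R\'enyi-$2$ lower bound.

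First I would use the same chain of approximations as in the proof of \cref{thm:measure convergence surface transition} to reduce $H(\rho_A)$ to $H(\tilde\sigma_B)$, where $\tilde\sigma$ is the random tensor network state with the binned, tail-truncated background $\tilde\phi$. The cumulative trace-distance error from \cref{thm:rtn with arbitrary link states} and \cref{lem:smoothing rtn state} is $\bigO(2^{-\KK(n)/4} + \sqrt{\eps(n)} + n^{-\alpha/2} + n^{(\alpha + 1/2 - \beta)/2}\sqrt{\log n})$ plus an exponentially small tail term. Under the strengthened hypotheses $\gamma>10$ and exponentially decaying tails, and by choosing $\alpha,\beta$ sufficiently large and the tail cutoff $C$ in $I_n$ large, each of these pieces becomes $\bigO(n^{-1-c})$ for some $c>0$. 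Since $\dim\HH_B = 2^{\bigO(n)}$, the Fannes--Audenaert inequality combined with Markov's inequality then yields $|H(\rho_A) - H(\tilde\sigma_B)| = o(\log n)$ with high probability.

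Next, write $\ket{\tilde\phi} = \sum_{k,j} \sqrt{q_{1,k}q_{2,j}}\,\ket{\tilde\phi^{(k,j)}}$ with $\ket{\tilde\phi^{(k,j)}} = \ket{\Phi^+_{1,k}} \ot \ket{\Phi^+_{2,j}} \ot \ket{\phi_{E_b}}$, and correspondingly $\ket{\tilde\sigma} = \sum_{k,j} \sqrt{q_{1,k}q_{2,j}}\,\ket{\tilde\sigma^{(k,j)}}$. For $(k,j)\neq(k',j')$, either $k\neq k'$ (orthogonal $B$-supports $\HH_{B,k}\perp\HH_{B,k'}$) or $j\neq j'$ (orthogonal $\bar B$-supports), so the $\ket{\tilde\sigma^{(k,j)}}$ are pairwise orthogonal in $\HH_B\ot\HH_{\bar B}$. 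Grouping by $j$, the vectors $\ket{A_j} := \sum_k \sqrt{q_{1,k}q_{2,j}}\,\ket{\tilde\sigma^{(k,j)}}$ have orthogonal $\bar B$-supports, so cross-$j$ terms in $\tr_{\bar B}\ket{\tilde\sigma}\bra{\tilde\sigma}$ vanish by \cref{lem:tracenorm off-diagonal}, yielding $\tilde\sigma_B = \sum_j \tr_{\bar B}[\ket{A_j}\bra{A_j}]$. Applying \cref{lem:entropy convexity unnormalized} twice---once to decompose $H(\tilde\sigma_B)$ over $j$ and once inside each $j$ over $k$, with $C=\bigO(1)$ because the component traces $\tr[\tilde\sigma^{(k,j)}]$ concentrate around $1$ by \cref{lem:normalization} (the edge hypothesis $\tr[\phi_{e,x}^2]\leq 2^{-\Omega(n)}$ gives $\eta = 2^{-\Omega(n)}$), and a union bound over the $\poly(n)$ surviving $(k,j)$---gives
\begin{align*}
  H(\tilde\sigma_B) = \sum_{k,j} q_{1,k} q_{2,j}\, H\bigl(\tilde\sigma^{(k,j)}_{\bar B}\bigr) + \bigO(\log n)
\end{align*}
with high probability.

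For each component, the rank bound gives $H(\tilde\sigma^{(k,j)}_{\bar B}) \leq \log\min(d_{1,k},d_{2,j})$. For the matching lower bound, the $k=2$ replica trick (\cref{eq:general replica trick k=2}) on the subgraph $G'$ yields $\EE\tr[(\tilde\sigma^{(k,j)}_B)^2] \leq 1/d_{1,k} + 1/d_{2,j} + 2^{-\Omega(n)}$, where the edge R\'enyi-$2$ hypothesis suppresses cuts through $E_b$. Markov's inequality together with a union bound gives $H_2(\tilde\sigma^{(k,j)}_{\bar B}) \geq \log\min(d_{1,k},d_{2,j}) - \bigO(\log n)$ uniformly, hence $H(\tilde\sigma^{(k,j)}_{\bar B}) = \log\min(d_{1,k},d_{2,j}) + \bigO(\log n)$. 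Using $\log d_{i,k} = \log(1/p_{i,k}) + \bigO(\log n)$ (since $q_{i,k}\geq n^{-\beta}$), summation gives $H(\tilde\sigma_B) = H^*(\spec\tilde\phi^{(3)}_{\gamma_{A,1}},\spec\tilde\phi^{(3)}_{\gamma_{A,2}}) + \bigO(\log n)$. Finally, the difference between the binned $H^*$ and $H^*(\spec\phi_{\gamma_{A,1}},\spec\phi_{\gamma_{A,2}})$ is controlled by three standard pieces: binning (shift $\leq n^{-\alpha}$ per log-eigenvalue, global contribution $\leq n^{-\alpha}$), tail removal (contribution $\leq H(n)\cdot n^{-cC}$, made $o(\log n)$ by the exponential-tail hypothesis for $C$ large), and small-bin removal (contribution $\leq H(n)\cdot n^{\alpha+1/2-\beta}\log n$, made $\bigO(\log n)$ for $\beta \geq \alpha + 3/2$). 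Combining all estimates gives the claim.

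The principal obstacle is that \cref{thm:rtn with arbitrary link states} is \emph{not} directly applicable to the components $\tilde\sigma^{(k,j)}$ in the regime $d_{1,k}\approx d_{2,j}$ where the two cuts compete on an equal footing; the rank-and-$H_2$ sandwich circumvents this at the cost of $\bigO(\log n)$ precision rather than $o(1)$. A second technical point is that the Fannes--Audenaert overhead $\eps\log\dim$ with $\dim=2^{\bigO(n)}$ demands trace-distance errors of order $n^{-1-c}$, which is precisely why the corollary strengthens the hypotheses of \cref{thm:measure convergence surface transition} from $\gamma>4$ to $\gamma>10$ and from continuity of the limiting CDF to exponentially decaying tails, and why $\alpha,\beta$ must be taken larger than in that proof.
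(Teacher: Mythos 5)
Your overall architecture matches the paper's: reduce to the subgraph $G'$, approximate by the binned/truncated background $\tilde\phi$, decompose over pairs $(k,j)$ of maximally-entangled components, apply \cref{lem:entropy convexity unnormalized} twice, sandwich each component entropy between a rank bound and a R\'enyi-$2$ bound, and translate back to $H^*$. However, there is a genuine gap in the middle of the sandwich.

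You assert that the $k=2$ replica trick gives $\EE\tr[(\tilde\sigma^{(k,j)}_B)^2] \leq d_{1,k}^{-1} + d_{2,j}^{-1} + 2^{-\Omega(n)}$, ``where the edge R\'enyi-$2$ hypothesis suppresses cuts through $E_b$.'' This handles intermediate cuts $\Delta_B$ whose cut-set hits a bulk edge. It does not handle cuts whose cut-set lies entirely inside $\gamma_{A,1}\cup\gamma_{A,2}$, which exist precisely when $V_b'$ is disconnected in $G'|_{E_b}$ --- and nothing in \cref{dfn:gen pair cut} rules this out. For such a cut, $\tilde\phi^{(k,j)}_{\Delta_B}$ factors into reduced states of $\Phi^+_{1,k}$ and $\Phi^+_{2,j}$ on \emph{partial} subsets of half-edges along each $\gamma_{A,i}$. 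Because the bins are level sets of a \emph{product} of per-edge eigenvalues, $\Phi^+_{i,\cdot}$ does not factor over edges, and its marginals on partial half-edge sets have no direct R\'enyi-$2$ control coming from the per-edge hypothesis $\tr[\phi_{e,x}^2]\leq 2^{-\Omega(n)}$. More fundamentally, the $(\eps,K)$-minimality assumption is stated in terms of \emph{smooth} conditional min-entropies of the \emph{original} $\phi$, and you never translate those into anything you can plug into the replica trick for the modified component states $\psi^{(k,j)}$.

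The paper closes this gap with a smoothing step that you have omitted. It applies \cref{lem:joint smoothing at transition} to replace each $\phi^{(j,k)}$ with a nearby $\phi^{(j,k,\eps)}$ whose \emph{unsmoothed} conditional min-entropy $H_{\min}(\Delta_B\setminus B\vert B)_{\phi^{(j,k,\eps)}}$ is at least $H^\eps_{\min}(\Delta_A\setminus\Gamma_{A,1}\vert\Gamma_{A,1})_\phi \geq K$ simultaneously for all intermediate $\Delta_B$ (this combines data processing with joint smoothing, which is only available because $\phi$ is a link state). It then invokes the chain rule $H_2(\Delta_B)_{\phi^{(j,k,\eps)}} \geq H_{\min}(\Delta_B\setminus B\vert B)_{\phi^{(j,k,\eps)}} + H_{\min}(B)_{\phi^{(j,k,\eps)}}$ to convert these into R\'enyi-$2$ bounds $\tr[(\psi^{(j,k,\eps)}_{\Delta_B})^2]\leq d_{1,j}^{-1}$ for \emph{every} intermediate cut, with no reference to whether $E_b$ is crossed. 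This chain-rule-plus-smoothing step is the engine that makes the lower bound on $H(\sigma^{(j,k)}_B)$ valid uniformly, and it is exactly what is missing from your argument. The remainder of your proposal (the orthogonality of components, the double application of \cref{lem:entropy convexity unnormalized}, the normalization-event argument, the $\bigO(\log n)$ conversion from $\log d_{i,\cdot}$ to $\log(1/p_{i,\cdot})$, and the Fannes-type continuity of $H^*$) tracks the paper faithfully.
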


\begin{proof}
  The basic proof strategy will be that of \cref{thm:measure convergence surface transition}: we study a slightly reduced problem on the approximated tensor network states $\tilde{\sigma}$ with approximate background states $\tilde{\phi}$, work out the entropies for $\tilde{\sigma}$ and $\tilde{\phi}$, and then argue that the closeness of the resulting entropies will continue to hold for the original tensor network state and background state, up to errors that we carefully keep track of.

  First of all, we note that we can reduce to the tensor network state $\sigma$ on the reduced graph $G'$, with error as in \cref{eq:tilde rho vs sigma}; in particular
  \begin{align}\label{eq:rho vs sigma}
    \EE \norm{\spec_+(\rho_A) - \spec_+(\sigma_{B})}_1 = \bigO(n^{-\frac{\gamma}{2}}).
  \end{align}

  Next we adapt the part of the proof of \cref{thm:measure convergence surface transition} where we modify the state along the minimal cuts.
  In the proof of \cref{thm:measure convergence surface transition}, we first observed that if the (regularized) spectrum along each cut $\nu_{\phi_{\Gamma_{A,i}}}$ has uniformly exponentially decaying tail probabilities, then for sufficiently large $C$, we can slice off the tails with vanishing probability mass:
  \begin{align*}
    \sum_{\lambda_{i,j} \notin I_n} \lambda_{i,j} = \nu_{\phi_{\Gamma_{A,i}}}((-\infty,-C\log(n))\cup(C\log(n),\infty)) = O\left(\frac{1}{n^4}\right).
  \end{align*}
  We also binned the eigenvalues of $\phi$ in the reduced spectrum:
  \begin{align*}
    q_{i,j} = d_{i,j}p_{i,j},
  \end{align*}
  where $p_{i,j}$ are the bins and $d_{i,j}$ is the multiplicity of each bin.
  We then removed bins that were too small, leading to a state $\tilde \phi$.
  In the notation of the proof of \cref{thm:measure convergence surface transition} we choose $\alpha $ and $\beta$ such that $\alpha > 2$ and $\frac{5}{2} + \alpha < \beta \leq \frac12(\gamma - 1)$.
  By \cref{eq:phi vs tilde phi} this yields an error
  \begin{align}\label{eq:phi tilde phi entropy}
    \norm{\tilde \phi - \phi}_1 = o(n^{-1}).
  \end{align}

  Note that in the proof of \cref{thm:measure convergence surface transition}, we performed one more approximation of removing the ``middle'' or ``diagonal'' part of the spectrum and obtained a state $\bar{\phi}$; we will not need to do this here. Now, we recall that the binning of eigenvalues allows us to write $\tilde{\phi}$ as a superposition over maximally-entangled states along each cut:
  \begin{align*}
    \ket{\tilde \phi_{\gamma_{A,1}, \gamma_{A,2}}} & = \sum_{j,k} \sqrt{q_{1,j} q_{2,k}} \ket{\Phi^+_{1,j}} \ot \ket{\Phi^+_{2,k}},
  \end{align*}
  and use this to decompose $\tilde \phi$ as
  \begin{align*}
    \ket{\tilde \phi} & = \sum_{j,k} \sqrt{q_{1,j} q_{2,k}} \ket{\psi^{(j,k)}},
  \end{align*}
  where in $\ket{\psi^{(j,k)}}$, we have replaced $\ket{\tilde \phi_{\gamma_{A,1}, \gamma_{A,2}}}$ with the maximally entangled state $\ket{\Phi^+_{1,j}} \ot \ket{\Phi^+_{2,k}}$.
  The states $\ket{\psi^{(j,k)}}$ are normalized background states on the graph $G'$.
  We let
  \begin{align*}
    \ket{\phi^{(j,k)}} = \sqrt{q_{1,j} q_{2,k}}\ket{\psi^{(j,k)}}
  \end{align*}
  which are subnormalized states.
  At this point, the key idea of the argument is straightforward.
  We will consider the random tensor network states which have (a smoothed version) of the background states $\ket{\psi^{(j,k)}}$.
  These will have entropy close to $\min(\log(d_{1,j}), \log(d_{2,k}))$.
  Then we will use \cref{lem:entropy convexity unnormalized} to argue that the entropy of $\rho_A$ is approximated up to $\bigO(\log(n))$ terms by the convex combination $\sum_{j,k} q_{1,j} q_{2,k} \min(\log(d_{1,j}), \log(d_{2,k}))$, which we can relate to the desired result.
  To make this easy argument precise, we will need to take care of smoothing the background state appropriately and ensure that the relevant states are close to normalized.

  We will now argue that we can smoothen the states $\phi^{(j,k)}$.
  We may assume without loss of generality that the states $\phi^{(j,k)}$ have nonnegative coefficients in the standard basis $\ket{I} = \prod_{e \in E'} \ket{i_e i_e}$ where $I = \{i_e\}_{e \in E'}$ runs over all possible basis elements over each edge, so we can write
  \begin{align*}
    \ket{\phi^{(j,k)}} = \sum_I \sqrt{\lambda^{(j,k)}_I}.
  \end{align*}
  By the same argument as in \cref{thm:measure convergence surface transition} we find a state
  \begin{align*}
    \phi^{(j,k,\eps)} =  \sum_I \sqrt{\lambda^{(j,k,\eps)}_I}
  \end{align*}
  which is such that for any $\Delta_B \in C(B)$ not equal to $B$ or $V_b' \cup B$ and $\Delta_A = \Gamma_{A,1} \cup \Delta_B$
  \begin{align*}
    H_{\min}(\Delta_B \setminus B \vert B)_{\phi^{(j,k,\eps)}} \geq H^\eps_{\min}(\Delta_A \setminus \Gamma_A \vert \Gamma_A)_{\phi},
  \end{align*}
  while for $\Delta_B = B$ and $\Delta_B = B \cup V_b'$ we have
  \begin{align*}
    H_{\min}(B)_{\phi^{(j,k,\eps)}} \geq H_{\min}(B)_{\phi^{(j,k)}} = \log \frac{1}{q_{2,k} p_{1,j}} \\
    H_{\min}(\bar B)_{\phi^{(j,k,\eps)}} \geq H_{\min}(\bar B)_{\phi^{(j,k)}} = \log \frac{1}{q_{1,j}p_{2,k}}.
  \end{align*}
  Moreover, $\phi^{(j,k,\eps)}$ is close to $\phi^{(j,k)}$ in the sense that
  \begin{align*}
    \norm{\phi^{(j,k,\eps)} - \phi^{(j,k)}}_1 & = \bigO\mleft(\sqrt{\sum_I \abs{\lambda^{(j,k,\eps)} - \lambda^{(j,k)}}}\mright) = \bigO(\sqrt{\eps}).
  \end{align*}
  By the remark after \cref{lem:joint smoothing at transition} we may assume that $\lambda^{(j,k,\eps)} \leq \lambda^{(j,k)}$.

  By a chain rule (e.g. Theorem 5.13 in \cite{tomamichel2015quantum}, proven in \cite{dupuis2015chain}) for any $\Delta_B \in C(B)$ it holds that
  \begin{align*}
    H_2(\Delta_B)_{\phi^{(j,k,\eps)}} \geq H_{\min}(\Delta_B \setminus B \vert B)_{\phi^{(j,k,\eps)}} + H_{\min}(B)_{\phi^{(j,k,\eps)}}.
  \end{align*}
  We now let
  \begin{align*}
    \ket{\phi^{\eps}} = \sum_{j,k} \ket{\phi^{(j,k,\eps)}}.
  \end{align*}
  Then, by \cref{lem:consequence fvdg}
  \begin{align}\label{eq:phi eps vs tilde phi}
    \norm{\tilde \phi - \phi^{\eps}}_1 \leq \sqrt{2 \sum_{j,k} \sum_I \abs{\lambda_I^{(j,k)} - \lambda_I^{(j,k,\eps)}}} = \bigO(\sqrt{n^{2\alpha}\log(n)^2\eps(n)}) = \bigO(n^{\alpha - \frac12\gamma}\log(n)) = \bigO(n^{-3}).
  \end{align}
  Therefore, if we let $\sigma^{\eps}_B$ be the random tensor network state with background state $\phi^{\eps}$
  \begin{align}\label{eq:sigma vs sigma eps}
    \EE \norm{\sigma - \sigma^{\eps}}_1 \leq \norm{\phi - \tilde \phi}_1 + \norm{\tilde \phi - \phi^{\eps}}_1 = o(\frac{1}{n}).
  \end{align}
  Finally, let
  \begin{align*}
    \ket{\phi^{(j,k,\eps)}} = \sqrt{q_{1,j} q_{2,k}}\ket{\psi^{(j,k,\eps)}}
  \end{align*}
  By \cref{eq:trace difference} in the remark after the proof of \cref{lem:joint smoothing at transition} we have
  \begin{align*}
    \abs{\tr[\psi^{(j,k,\eps)}] - 1} & = \frac{1}{q_{1,j}q_{2,k}}\abs{\tr[\phi^{(j,k,\eps)}] - \tr[\phi^{(j,k)}]}         \\
                                     & = \bigO(\frac{\eps}{q_{1,j}q_{2,k}}) = \bigO(n^{2\beta - \gamma}) = \bigO(n^{-1}).
  \end{align*}
  using that $q_{1,j}, q_{2,k} > n^{-\beta}$ and $\beta \leq \frac12(\gamma - 1)$.
  We denote the random tensor network states with background states $\psi^{(j,k,\eps)}$ by $\sigma^{(j,k)}$, and the random tensor network states with background states $\sum_{j} \sqrt{q_{1,j}}\psi^{(j,k,\eps)}$ by $\sigma^{(k)}$.

  We introduce the event $N$, which entails that $\rho, \sigma$ and $\sigma^{(j,k)}$ for all $j,k$ are close to normalized, that is
  \begin{align*}
    \abs{\tr[\rho] - 1} \leq \frac{1}{n^2} \qquad \text{ and } \qquad \abs{\tr[\sigma] - 1} \leq \frac{1}{n^2} \qquad \text{ and } \qquad \abs{\tr[\sigma^{(j,k)}] - \tr[\psi^{(j,k,\eps)}]} \leq \frac{1}{n^2}.
  \end{align*}
  By assumption, for each edge $e = (xy)$ we have $\tr[\phi_{e,x}^2] \leq 2^{-\Omega(n)}$.
  Moreover, using that $q_{1,j}, q_{2,k} \geq n^{-\beta}$
  \begin{align*}
    \tr[(\phi^{(j,k,\eps)}_{\Delta})^2] \leq q_{1,j}^{-2} q_{2,k}^{-2} \tr[\phi^2_{\Delta}] = \bigO(\poly(n)2^{-\Omega(n)})
  \end{align*}
  and therefore the quantity $\eta$ in \cref{lem:normalization} is $\bigO(2^{-\Omega(n)})$ for $\rho$, $\sigma$ and $\sigma^{(j,k)}$.
  So, by \cref{lem:normalization} and the union bound the event $N$ has probability
  \begin{align*}
    p_N & \geq 1 - \left( \Pr(\abs{\tr[\rho] - 1} \geq \frac{1}{n^2}) + \Pr(\abs{\tr[\sigma] - 1} \geq \frac{1}{n^2}) + \sum_{j,k} \Pr(\abs{\tr[\sigma^{(j,k)}] - \tr[\psi^{(j,k,\eps)}]} \geq \frac{1}{n^2}) \right) \\
        & = 1 - \bigO(\poly(n)2^{-\Omega(n)}) = 1 - \bigO(2^{-\Omega(n)}).
  \end{align*}
  We denote by $\EE_N$ the expectation value over the random tensors conditioned on this event.

  We now use \cref{lem:entropy convexity unnormalized} to approximate the entropy of $\sigma^{\eps}_B$ conditioned on $N$:
  \begin{align*}
    \abs{H(\sigma^{\eps}_B) - \sum_k q_{2,k} H(\sigma_B^{(k)})} = \bigO(\log(n)).
  \end{align*}
  Because $\sigma^{(k)}$ is pure, we have $H(\sigma_B^{(k)}) = H(\sigma_{\bar B}^{(k)})$. We apply \cref{lem:entropy convexity unnormalized} again, this time to the decomposition of $\sigma_{\bar B}^{(k)}$ to see:
  \begin{align*}
    \abs{H(\sigma_{\bar B}^{(k)}) - \sum_j q_{1,j} H(\sigma_{\bar B}^{(j,k)})} = \bigO(\log(n)).
  \end{align*}
  Thus,
  \begin{align*}
    \abs{H(\sigma^{\eps}_B) - \sum_{j,k} q_{1,j} q_{2,k} H(\sigma_{B}^{(j,k)})} = \bigO(\log(n)).
  \end{align*}
  Since $\rank(\sigma_{B}^{(j,k)}) \leq \min\{d_{1,j}, d_{2,k} \}$ (and taking into account the normalization of $\sigma^{(j,k)}$)
  \begin{align*}
    H(\sigma_{B}^{(j,k)}) \leq \tr[\sigma^{(j,k)}]\min\mleft( \log d_{1,j}, \log d_{2,k} \mright) - \tr[\sigma^{(j,k)}] \log \tr[\sigma^{(j,k)}].
  \end{align*}
  Conditioned on $N$ and using $\abs{\tr[\psi^{(j,k,\eps)}] - 1} = \bigO(n^{-1})$ this implies
  \begin{align*}
    H(\sigma_{B}^{(j,k)}) \leq \min\mleft( \log d_{1,j}, \log d_{2,k} \mright) + \bigO(1).
  \end{align*}
  For a lower bound we use that
  \begin{align*}
    \EE_N H(\sigma_{B}^{(j,k)}) & \geq \EE_N \tr[\sigma^{(j,k)}]\left( -\log \tr\mleft[(\sigma_{B}^{(j,k)})^2\mright] + \log \tr[\sigma_{B}^{(j,k)}]\right)                                         \\
                                & \geq (\tr[\psi^{(j,k,\eps)}] - \frac{1}{n^2})\left( -\log \EE_N \tr\mleft[ (\sigma_{B}^{(j,k)})^2\mright] + \log(\tr[\psi^{(j,k,\eps)}] - \frac{1}{n^2})] \right) \\
                                & \geq -\log \EE_N \tr\mleft[ (\sigma_{B}^{(j,k)})^2\mright] - \bigO(1)
  \end{align*}
  where in the first and second inequality we have used Jensen's inequality, and in the second and third inequality we have used $\tr[\sigma_{B}^{(j,k)}] \geq \tr[\psi^{(j,k,\eps)}] - \frac{1}{n^2} \geq 1 - \bigO(\frac{1}{n})$.
  We use the replica trick to estimate
  \begin{align*}
    \EE \tr\mleft[ (\sigma^{(j,k)}_B)^2 \mright] = \sum_{\Delta_B \in C(B)} \tr[(\psi^{(j,k,\eps)}_{\Delta_B})^2] 
  \end{align*}
  In this expression, we have contributions from $\Delta_B = B$ and $\Delta_B = B \cup V_b'$, which yield contributions
  \begin{align*}
    \tr[(\psi^{(j,k,\eps)}_{B})^2] \leq d_{1,j}^{-1} \\
    \tr[(\psi^{(j,k,\eps)}_{\bar B})^2] \leq d_{2,k}^{-1}.
  \end{align*}
  For any other cut $\Delta_B$, we have a contribution at most
  \begin{align*}
    \tr[(\psi^{(j,k,\eps)}_{\Delta_B})^2] & = q_{1,j}^{-2}q_{2,k}^{-2} \tr[\phi^{(j,k,\eps)}] 2^{-H_{\min}(B)_{\phi^{(j,k,\eps)}} - H^\eps_{\min}(\Delta_A \setminus \Gamma_A \vert \Gamma_A)_{\phi}} \\
                                          & \leq q_{1,j}^{-1}q_{2,k}^{-1}2^{-H_{\min}(B)_{\phi^{(j,k,\eps)}}} \leq d_{1,j}^{-1}
  \end{align*}
  using that $\tr[\phi^{(j,k,\eps)}] \leq q_{1,j}q_{2,k}$ and $2^{-H_{\min}(B)_{\phi^{(j,k,\eps)}}} \leq q_{2,k}p_{1,j} = q_{1,j}q_{2,k} d_{1,j}^{-1}$.
  Therefore
  \begin{align*}
    \EE_N \tr\mleft[ (\sigma^{(j,k,\eps)}_B)^2 \mright] & \leq p_N^{-1} \EE \tr\mleft[ (\sigma^{(j,k,\eps)}_B)^2 \mright] \\
                                                        & = d_{1,j}^{-1} + d_{2,k}^{-1} + \bigO(d_{1,j}^{-1}).
  \end{align*}
  so
  \begin{align*}
    -\log \EE_N \tr\mleft[ (\sigma^{(j,k,\eps)}_B)^2 \mright] & \geq -\log\left( \max\mleft(d_{1,j}^{-1}, d_{2,k}^{-1}\mright)(2 + \bigO(1)) \right) \\
                                                              & \geq \min( \log d_{1,j}, \log d_{2,k} ) - \bigO(1).
  \end{align*}
  We find that
  \begin{align*}
    \EE_N \abs{H(\sigma_{B}^{(j,k)}) -  \min( \log d_{1,j}, \log d_{2,k} )} = \bigO(1)
  \end{align*}
  and hence
  \begin{align}\label{eq:entropy sigma vs min}
    \EE_N \abs{\sum_{j,k} q_{1,j} q_{2,k} H(\sigma_{B}^{(j,k)}) - \sum_{j,k} q_{1,j} q_{2,k} \min( \log d_{1,j}, \log d_{2,k} )} = \bigO(1).
  \end{align}

  We collect the various estimates we have found:
  \begin{align*}
    \EE_N \norm{\spec_+(\rho_A) - \spec_+(\sigma_{B})}_1 \leq p_N^{-1}\EE \norm{\spec_+(\rho_A) - \spec_+(\sigma_{B})}_1         & = o(\frac{1}{n}) \\
    \EE_N \norm{\sigma_B - \sigma^{\eps}_B}_1 \leq p_N^{-1}\EE \norm{\sigma_B - \sigma^{\eps}_B}_1                               & = o(\frac{1}{n}) \\
    \EE_N \abs{\sum_{j,k} q_{1,j} q_{2,k} H(\sigma_{B}^{(j,k)}) - \sum_{j,k} q_{1,j} q_{2,k} \min( \log d_{1,j}, \log d_{2,k} )} & = \bigO(1).
  \end{align*}
  Let $M$ be the event that $N$ holds and moreover
  \begin{align*}
    \norm{\spec_+(\rho_A) - \spec_+(\sigma_{B})}_1                                                                         & \leq \frac{1}{n} \\
    \norm{\sigma_B - \sigma^{\eps}_B}_1                                                                                    & \leq \frac{1}{n} \\
    \abs{\sum_{j,k} q_{1,j} q_{2,k} H(\sigma_{B}^{(j,k)}) - \sum_{j,k} q_{1,j} q_{2,k} \min( \log d_{1,j}, \log d_{2,k} )} & \leq \log(n).
  \end{align*}
  By Markov's inequality and the union bound, the probability that $M$ holds goes to one as $n$ goes to infinity.
  Moreover, if $M$ holds, it is easy to verify that by the Fannes-Audenaert inequality \cref{eq:fannes} and the fact that $\rho$, $\sigma$ and $\sigma^{\eps}$ are close to normalized
  \begin{align*}
    \abs{H(\frac{\rho_A}{\tr[\rho]}) - H(\sigma^{\eps}_B)} & = \bigO(1).
  \end{align*}
  Also, if $M$ holds, by \cref{eq:entropy sigma vs min}
  \begin{align*}
    \abs{H(\sigma^{\eps}_B) - \sum_{j,k} q_{1,j} q_{2,k} \min( \log d_{1,j}, \log d_{2,k} )} & = \abs{\sum_{j,k} q_{1,j} q_{2,k} H(\sigma_{B}^{(j,k)}) - \sum_{j,k} q_{1,j} q_{2,k} \min( \log d_{1,j}, \log d_{2,k} )} \\
                                                                                             & \qquad +\bigO(\log(n))                                                                                                   \\
                                                                                             & = \bigO(\log(n))
  \end{align*}
  so we conclude that
  \begin{align*}
    \abs{H(\frac{\rho_A}{\tr[\rho]}) - \sum_{j,k} q_{1,j} q_{2,k} \min( \log d_{1,j}, \log d_{2,k} )} & = \bigO(\log(n)).
  \end{align*}
  Since $q_{i,j} = d_{i,j}p_{i,j} \geq n^{-\beta}$ from the binning procedure, we have the simple observation
  \begin{align*}
    \frac{1}{n^{\beta} p_{i,j}} \leq \frac{q_{i,j}}{p_{i,j}} = d_{i,j} \leq \frac{1}{p_{i,j}},
  \end{align*}
  and hence $\abs{\min(\log d_{1,j},\log d_{2,k}) - \log(\min(\frac{1}{p_{1,k}},\frac{1}{p_{2,j}}))} = \bigO(\log(n))$. We conclude that
  \begin{align*}
    \abs{H(\frac{\rho_A}{\tr[\rho]}) - \sum_{j,k} q_{1,k}q_{2,j}\log\left(\min\mleft(\frac{1}{p_{1,k}},\frac{1}{p_{2,j}}\mright)\right)} = \bigO(\log(n)).
  \end{align*}
  Finally, we need to relate the result back to the original background state.
  In the above approximation to $H(\tilde{\sigma}_B)$, we see that
  \begin{align*}
    \sum_{j,k} q_{1,k}q_{2,j}\log\left(\min\mleft(\frac{1}{p_{1,k}},\frac{1}{p_{2,j}}\mright)\right) = H^*(\spec(\tilde\phi_{\gamma_{A,1}}),\spec(\tilde\phi_{\gamma_{A,2}})).
  \end{align*}
  Then by \cref{eq:phi tilde phi entropy} and \cref{eq:fannes h star}, this will converge to the appropriate quantity on the non-approximated background state:
  \begin{align*}
    \abs{H^*(\spec(\tilde\phi_{\gamma_{A,1}}),\spec(\tilde\phi_{\gamma_{A,2}})) - H^*(\spec(\phi_{\gamma_{A,1}}),\spec(\phi_{\gamma_{A,2}}))} \rightarrow 0,
  \end{align*}
  proving the desired result.

\end{proof}

\addcontentsline{toc}{section}{Acknowledgments}
\section*{Acknowledgments}
NC is supported in part by the Department of Energy via the GeoFlow consortium (QuantISED Award DE-SC0019380).
CL acknowledges support from the projects ESQuisses (ANR-20-CE47-0014-01), STARS (ANR-20-CE40-0008), Qtraj (ANR-20-CE40-0024-01) and Random Tensors (ANR-11-LABX-0040) of the French National Research Agency (ANR).
GP is supported by the UC Berkeley Physics Department, the Simons Foundation through the ``It from Qubit'' program, the Department of Energy via the GeoFlow consortium (QuantISED Award DE-SC0019380), and AFOSR award FA9550-22-1-0098.
He also acknowledges support from an IBM Einstein Fellowship at the Institute for Advanced Study.
MW acknowledges support by the NWO through grant OCENW.KLEIN.267, by the Deutsche Forschungsgemeinschaft (DFG, German Research Foundation) under Germany's Excellence Strategy - EXC\ 2092\ CASA - 390781972, by the BMBF through project Quantum Methods and Benchmarks for Resource Allocation (QuBRA), and by the European Research Council~(ERC).
Funded by the European Union.
Views and opinions expressed are however those of the authors only and do not necessarily reflect those of the European Union or the European Research Council.
Neither the European Union nor the granting authority can be held responsible for them.

\begin{appendix}
  \section{Euclidean gravity path integrals and entanglement spectra}\label{sec:gravity context}
  In this appendix, we give a heuristic description of certain Euclidean gravity path integrals in holography whose descriptions are in close analogy with random tensor network models.
  This section serves as a motivation for the random tensor network models we study, but is not needed to understand the random tensor network results.
  In \cref{sec:euclidean}, we review the replica trick in quantum field theory and the role of Euclidean path integrals. We review an application of such tools in \cref{sec:surface phase transition}, we discuss the problem of studying the entanglement entropy near a phase transition between two minimal surfaces, which agrees with results in \cref{sec:far from max entangled}. Then, in \cref{sec:jt gravity}, we discuss a simplified model of quantum gravity which is in very close correspondence to a random tensor network model with link states with bounded spectral variation, as in \cref{sec:almost max entangled}.

  \subsection{The replica trick and Euclidean path integrals}\label{sec:euclidean}
  In order to compute entropies in quantum field theory, one often uses a version of the \emph{replica trick} to compute the R\'enyi entropies, which can then be analytically continued to deduce von Neumann entropies.
  We consider a pure quantum field theory state $\ket{\rho}$ on a space $M$, which is prepared by a Euclidean path integral on $M \times (-\infty,0]$.
    Correspondingly, $\bra{\rho}$ is prepared by the time-reflected path integral on $M \times [0,\infty)$.
  Let $A$ be a subregion of $M$. The reduced density matrix on $A$ is given by taking $\ket{\rho}$ and $\bra{\rho}$, then integrating over the field configurations on the complement of $A$ (the equivalent of the partial trace for field theories), as shown in \cref{fig:path-integral2}.
  Analogous to \cref{eq:replica trick}, we may now compute $\tr[\rho_A^k]$ by taking $k$ copies of this path integral, and gluing the boundaries at the $A$ system cyclically, then integrating over the field configurations at each boundary. This operation is manifestly invariant under cyclic permutations, a symmetry called \emph{replica symmetry}.
  This is illustrated in \cref{fig:branched-cover}.
  We conclude that $\tr[\rho_A^k]$ is computed by a path integral $Z_{A,k}$ on a manifold $M_{A,k}$, allowing us to compute $H_k(\rho_A)$. The space $M_{A,k}$ is a $k$-fold cover of $M \times \RR$, branching at the boundary $\partial A$ of the subregion $A$.

  Of course, the path integral $Z_{A,k}$ is formally infinite. One `normalizes' the path integral by normalizing by $Z_1$ which can be thought of as $\tr[\rho]$. We have the expression for the $k$-th R\'enyi entropy:
  \begin{align}\label{eq:path integral renyi}
    H_k(\rho_A) = \log \frac{Z_{A,k}}{Z_1^k}.
  \end{align}
  Note that this is a slightly different normalization convention (the denominator is $Z_1^k$ instead of $Z_1$) to match the standard convention in the quantum gravity literature. To get a finite result, one has to impose a UV cut-off of size $\eps$, an aspect we will ignore in our discussion. By employing such a cutoff, we will pretend that the relevant Hilbert spaces are finite-dimensional Hilbert spaces that factorize with respect to spatial decompositions of $M$.

  \begin{figure}[t]
    \centering
    \begin{subfigure}[t]{.27\textwidth}
      \centering
      \begin{overpic}[width=.8\linewidth,grid=false]{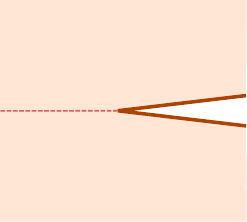}
        \put(-20,25){\footnotesize{$\uparrow \tau$}}
        \put(20,50){\color{BrickRed}{$\bar{A}$}}
        \put(70,55){\color{Bittersweet}{$A$}}
      \end{overpic}
      \caption{Path integral representation of the reduced density matrix on a subsystem $A$.}
      \label{fig:path-integral2}
    \end{subfigure}%
    \hspace*{0.3cm}
    \begin{subfigure}[t]{.3\textwidth}
      \centering
      \begin{overpic}[width=.75\linewidth,grid=false]{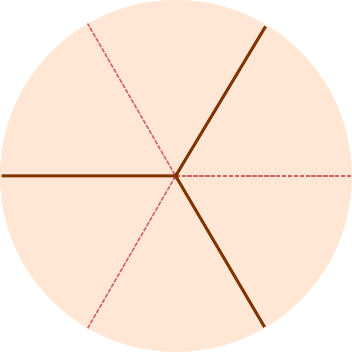}
        \put(20,55){\footnotesize{\color{Bittersweet}{$A$}}}
        \put(70,72){\footnotesize{\color{Bittersweet}{$A$}}}
        \put(70,25){\footnotesize{\color{Bittersweet}{$A$}}}
      \end{overpic}
      \caption{For the replica trick the path integral is glued cyclically along $A$ to obtain $M_{A,k}$. The space $M_{A,k}$ is a $k$-fold cover of $M$, branching at the boundary $\partial A$. In this case $k =3$.}
      \label{fig:branched-cover}
    \end{subfigure}%
    \hspace*{0.3cm}
    \begin{subfigure}[t]{.3\textwidth}
      \centering
      \begin{overpic}[width=.75\linewidth,grid=false]{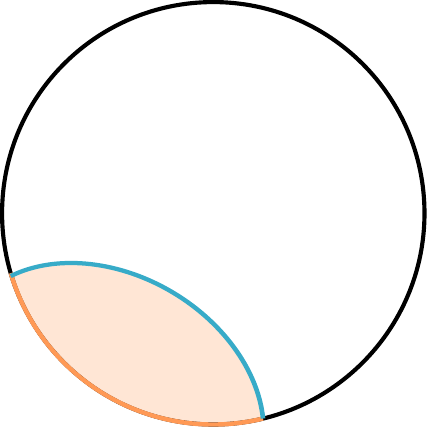}
        \put(10,5){\footnotesize{\color{Bittersweet}{$A$}}}
        \put(30,15){\footnotesize{$\tau$}}
        \put(60,60){\footnotesize{$\id$}}
        \put(40,38){\footnotesize{\color{NavyBlue}{$\gamma_{A,k}$}}}
      \end{overpic}
      \caption{The holographic version of the replica trick. The boundary manifold is glued as $M_{A,k}$, and the bulk manifold is glued along a cyclic permutation $\tau$ adjacent to $A$, and along the identity permutation along a region adjacent to the complement $\bar{A}$. These two regions are separated by a brane $\gamma_{A,k}$.}
      \label{fig:holographic rep trick}
    \end{subfigure}%
    \caption{The path integral replica trick to compute $\tr[\rho_A^k]$.}
    \label{fig:path integral}
  \end{figure}

  What happens if the quantum field theory is a \emph{holographic CFT}?
  In this case, we have a correspondence between the path integral of the CFT on the one hand, and a bulk quantum gravity path integral on the other hand, which, for large effective central charge, we may approximate by its semiclassical saddle point geometry.
  This can be used to derive the \emph{Ryu-Takayanagi formula} \cite{lewkowycz2013generalized}, by finding a bulk manifold $B_{A,k}$ which has $M_{A,k}$ as its boundary. In other words, one obtains the gravitational dual by setting $M_{A,k}$ as a boundary condition, then allowing the system to evolve according to the gravitational theory.

  The saddle point solution will be such that the bulk copies are glued along a cyclic permutation in an area adjacent to $A$ and along the identity permutation in an area adjacent to $\bar{A}$, the boundary complement of $A$.
  The boundary between these two regions is a surface $\gamma_{A,k}$ anchored at the boundary at $\partial A$.
  After orbifolding with respect to the replica symmetry, this leads to a bulk spacetime which has $M$ as its boundary and a `cosmic brane' at the surface $\gamma_{A,k}$. The orbifold procedure gives rise to a conical deficit, corresponding to the fixed points under the action of the replica symmetry.
  Comparing the action of $B_{A,k}$ and $B_1$, we find that in the computation of \cref{eq:path integral renyi}, only the conical deficit at $\gamma_{A,k}$ does not cancel, yielding
  \begin{align}\label{eq:holographic renyi entropy}
    H_k(\rho_A) \approx \frac{\abs{\gamma_{A,k}}}{4G_N}.
  \end{align}
  where $\abs{\gamma_{A,k}}$ is the area of the brane in the saddle-point solution.
  This can be analytically continued to non-integer values, and in particular, continuation to $k = 1$ yields the RT formula:
  \begin{align}\label{eq:rt formula}
    H(\rho_A) \approx \frac{\abs{\gamma_{A}}}{4G_N}.
  \end{align}
  where $\gamma_A$ is the surface of minimal area out of all bulk surface homologous to $A$ and anchored at $\partial A$.

  The spectrum of $\rho_A$ can be recovered from the R\'enyi entropies.
  In \cref{eq:holographic renyi entropy}, we see that $\abs{\gamma_{A,k}}$ depends on $k$, but is otherwise fixed as we let $G_N$ go to zero in the classical limit.
  It was argued in \cite{bao2019beyond} that this behavior implies a spectrum that is flat at leading order:
  \begin{align*}
    H_{\min}^{\eps}(\rho_A) & = H(\rho_A) - \bigO(G_N^{-\frac12}) = H(\rho_A) - \bigO(\sqrt{H(\rho_A)}) \\
    H_{\max}^{\eps}(\rho_A) & = H(\rho_A) + \bigO(G_N^{-\frac12}) = H(\rho_A) + \bigO(\sqrt{H(\rho_A)})
  \end{align*}
  This is precisely the link state regime we investigated in \cref{sec:two minimal cuts}, and shows that the large $c$ limit in a holographic CFT is similar to considering the many-copy limit in quantum information theory.

  \subsubsection{Fixed area states}
  A useful variation on the derivation of the RT formula is to consider \emph{fixed-area states} \cite{dong2019flat}.
  We consider the `area operator' $\hat{\gamma}_A$ for a subsystem $A$, which measures the area of a minimal surface.
  The operator $\hat{\gamma}_A$ actually has fluctuations when we consider a state with a semiclassical gravity dual, and we may write
  \begin{align*}
    \ket{\rho} = \int \d \alpha \ket{\psi_{\alpha}}
  \end{align*}
  where $\psi_{\alpha}$ is an eigenvector of $\hat A$ with eigenvalue $\alpha$.
  The state $\ket{\psi_\alpha}$ is a fixed-area state: while it is not itself a physical state, they form a basis with which to construct physical states. These fixed-area states can be thought of as prepared by a bulk path integral where we have restricted to bulk geometries for which $\abs{\gamma_A} = \alpha$.
  The same derivation as above now leads to
  \begin{align*}
    H_k(\rho_A) \approx \frac{\alpha}{4G_N},
  \end{align*}
  since the area of the minimal surface is fixed to be $\alpha$.
  Thus, for fixed-area states, all R\'enyi entropies are (to good approximation) equal, which implies that the state has flat entanglement spectrum.
  This corresponds to a random tensor network state with maximally entangled link states.

  \subsection{States at a minimal surface phase transition}\label{sec:surface phase transition}
  Consider, again, a holographic CFT state $\rho_A$ on a boundary subregion $A$. In the case where there is a unique minimal surface, the previous subsection showed how the replica trick leads to the RT formula.

  We would now like to investigate the entanglement spectrum of $\rho_A$ when there are two RT surfaces for $A$ that have area of the same order of magnitude. Let us denote the two competing minimal surfaces in the bulk by $\gamma_{A,1}$ and $\gamma_{A,2}$. In \cite{marolf2020probing}, it was shown how the entanglement entropy should behave at this phase transition between the two minimal surfaces. A similar computation was performed in \cite{akers2020leading} for the setting with two competing minimal surfaces and bulk matter. We will briefly sketch their argument, referring the interested reader to \cite{akers2020leading} for more details.

  \subsubsection{Fixed area states with two minimal surfaces}
  We begin by considering bipartite fixed-area states $\rho_{A\bar{A}}$ prepared by a Euclidean path integral, in which we have fixed the size of the two competing surfaces $\gamma_{A,1}$ and $\gamma_{A,2}$. In this case, saddle points of the path integral have to satisfy the equations of motion everywhere except at the surfaces $\gamma_{A,i}$, where there could be conical singularities.
  The two surfaces divide the bulk into 3 regions: $a_1$, $a_2$ and $a_3$.

  The saddle points are states with smooth geometries in the regions where the copies of regions $a_i$ are glued to each other -- the $k$ copies of region $a_1$ are glued cyclically, while the copies of region $a_3$ are glued along the identity permutation. On the middle region $a_2$, we are free to glue along an arbitrary permutation $\pi$, giving solutions that can break the replica symmetry. However, it is often the case that the dominant solutions of the path integral are those that respect the replica symmetry, and we assume this to be the case for all solutions we consider.

  Let us denote a saddle-point solution of the path integral with permutation $\pi$ by $B_{A,\pi}$, and let us write $\phi_i$ for the conical singularity angle at $\gamma_{A,i}$.
  It turns out that these saddle points lead to an action of the form
  \begin{align*}
    I(B_{A,\pi}) = k I_{\text{away}}(B_{A,\pi}) + (k\phi_1 - 2\pi\abs{C(\pi)})\frac{\abs{\gamma_{A,1}}}{8\pi G_N} + (k\phi_2 - 2\pi\abs{C(\tau^{-1}\pi)})\frac{\abs{\gamma_{A,2}}}{8\pi G_N},
  \end{align*}
  where $I_{\text{away}}$ is the action away from the surfaces, $\abs{\gamma_{A,i}}$ is the area of the surface $\gamma_{A,i}$, and we recall that $\abs{C(\pi)}$ is the number of cycles of $\pi$, and $\tau$ is the full cycle $(1 2 \ldots k)$.
  In particular, for $k=1$, we have
  \begin{align*}
    I(B_{A,\pi}) = I_{\text{away}}(B_{A,\pi}) + (\phi_2 - 2\pi)\frac{\abs{\gamma_{A,1}}}{8\pi G_N} + (\phi_2 - 2\pi)\frac{\abs{\gamma_{A,2}}}{8\pi G_N},
  \end{align*}
  so when we look at the normalized path integral, and sum over all permutations
  \begin{align*}
    \frac{Z_{A,k}}{(Z_{A,1})^k} \approx \sum_{\pi \in S_k} e^{(\abs{C(\pi)} - k)\frac{\abs{\gamma_{A,1}}}{4G_N} + (\abs{C(\tau^{-1}\pi)} - k)\frac{\abs{\gamma_{A,2}}}{4G_N}} = \sum_{\pi \in S_k} e^{d(\id,\pi)\frac{\abs{\gamma_{A,1}}}{4G_N} + d(\pi,\tau)\frac{\abs{\gamma_{A,2}}}{4G_N}}.
  \end{align*}
  Note that the areas $\gamma_{A,i}$ are of the same order of magnitude, and are divergent. As a result, only the permutations for which $d(\id,\pi) + d(\pi,\tau)$ is minimal will contribute.
  In other words, only the configurations where $\pi \in NC(k)$ contribute, as all other permutations are suppressed by at least a factor of the area of $\gamma_{A,i}$ in the action.
  We conclude that
  \begin{align}\label{eq:fixed area two surfaces}
    \frac{Z_{A,k}}{(Z_{A,1})^k} \approx \sum_{\pi \in NC(k)} e^{d(\id,\pi)\frac{\abs{\gamma_{A,1}}}{4G_N} + d(\pi,\tau)\frac{\abs{\gamma_{A,2}}}{4G_N}}.
  \end{align}

  This computation is in one-to-one correspondence with the computation of the moments for a subsystem of a single random tensor, as observed in \cite{penington2019replica}.
  It also corresponds more generally to a random tensor network computation with two minimal cuts and maximally entangled link states, as is clear from the computations in \cref{sec:rtn} and \cref{sec:almost max entangled}.
  This is in agreement with the claim that random tensor network states are a model for fixed area states.
  One can also add bulk matter in this path integral computation, which will again be in correspondence to a similar computation in a random tensor network \cite{akers2020leading} with a background state, as in \cref{sec:background states}.
  From the moment computation in \cref{eq:fixed area two surfaces} and applying the results for the entanglement of a single random tensor, we observe that for two fixed surfaces of exactly equal size, the (appropriately scaled) entanglement spectrum is a Marchenko-Pastur distribution, giving an $\mathcal O(1)$ correction to the entanglement entropy, agreeing with the gravitational replica trick computation in \cite{marolf2020probing}.

  \subsubsection{General states at the minimal surface phase transition}
  We now relax the fixed-area restriction, and study similar calculations performed in \cite{dong2019flat}, \cite{marolf2020probing}, and \cite{akers2020leading}. Denote by $Z_{A,k}(\alpha_1, \alpha_2)$ the path integral where we have fixed the areas of $\gamma_{A,i}$ to be $\alpha_i$.
  Then, following Section 2.3 in \cite{dong2019flat}, the full path integral is given by
  \begin{align*}
    Z_{A,k} = \int \d \alpha_1 \d \alpha_2\ Z_k(\alpha_1, \alpha_2).
  \end{align*}
  Again, we consider the semiclassical limit, so we take our saddle-point approximation of $Z_k(\alpha_1, \alpha_2)$ in \cref{eq:fixed area two surfaces}, and we also take a saddle-point approximation for the integral over $\alpha_1$ and $\alpha_2$. This saddle point will be at the values for $\alpha_i$ where the deficit angles are $\phi_i = \frac{2\pi}{n}$ (since then the saddle point geometry is smooth), which leads to
  \begin{align}
    \frac{Z_{A,k}}{(Z_{A,1})^k} \approx \sum_{\pi \in NC(k)} e^{(\abs{C(\pi)} - k)\frac{\abs{\gamma^{(k)}_{A,1}}}{4G_N} + (\abs{C(\tau^{-1}\pi)} - k)\frac{\abs{\gamma^{(k)}_{A,2}}}{4G_N}}
  \end{align}
  where $\gamma^{(k)}_{A,i}$ are now minimal surfaces, with a dependence on $k$. Analytic continuation to $k = 1$ yields the usual surface prescription. In particular, if there are two surfaces that are of almost equal area, the contribution of the larger term is exponentially suppressed for any $\mathcal{O}(1)$ or larger difference in areas.

  To zoom in on the region where the two surfaces are nearly equal, we write the state as a superposition of fixed area states, as in Section 3 of \cite{marolf2020probing}.
  We discretize the area size $\alpha_1$ and $\alpha_2$ over $\poly(\frac{1}{G_N})$ values and approximate the full holographic pure state with boundary regions $A$ and $B = A^c$ as a (finite) sum
  \begin{align*}
    \ket{\psi}_{AB} = \sum_{\alpha_1,\alpha_2} \sqrt{p(\alpha_1,\alpha_2)} \ket{\psi_{\alpha_1,\alpha_2}}
  \end{align*}
  where $\ket{\psi_{\alpha_1,\alpha_2}}$ is the state where the areas are fixed as $\abs{\gamma_{A,i}} = \alpha_i$, and $p$ is a probability distribution over the possible areas.
  Then a straightforward calculation of the reduced density matrix $\rho_A$ yields a state of the form:
  \begin{align*}
    \rho_A & = \sum_{\alpha_1,\alpha_2}p(\alpha_1,\alpha_2)\rho_{A,\alpha_1,\alpha_2} + \sum_{\alpha_1\neq \alpha_1', \alpha_2\neq \alpha_2'}\sqrt{p(\alpha_1,\alpha_2)p(\alpha_1',\alpha_2')}\tr_B\left(\ket{\psi_{\alpha_1,\alpha_2}}\bra{\psi_{\alpha_1',\alpha_2'}}\right), \\
           & = \sum_{\alpha_1,\alpha_2}p(\alpha_1,\alpha_2)\rho_{A,\alpha_1,\alpha_2} + OD_A,
  \end{align*}
  where $OD_A$ are the off-diagonal elements of $\rho_A$. One can argue that the states $\rho_{A,\alpha_1,\alpha_2}$ are all mutually orthogonal by entanglement wedge reconstruction -- the area operator can be reconstructed on $A$, and hence, each $\rho_{A,\alpha_1,\alpha_2}$ is perfectly distinguishable from each other by measuring the area operator. Then the entropy of the diagonal part of the state is easily computed as
  \begin{equation}
    H\left(\sum_{\alpha_1,\alpha_2}p(\alpha_1,\alpha_2)\rho_{A,\alpha_1,\alpha_2}\right) = \sum_{\alpha_1,\alpha_2}p(\alpha_1,\alpha_2)H(\rho_{A,\alpha_1,\alpha_2}) - \sum_{\alpha_1,\alpha_2}p(\alpha_1,\alpha_2)\log p(\alpha_1,\alpha_2).
  \end{equation}
  The second term is the so-called entropy of mixing, and it is a standard argument that this term is suppressed relative to the first term \cite{marolf2020probing} as $\mathcal{O}(\ln G_N)$ or smaller. The entropies appearing in the first term can be computed using the methods in the previous subsection, for which one finds that $H(\rho_{A,\alpha_1,\alpha_2}) = \frac{\min\{\alpha_1, \alpha_2\}}{4G_N}$.

  Returning to the off-diagonal terms $OD_A$, \cite{marolf2020probing} argued that such terms should be subleading in the analytic continuation due to the relevant surfaces breaking replica symmetry. At the same time, \cite{akers2020leading} argued that such terms should be subleading due to reasons similar to those for the orthogonality of the diagonal elements: complementary entanglement wedge reconstruction implies one may reconstruct the bulk area operator on $B$, and hence, such states are perfectly distinguishable on $B$. Therefore, the partial trace over $B$ vanishes for $\alpha_1\neq \alpha_1'$, $\alpha_2\neq \alpha_2'$.

  One reaches the conclusion:
  \begin{align}
    H(\rho_A) = \sum_{\alpha_1,\alpha_2} p(\alpha_1, \alpha_2)\frac{\min\{\alpha_1, \alpha_2\}}{4G_N}  + \mathcal{O}(\ln G_N),
  \end{align}
  In this computation the $\bigO(1)$ corrections due to the Marchenko-Pastur distribution along each pair of minimal cuts of equal size (or equivalently, the degeneracy in the contributions to the saddle point approximation) is irrelevant, as the entropy of mixing already leads to $\bigO(\ln G_N)$ deviations.

  Our results in \cref{sec:far from max entangled} can be seen as a rigorous version of the above result for random tensor networks.

  \subsection{Replica wormholes and JT gravity}\label{sec:jt gravity}

  One of the most basic holographic models of quantum gravity is \emph{JT gravity}, a $1+1$-dimensional model of gravity; see \cite{sarosi2017ads} for a review. JT gravity is also a useful model for the near-horizon dynamics of extremal black holes in any dimension.
  In this case, the dual theory should be $0+1$-dimensional. In other words, it should be regular quantum mechanics rather than a quantum field theory.
  Indeed, in \cite{saad2019jt}, it was shown that JT gravity theory is dual to a random matrix model, where the Hamiltonian is a random self-adjoint matrix according to some distribution, providing another strong connection between quantum gravity and random matrix theory. It also appears that such gravitational systems may be dual to an ensemble of boundary theories \cite{bousso2020gravity}, rather than a single one.
  Whether this is fundamental, a special feature of 1+1-dimensional models, or due to averaging over microscopic features of the gravity theory, is a line of active research~\cite{saad2021wormholes}.

  We now sketch a variation on a calculation in \cite{penington2019replica}, providing a proof-of-principle that the free probability techniques used in \cref{sec:almost max entangled} provide an elegant framework to understand such results. We refer the interested reader to \cite{penington2019replica} for more in-depth motivation and detailed computations.

  We consider JT gravity with an end of the world (EOW) brane containing a large number $n$ of internal states. This model has action
  \begin{align*}
    I = I_{\JT} + \mu\int_{\text{brane}} \d s,
  \end{align*}
  where the action of a manifold $M$ with metric $g$, induced boundary metric $h$, (trace of) extrinsic curvature $K$, and dilaton $\phi$ is given by
  \begin{align*}
    I_{\JT}[M, g] = -\frac{S_0}{2\pi}\left[\frac12\int_{M} \sqrt{g} R + \int_{\partial M} \sqrt{h} K \right] - \left[\frac12\int_{M} \sqrt{g}\phi(R + 2) + \int_{\partial M} \sqrt{h}\phi K \right].
  \end{align*}
  The details of this action are not very important for us; we just note that we will take the $S_0$ parameter to be large, and that this suppresses contributions where the manifold $M$ has genus $\gamma > 0$ in the Euclidean path integral.

  \begin{figure}[t]
    \centering
    \begin{subfigure}[t]{.25\textwidth}
      \centering
      \begin{overpic}[width=.65\linewidth,grid=false]{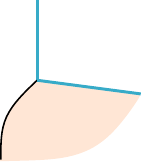}
        \put(30,75){\footnotesize{\color{NavyBlue}{$i$}}}
      \end{overpic}
      \caption{Path integral which prepares the state $\ket{\psi_i}$.}
      \label{fig:path integral jt}
    \end{subfigure}%
    \hspace*{0.3cm}
    \begin{subfigure}[t]{.7\textwidth}
      \centering
      \begin{overpic}[width=.45\linewidth,grid=false]{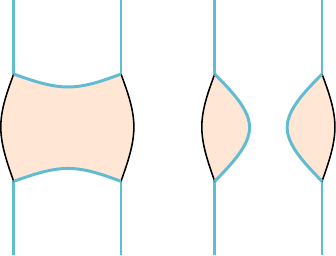}
        \put(8,60){\footnotesize{\color{NavyBlue}{$i$}}}
        \put(30,60){\footnotesize{\color{NavyBlue}{$i$}}}
        \put(8,10){\footnotesize{\color{NavyBlue}{$j$}}}
        \put(30,10){\footnotesize{\color{NavyBlue}{$j$}}}
        \put(70,60){\footnotesize{\color{NavyBlue}{$i$}}}
        \put(90,60){\footnotesize{\color{NavyBlue}{$j$}}}
        \put(70,10){\footnotesize{\color{NavyBlue}{$i$}}}
        \put(90,10){\footnotesize{\color{NavyBlue}{$j$}}}
        \put(18,35){\footnotesize{\color{Bittersweet}{$Z_2$}}}
        \put(63,35){\footnotesize{\color{Bittersweet}{$Z_1$}}}
        \put(88,35){\footnotesize{\color{Bittersweet}{$Z_1$}}}
      \end{overpic}%
      \hspace*{0.3cm}
      \begin{overpic}[width=.4\linewidth,grid=false]{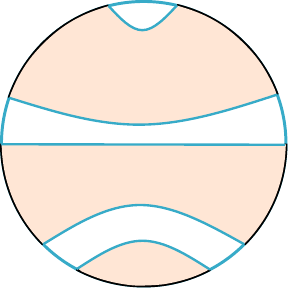}
        \put(47,72){\footnotesize{\color{Bittersweet}{$Z_2$}}}
        \put(47,35){\footnotesize{\color{Bittersweet}{$Z_2$}}}
        \put(47,4){\footnotesize{\color{Bittersweet}{$Z_1$}}}
        \put(85,55){\footnotesize{\color{NavyBlue}{$m_2$}}}
        \put(69,12){\footnotesize{\color{NavyBlue}{$m_2$}}}
        \put(46,93.5){\footnotesize{\color{NavyBlue}{$m_1$}}}
      \end{overpic}
      \caption{Replica trick for JT gravity. On the left side two diagrams that show up when computing matrix elements of $\psi_{\rad}^{\ot 2}$, on the right a diagram contributing a factor $Z_2^2 Z_1 m_2^2 m_1$ to $\tr[\psi_{\rad}^5]$.}
      \label{fig:jt gravity replica}
    \end{subfigure}
    \caption{Path integrals and the replica trick for JT gravity. See \cite{penington2019replica} for a detailed explanation of the diagrammatic notation.}
    \label{fig:jt gravity}
  \end{figure}

  Such systems are of interest when studying a simple version of an evaporating black hole.
  Let
  \begin{align*}
    \ket{\psi} = \frac{1}{\sqrt{n}} \sum_{i = 1}^n \ket{\psi_{B,i}}\ket{i_{\rad}}
  \end{align*}
  where $\ket{\psi_{B,i}}$ is the state of the black hole with the EOW brane in state $i$, and $\ket{i_{\rad}}$ is a reference state, which can be thought of as the radiation system.
  Notice that the entanglement spectrum of this state is flat. We generalize this to
  \begin{align*}
    \ket{\psi} = \sum_{i = 1}^n \sqrt{p_i} \ket{\psi_{B,i}}\ket{i_{\rad}},
  \end{align*}
  where the entanglement between the black hole and the radiation has some nontrivial spectrum, which we will assume to be close to uniform, so that $\tfrac{p_i}{n}$ is bounded by a constant for all $i$. In other words, we assume this distribution satisfies the bounded spectral variation assumption from \cref{sec:almost max entangled}.
  We let
  \begin{align*}
    m_k = \sum_{i=1}^n n^{k-1} p_i^k
  \end{align*}
  be the (appropriately scaled) moments of the entanglement spectrum of the EOW brane.
  Moreover, we write the path integral on a disc geometry with $k$ boundary components and $k$ EOW branes as $e^{-S_0}Z_k$.
  Then following the arguments of \cite{penington2019replica}, one can compute the $k$-th moment of the radiation system for large $n$ and large $e^{S_0}$ (large $n$ enforces a planar limit with only non-crossing partitions, while large $e^{S_0}$ ensures that only genus $\gamma=0$ geometries contribute), as illustrated in \cref{fig:jt gravity replica}.
  The contributions of path integral configurations connecting different replicas are called \emph{replica wormholes}.
  This diagrammatic computation shows that
  \begin{align}\label{eq:moments jt gravity}
    \tr[\psi_{\rad}^k] = \sum_{\pi \in NC(k)} m_{\pi} \frac{Z_{\pi^{-1}\tau}}{Z_1^k} n^{-d(\pi,\id)}e^{-S_0 d(\pi,\tau)}.
  \end{align}
  In this expression, recall that $Z_{\sigma} = \prod_{l \in C(\sigma)} Z_{l}$, where $C(\sigma)$ is the cycle type of $\sigma$, and $l \in C(\sigma)$ are the lengths of the cycles of $\sigma$.
  This expression implies that if $n \gg e^{S_0}$, the dominant contribution in \cref{eq:moments jt gravity} is given by $\pi = \tau$.
  On the other hand, if $n \ll e^{S_0}$, the dominant contribution in \cref{eq:moments jt gravity} is given by $\pi = \id$.
  This corresponds to the situation where there is a unique minimal surface (more precisely, a unique \textit{quantum extremal surface}).
  We are interested in the regime at the phase transition, which is analogous to the Page time of an evaporating black hole, so we assume $ne^{-S_0} \to 1$.
  The coefficients $m_{\sigma}$ correspond to the weight of the $\sigma$ configuration, as determined by the number and length of the cycles in $\sigma$, and the probability distribution of eigenstates $p_i$. In the case of the flat entanglement spectrum, this number equals the number of closed loops between the connected components. This will also be the case for the non-trivial entanglement spectrum, but each loop will have a different weight that depend on the $p_i$'s.

  The $m_k$ are the (scaled) moments of a probability distribution.  While the explicit expression itself is not important for our purposes, the $Z_l$ can be written as the $l$-th moments of a probability distribution \cite{penington2019replica}. Hence, \cref{eq:moments jt gravity} is a product of moments, summed over all non-crossing partitions of length $k$. As a result, $\tr[\psi_{\rad}^k]$ can be calculated in the planar limit very simply by way of free probability theory.

  More precisely, we may define moment-generating functions for $\EOW$, $\JT$, and $\rad$:
  \begin{align*}
    M_{\JT}(z)  & = \sum_{k=1}^\infty \frac{Z_{k}}{Z_1^k} z^k,             \\
    M_{\EOW}(z) & = \sum_{k = 1}^{\infty} m_k z^k,                         \\
    M_{\rad}    & = \sum_{k=1}^\infty e^{S_0(k-1)} \tr[\psi_{\rad}^k] z^k.
  \end{align*}
  Given a moment generating function $M(z)$, which is a formal power series, recall that its S-transform is given by
  \begin{align*}
    S(z) = \frac{1+z}{z}M^{-1}(z).
  \end{align*}
  We use this to define the S-transforms $S_{\JT}$, $S_{\EOW}$ and $S_{\rad}$.
  From \cref{thm:free product}, we see that the relation between the moments in \cref{eq:moments jt gravity} implies that these are related as
  \begin{align}\label{eq:s-transform jt}
    S_{\rad}(z) = \frac{1}{1 + z}S_{\JT}(z) S_{\EOW}(z).
  \end{align}
  This means that, at the phase transition where $n \approx e^{S_0}$, the spectrum of $\psi_{\rad}$ can be described as a free product of the spectra of the end of the world brane, the JT gravity spectrum and a Marchenko-Pastur distribution.

  \subsubsection{A recursion relation for the resolvent}
  Given a moment generating function $M(z)$, we may also define the \emph{resolvent function} $R(z)$ by
  \begin{align*}
    R(z) = \frac{1}{z}\left(1 + M\left(\frac{1}{z}\right)\right).
  \end{align*}
  To relate to previous results, we consider the case where the entanglement with the radiation is maximally entangled. In this case, $S_{\EOW}(z) = 1$ and $S_{\rad}(z) = \frac{1}{1 + z}S_{\JT}(z)$.
  By definition of the S-transform and setting $z \to M_{\rad}(z)$, this implies
  \begin{align*}
    \frac{1}{1 + M_{\rad}(z)}S_{\JT}(M_{\rad}(z)) = \frac{1 + M_{\rad}(z)}{M_{\rad}(z)}z,
  \end{align*}
  which we may rewrite as (again using the definition of the S-transform):
  \begin{align*}
    M_{\rad}(z) = M_{\JT}[z(1 + M_{\rad}(z))].
  \end{align*}
  In terms of the resolvent, this becomes
  \begin{align*}
    R(z) & = \frac{1}{z} + \frac{1}{z}M_{R}\left(\frac{1}{z}\right)               \\
         & = \frac{1}{z} + \frac{1}{z}M_{\JT}(R(z))                               \\
         & = \frac{1}{z} + \sum_{k=1}^\infty \frac{Z_{k}}{Z_1^k}\frac{R(z)^k}{z},
  \end{align*}
  which is a recursion relation previously derived in \cite{penington2019replica} by a diagrammatic argument. More generally, we can interpret \cref{eq:s-transform jt} as a (complicated) recursion relation that directly generalizes the above recursion relation.

  \section{Random tensor network states and split transfer protocols}\label{sec:split_recovery}
  Our results involving general background states are closely related to the quantum-information-theoretic task of \textit{split transfer} introduced in \cite{dutil2010one}, which can be understood as a variant of quantum state merging. The standard setup is as follows: two parties, Alice and Bob, share a state $\phi_{ABC_1\ldots C_m}$, with Alice controlling $A$, Bob controlling $B$, and the systems $C_1,\ldots, C_m$ being $m$ ``helpers''.
  Let $R$ be a reference system and $\phi_{ABRC_1\ldots C_m}$ a purification of $\phi$.
  Initially, the state is shared not only by Alice and Bob, but also with all the helper systems $C_i$. The goal of split transfer is to try to redistribute the state to Alice, Bob, and $R$, using local quantum operations and classical communication (LOCC) between Alice, Bob and the helper systems, and possibly with the assistance of additional maximally entangled states.

  A split transfer protocol consists of
  \begin{enumerate}
    \item A partitioning of the set of the helper systems: $T_A \sqcup T_B = \{C_1,\ldots,C_m\}$.
    \item For each $C_i \in T_A$, a number $K_{A,i}$ of shared maximally entangled qubits between Alice and $C_i$, and for each $C_i \in T_B$, a number $K_{B,i}$ of shared maximally entangled qubits between Bob and $C_i$.
    \item An LOCC operation between Alice, Bob and the helper systems, such that after applying the protocol, Alice and Bob share a state $\psi_{ABRC_1\ldots C_m}$, which is close to $\phi_{ABRC_1\ldots C_m}$, and is such that Alice possesses systems $A$ and $T_A$, while Bob controls $B$ and $T_B$. Moreover, after applying the protocol they may be in possession of a number $L_{A,i}$ or $L_{B,i}$ of (approximately) maximally entangled qubits between $C_i$, and respectively $A$ or $B$.
  \end{enumerate}
  In this case, we say that the split transfer protocol has \emph{entanglement costs} $K_{A,i} - L_{A,i}$ for all $C_i \in T_A$ and $K_{B,i} - L_{B,i}$ for all $C_i \in T_B$.
  A precise definition can be found as Definition 14 in \cite{dutil2010one}.


  Intuitively, the helper systems need to transfer their correlations with $R$ to Alice and Bob, but without touching $R$.
  For instance, a simple protocol would be that the helpers simply teleport their full system to either Alice or Bob, consuming EPR pairs, leading to large entanglement costs.
  We can construct a potentially much more efficient protocol by way of random measurements, as detailed in Proposition 16 of \cite{dutil2010one}. Roughly speaking, such a protocol functions because random measurements have the effect of \textit{decoupling} the helper systems from $R$. The helpers perform simultaneous random measurements on their systems, and send the results of their measurements to Alice and Bob. Then, Alice and Bob can use their share of the global state and their portions of the maximally-entangled states to perform a decoding operation conditioned on the results of the random measurements. The state they receive will be a purification of $\psi_{ABC_1\ldots C_m}$, which will then be equivalent to the original $\psi_{ABRC_1\ldots C_m}$ up to local isometries.
  The way we set up the split transfer protocol above was in a one-shot fashion: we get a single copy of $\phi$ and need to determine the optimal entanglement cost for the protocol.

  One can also consider asymptotic variants of split transfer, where one has many copies available and aims to achieve an optimal transfer rate.
  An example application is the \emph{entanglement of assistance}.
  Suppose that Alice, Bob and the helper systems $C_i$ get many copies of a pure state $\phi$.
  At what rate can they distill maximally entangled pairs between Alice and Bob, if Alice and Bob are allowed to perform LOCC operations with all the helper systems?
  In this case, the answer is that the rate is given by
  \begin{align*}
    \min_{T_A} S(AT_A)_{\phi},
  \end{align*}
  that is, by minimizing the entanglement entropy over all bipartitions.
  This rate is reminiscent of the importance of minimal cuts in a random tensor network, and the connection was explored in \cite{hayden2016holographic}.

  To see how the task of split transfer relates to random tensor networks, we consider three boundary regions, $A$ and $B$, under the control of Alice and Bob, and the purifying system $R$.
  Each of the bulk vertices on the network correspond to a `helper' party.
  We would like to know whether there exists a protocol in which the assisting parties are allowed to perform local operations and classical communication (LOCC) such that the state $\phi_{VR}$ is redistributed into a state $\rho_{ABR}$ held by Alice and Bob that can be transformed by local isometries, acting only on $A$ and $B$, to a state close to $\phi_{VR}$. The protocol given in \cite{dutil2010one} consists of simultaneous random measurements by each of the helpers.
  This precisely corresponds to the random projections performed in constructing the random tensor network state with this background state!

  In this light, we can interpret \cref{thm:split transfer} as a result on split transfer.
  Let us assume that $\phi \in \Peq(A B R C_1 \ldots C_n)$, denote the associated random tensor network state by $\rho_{ABR}$, and choose a partitioning $T_A \sqcup T_B$ of the assisting (bulk) parties.
  Since $H_2(A \vert B)_{\phi \vert \phi} \geq H_{\min}(A \vert B)_{\phi}$, \cref{thm:split transfer} directly yields
  \begin{thm*}
    Suppose that
    \begin{align}\label{eq:min entropy condition 1}
      H_{\min}(S_A \vert B R T_B)_{\phi} \geq \KK_1
    \end{align}
    for all non-empty subsets $S_A \subseteq T_A$ and
    \begin{align}\label{eq:min entropy condition 2}
      H_{\min}(S_B \vert A R T_A)_{\phi} \geq \KK_2
    \end{align}
    for all non-empty subsets $S_B \subseteq T_B$.
    Then
    \begin{align}\label{eq:recovery split transfer}
      \EE \min_{V_A, V_{B}} \norm{(V_A \ot V_{B} \ot I_R)\rho(V_A^\dagger \ot V_{B}^\dagger \ot I_R) - \phi_{ABR C_1 \ldots C_n}}_1 = \bigO((2^{-\frac14\KK_1} + 2^{-\frac14\KK_2}).
    \end{align}
    where the minimum is over isometries $V_{A}: \HH_A \to \HH_{A T_A}$ and $V_{B} : \HH_{B} \to \HH_{B T_B}$.
  \end{thm*}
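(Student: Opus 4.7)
The plan is to recognize this statement as a direct corollary of \cref{thm:split transfer}, with the only work being a dictionary between the split-transfer setup and the generalized tensor network setup of \cref{sec:background states}. We model the problem as a tensor network whose boundary vertices are $A$ and $B$, whose bulk vertices are the helpers $C_1, \ldots, C_n$, and whose (normalized) background state is $\phi \in \Peq(A B R C_1 \ldots C_n)$ with $R$ an additional boundary purifying system. Under this identification, an element of $C(A)$ is any subset of the form $A \cup S$ with $S \subseteq \{C_1, \ldots, C_n\}$, and we distinguish the cut $\Gamma_A := A \cup T_A$, with complement $\Gamma_A^c = B \cup T_B$ in $V$.

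Next I would translate the two hypothesis families of \cref{thm:split transfer}. The cuts $\Delta_A \subsetneq \Gamma_A$ are in bijection with non-empty $S_A \subseteq T_A$ via $\Delta_A = A \cup (T_A \setminus S_A)$, giving $\Gamma_A \setminus \Delta_A = S_A$ and $\Gamma_A^c R = B T_B R$; dually, the cuts $\Gamma_A \subsetneq \Delta_A$ are in bijection with non-empty $S_B \subseteq T_B$ via $\Delta_A = \Gamma_A \cup S_B$, giving $\Delta_A \setminus \Gamma_A = S_B$ and $\Gamma_A R = A T_A R$. Thus the conditions \cref{eq:H2 condition 1} and \cref{eq:H2 condition 2} become precisely
\begin{align*}
  H_2(S_A \vert B T_B R)_{\phi \vert \phi} \geq \KK_1 \quad \text{and} \quad H_2(S_B \vert A T_A R)_{\phi \vert \phi} \geq \KK_2
\end{align*}
for all non-empty $S_A \subseteq T_A$ and $S_B \subseteq T_B$.

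Finally, I would apply the inequality $H_{\min}(X \vert Y)_\phi \leq H_2(X \vert Y)_{\phi \vert \phi}$ from \cref{eq:renyi 2 vs min entropy}, valid since $\phi$ is normalized, to deduce these $H_2$-bounds from the assumed smooth-min-entropy bounds \cref{eq:min entropy condition 1} and \cref{eq:min entropy condition 2} with the same constants $\KK_1$ and $\KK_2$. Invoking \cref{thm:split transfer} then produces isometries $V_A \colon \HH_A \to \HH_{\Gamma_A} = \HH_{A T_A}$ and $V_B \colon \HH_B \to \HH_{\Gamma_A^c} = \HH_{B T_B}$ achieving the bound, where the $\tr[\phi]^{1/4}$ prefactor appearing in \cref{thm:split transfer} equals one because $\tr[\phi] = 1$. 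There is essentially no obstacle beyond the combinatorial dictionary in the first two paragraphs; the analytic content has already been absorbed into \cref{thm:split transfer}.
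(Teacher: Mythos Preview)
Your proposal is correct and matches the paper's approach exactly: the paper states that the theorem follows ``directly'' from \cref{thm:split transfer} via the inequality $H_2(A\vert B)_{\phi\vert\phi} \geq H_{\min}(A\vert B)_{\phi}$, which is precisely the dictionary-plus-\cref{eq:renyi 2 vs min entropy} argument you give. One minor slip: the hypotheses \cref{eq:min entropy condition 1} and \cref{eq:min entropy condition 2} involve the ordinary (non-smooth) min-entropy, not the smooth version, but your argument uses the correct inequality regardless.
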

  This result shows that if $K_1$ and $K_2$ are sufficiently large, then after measurement in a random basis, the state possessed by Alice and Bob can, with high probability, be used to approximately reconstruct $\phi$ by acting with local isometries on the systems of Alice and Bob.

  Suppose we fix values for $K_1$ and $K_2$. If the conditions in \cref{eq:min entropy condition 1} and \cref{eq:min entropy condition 1} are not satisfied for the initial state $\phi$, we can use another state where we have added an appropriate number of maximally entangled Bell pairs between the assisting parties, increasing the entanglement cost of the protocol.\footnote{In fact, the protocol in \cite{dutil2010one} is slightly more general than what we describe; rather than measuring a random state one could also measure a random projection of rank greater than 1. This can be used to obtain EPR pairs between the helpers and and Alice and Bob to get nonzero $L_{A,i}$ and $L_{B,i}$.}
  An interesting open question in this context is whether one can generalize this result using \emph{smooth} entropies in \cref{eq:min entropy condition 1} and \cref{eq:min entropy condition 1}. As alluded to in \cref{sec:far from max entangled}, the problem is that for a general state $\phi$, one would need to perform simultaneous smoothing for all the relevant subsystems, which remains an open problem.

  There is an alternative approach, in which it is straightforwardly possible to use smooth entropies \cite{dutil2010one}.
  In this approach, one merges each party in $T_A$ one by one, and similarly for $T_B$.
  That is, we choose some ordering $T_A = \{1, \ldots, m\} = [m]$ and we apply a sequence of state merging protocols where we merge the state in $m$ steps, where a single step merges $A \cup T_A \setminus [i-1]$ into $A \cup T_A \setminus [i]$.
  In this case it is not hard to see that, if we allow some error the entanglement cost is determined by the smooth conditional entropies $H_{\min}^\eps(\{i+1\} \vert  B R [i] T_B)_{\phi}$.
  We perform a similar protocol for $B$ and the assisting $T_B$ systems.

  \subsection{Split transfer and recovery in holography}

  Split transfer is closely related to \emph{subregion-subregion duality}, or \emph{entanglement wedge reconstruction}, in holography.
  Consider an asymptotically AdS, stationary semiclassical geometry which is dual to a boundary CFT state $\rho$.
  We fix a time-reversal invariant spatial slice and partition the boundary into $A$ and $\bar{A}$.
  Let $\gamma_A$ be the minimal surface for $A$, and recall that $\partial \gamma_A = \partial A$.
  The region enclosed by $A$ and $\gamma_A$ is the \emph{entanglement wedge} of $A$, which we denote by $\Gamma_A$.
  The claim of entanglement wedge reconstruction is that if we act with a low-energy local bulk operator in the entanglement wedge of $A$, we can reconstruct the action of this operator as a corresponding operator acting on the boundary system $A$.

  One way to make this more precise is by considering a \emph{code subspace} of bulk states $S$, which can be thought of as a set of states obtained by acting with low-energy operators on a fixed semiclassical space-time. The action of the operators is small, in the sense that they do not create a significant backreaction that changes the geometry.
  We then introduce a reference system $R$ of the same dimension as $S$. In this framework, the AdS/CFT correspondence describes an encoding of the bulk into the boundary, taking a bulk state $\phi_{\Gamma_A\Gamma_A^cR}$ to a boundary state $\rho_{A \bar A R}$.
  In this set-up, the claim of entanglement wedge reconstruction is that we can act with an isometry on $A$ to recover $\phi_{\Gamma_A R}$ (and $\Gamma_A$ is actually the maximal such region). This situation corresponds to quantum state merging -- there is a single decoder $A$. However, in AdS/CFT, one usually requires the stronger condition of \emph{complementary entanglement wedge reconstruction}, in which the entanglement wedge for $\bar{A}$ is also the complement of $\Gamma_A$, so that $\phi_{\Gamma_A^c R}$ is recoverable from $\bar{A}$. This stronger requirement with two decoders is closely related to split transfer.
  These ideas and the precise relation to quantum information theory remain an active area of research, e.g. \cite{hayden2016holographic, harlow2016qec,akers2020leading, akers2021quantum}.

  In particular, \cite{bao2019beyond, akers2020leading, akers2021quantum} have argued that one-shot quantum information is the correct framework to understand entanglement wedge reconstruction, as holography is fundamentally a one-shot setting -- we are provided with a single copy of a gravitational or CFT state, rather than asymptotically-many copies.

  The limit of large effective central charge, and hence small $G_N$, does reproduce certain aspects of the many-copy limit, as we reviewed in \cref{sec:euclidean}. However, the distinction between the one-shot and asymptotic regimes becomes apparent in the presence of large bulk entropy, where the bulk entropy, and hence $R$, is large. Such situations arise when studying, for example, the black hole information paradox.
  In this case, where we assume we have a bulk state $\phi$, the minimal surface prescription in \cref{eq:rt formula} is replaced by \emph{quantum extremal surface} prescription:
  \begin{align*}
    H(\rho_A) = \min \text{ext}_{\gamma_A} \left\{ \frac{\abs{\gamma_A}}{4 G_N} + H(\Gamma_A)_{\phi} \right\},
  \end{align*}
  where we minimize over extremal surfaces $\gamma_A$, and we minimize the joint contribution of the area of $\gamma_A$ and the bulk entropy contained in the associated entanglement wedge $\Gamma_A$.
  This formula has a natural tensor network interpretation: consider a random tensor network state with background state $\phi = \phi_{V^{(b)}R} \ot \phi_{V^{(l)}}$, where $\phi_{V^{(b)}R}$ is a general background state (accounting for bulk entropy) and $\phi_{V^{(l)}}$ is a tensor product of link states on a graph $G = (V,E)$.
  Let us take maximally entangled link states with dimension $D$.
  Then, for some cut $\Gamma_A$ with edge set $\gamma_A$, we have
  \begin{align*}
    H(\phi_{\Gamma_A}) = \log(D)\abs{\gamma_A} + H(\phi^{(b)}_{\Gamma_A}),
  \end{align*}
  and we may hope that minimization over this quantity along the cuts gives a good approximation to the entropy.
  Whether such a prescription is valid depends on the structure of the background state $\phi$.
  A proposal put forth in \cite{akers2020leading} is that the surface $\gamma_A$ with entanglement wedge $\Gamma_A$ gives the \emph{$\max$-entanglement wedge}, if $\Gamma_A$ is the largest region such that, for any other surface $\delta_A$ homologous to $A$, with $\Delta_A$ the region enclosed by $A$ and $\delta_A$, and where $\Delta_A$ is contained in $\Gamma_A$, it holds that
  \begin{align*}
    H_{\min}^{\eps}(\Gamma_A \setminus \Delta_A \vert \Gamma_A^c R) \gg \frac{\abs{\gamma_A} - \abs{\delta_A}}{4G_N}.
  \end{align*}
  In this case, $\Gamma_A$ should be the largest region which can be (approximately) reconstructed from $A$.
  Again, one can think of a random tensor network where the background state is a tensor product of a bulk state and a maximally entangled link state of dimension $D$ with $\log(D) = \Theta(G_N^{-1})$ along the discretization of the space.
  Then this condition is (apart from the simultaneous smoothing problem) equivalent to \cref{eq:min max condition 1}.
  Enforcing complementary reconstruction, in which $\Gamma_A$ is the $\max$-entanglement wedge for $A$, and its complement $\Gamma_A^c$ is the $\max$-entanglement wedge for $\bar{A}$, allows us to interpret the holographic encoding of the bulk state into the boundary as a version of one-shot split transfer.
  See \cite{akers2020leading} for a detailed discussion of this proposal for holographic systems.

  \section{Joint smoothing of link states}\label{sec:joint smoothing}
  In this section, we provide a proof of \cref{lem:joint relative min smoothing}. In our application of one-shot entropy estimates in \cref{sec:one minimal cut} and \cref{sec:two minimal cuts},
  we needed to jointly smooth over different subsystems for the background state $\phi$.
  For general background states, joint smoothing is an open problem \cite{drescher2013simultaneous,dutil2011multiparty}; however, if the background state is a product of link states as in \cref{eq:link state}, we can straightforwardly do so.
  In order to prove \cref{lem:joint relative min smoothing}, we first need some intermediate lemmas.
  These lemmas involve quantum states which can be expanded in some preferential basis with positive coefficients.
  Therefore, for \cref{lem:smaller max entropy 1,lem:smaller max entropy 2,lem:basis min smoothing,lem:bound fidelity minimum}, we assume that each quantum system has a preferential basis, which we assume without loss of generality to be the standard basis.

  \begin{lem}\label{lem:smaller max entropy 1}
    Suppose that $\phi \in \Pleq(X S T)$ can be written as
    \begin{align*}
      \phi_{XST} = \sum_{i} \proj{i} \ot \phi_{ST,i}
    \end{align*}
    where each $\phi_{ST,i}$ is a pure state such that $\ket{\phi_{ST,i}}$ can be written with positive coefficients in the standard basis.
    If
    \begin{align*}
      \tilde\phi_{XST} = \sum_{i} \proj{i} \ot \tilde\phi_{ST,i}
    \end{align*}
    is such that each $\tilde\phi_{ST,i}$ is a pure state satisfying $\ket{\tilde\phi_{ST,i}} \leq \ket{\phi_{ST,i}}$ elementwise, then
    \begin{align*}
      H_{\max}(XS \vert T)_{\tilde \phi} \leq H_{\max}(XS \vert T)_{\phi}.
    \end{align*}
  \end{lem}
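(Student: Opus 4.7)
The plan is to work directly with the variational characterization
$$2^{H_{\max}(XS|T)_\rho} = \max_{\sigma_T \in \Pleq(T)} \bigl\|\sqrt{\rho_{XST}}\sqrt{I_{XS} \otimes \sigma_T}\bigr\|_1^2$$
and reduce the inequality to an elementary componentwise monotonicity in the standard basis. First, using the classical-quantum block structure $\phi_{XST} = \sum_i \proj{i}_X \otimes \phi_{ST,i}$ and the fact that each $\phi_{ST,i} = \proj{\phi_{ST,i}}$ is rank one, a direct computation (using $\sqrt{\proj{\phi_{ST,i}}} = \proj{\phi_{ST,i}}/\|\ket{\phi_{ST,i}}\|$ and the block-diagonal additivity of the Schatten $1$-norm) yields
$$\bigl\|\sqrt{\phi_{XST}}\sqrt{I_{XS} \otimes \sigma_T}\bigr\|_1 = \sum_i \sqrt{\bra{\phi_{ST,i}}(I_S \otimes \sigma_T)\ket{\phi_{ST,i}}},$$
and the analogous identity for $\tilde\phi$. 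Writing $\phi^i_{s,t} := \braket{s,t|\phi_{ST,i}} \geq 0$ and $\tilde\phi^i_{s,t} := \braket{s,t|\tilde\phi_{ST,i}} \geq 0$, the hypothesis translates to $\tilde\phi^i_{s,t} \leq \phi^i_{s,t}$ for all $i,s,t$.

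The key step is to show that the maximization over $\sigma_T$ can be restricted to operators with nonnegative entries in the standard basis. Given a spectral decomposition $\sigma_T = \sum_k p_k \ket{\psi_k}\bra{\psi_k}$ with $\ket{\psi_k} = \sum_t \psi_k^t \ket{t}$, I would define $\hat\sigma_T := \sum_k p_k \ket{|\psi_k|}\bra{|\psi_k|}$ with $\ket{|\psi_k|} := \sum_t |\psi_k^t|\ket{t}$. Then $\hat\sigma_T$ is positive semidefinite, has nonnegative entries, and the same trace as $\sigma_T$, so $\hat\sigma_T \in \Pleq(T)$. Applying the triangle inequality in each coordinate $s$ (which is valid because $\tilde\phi^i_{s,t} \geq 0$) gives
\begin{align*}
\bra{\tilde\phi_{ST,i}}(I_S \otimes \sigma_T)\ket{\tilde\phi_{ST,i}} &= \sum_k p_k \sum_s \Bigl|\sum_t \tilde\phi^i_{s,t}\bar\psi_k^t\Bigr|^2 \\
&\leq \sum_k p_k \sum_s \Bigl(\sum_t \tilde\phi^i_{s,t}|\psi_k^t|\Bigr)^2 = \bra{\tilde\phi_{ST,i}}(I_S \otimes \hat\sigma_T)\ket{\tilde\phi_{ST,i}},
\end{align*}
so replacing $\sigma_T$ by $\hat\sigma_T$ never decreases the $\tilde\phi$-objective.

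Finally, for any $\sigma_T$ with nonnegative entries, expanding both sides gives
$$\bra{\tilde\phi_{ST,i}}(I_S \otimes \sigma_T)\ket{\tilde\phi_{ST,i}} = \sum_{s,t,t'} \tilde\phi^i_{s,t}\tilde\phi^i_{s,t'}(\sigma_T)_{t,t'} \leq \sum_{s,t,t'} \phi^i_{s,t}\phi^i_{s,t'}(\sigma_T)_{t,t'} = \bra{\phi_{ST,i}}(I_S \otimes \sigma_T)\ket{\phi_{ST,i}}$$
termwise, since every factor is nonnegative and $\tilde\phi^i_{s,t}\tilde\phi^i_{s,t'} \leq \phi^i_{s,t}\phi^i_{s,t'}$. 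Taking square roots, summing over $i$, and combining with the previous paragraph yields
$$2^{H_{\max}(XS|T)_{\tilde\phi}} \leq \max_{\hat\sigma_T} \Bigl(\sum_i \sqrt{\bra{\phi_{ST,i}}(I_S \otimes \hat\sigma_T)\ket{\phi_{ST,i}}}\Bigr)^2 \leq 2^{H_{\max}(XS|T)_\phi},$$
and taking logarithms gives the claim. I do not anticipate a substantive obstacle: the positivity hypothesis has been arranged so that only elementary componentwise inequalities are invoked, and the only mildly technical points are the explicit evaluation of the Schatten $1$-norm in the classical-quantum setting and the verification that the entrywise absolute-value replacement preserves both positivity and the trace of $\sigma_T$.
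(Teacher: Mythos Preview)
Your proof is correct and follows essentially the same approach as the paper's: reduce the supremum over $\sigma_T$ to states with nonnegative standard-basis entries via the entrywise absolute-value replacement $\ket{\psi_k}\mapsto\ket{|\psi_k|}$, and then use that for such $\sigma_T$ the quadratic form $\bra{\cdot}(I_S\otimes\sigma_T)\ket{\cdot}$ is monotone in the (nonnegative) coefficients of the vector. Your presentation is in fact slightly cleaner than the paper's, since you keep explicit track of the square roots arising from the fidelity with rank-one blocks.
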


  \begin{proof}
    We start by arguing that for the max-entropy
    \begin{align}\label{eq:max entropy optimization}
      H_{\max}(XS \vert T)_{\tilde \phi} = \max_{\sigma_T \in \Pleq(T)} \log F(\tilde \phi_{XST}, I_{XS} \ot \sigma_{T})^2
    \end{align}
    we may choose $\sigma_T$ in the optimization problem with nonnegative matrix elements in the standard basis.
    Indeed, suppose that $\sigma_T$ realizes the maximum in \cref{eq:max entropy optimization}.
    Then we can write a spectral decomposition $\sigma_T = \sum_{j} p_j \proj{e_j}$.
    Now let $\ket{\tilde e_j} = \sum_k \abs{\braket{e_j | k}} \ket{k}$.
    Then $\tilde \sigma_T = \sum_{j} p_j \proj{\tilde e_j} \in \Pleq(T)$ has nonnegative matrix elements in the standard basis.
    Moreover,
    \begin{align*}
      F(\tilde \phi_{XST}, I_{XS} \ot \sigma_{T})^{2} & = \sum_i F(\tilde \phi_{ST,i}, I_{S} \ot \sigma_{T})^{2}                                      \\
                                                      & \leq \sum_{i,j} p_j \abs{\bra{\tilde\phi_{ST,i}} I_S \ot \proj{e_j}} \ket{\tilde\phi_{ST,i}}  \\
                                                      & \leq \sum_{i,j} p_j \bra{\tilde\phi_{ST,i}} I_S \ot \proj{\tilde e_j} \ket{\tilde\phi_{ST,i}} \\
                                                      & = F(\tilde \phi_{XST}, I_{XS} \ot \tilde \sigma_{T})^{2}.
    \end{align*}
    Let $\sigma_T$ be optimal for \cref{eq:max entropy optimization} with nonnegative matrix elements.
    It is clear that if $\ket{\tilde\phi_{ST,i}} \leq \ket{\phi_{ST,i}}$, then
    \begin{align*}
      F(\tilde \phi_{XST}, I_{XS} \ot \sigma_{T})^{2} & = \sum_i \bra{\tilde\phi_{ST,i}} I_{S} \ot \sigma_{T} \ket{\tilde\phi_{ST,i}} \\
                                                      & \leq \sum_i \bra{\phi_{ST,i}} I_{S} \ot \sigma_{T} \ket{\phi_{ST,i}}          \\
                                                      & = F(\phi_{XST}, I_{XS} \ot \sigma_{T})^{2}
    \end{align*}
    and hence
    \begin{align*}
      H_{\max}(XS \vert T)_{\tilde \phi} = \log F(\tilde \phi_{XST}, I_{XS} \ot \sigma_{T})^2 \leq \log F(\phi_{XST}, I_{XS} \ot \sigma_{T})^2 \leq H_{\max}(XS \vert T)_{\phi}.
    \end{align*}
  \end{proof}

  \begin{lem}\label{lem:smaller max entropy 2}
    Suppose that $\phi \in \Pleq(XYST)$ can be written as
    \begin{align*}
      \phi_{XYST} = \sum_{i} \proj{i} \ot \proj{j} \ot \phi_{ST,ij}
    \end{align*}
    where each $\phi_{ST,ij}$ is a pure state such that $\ket{\phi_{ST,ij}}$ can be written with positive coefficients in the standard basis.
    If
    \begin{align*}
      \tilde\phi_{XYST} = \sum_{i} \proj{i} \ot \proj{j} \ot \tilde\phi_{ST,ij}
    \end{align*}
    is such that each $\tilde\phi_{ST,ij}$ is a pure state satisfying $\ket{\tilde\phi_{ST,ij}} \leq \ket{\phi_{ST,ij}}$ elementwise, then
    \begin{align*}
      H_{\max}(XS \vert YT)_{\tilde \phi} \leq H_{\max}(XS \vert YT)_{\phi}.
    \end{align*}
  \end{lem}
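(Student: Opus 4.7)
The plan is to reduce this essentially to \cref{lem:smaller max entropy 1} by first using the classical structure on $Y$ to block-diagonalize the optimizing state, and then applying the same pointwise domination argument as in the previous lemma on each block.

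First, I would write the max-entropy variationally as
\begin{align*}
  H_{\max}(XS \vert YT)_{\tilde\phi} = \max_{\sigma_{YT} \in \Pleq(YT)} \log F(\tilde\phi_{XYST}, I_{XS} \ot \sigma_{YT})^2.
\end{align*}
Since $\tilde\phi_{XYST}$ is classical on $Y$ (i.e.\ invariant under the dephasing channel $\mathcal{D}_Y$ in the standard basis), and the generalized fidelity is monotone nondecreasing under the application of a CPTP map to both arguments, replacing $\sigma_{YT}$ with $\mathcal{D}_Y(\sigma_{YT})$ can only increase $F(\tilde\phi, I_{XS} \ot \sigma_{YT})$. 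Hence without loss of generality the optimizer has the block-diagonal form $\sigma_{YT} = \sum_j \proj{j} \ot \sigma_{T,j}$ with $\sigma_{T,j} \in \PSD(T)$ and $\sum_j \tr[\sigma_{T,j}] \leq 1$. For such $\sigma_{YT}$, orthogonality of the $\ket{j}$'s gives
\begin{align*}
  F(\tilde\phi_{XYST}, I_{XS} \ot \sigma_{YT})^2 = \sum_{i,j} \bra{\tilde\phi_{ST,ij}} I_S \ot \sigma_{T,j} \ket{\tilde\phi_{ST,ij}},
\end{align*}
and similarly for $\phi$.

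Next, for each fixed $j$, I apply the positivity-of-basis trick from the proof of \cref{lem:smaller max entropy 1}: writing a spectral decomposition $\sigma_{T,j} = \sum_k p_{jk} \proj{e_{jk}}$ and defining $\ket{\tilde e_{jk}} = \sum_\ell \abs{\braket{e_{jk}\vert \ell}}\ket{\ell}$ and $\tilde\sigma_{T,j} = \sum_k p_{jk}\proj{\tilde e_{jk}}$, one gets $\tr[\tilde\sigma_{T,j}] = \tr[\sigma_{T,j}]$ and, using that each $\ket{\tilde\phi_{ST,ij}}$ has nonnegative coefficients,
\begin{align*}
  \bra{\tilde\phi_{ST,ij}} I_S \ot \sigma_{T,j} \ket{\tilde\phi_{ST,ij}} \leq \bra{\tilde\phi_{ST,ij}} I_S \ot \tilde\sigma_{T,j} \ket{\tilde\phi_{ST,ij}}
\end{align*}
for every $i$. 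Thus we may further assume $\sigma_{T,j}$ has nonnegative matrix entries in the standard basis, and $\sum_j \proj{j}\ot \sigma_{T,j}$ remains a valid subnormalized state.

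Finally, with this positivity in hand, the elementwise inequality $\ket{\tilde\phi_{ST,ij}} \leq \ket{\phi_{ST,ij}}$ gives termwise domination
\begin{align*}
  \bra{\tilde\phi_{ST,ij}} I_S \ot \sigma_{T,j} \ket{\tilde\phi_{ST,ij}} \leq \bra{\phi_{ST,ij}} I_S \ot \sigma_{T,j} \ket{\phi_{ST,ij}},
\end{align*}
so summing over $i,j$ yields $F(\tilde\phi_{XYST}, I_{XS}\ot \sigma_{YT}) \leq F(\phi_{XYST}, I_{XS}\ot \sigma_{YT})$ for the optimizer, and hence $H_{\max}(XS\vert YT)_{\tilde\phi} \leq H_{\max}(XS\vert YT)_\phi$. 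The main subtlety is simply checking that the dephasing step on $Y$ is legitimate for the generalized fidelity of subnormalized states, but this follows from the standard data-processing inequality for $F_*$ applied to the CPTP map $I_{XS}\ot \mathcal{D}_Y \ot I_T$, which fixes $\tilde\phi$ by its block-diagonal form.
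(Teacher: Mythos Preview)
Your argument is correct in substance and reaches the conclusion, but by a different route than the paper. The paper's proof is a two-line reduction: it quotes the direct-sum formula for the conditional max-entropy of states that are classical on $Y$ (Proposition~4.6 in Tomamichel),
\[
  H_{\max}(XS\mid YT)_{\rho} \;=\; \log\Bigl(\sum_j 2^{\,H_{\max}(XS\mid T)_{\rho_j}}\Bigr),
\]
applied to both $\phi$ and $\tilde\phi$, and then invokes \cref{lem:smaller max entropy 1} blockwise on each $j$. Your proof instead works directly with the variational characterization, uses data processing under the $Y$-dephasing channel to reduce to a block-diagonal optimizer, and then repeats the positivity-of-basis and elementwise-domination steps from the proof of \cref{lem:smaller max entropy 1} inside each block. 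Both approaches are valid; the paper's is shorter because it outsources the block decomposition, while yours is self-contained and makes explicit why the $Y$-classical structure lets one restrict the optimizer.

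One small correction: your displayed formula for the squared fidelity is not right. For block-diagonal states with pure blocks one has
\[
  F(\tilde\phi_{XYST},\, I_{XS}\otimes\sigma_{YT}) \;=\; \sum_{i,j}\sqrt{\bra{\tilde\phi_{ST,ij}}\, I_S\otimes\sigma_{T,j}\,\ket{\tilde\phi_{ST,ij}}},
\]
i.e.\ the fidelity (not its square) is additive over blocks. This does not affect your argument, since you establish the termwise inequality on the radicands, and monotonicity of $\sqrt{\cdot}$ then gives the inequality for the sum. Also, the data-processing step should be phrased for the ordinary fidelity $F$ rather than $F_*$ (the second argument $I_{XS}\otimes\sigma_{YT}$ is not subnormalized), but $F$ itself is monotone under CPTP maps, so the conclusion stands.
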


  \begin{proof}
    If we have a state $\rho \in \Pleq(XYST)$ which is of the form
    \begin{align*}
      \sum_{j} \proj{j} \ot \rho_j
    \end{align*}
    where each $\rho_j \in \Pleq(XST)$, then (Proposition 4.6 in \cite{tomamichel2012framework})
    \begin{align*}
      H_{\max}(XS \vert YT) = \log\mleft(\sum_{j} 2^{H_{\max}(XS \vert T)_{\rho_j}} \mright).
    \end{align*}
    Together with \cref{lem:smaller max entropy 1} this implies the result.
  \end{proof}

  The background states we consider are tensor products of link states, and they can be expressed with positive coefficients in a product basis along the half-edges.
  The following lemma will help us to show that when we perform smoothing we can retain some of this structure.

  \begin{lem}\label{lem:basis min smoothing}
    Suppose that $\phi \in \Pleq(X S T)$ can be written as
    \begin{align*}
      \phi_{X S T} = \sum_{i} \proj{i} \ot \phi_{S T, i}
    \end{align*}
    where each $\phi_{S T, i}$ is pure and has a Schmidt decomposition
    \begin{align*}
      \ket{\phi_{S T,i}} = \sum_j \sqrt{\lambda_{i,j}} \ket{jj}
    \end{align*}
    in the standard basis.
    Then
    \begin{align*}
      H_{\min}^{\eps}(X S \vert T)_\phi = H_{\min}(X S \vert T)_{\tilde \phi}
    \end{align*}
    for a state $\tilde \phi \in \Pleq(XST)$ which is such that $P(\phi, \tilde \phi) \leq \eps$ and
    \begin{align*}
      \tilde \phi_{X S T} = \sum_{i} \proj{i} \ot \tilde \phi_{S T,i}
    \end{align*}
    with
    \begin{align*}
      \ket{\tilde \phi_{S T,i}} = \sum_j \sqrt{\tilde\lambda_{i,j}} \ket{jj}.
    \end{align*}
    for some $0 \leq \tilde \lambda_{i,j} \leq 1$.
  \end{lem}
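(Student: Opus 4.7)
The plan is to prove this via the smooth min-/max-entropy duality, reducing to a question about smooth max-entropies where the structural \cref{lem:smaller max entropy 1} applies naturally. First, I introduce a reference system $R$ isomorphic to $X$ and form the canonical purification
\[
  \ket{\phi^{\text{pur}}} = \sum_{i,j} \sqrt{\lambda_{i,j}}\, \ket{i}_X \ket{jj}_{ST} \ket{i}_R,
\]
which satisfies $\tr_R \phi^{\text{pur}} = \phi$ and has nonnegative coefficients in the product basis $\{\ket{i}_X\ket{s}_S\ket{t}_T\ket{i'}_R\}$. Tracing out $T$ yields $\phi^{\text{pur}}_{XSR} = \sum_j \proj{j}_S \otimes \proj{\mu_j}_{XR}$ with $\ket{\mu_j} = \sum_i\sqrt{\lambda_{i,j}}\ket{ii}_{XR}$, which is precisely the setup of \cref{lem:smaller max entropy 1} after relabeling (with $S$ in the role of their classical index and $XR$ in the role of their Schmidt-structured subsystem).

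By the duality of smooth min- and max-entropies together with Uhlmann's theorem, which lifts any $\eps$-close smoothing of $\phi$ to an $\eps$-close pure smoothing of $\phi^{\text{pur}}$ and conversely, one has $H^\eps_{\min}(XS|T)_\phi = -H^\eps_{\max}(XS|R)_{\phi^{\text{pur}}}$. It therefore suffices to exhibit a pure state
\[
  \ket{\tilde\phi^{\text{pur}}} = \sum_{i,j}\sqrt{\tilde\lambda_{i,j}}\,\ket{i}_X\ket{jj}_{ST}\ket{i}_R
\]
with $P(\phi^{\text{pur}},\tilde\phi^{\text{pur}})\leq\eps$ and $H_{\max}(XS|R)_{\tilde\phi^{\text{pur}}} \leq H^\eps_{\max}(XS|R)_{\phi^{\text{pur}}}$, since then $\tilde\phi := \tr_R \tilde\phi^{\text{pur}}$ has the form required by the lemma and realizes the supremum in $H^\eps_{\min}(XS|T)_\phi$.

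To build $\ket{\tilde\phi^{\text{pur}}}$, I start from any pure state $\ket\psi$ realizing the infimum in $H^\eps_{\max}(XS|R)_{\phi^{\text{pur}}}$ and apply three successive modifications: (i) dephase on $S$ (which preserves $\phi^{\text{pur}}$ since $\phi^{\text{pur}}_{XSR}$ is already block-diagonal in $S$) to put $\ket\psi$ into a classical-on-$S$ block structure; (ii) replace each coefficient of $\ket\psi$ in the product basis by its absolute value, which can only increase $\langle\phi^{\text{pur}}|\psi\rangle$ since $\ket{\phi^{\text{pur}}}$ has nonnegative entries, hence can only improve the generalized fidelity and reduce the purified distance from $\phi^{\text{pur}}$; (iii) project onto $W = \operatorname{span}\{\ket{i}_X\ket{jj}_{ST}\ket{i}_R\}$, which preserves the overlap with $\ket{\phi^{\text{pur}}}\in W$ and shrinks the norm; and (iv) clip each coefficient $\sqrt{\tilde\lambda_{i,j}}$ to $\min(\sqrt{\tilde\lambda_{i,j}},\sqrt{\lambda_{i,j}})$, which enforces the entrywise inequality $\ket{\tilde\mu_j}\leq\ket{\mu_j}$ required by \cref{lem:smaller max entropy 1} while only increasing the fidelity, since beyond $\sqrt{\lambda_{i,j}}$ the overlap $\sum_{i,j}\sqrt{\lambda_{i,j}\tilde\lambda_{i,j}}$ is strictly improved by truncation. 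Applying \cref{lem:smaller max entropy 1} to the resulting state then gives the required max-entropy bound.

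The main obstacle is controlling the max-entropy through the non-CPTP steps (ii)-(iv); \cref{lem:smaller max entropy 1} only yields monotonicity against the reference state $\phi^{\text{pur}}$, but we need the stronger statement that the max-entropy does not exceed $H^\eps_{\max}(XS|R)_{\phi^{\text{pur}}}$ (which is the max-entropy of the optimal $\ket\psi$, possibly strictly smaller). This will be handled by a term-by-term comparison of the reduced density operators on $XSR$ in the basis aligned with the classical label $S$, exploiting the block-diagonal structure imposed in step (i); the clipping step (iv) is where the fidelity budget and the max-entropy bound converge to give the tight estimate.
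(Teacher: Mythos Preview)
Your overall strategy --- purify, dualize to a smooth max-entropy, and then massage the optimizer into the desired block-diagonal form --- is the same as the paper's, but the part you yourself flag as ``the main obstacle'' is a genuine gap that your sketch does not close. In step~(i), dephasing on $S$ is a unital CPTP map on the $XS$ side, and by data processing this can only \emph{increase} $H_{\max}(XS|R)$; it keeps you inside the $\eps$-ball (since $\phi^{\text{pur}}_{XSR}$ is invariant) but moves the max-entropy in the wrong direction relative to the optimizer $\psi$. Steps~(ii) and~(iii) are likewise not monotone for the max-entropy. Finally, in step~(iv), invoking \cref{lem:smaller max entropy 1} after clipping to $\min(\sqrt{\tilde\lambda_{i,j}},\sqrt{\lambda_{i,j}})$ only yields $H_{\max}(XS|R)_{\tilde\phi^{\text{pur}}} \leq H_{\max}(XS|R)_{\phi^{\text{pur}}}$, i.e.\ the \emph{unsmoothed} bound, which can be far larger than $H^\eps_{\max}(XS|R)_{\phi^{\text{pur}}}$. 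Your closing sentence promises that ``the fidelity budget and the max-entropy bound converge'' at this step, but no mechanism is given, and the entrywise inequality you enforce is against $\phi^{\text{pur}}$, not against the optimizer whose max-entropy you actually need to beat.

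The paper sidesteps all of this. It introduces \emph{two} auxiliary systems: a copy $Y$ of $X$ (so the purification reads $\sum_{i,j}\sqrt{\lambda_{i,j}}\ket{ii}_{XY}\ket{jj}_{ST}\ket{0}_R$) and an extra large reference $R$ carrying the trivial state $\ket{0}$. Duality then gives $H^\eps_{\min}(XS|T)_\phi = -H^\eps_{\max}(XS|Y)_\phi$. For the classical-on-$S$ structure, rather than dephasing, the paper invokes a known result (Lemma~6.13 of Tomamichel) guaranteeing that for CQ states the smooth max-entropy optimizer can be taken CQ --- this is exactly the step you are trying to do by hand and where your argument breaks. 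The projection onto the diagonal $\Pi_{XY}=\sum_i\proj{ii}$ is then shown to \emph{decrease} the max-entropy by a direct fidelity calculation (not data processing): $\Pi_{XY}(I_{XS}\otimes\sigma_Y)\Pi_{XY}\leq I_{XS}\otimes\sigma_Y'$ for the dephased $\sigma_Y'$. Finally, the extra reference $R$ pays off: projecting onto $\bra{0}_R$ is a trace-nonincreasing map on the \emph{conditioning} system $TR$ for the min-entropy, where data processing does go the right way, yielding a $\tilde\phi$ of the required form on $XST$ alone. Your single reference $R\cong X$ leaves no room for this last move.
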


  \begin{proof}
    Let $Y$ be a copy of $X$ and let $R$ be an additional reference system of sufficiently large dimension.
    Then we let
    \begin{align*}
      \ket{\phi_{XYSTR}} = \sum_{i,j} \sqrt{\lambda_{i,j}} \ket{ii} \ot \ket{jj} \ot \ket{0}
    \end{align*}
    be a purification of $\phi_{XST}$.
    Now, by duality of smooth entropies
    \begin{align*}
      H_{\min}^{\eps}(X S \vert T)_\phi = - H_{\max}^{\eps}(X S \vert Y)_\phi
    \end{align*}
    By Lemma 6.13 of \cite{tomamichel2015quantum} we can find a state
    \begin{align*}
      \bar{\phi}_{S XY} = \sum_{j} \proj{j} \ot \bar{\phi}_{XY,j}
    \end{align*}
    such that $P(\phi, \bar \phi) \leq \eps$ and $H_{\max}^{\eps}(X S \vert Y)_\phi = H_{\max}(X S \vert Y)_{\bar \phi}$.
    Now, an arbitrary purification of $\bar{\phi}_{S XY}$ will be of the form
    \begin{align*}
      \ket{\bar \phi_{STXYR}} = \sum_{j,k} \ket{jk} \ot \ket{\bar \phi_{XYR,jk}}.
    \end{align*}
    We see that
    \begin{align*}
      \abs{\braket{\phi | \bar \phi}} = \sum_{i,j,k} \delta_{j,k} \sqrt{\lambda_{i,j}}\abs{\left(\bra{ii} \ot \bra{0} \right)\ket{\phi_{XYR,jk}}}
    \end{align*}
    so, optimizing over the choice of purification $\ket{\bar \phi}$, by Uhlmann's theorem we find a purification of the form
    \begin{align*}
      \ket{\bar \phi_{STXYR}} = \sum_{j} \ket{jj} \ot \ket{\bar \phi_{XYR,j}}
    \end{align*}
    such that $F(\phi_{XYS}, \bar \phi_{XYS}) = \abs{\braket{\phi | \bar \phi}}$.
    We define the projector $\Pi_{XY} = \sum_{j} \proj{jj}$.
    Then for any $\sigma_Y \in \Pleq(Y)$ we note that $(I_S \ot \Pi_{XY}) I_{XS} \ot \sigma_Y (I_S \ot \Pi_{XY}) \leq I_{XS} \ot \sigma_{Y}'$ where $\sigma_{Y}' = \sum_j \bra{j} \sigma_Y \ket{j} \proj{j}$ and hence
    \begin{align*}
      F((I_S \ot \Pi_{XY} )\bar \phi (I_S \ot \Pi_{XY} ), I_{XS} \ot \sigma_Y) = F(\bar \phi, (I_S \ot \Pi_{XY} ) I_{XS} \ot \sigma_Y (I_S \ot \Pi_{XY} )) \leq F(\bar{\phi}, I_{XS} \ot \sigma_{Y}').
    \end{align*}
    If we let $\ket{\phi_{XYSTR}'} = (\Pi_{XY} \ot I_{STR})\ket{\bar\phi_{XYSTR}}$ this implies that
    \begin{align*}
      H_{\max}(XS \vert Y)_{\phi'} \leq H_{\max}(XS \vert Y)_{\bar \phi}.
    \end{align*}
    Moreover it is easy to see that $\abs{\braket{\phi' | \phi}} = \abs{\braket{\bar \phi | \phi}}$.
    Again using duality we find that
    \begin{align*}
      H_{\min}^{\eps}(X S \vert T)_\phi \leq H_{\min}(X S \vert T R)_{\phi'}.
    \end{align*}
    Finally, let
    \begin{align*}
      \ket{\tilde \phi} = (I_{XYST} \ot \bra{0}) \ket{\phi'}
    \end{align*}
    then by data processing (Theorem 5.7 of \cite{tomamichel2012framework}, note that data processing for the conditional min-entropy is valid for \emph{trace non-increasing} completely postive maps) it holds that
    \begin{align*}
      H_{\min}(X S \vert T R)_{\phi'} \leq H_{\min}(X S \vert T)_{\tilde \phi}.
    \end{align*}
    By construction $\abs{\braket{\tilde \phi | \phi}} = \abs{\braket{\phi' | \phi}}$ and hence $P(\tilde \phi,\phi) \leq \eps$.
    Finally, by construction $\tilde \phi$ is of the desired form
    \begin{align*}
      \ket{\tilde \phi_{STXY}} = \sum_{i,j} \sqrt{\tilde \lambda_{i,j}}\ket{ii} \ot \ket{jj}.
    \end{align*}
  \end{proof}

  The following lemmas will be used in our joint smoothing construction to bound the purified distance of the smoothed state.

  \begin{lem}\label{lem:consequence fvdg}
    Suppose $\phi, \psi \in \Pleq(\HH)$ are pure states, and suppose
    \begin{align*}
      \ket{\phi} = \sum_i \sqrt{\lambda_i} \ket{i} \qquad \ket{\psi} = \sum_i \sqrt{\mu_i} \ket{i}
    \end{align*}
    for some $\lambda_i, \mu_{i} \geq 0$.
    Then
    \begin{align*}
      T(\phi, \psi) & \leq P(\phi,\psi) \leq \sqrt{2\sum_i \abs{\lambda_i - \mu_i}}.
    \end{align*}
  \end{lem}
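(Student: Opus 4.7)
The first inequality $T(\phi,\psi) \leq P(\phi,\psi)$ is the Fuchs--van de Graaff inequality already recorded in \cref{eq:fuchs vd graaff}, so the plan is to focus on the second inequality.

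The plan is to compute the generalized fidelity $F_*(\phi,\psi)$ explicitly and bound it via the classical Hellinger distance. Since $\phi$ and $\psi$ are pure with non-negative coordinates in a common basis, the ordinary fidelity is simply $F(\phi,\psi) = |\langle\phi|\psi\rangle| = \sum_i \sqrt{\lambda_i \mu_i}$. Writing $t_\phi = \tr[\phi] = \sum_i \lambda_i$ and $t_\psi = \tr[\psi] = \sum_i \mu_i$, the generalized fidelity becomes
\begin{align*}
  F_*(\phi,\psi) = \sum_i \sqrt{\lambda_i \mu_i} + \sqrt{(1-t_\phi)(1-t_\psi)},
\end{align*}
which is exactly the Bhattacharyya coefficient of the two \emph{probability} distributions $\tilde\lambda = (\lambda_1,\lambda_2,\ldots, 1-t_\phi)$ and $\tilde\mu = (\mu_1,\mu_2,\ldots, 1-t_\psi)$ obtained by appending one extra coordinate.

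Next I would use the standard identity $1 - \sum_i \sqrt{\tilde\lambda_i \tilde\mu_i} = \tfrac{1}{2}\sum_i (\sqrt{\tilde\lambda_i} - \sqrt{\tilde\mu_i})^2$ together with the elementary bound $(\sqrt{a}-\sqrt{b})^2 \leq |a-b|$ for $a,b \geq 0$ (which follows from $(\sqrt{a}-\sqrt{b})(\sqrt{a}+\sqrt{b}) = a-b$). This yields
\begin{align*}
  1 - F_*(\phi,\psi) \leq \tfrac{1}{2}\sum_i |\tilde\lambda_i - \tilde\mu_i| = \tfrac{1}{2}\Bigl(\sum_i |\lambda_i - \mu_i| + |t_\phi - t_\psi|\Bigr) \leq \sum_i |\lambda_i - \mu_i|,
\end{align*}
where the last step uses the triangle inequality $|t_\phi - t_\psi| \leq \sum_i |\lambda_i - \mu_i|$.

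Finally, I would combine this with the trivial bound $1 - F_*^2 = (1-F_*)(1+F_*) \leq 2(1-F_*)$ to conclude
\begin{align*}
  P(\phi,\psi)^2 = 1 - F_*(\phi,\psi)^2 \leq 2\sum_i |\lambda_i - \mu_i|,
\end{align*}
and take square roots. There is no real obstacle here; the only subtlety is remembering to absorb the subnormalization defect $|t_\phi - t_\psi|$ into the $\ell_1$-bound on the coefficients, which is why the factor of $2$ (rather than just $1$) appears inside the square root.
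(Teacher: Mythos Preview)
Your proof is correct. The paper takes a slightly different, more indirect route: it introduces the dephased states $\rho = \sum_i \lambda_i \proj{i}$ and $\sigma = \sum_i \mu_i \proj{i}$, observes that $P(\phi,\psi) = P(\rho,\sigma)$ (because the fidelities coincide when the pure-state coefficients are nonnegative in a common basis), bounds $T(\rho,\sigma) \leq \sum_i |\lambda_i - \mu_i|$, and then applies the Fuchs--van de Graaff inequality $P(\rho,\sigma) \leq \sqrt{2T(\rho,\sigma)}$ to the diagonal states. In effect, the paper invokes FvdG as a black box on the classical side, whereas you unpack that step explicitly via the Hellinger--Bhattacharyya identity $1 - F_* = \tfrac{1}{2}\sum_i(\sqrt{\tilde\lambda_i}-\sqrt{\tilde\mu_i})^2$ and the bound $(\sqrt{a}-\sqrt{b})^2 \leq |a-b|$. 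Your approach is more self-contained and makes the handling of subnormalization via the ``extra coordinate'' trick quite transparent; the paper's is shorter once \cref{eq:fuchs vd graaff} is available. Substantively the two arguments are equivalent, just packaged differently.
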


  \begin{proof}
    Let
    \begin{align*}
      \rho = \Delta(\phi), \qquad \sigma = \Delta(\psi)
    \end{align*}
    where $\Delta$ is the completely dephasing channel, so
    \begin{align*}
      \rho = \sum_i \lambda_i \proj{i} \qquad \sigma = \sum_i \mu_i \proj{i}.
    \end{align*}
    Then since $\rho$ and $\sigma$ are diagonal in the same basis
    \begin{align*}
      P(\phi, \psi) = P(\rho, \sigma).
    \end{align*}
    We then estimate the trace distance by
    \begin{align*}
      T(\rho, \sigma) \leq \norm{\rho - \sigma}_1 = \sum_i \abs{\lambda_i - \mu_i}
    \end{align*}
    and we apply the Fuchs-Van de Graaff inequalities, \cref{eq:fuchs vd graaff}, to estimate
    \begin{align*}
      T(\phi, \psi) \leq P(\phi,\psi) = P(\rho, \sigma) \leq \sqrt{2T(\rho, \sigma)} \leq \sqrt{2\sum_i \abs{\lambda_i - \mu_i}}.
    \end{align*}
  \end{proof}

  \begin{lem}\label{lem:bound fidelity minimum}
    Suppose $\phi, \phi_j \in \Pleq(\HH)$ are pure states for $j \in [n]$, and suppose
    \begin{align*}
      \ket{\phi} = \sum_i \sqrt{\lambda_i} \ket{i} \qquad \ket{\phi_j} = \sum_i \sqrt{\lambda_{j,i}} \ket{i}
    \end{align*}
    for some $\lambda, \lambda_{i,j} \geq 0$ and suppose $P(\phi, \phi_j) \leq \eps$ for all $j$.
    Then if we let
    \begin{align*}
      \ket{\tilde \phi} = \sum_i \min_j \sqrt{\lambda_{j,i}} \ket{i}
    \end{align*}
    it holds that $P(\phi, \tilde \phi) \leq 2\sqrt{n\eps}$.
  \end{lem}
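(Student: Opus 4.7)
The strategy is to reduce the subnormalized computation to a clean Euclidean calculation on normalized vectors via the standard defect-vector embedding, and to tame the entrywise minimum by a union-type bound. Append an auxiliary basis vector $\ket{\bot}$ orthogonal to $\HH$ and set $\ket{\phi'} := \ket\phi + \sqrt{1-\tr\phi}\,\ket\bot$, and analogously $\ket{\phi'_j}$ and $\ket{\tilde\phi'}$. These are normalized pure vectors whose ordinary inner products reproduce the generalized fidelities, e.g.\ $\braket{\phi'|\phi'_j} = \sum_i \sqrt{\lambda_i\lambda_{j,i}} + \sqrt{(1-\tr\phi)(1-\tr\phi_j)} = F_*(\phi,\phi_j) \geq 0$. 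Therefore $P(\phi',\phi'_j) = P(\phi,\phi_j) \leq \eps$ and $P(\phi',\tilde\phi') = P(\phi,\tilde\phi)$, so it suffices to estimate $\|\ket{\phi'}-\ket{\tilde\phi'}\|_2$.

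For normalized pure vectors $\ket\psi,\ket\chi$ with nonnegative overlap $F = \braket{\psi|\chi}$ the identity $\|\ket\psi-\ket\chi\|_2^2 = 2(1-F) \leq 2(1-F^2) = 2P(\psi,\chi)^2$ holds (since $1+F \geq 1$). Applied to $(\phi',\phi'_j)$ and expanding the squared $\ell_2$-norm in the direct-sum decomposition gives
\[
  \sum_i \bigl(\sqrt{\lambda_i}-\sqrt{\lambda_{j,i}}\bigr)^2 + \bigl(\sqrt{1-\tr\phi}-\sqrt{1-\tr\phi_j}\bigr)^2 \leq 2\eps^2,
\]
so dropping the nonnegative $\ket\bot$-contribution yields $\sum_i (\sqrt{\lambda_i}-\sqrt{\lambda_{j,i}})^2 \leq 2\eps^2$ for every $j \in [n]$. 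For each $i$ let $j^\ast(i)$ achieve $\min_j \sqrt{\lambda_{j,i}}$; then $(\sqrt{\lambda_i}-\min_j\sqrt{\lambda_{j,i}})^2 = (\sqrt{\lambda_i}-\sqrt{\lambda_{j^\ast(i),i}})^2$ is one term of the sum $\sum_j (\sqrt{\lambda_i}-\sqrt{\lambda_{j,i}})^2$. Summing over $i$ and interchanging the order of summation produces the key estimate $\sum_i (\sqrt{\lambda_i}-\min_j\sqrt{\lambda_{j,i}})^2 \leq 2n\eps^2$.

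Apply the same bound to $(\phi',\tilde\phi')$: $P(\phi,\tilde\phi)^2 \leq \|\ket{\phi'}-\ket{\tilde\phi'}\|_2^2$, which splits into the in-$\HH$ part, already bounded by $2n\eps^2$, plus the $\ket\bot$-part $(\sqrt{1-\tr\phi}-\sqrt{1-\tr\tilde\phi})^2 \leq |\tr\phi-\tr\tilde\phi|$ (using $|\sqrt x-\sqrt y|^2 \leq |x-y|$). Writing $|\tr\phi-\tr\tilde\phi| = |\sum_i (\sqrt{\lambda_i}-\min_j\sqrt{\lambda_{j,i}})(\sqrt{\lambda_i}+\min_j\sqrt{\lambda_{j,i}})|$ and applying Cauchy--Schwarz with $\sum_i (\sqrt{\lambda_i}+\min_j\sqrt{\lambda_{j,i}})^2 \leq 2(\tr\phi+\tr\tilde\phi) \leq 4$ bounds this by $2\sqrt{2n}\,\eps$. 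Consequently $P(\phi,\tilde\phi)^2 \leq 2n\eps^2 + 2\sqrt{2n}\,\eps \leq 4n\eps$ in the nontrivial regime (e.g.\ $\eps \leq 1$ and $n \geq 2$; the case $n=1$ is $P(\phi,\tilde\phi) = P(\phi,\phi_1) \leq \eps \leq 2\sqrt\eps$), which establishes $P(\phi,\tilde\phi) \leq 2\sqrt{n\eps}$. The only real subtlety is the subnormalization, handled uniformly by the defect embedding; the elementwise minimum is then controlled by the simple union-type bound in the middle step.
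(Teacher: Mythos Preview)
Your proof is correct but takes a different route from the paper. The paper works at the level of the dephased diagonal states $\rho=\sum_i\lambda_i\proj{i}$ and $\rho_j=\sum_i\lambda_{j,i}\proj{i}$: it bounds
\[
\sum_i\bigl|\lambda_i-\min_j\lambda_{j,i}\bigr|\le\sum_j\sum_i|\lambda_i-\lambda_{j,i}|=\sum_j\|\rho-\rho_j\|_1\le 2\sum_jP(\phi,\phi_j)\le 2n\eps
\]
via Fuchs--van de Graaff, and then simply invokes \cref{lem:consequence fvdg} to convert this $\ell_1$ estimate on eigenvalues into $P(\phi,\tilde\phi)\le\sqrt{2\cdot 2n\eps}=2\sqrt{n\eps}$. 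Your argument instead stays with amplitudes in $\ell_2$, using the defect embedding to normalize and then a union bound on $\sum_i(\sqrt{\lambda_i}-\min_j\sqrt{\lambda_{j,i}})^2$; the price is that you must handle the defect coordinate separately via Cauchy--Schwarz and do a small case split on $n$ and $\eps$ at the end. Both proofs rest on the same per-coordinate union idea, but the paper's version is a couple of lines because it offloads the subnormalization bookkeeping to the preceding lemma, while your version is self-contained and would stand without that lemma.
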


  \begin{proof}
    Let
    \begin{align*}
      \rho = \sum_i \lambda_i \proj{i} \qquad \rho_j = \sum_i \lambda_{j,i} \proj{i}.
    \end{align*}
    Then, $P(\rho,\rho_j) = P(\phi, \phi_j)$ and using the Fuchs-Van de Graaff inequalities
    \begin{align*}
      \sum_i \abs{\lambda_i - \min_j \lambda_{j,i}} \leq \sum_j \sum_i \abs{\lambda_i - \lambda_{j,i}} = \sum_j \norm{\rho - \rho_j}_1 \leq 2\sum_{j} T(\rho, \rho_j) \leq 2\sum_j P(\phi, \phi_j) \leq 2n\eps.
    \end{align*}
    The result now follows from \cref{lem:consequence fvdg}.
  \end{proof}

  Finally, we prove our main joint smoothing result.
  We again consider the setting of a tensor product of link states on a graph $G = (V,E)$, where we consider some boundary subsystem $A \subseteq V_\partial$.
  Recall that for a cut $\Gamma_A \in C(A)$ we define
  \begin{align*}
    \mathcal C_1(\Gamma_A) & = \{\Delta_A \in C(A) : \Delta_A \subsetneq \Gamma_A \}  \\
    \mathcal C_2(\Gamma_A) & = \{\Delta_A \in C(A) : \Gamma_A \subsetneq \Delta_A \}.
  \end{align*}

  \jointrelativeminsmoothing*
  \begin{proof}
    We may assume without loss of generality that $\phi = \bigotimes_{e \in E} \phi_e$ is such that each $\phi_e$ has Schmidt decomposition in the standard basis,
    \begin{align*}
      \ket{\phi_e} = \sum_{i = 1}^{D_e} \sqrt{\lambda_{e,i}} \ket{i i}
    \end{align*}
    This means we may write
    \begin{align*}
      \ket{\phi} = \sum_I \sqrt{\lambda_I} \ket{I}
    \end{align*}
    where $I$ runs over all possible basis elements along each edge $I = \{i_e\}_{e \in E}$ and
    \begin{align*}
      \lambda_I = \prod_{e \in E} \lambda_{e,i_e} \qquad \ket{I} = \bigotimes_{e \in E} \ket{i_e i_e}.
    \end{align*}
    We also let
    \begin{align*}
      E_1 & = \{e = (xy) \in E, x,y \in \Gamma_A\}    \\
      E_2 & = \{e = (xy) \in E, x,y \in \Gamma_A^c\}.
    \end{align*}
    so $E = E_1 \sqcup E_2 \sqcup \gamma_A$.

    Consider a cut $\Delta_A \in \mathcal C_1(\Gamma_A)$.
    Let
    \begin{align*}
      X & = \{(e,x) : e = (xy), x \in \Gamma_A \setminus \Delta_A, y \in \Delta_A \}   \\
      S & = \{(e,x) : e = (xy), x \in \Gamma_A \setminus \Delta_A, y \in \Delta_A^c \} \\
      T & = \{(e,x) : e = (xy), x \in \Gamma_A^c, y \in \Delta_A^c \}                  \\
    \end{align*}
    Then we see that $H_{\min}^{\eps}(\Gamma_A \setminus \Delta_A \vert \Gamma_A^c)_{\phi} = H_{\min}^{\eps}(XS \vert T)_{\phi}$ and we can write (as $\phi$ is a product state)
    \begin{align*}
      \phi_{XST} = \phi_X \ot \phi_{ST}.
    \end{align*}
    Here $\phi_{ST}$ is pure, whereas $\phi_X$ is diagonal in the standard basis along the half-edges in $X$.
    From \cref{lem:basis min smoothing} (applied with $\phi_{ST,i} = \phi_{ST}$ for each $i$) we find that we obtain a state $\phi^{\Delta_A, \eps}$ which is such that $P(\phi^{\Delta_A, \eps}, \phi) \leq \eps$ and
    \begin{align*}
      H_{\min}(\Gamma_A \setminus \Delta_A \vert \Gamma_A^c)_{\phi^{\Delta_A, \eps}} \geq H_{\min}^{\eps}(\Gamma_A \setminus \Delta_A \vert \Gamma_A^c)_{\phi}
    \end{align*}
    and which can be written as
    \begin{align*}
      \ket{\phi^{\Delta_A, \eps}} = \sum_I \sqrt{\lambda_I^{\Delta_A, \eps}} \ket{I}.
    \end{align*}
    for some coefficients $\lambda_I^{\Delta_A, \eps}$.
    Moreover, we can write
    \begin{align*}
      \ket{\phi^{\Delta_A, \eps}} = \sum_i \ket{ii} \ot \ket{\phi_{E_1, i}^{\Delta_A, \eps}} \ot \ket{\phi_{E_2}}
    \end{align*}
    where $i$ runs over all possible basis elements along the cut $\gamma_A$, $i = \{i_e\}_{e \in \gamma_A}$ and $\ket{i} = \bigotimes_{e \in \gamma_A} \ket{i_e}$, and the $\phi_{E_1, i}^{\Delta_A, \eps}$ are pure states with positive coefficients in the standard basis.
    We now let
    \begin{align*}
      \ket{\phi_1^{\eps}} & = \sum_{I} \sqrt{\lambda_I^\eps} \ket{I}                                 \\
      \lambda_I^\eps      & = \min_{\Delta_A \in \mathcal C_1(\Gamma_A)} \lambda_I^{\Delta_A, \eps}.
    \end{align*}
    By construction, this state can be written as
    \begin{align*}
      \ket{\phi_1^{\eps}} = \sum_i \ket{ii} \ot \ket{\phi_{E_1, i}^{\eps}} \ot \ket{\phi_{E_2}}
    \end{align*}
    By \cref{lem:bound fidelity minimum}, it holds that
    \begin{align*}
      P(\phi, \phi_1^{\eps}) \leq 2\sqrt{\abs{\mathcal C_1(\Gamma_A)}\eps}.
    \end{align*}

    An analogous construction can be used to construct $\phi^{\Delta_A, \eps}$ for $\Delta_A \in \mathcal C_2(\Gamma_A)$, and taking the minimum over all such cuts, we get a state $\phi_2^{\eps}$ of the form
    \begin{align*}
      \ket{\phi_2^{\eps}} = \sum_i \ket{ii} \ot \ket{\phi_{E_1}} \ot \ket{\phi_{E_2, i}^{\eps}}
    \end{align*}
    which satisfies
    \begin{align*}
      P(\phi, \phi_2^{\eps}) \leq 2\sqrt{\abs{\mathcal C_2(\Gamma_A)}\eps}.
    \end{align*}
    We now define
    \begin{align*}
      \ket{\phi^{\eps}} = \sum_i \ket{ii} \ot \ket{\phi_{E_1, i}^{\eps}} \ot \ket{\phi_{E_2, i}^{\eps}}.
    \end{align*}

    We will now show that $\phi^{\eps}$ has the desired properties.
    First of all, since $\phi$ is normalized, $F_*(\phi, \phi^\eps) = F(\phi, \phi^\eps)$ and $F(\phi, \phi^\eps) = F(\phi_1^\eps, \phi_2^\eps)$ so
    \begin{align*}
      P(\phi, \phi^\eps) = P(\phi_1^\eps, \phi_2^\eps) \leq P(\phi,\phi_1^\eps) + P(\phi, \phi_2^\eps) \leq 2\left(\sqrt{\abs{\mathcal C_1(\Gamma_A)}} + \sqrt{\abs{\mathcal C_2(\Gamma_A)}}\right) \sqrt{\eps}
    \end{align*}
    using the fact that the purified distance is a metric.

    Next, consider $\Delta_A \in C_1(A)$.
    We note that by duality $H_{\min}(\Gamma_A \setminus \Delta_A \vert \Gamma_A^c)_{\phi^{\eps}} = - H_{\max}(\Gamma_A \setminus \Delta_A \vert \Delta_A)_{\phi^{\eps}}$.
    We define the following subsystems of half-edges,
    \begin{align*}
      X' & = \{(e,x) : e = (xy), x \in \Gamma_A \setminus \Delta_A, y \in \Gamma_A^c\} \\
      Y' & = \{(e,x) : e = (xy), x \in \Delta_A, y \in \Gamma_A^c \}                   \\
      S' & = \{(e,x) : e = (xy), x \in \Gamma_A \setminus \Delta_A, y \in \Gamma_A \}  \\
      T' & = \{(e,x) : e = (xy), x \in \Delta_A, y \in \Gamma_A \}.
    \end{align*}
    so $H_{\max}(\Gamma_A \setminus \Delta_A \vert \Delta_A)_{\phi^{\eps}} = H_{\max}(X'S' \vert Y'T')_{\phi^{\eps}}$.
    Next, we observe that by construction, $\phi^{\eps}$ is of the form
    \begin{align*}
      \phi^{\eps}_{X'Y'S'T'} = \sum_{i,j} \proj{i} \ot \proj{j} \ot \phi^{\eps}_{S'T',ij}
    \end{align*}
    where $i$ runs over all possible basis elements along $\gamma_A \setminus \delta_A$, $i = \{i_e\}_{e \in \gamma_A\setminus \delta_A}$ and $\ket{i} = \bigotimes_{e \in \gamma_A \setminus \delta_A} \ket{i_e}$, which forms a basis for $\HH_X$; and $j$ runs over all possible basis elements along $\gamma_A \cap \delta_A$, $j = \{j_e\}_{e \in \gamma_A\cap \delta_A}$ and $\ket{j} = \bigotimes_{e \in \gamma_A \cap \delta_A} \ket{j_e}$, which forms a basis for $\HH_Y$.
    Each $\phi^{\eps}_{S'T',ij}$ is a pure state.
    Similarly,
    \begin{align*}
      \phi^{\Delta_A,\eps}_{X'Y'S'T'} = \sum_{i,j} \proj{i} \ot \proj{j} \ot \phi^{\Delta_A,\eps}_{S'T',ij}
    \end{align*}
    where the $\phi^{\Delta_A,\eps}_{S'T',ij}$ are pure states.
    Moreover, by construction $\ket{\phi^{\eps}_{S'T',ij}} \leq \ket{\phi^{\Delta_A,\eps}_{S'T',ij}}$ elementwise in the standard basis.
    Therefore, by \cref{lem:smaller max entropy 2} it holds that
    \begin{align*}
      H_{\max}(X'S' \vert Y'T')_{\phi^{\eps}} \leq H_{\max}(X'S' \vert Y'T')_{\phi^{\Delta_A,\eps}}.
    \end{align*}
    Finally, by duality $H_{\min}(\Gamma_A \setminus \Delta_A \vert \Gamma_A^c)_{\phi^{\Delta_A,\eps}} = -H_{\max}(X'S' \vert Y'T')_{\phi^{\Delta_A,\eps}}$ and we conclude
    \begin{align*}
      H_{\min}(\Gamma_A \setminus \Delta_A \vert \Gamma_A^c)_{\phi^{\eps}} \geq H_{\min}(\Gamma_A \setminus \Delta_A \vert \Gamma_A^c)_{\phi^{\Delta_A,\eps}} \geq H_{\min}^{\eps}(\Gamma_A \setminus \Delta_A \vert \Gamma_A^c)_{\phi}.
    \end{align*}
    A similar argument is valid for $\Delta_A \in \mathcal C_2(\Gamma_A)$, showing that
    \begin{align*}
      H_{\min}(\Delta_A \setminus \Gamma_A \vert \Gamma_A)_{\phi^{\eps}} \geq H_{\min}(\Delta_A \setminus \Gamma_A \vert \Gamma_A)_{\phi^{\Delta_A,\eps}} \geq H_{\min}^{\eps}(\Delta_A \setminus \Gamma_A \vert \Gamma_A)_{\phi}.
    \end{align*}
  \end{proof}

  In the proof of \cref{thm:measure convergence surface transition} we need a similar smoothing lemma for a slightly different situation as stated in \cref{lem:joint smoothing at transition}.

  Recall the set-up: we consider a graph $G = (V,E)$ and we have a set of boundary vertices $V_\partial$, and $A \subseteq V_\partial$.
  We denote by $\gamma_{A,1}$ the edges incident to $A$, and $\gamma_{A,2}$ the edges incident to $\bar A$, and we assume these sets do not intersect.
  We let $E_b = E \setminus (\gamma_{A,1} \cup \gamma_{A,2})$.
  For a cut $\Delta_A \in C(A)$ we let $Y^{\Delta_A}$ be the set of half-edges
  \begin{align*}
    Y^{\Delta_A} = \{(e,x) : e = (xy), x \in \Delta_A^c, y \in A \}.
  \end{align*}

  \jointsmoothingtransition*

  \begin{proof}
    The argument is much the same as the proof of \cref{lem:joint relative min smoothing}.
    We again assume without loss of generality that each $\phi_e$ has Schmidt decomposition in the standard basis, and we we may write
    \begin{align*}
      \ket{\phi} = \sum_I \sqrt{\lambda_I} \ket{I}
    \end{align*}
    as in the proof of \cref{lem:joint relative min smoothing}.
    Consider a cut $\Delta_A \in C(A)$.
    Then we let
    \begin{align*}
      X & = \{(e,x) : e = (xy), x \in \Delta_A, y \in \Delta_A^c \}          \\
      S & = \{(e,x) : e = (xy), x \in \Delta_A, y \in \Delta_A \setminus A\} \\
      T & = \{(e,x) : e = (xy), x \in A, y \notin A \} \cup Y^{\Delta_A}
    \end{align*}
    so $H^{\eps}_{\min}(\Delta_A \setminus A \vert A Y^{\Delta_A})_{\phi} = H^{\eps}_{\min}(XS \vert T)_{\phi}$.
    Moreover, by the structure of the state, $\phi_{XST} = \phi_X \ot \phi_{ST}$, where $\phi_{ST}$ is pure and $\phi_X$ is diagonal in the standard basis.
    So, from \cref{lem:basis min smoothing} we obtain a state
    \begin{align*}
      \ket{\phi^{\Delta_A, \eps}} = \sum_I \sqrt{\lambda_I^{\Delta_A, \eps}} \ket{I}
    \end{align*}
    for some real coefficients $\lambda_I^{\Delta_A, \eps} \geq 0$ which is such that $P(\phi, \phi^{\Delta_A, \eps}) \leq \eps$ and $H_{\min}(XS \vert T)_{\phi^{\eps}} \geq H^{\eps}_{\min}(XS \vert T)_{\phi}$.
    As before, we define
    \begin{align*}
      \ket{\phi^{\eps}} & = \sum_I \sqrt{\lambda_{I}^\eps} \ket{I}               \\
      \lambda_I^\eps    & = \min_{\Delta_A \in C(A)} \lambda_I^{\Delta_A, \eps}.
    \end{align*}
    By \cref{lem:bound fidelity minimum} $P(\phi, \phi^\eps) \leq 2\sqrt{\abs{C(A)}\eps} = 2\sqrt{2^{\abs{V_b}}\eps}$.
    If we define the following subsystems
    \begin{align*}
      X' & = \{(e,x) : e = (xy), x \in \Delta_A, y A \}           \\
      Y' & = \{(e,x) : e = (xy), x \in \Delta_A^c, y A \}         \\
      S' & = \{(e,x) : e = (xy), x \in \Delta_A, y \notin A\}     \\
      T' & = \{(e,x) : e = (xy),  x \in \Delta_A^c, y \notin A \}
    \end{align*}
    then $\phi^{\eps}_{X'S'T'}$ can be written as
    \begin{align*}
      \phi^{\eps}_{X'Y'S'T'} = \sum_{i, j} \proj{i} \ot \proj{j} \ot \phi^{\eps}_{S'T',ij}
    \end{align*}
    where $i$ runs over all possible basis elements along $\gamma_{A,1} \setminus \delta_A$, $i = \{i_e\}_{e \in \gamma_{A,1}\setminus \delta_A}$ and $\ket{i} = \bigotimes_{e \in \gamma_{A,1} \setminus \delta_A} \ket{i_e}$, which forms a basis for $\HH_{X'}$; and $j$ runs over all possible basis elements along $\gamma_{A,1} \cap \delta_A$, $j = \{j_e\}_{e \in \gamma_{A,1}\cap \delta_A}$ and $\ket{j} = \bigotimes_{e \in \gamma_{A,1} \cap \delta_A} \ket{j_e}$, which forms a basis for $\HH_{Y'}$.
    The $\phi^{\eps}_{S'T',ij}$ are pure states.
    We can therefore apply \cref{lem:smaller max entropy 1}, and
    \begin{align*}
      H_{\max}(X'S' \vert T'Y')_{\phi^{\eps}} \leq H_{\max}(X'S' \vert T'Y')_{\phi^{\Delta_A,\eps}}
    \end{align*}
    and hence, by applying duality and data processing
    \begin{align*}
      H_{\min}(\Delta_A \setminus A \vert A)_{\phi^\eps} & = - H_{\max}(\Delta_A \setminus A \vert T'Y')_{\phi^{\eps}} = - H_{\max}(X'S' \vert T'Y')_{\phi^{\eps}}                                                  \\
                                                         & \geq  H_{\max}(X'S' \vert T'Y')_{\phi^{\Delta_A, \eps}} = H_{\min}(\Delta_A \setminus A \vert A)_{\phi^{\Delta_A, \eps}}                                 \\
                                                         & \geq H_{\min}(\Delta_A \setminus A \vert A Y^{\Delta_A})_{\phi^{\Delta_A, \eps}} \geq H^{\eps}_{\min}(\Delta_A \setminus A \vert A Y^{\Delta_A})_{\phi}.
    \end{align*}
  \end{proof}

  \begin{rmk}
    We may make two observations on \cref{lem:joint relative min smoothing} and \cref{lem:joint smoothing at transition}. In both cases, if we construct the coefficients as $\min \lambda_{I}^{\eps}, \lambda_I$ we find that we may assume the resulting state has coefficients in the basis $\ket{I}$ which are upper bounded by $\lambda_I$ (affecting only the constant factor in the upper bound for $P(\phi, \phi^{\eps})$).
    Secondly, we note that
    \begin{align}\label{eq:trace difference}
      \abs{\tr[\phi] - \tr[\phi^{\eps}]} \leq \sum_{I}\, \abs{\lambda_I - \lambda_I^{\eps}} = \bigO(\eps)
    \end{align}
    rather than the naive $\bigO(\sqrt{\eps})$ estimate.
  \end{rmk}

  \section{A family of metrics on the symmetric group}\label{sec:metric}

  In analogy to the Cayley distance on the symmetric group~$S_k$, we consider the following function $d_{\rho} \colon S_k \times S_k \rightarrow \RR_{\geq0}$ given some arbitrary density matrix $\rho \in \Peq(\HH)$:
  \begin{align*}
    d_{\rho}(\pi_1,\pi_2) & = \sum_{l \in C(\pi_1^{-1} \pi_2)} (l-1) H_l(\rho) = -\sum_{l \in C(\pi_1^{-1} \pi_2)} \log \tr[\rho^l].
  \end{align*}
  As we saw in \cref{sec:replica trick}, this function is closely related to the replica trick for random tensor networks with nontrivial link states.
  In the case where $\rho$ is a maximally mixed state, this function is precisely proportional to the Cayley distance~\eqref{eq:cayley}.
  In this appendix we show that $d_\rho$ is a metric for \emph{any} non-pure quantum state~$\rho$.
  (If it is pure, then $d_\rho$ vanished identically.)

  We say that two permutations $\alpha, \beta \in S_k$ are \emph{disjoint} if any point not fixed by $\alpha$ is fixed by $\beta$ and vice versa.

  \begin{lem}
    The function $d_\rho$ defines a metric on $S_k$ for any state $\rho$ which is not pure.
    Moreover, if $\rho$ does not have flat spectrum the following holds for all $\pi_1,\pi_2,\pi_3\in S_k$:
    \begin{align*}
      d_{\rho}(\pi_1,\pi_2) + d_{\rho}(\pi_2,\pi_3) = d_{\rho}(\pi_1,\pi_3)
    \end{align*}
    if and only if $\pi_1^{-1}\pi_2$ and $\pi_2^{-1}\pi_3$ are disjoint permutations.
  \end{lem}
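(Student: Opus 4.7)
I would first clear off the easy axioms. Every summand $(l-1)H_l(\rho)$ in the definition is non-negative, so $d_\rho \geq 0$; since $\pi_1^{-1}\pi_2$ and $\pi_2^{-1}\pi_1$ have identical cycle types, $d_\rho$ is symmetric; and when $\rho$ is not pure, $H_l(\rho) > 0$ for every $l \geq 2$, so $d_\rho(\pi_1,\pi_2) = 0$ forces every cycle of $\pi_1^{-1}\pi_2$ to have length one, i.e. $\pi_1 = \pi_2$. Recognizing $d_\rho(\pi,\sigma) = -\log f(\pi^{-1}\sigma)$ where
\[
f(\pi) := \prod_{l \in C(\pi)} \tr[\rho^l] = \tr\!\bigl[R(\pi)\rho^{\otimes k}\bigr],
\]
the triangle inequality becomes the multiplicative estimate $f(\alpha)\,f(\beta) \leq f(\alpha\beta)$ with $\alpha = \pi_1^{-1}\pi_2$, $\beta = \pi_2^{-1}\pi_3$.

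To prove this estimate I would first decompose $[k]$ into the orbits of $\langle\alpha,\beta\rangle$. Every cycle of $\alpha$, $\beta$, and $\alpha\beta$ is contained in a single orbit, so both sides factor over orbits and the inequality reduces to the transitive case. Two ingredients then enter. Analytically, $l \mapsto \log \tr[\rho^l] = \log\sum_i e^{l \log p_i}$ is convex in $l$ as a cumulant generating function, vanishes at $l=1$, and is strictly negative for $l \geq 2$ when $\rho$ is not pure; an immediate consequence is
\[
\tr[\rho^a]\,\tr[\rho^b] \;\geq\; \tr[\rho^{a+b}] \qquad (a,b \geq 1),
\]
strictly so whenever $\rho$ is non-flat, i.e.\ its spectral measure is not concentrated on a single positive value. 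Combinatorially, the Riemann--Hurwitz bound applied to the triple $(\alpha,\beta,(\alpha\beta)^{-1})$ viewed as monodromy of a branched cover of the sphere gives $|C(\alpha)| + |C(\beta)| + |C(\alpha\beta)| \leq k+2$ in the transitive case. The plan is to combine these by induction on $k$: pick an $\alpha\beta$-cycle of length $n$, identify the multisets of $\alpha$- and $\beta$-cycle lengths whose supports meet this orbit, and iteratively apply the pairwise inequality $\tr[\rho^a]\tr[\rho^b] \geq \tr[\rho^{a+b}]$ to collapse their product into the single factor $\tr[\rho^n]$.

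For the equality characterization under non-flat $\rho$: if $\alpha$ and $\beta$ are disjoint, then the non-trivial cycles of $\alpha\beta$ are exactly those of $\alpha$ and $\beta$, and $f$ is exactly multiplicative. Conversely, non-flatness makes the analytic inequality strict for every $a,b \geq 1$, so any overlap in supports forces the induction to invoke the strict version at least once, yielding $f(\alpha)f(\beta) < f(\alpha\beta)$. The main obstacle is the combinatorial bookkeeping inside the inductive step: one must track how each $\alpha\beta$-cycle decomposes in terms of the $\alpha$- and $\beta$-cycles that visit it, and choose an order of pairwise collapses for which the analytic inequality applies consistently. This case analysis, rather than either input on its own, is the technical heart of the argument.
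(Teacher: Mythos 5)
Your reduction to the estimate $f(\alpha)f(\beta) \leq f(\alpha\beta)$ with $f(\pi) = \prod_{l \in C(\pi)}\tr[\rho^l]$ is correct, as is the orbit decomposition, the Riemann--Hurwitz count, and the stated pairwise inequality $\tr[\rho^a]\tr[\rho^b] \geq \tr[\rho^{a+b}]$. But the plan to ``iteratively apply the pairwise inequality to collapse'' the $\alpha$- and $\beta$-cycle factors into $\tr[\rho^n]$ has the inequality pointing the wrong way. You must show $f(\alpha)f(\beta) \leq f(\alpha\beta)$, i.e.\ an \emph{upper} bound on the left-hand product; the pairwise inequality gives only the \emph{lower} bound $\prod_i \tr[\rho^{l_i}] \geq \tr[\rho^{\sum_i l_i}]$, which does not constrain $f(\alpha)f(\beta)$ from above by anything. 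Concretely: for $\alpha=(12)$, $\beta=(23)$ in $S_3$ you need $\tr[\rho^2]^2 \leq \tr[\rho^3]$, but your pairwise estimate only yields $\tr[\rho^2]^2 \geq \tr[\rho^4]$, which is of no use. Moreover the exponents cannot match in the way you describe: the left side carries total degree $2k$ (each of $\alpha,\beta$ covers $[k]$), the right side only $k$, so no sequence of ``collapses'' of the form $\tr[\rho^a]\tr[\rho^b] \to \tr[\rho^{a+b}]$ can transform one into the other.

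The correct ingredient is a power-mean/Jensen inequality in the \emph{opposite} direction, which does follow from convexity of $g(l)=\log\tr[\rho^l]$ together with the pinning $g(1)=0$ (which you observed but did not use). Namely, if a merge replaces cycles of lengths $l_1,\dots,l_s$ by one cycle of length $l=\sum_p l_p$, then writing each $g(l_p) \leq \frac{l_p-1}{l-1}\,g(l)$ (convexity on $[1,l]$ with $g(1)=0$) and also $g(s) \leq \frac{s-1}{l-1}\,g(l)$, summing gives $\sum_p g(l_p) + g(s) \leq g(l)$, i.e.
\[
\tr[\rho^{l_1}]\cdots\tr[\rho^{l_s}]\cdot\tr[\rho^{s}] \;\leq\; \tr[\rho^{l}].
\]
This is the nontrivial step; the paper proves it via an explicit H\"older/Jensen argument with exponents $\eta_p=(l_p-1)/(l-1)$, after first reducing to $\beta$ a single cycle and then tracking, transposition by transposition in $\beta = (i_1\cdots i_m)$, whether $\alpha_q = \alpha(i_1\cdots i_q)$ undergoes a merge or a split (eq.~(C.5) there). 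The simple pairwise inequality you state is also used, but only for the ``split'' steps (and to open up $\tr[\rho^m]$), which go the other way. Without the merge inequality, your induction cannot close, and no choice of collapse order fixes this. The Riemann--Hurwitz bound, while true, is not itself sufficient to produce the needed analytic estimate.
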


  \begin{proof}
    The fact that $d_{\rho}(\pi_1, \pi_2) = 0$ if and only if $\pi_1 = \pi_2$ follows from the assumption that $\rho$ is not pure.
    The symmetry $d_{\rho}(\pi_1,\pi_2) = d_{\rho}(\pi_2,\pi_1)$ is clear from $C(\pi_1^{-1}\pi_2) = C(\pi_2^{-1}\pi_1)$.
    Thus, the only nontrivial property we have to show in order for $d_\rho$ to be a metric is the triangle inequality.
    We let $d = \dim(\HH)$ and we write $\spec(\rho) = \{\lambda_i\}_{i=1}^d$.
    Then, by letting $\alpha = \pi_1^{-1}\pi_2$ and $\beta = \pi_2^{-1}\pi_3$ the triangle inequality
    \begin{align*}
      d_{\rho}(\pi_1,\pi_2) + d_{\rho}(\pi_2,\pi_3) \geq d_{\rho}(\pi_1,\pi_3)
    \end{align*}
    is equivalent to
    \begin{align}\label{eq:to prove permtation distance}
      \prod_{l \in C(\alpha)}\left(\sum_{i=1}^d \lambda_i^l\right) \prod_{m \in C(\beta)} \left(\sum_{i=1}^d \lambda_i^m\right) \leq \prod_{n \in C(\alpha \beta)} \left(\sum_{i=1}^d \lambda_i^n\right).
    \end{align}
    Moreover, we need to show that if the spectrum is not flat, we have equality if and only if $\alpha$ and $\beta$ are disjoint permutations.
    It suffices to show this for the case where $\beta$ is a cycle, as we can write $\beta$ as a product of disjoint cycles in general and iteratively apply the result for the case where $\beta$ is a cycle.
    We write $\beta = (i_1 \ldots i_m)$ and $\beta_q = (i_1 \ldots i_q)$ for $q \leq m$ and we let $\alpha_{q} = \alpha \beta_q$ and $\alpha_0 = \alpha$.

    We can then find a unique sequence of numbers $q_0 = 0 < q_1 < q_2 < \ldots< m$ such that, if one compares $\alpha_{q_j}$ and $\alpha_{q_{j+1}}$, then either $q_{j+1} - q_j + 1$ cycles have merged into a single cycle, or a single cycle has split into $q_{j+1} - q_j + 1$ cycles.
    Moreover, these two operations are alternating in the sense that if $\alpha_{q_j}$ to $\alpha_{q_{j+1}}$ is a merge then $\alpha_{q_{j+1}}$ to $\alpha_{q_{j+2}}$ is a split and vice versa. Indeed, $\beta_{q+1} = \beta_q (i_{q} \, i_{q+1})$, so $\alpha_{q+1} = \alpha_q (i_{q} \, i_{q+1})$. If $i_q$ and $i_{q+1}$ are in different cycles in $\alpha_q$, then applying $(i_q \, i_{q+1})$ merges these two cycles, and if $i_q$ and $i_{q+1}$ are in the same cycle in $\alpha_q$, then applying $(i_q \, i_{q+1})$ splits this cycle into two cycles.
    Let
    \begin{align*}
      J = \{j : \text{$\alpha_{q_j}$ to $\alpha_{q_{j+1}}$ is a merge}\}
    \end{align*}
    and for $j \in J$ let $m_j = q_{j+1} - q_j + 1$, then $\sum_{j \in J} m_j \leq m$.
    Now, it is clear that for any collection $n_1, \ldots, n_r$ of numbers with $\sum_{j=1}^r n_j \leq n$ it holds that
    \begin{align}\label{eq:simple inequality}
      \prod_{j =1}^r\left(\sum_{i=1}^d \lambda_i^{n_j}\right) \geq \sum_{i=1}^d \lambda_i^n
    \end{align}
    where the inequality is strict unless $r =1$ and $n_1 = n$ (since we assume that $\rho$ is not pure).
    In particular (recall that we assumed $\beta$ to be a cycle of length $m$) we may estimate the left hand side of \cref{eq:to prove permtation distance} by
    \begin{align}\label{eq:split beta}
      \prod_{l \in C(\alpha)}\left(\sum_{i=1}^d \lambda_i^l\right) \left(\sum_{i=1}^d \lambda_i^m\right) \leq \prod_{l \in C(\alpha)}\left(\sum_{i=1}^d \lambda_i^l\right) \prod_{j \in J} \left(\sum_{i=1}^d \lambda_i^{m_j}\right).
    \end{align}
    with equality if and only if $\{m_j\}_{j \in J} = \{m\}$ (so $\beta$ only merges cycles).
    We will next argue that for $j \notin J$
    \begin{align}\label{eq:split}
      \prod_{l \in C(\alpha_{q_j})}\left(\sum_{i=1}^d \lambda_i^l\right) \leq \prod_{l \in C(\alpha_{q_{j+1}})}\left(\sum_{i=1}^d \lambda_i^l\right).
    \end{align}
    and for $j \in J$
    \begin{align}\label{eq:merge}
      \prod_{l \in C(\alpha_{q_j})}\left(\sum_{i=1}^d \lambda_i^l \right) \left(\sum_{i=1}^d \lambda_i^{m_j} \right) \leq \prod_{l \in C(\alpha_{q_{j+1}})}\left(\sum_{i=1}^d \lambda_i^l\right)
    \end{align}
    with equality if and only if the spectrum is flat or all the cycles that are merged are $1$-cycles.
    Then, combining \cref{eq:split beta}, \cref{eq:split} and \cref{eq:merge} we may conclude that \cref{eq:to prove permtation distance} holds, with equality if and only if $\beta$ is a disjoint cycle from $\alpha$.

    \cref{eq:split} follows immediately from \cref{eq:simple inequality}, so it remains to show \cref{eq:merge}.
    To this end we will apply Jensen's inequality.
    In \cref{eq:merge} let us assume that $s$ cycles of lengths $l_1,\ldots,l_s$ are merged into a cycle of length $l$, so $l = \sum_{p=1}^s l_p$.
    We let $l_{s+1} = s$.
    Let $f_p : [d] \to \RR_{\geq0}$ be the function defined by $i\mapsto\lambda_{i}^{l_p - 1}$, for~$p \in [s + 1]$.
    Moreover, let
    \begin{align*}
      \eta_p = \frac{l_p-1}{l - 1} \quad \text{ for $p \in [s]$.}
    \end{align*}
    We consider expectation values over the probability measure on $[d]$ where $i$ has probability $\lambda_{i}$.
    Since $\eta_p \leq 1$ for all $p$, by Jensen's inequality for $\eta_p \neq 0$
    \begin{align}
      \left(\EE f_p^{\frac{1}{\eta_p}}\right)^{\eta_p} \geq \EE f_p \label{eq:expineq}
    \end{align}
    with equality if and only if the spectrum is flat or $\eta_p = 1$.
    Note that for $\eta_p = 0$ (equivalently $l_p = 1$), $\EE f_p = 1$.
    Using that
    \begin{align*}
      \sum_{p=1}^{s+1} \eta_p & = \sum_{p=1}^{s} \frac{l_p - 1}{l-1} + \frac{s-1}{l-1} = \frac{l - s + s-1}{l-1} = 1 \\
    \end{align*}
    and for $\eta_p \neq 0$
    \begin{align*}
      \EE f_p^{\frac{1}{\eta_p}} & = \sum_{i=1}^d \lambda_i^{l}
    \end{align*}
    we find that
    \begin{align*}
      \prod_{p=1}^s \left(\sum_{i=1}^d \lambda_i^{l_p}\right) \sum_{i=1}^d \lambda_i^{s} = \prod_{p=1}^{s+1}\EE f_p \leq \prod_{p, \eta_p \neq 0} \left(\EE f_p^{\frac{1}{\eta_p}}\right)^{\eta_p} = \sum_{i=1}^d \lambda_i^{l}.
    \end{align*}
    Here we have equality if and only if the spectrum is flat or for all $p \in [s+1]$ it holds that $\eta_p = 1$ or $\eta_p = 0$.
    Since $\eta_p < 1$ for $p \in [s]$ and $l_{s+1} > 1$ this only happens if $l_p = 1$ for $p \in [s]$ and $s = l_{s+1} = l$ (in other words, if all the merged cycles are 1-cycles).
    Applying this with $s=m_j$ for each $j\in J$ proves \cref{eq:merge}.
  \end{proof}
\end{appendix}

\addcontentsline{toc}{section}{References}
\bibliographystyle{alpha}
\bibliography{references.bib}

\end{document}